\newcommand{\N}{\mathbb{N}} 
\newcommand{\ante}{\text{ante}}
\newcommand{\red}{\text{red}}
\newcommand{\lv}{\text{lv}}
\newcommand{\resop}[1]{\stackrel{#1}{\bowtie}}
\newcommand{\var}{\text{var}}
\newcommand*\circled[1]{\tikz[baseline=(char.base)]{
		\node[shape=circle,draw,inner sep=2pt] (char) {#1};}}
\def\hlinewd#1{%
	\noalign{\ifnum0=`}\fi\hrule \@height #1 %
	\futurelet\reserved@a\@xhline} 
\newcommand{\entailsSigma}{\vDash_{\Sigma_1^\exists}}
\newcommand{\hsc}[1]{{\footnotesize\MakeUppercase{#1}}}
\newcommand{\lo}{\text{\textsf{L\hsc{ev}-O\hsc{rd}}}}
\newcommand{\ao}{\text{\textsf{A\hsc{ny}}}}
\newcommand{\ar}{\text{\textsf{R\hsc{ed}}}}
\newcommand{\nr}{\text{\textsf{N\hsc{o}-R\hsc{ed}}}}
\newcommand{\aso}{\text{\textsf{E\hsc{xi}-A\hsc{ny}-N\hsc{o}-R\hsc{ed}}}}
\newcommand{\asro}{\text{\textsf{U\hsc{ni}-A\hsc{ny}}}}
\newcommand{\loar}{\mathsf{QCDCL}^\textsf{{\scriptsize L\tiny{\MakeUppercase{ev}}\scriptsize-O\tiny{\MakeUppercase{rd}}}}_\textsf{{\scriptsize R\tiny{\MakeUppercase{ed}}}}}
\newcommand{\aoar}{\mathsf{QCDCL}^\textsf{{\scriptsize A\tiny{\MakeUppercase{ny}}}}_\textsf{{\scriptsize R\tiny{\MakeUppercase{ed}}}}}
\newcommand{\lonr}{\mathsf{QCDCL}^\textsf{{\scriptsize L\tiny{\MakeUppercase{ev}}\scriptsize-O\tiny{\MakeUppercase{rd}}}}_\textsf{{\scriptsize N\tiny{\MakeUppercase{o}}}\scriptsize-{\scriptsize R\tiny{\MakeUppercase{ed}}}}}
\newcommand{\aonr}{\mathsf{QCDCL}^\textsf{{\scriptsize A\tiny{\MakeUppercase{ny}}}}_\textsf{{\scriptsize N\tiny{\MakeUppercase{o}}}\scriptsize-{\scriptsize R\tiny{\MakeUppercase{ed}}}}}
\newcommand{\asoar}{\mathsf{QCDCL}^\textsf{{\scriptsize E\tiny{\MakeUppercase{xi}}\scriptsize-A\tiny{\MakeUppercase{ny}}\scriptsize-N\tiny{\MakeUppercase{o}}\scriptsize-R\tiny{\MakeUppercase{ed}}}}_\textsf{{\scriptsize R\tiny{\MakeUppercase{ed}}}}}
\newcommand{\asonr}{\mathsf{QCDCL}^\textsf{{\scriptsize E\tiny{\MakeUppercase{xi}}\scriptsize-A\tiny{\MakeUppercase{ny}}\scriptsize-N\tiny{\MakeUppercase{o}}\scriptsize-R\tiny{\MakeUppercase{ed}}}}_\textsf{{\scriptsize N\tiny{\MakeUppercase{o}}}\scriptsize-{\scriptsize R\tiny{\MakeUppercase{ed}}}}}
\newcommand{\asroar}{\mathsf{QCDCL}^\textsf{{\scriptsize U\tiny{\MakeUppercase{ni}}\scriptsize-A\tiny{\MakeUppercase{ny}}}}_\textsf{{\scriptsize R\tiny{\MakeUppercase{ed}}}}}
\newcommand{\asronr}{\mathsf{QCDCL}^\textsf{{\scriptsize U\tiny{\MakeUppercase{ni}}\scriptsize-A\tiny{\MakeUppercase{ny}}}}_\textsf{{\scriptsize N\tiny{\MakeUppercase{o}}}\scriptsize-{\scriptsize R\tiny{\MakeUppercase{ed}}}}}
\newcommand{\proofsystem}[1]{{\sf{#1}}\xspace}
\newcommand{\qcdcl}{\proofsystem{QCDCL}}
\newcommand{\qres}{\proofsystem{Q-resolution}}
\newcommand{\cdcl}{\proofsystem{CDCL}}
\newcommand{\ldqres}{\proofsystem{long-distance Q-resolution}}
\newcommand{\Ldqres}{\proofsystem{Long-distance Q-resolution}}
\newcommand{\ldqcdcl}{\proofsystem{long-distance QCDCL resolution}}
\newcommand{\resolution}{\proofsystem{resolution}}
\newcommand{\Lon}{\ensuremath{\mathtt{Lon}}}
\newcommand{\qnpres}{\ensuremath{\mathsf{Q}^{\mathsf{NP}}\text{\sf{-resolution}}}\xspace}	   
\newcommand{\citespace}{\hspace*{-0.26em}}
\definecolor{hellgrau}{RGB}{255,255,255}
\lstdefinestyle{maple}
{morecomment=[l]{##},morekeywords={if,while,for,from,to,do,then,end,proc},sensitive=false}
\keywords{CDCL,QBF,QCDCL,proof complexity,resolution,Q-resolution}
\theoremstyle{plain} 
\begin{document}
	
	\title[THE STRENGTH OF QBF CDCL SOLVERS AND QBF RESOLUTION]{Understanding the Relative Strength of\texorpdfstring{\\}{ }QBF CDCL Solvers and QBF Resolution}
	
	\author[O.~Beyersdorff]{Olaf Beyersdorff}	
	\author[B.~Böhm]{Benjamin Böhm}	
	\address{Friedrich Schiller University Jena}	
	\email{olaf.beyersdorff@uni-jena.de, benjamin.boehm@uni-jena.de}  
	
	
	
	
	
	
	
	\begin{abstract}
		\noindent QBF solvers implementing the QCDCL paradigm are powerful algorithms that successfully tackle many computationally complex applications. However, our theoretical understanding of the strength and limitations of these QCDCL solvers is very limited.
		
		In this paper we suggest to formally model QCDCL solvers as proof systems. We define different policies that can be used for decision heuristics and unit propagation and give rise to a number of sound and complete QBF proof systems (and hence new QCDCL algorithms). With respect to the standard policies used in practical QCDCL solving, we show that the corresponding QCDCL proof system is incomparable (via exponential separations) to Q-resolution, the classical QBF resolution system used in the literature. This is in stark contrast to the propositional setting where CDCL and resolution are known to be p-equivalent.
		
		This raises the question what formulas are hard for standard QCDCL, since Q-resolution lower bounds do not necessarily apply to QCDCL as we show here. In answer to this question we prove several lower bounds for QCDCL, including exponential lower bounds for a large class of random QBFs. 
		
		We also introduce a strengthening of the decision heuristic used in classical QCDCL, which does not necessarily decide variables in order of the prefix, but still allows to learn asserting clauses. We show that with this decision policy, QCDCL can be exponentially faster on some formulas.
		
		We further exhibit a QCDCL proof system that is p-equivalent to Q-resolution. In comparison to classical QCDCL, this new QCDCL version adapts both decision and unit propagation policies. 
	\end{abstract}
	
	\maketitle
	
	\section{Introduction}\label{S:one}

	SAT solving has revolutionised the way we perceive and approach computationally complex problems. While traditionally, NP-hard problems were considered computationally intractable, today SAT solvers routinely and successfully solve instances of NP-hard problems from virtually all application domains, and in particular problem instances of industrial relevance \cite{Var14}. Starting with the classic DPLL algorithm from the 1960s \cite{DP60,DLL62}, there have been a number of milestones in the evolution of SAT solving, but clearly one of the breakthrough achievements was the introduction of clause learning in the late 1990s, leading to the paradigm of \emph{conflict-driven clause learning} (CDCL) \cite{DBLP:conf/iccad/SilvaS96,ZhangMMM01}, the predominant technique of modern SAT solving.  CDCL ingeniously combines a number of crucial ingredients, among them variable decision heuristics, unit propagation, clause learning from conflicts, and restarts (cf.\ \cite{DBLP:series/faia/SilvaLM09} for an overview).
	
	Inspired by the success of SAT solving, many researchers have concentrated on the task to extend the reach of these technologies to computationally even more challenging settings with \emph{quantified Boolean formulas} (QBF) receiving key attention. As a PSPACE-complete problem, the satisfiability problem for QBFs encompasses all problems from the polynomial hierarchy and allows to encode many problems far more succinctly than in propositional logic (cf.\ \cite{ShuklaBPS19} for applications). 
	
	One of the main techniques in QBF solving is the propositional CDCL technique, lifted to QBF in the form of QCDCL \cite{ZM02}. However, solving QBFs presents additional challenges as the quantifier type of variables (existential and universal) needs to be taken into account as well as the variable dependencies stemming from the quantifier prefix.\footnote{In this paper we focus on prenex QBFs with a CNF matrix.} This particularly impacts the variable selection heuristics and details of the unit propagation within QCDCL. 
	In addition to QCDCL there are further QBF solving techniques, exploiting QBF features absent in SAT, such as expanding universal variables in expansion solving \cite{JM15} and dependency schemes in dependency-aware solving \cite{LonsingE17,SlivovskyS16,PeitlSS19}. Compared to SAT solving, QBF solving is still at an earlier stage. However,  QBF solving has seen huge improvements during the past 15 years \cite{PulinaS19}, and there are problems of practical relevance where QBF solvers  outperform SAT  solvers \cite{FaymonvilleFRT17}.
	
	The enormous success of SAT and QBF solving of course raises theoretical questions of utmost importance: why are these solvers so successful and what are their limitations? The main approach through understanding these questions comes from proof complexity \cite{Bus12,Nordstrom15}. 
	The central problem in proof complexity is to determine the size of
	the smallest proof for a given formula in a specified proof system, typically defined through a set of
	axioms and inference rules. Traces of runs of SAT/QBF solvers on unsatisfiable instances yield proofs
	of unsatisfiability, whereby each solver implicitly defines a proof system. In particular, SAT solvers implementing the DPLL and CDCL paradigms are based on resolution \cite{Nordstrom15}, which is arguably the most studied
	proof system in proof complexity. 
	
	\emph{Propositional resolution} operates on clauses and uses the resolution rule
	\begin{equation} \label{eq:res-rule}
	\frac{C\vee x \qquad D \vee \bar x}{C\vee D}
	\end{equation}
	as its only inference rule to derive a new clause $C\vee D$ from the two parent clauses $C\vee x$ and $D \vee \bar x$.\footnote{We denote such a resolution inference with pivot $x$ by $(C\vee x) \resop{x} (D\vee \bar x)$ throughout the paper.}
	There is a host of lower bounds and lower bound techniques available for propositional resolution (cf.\ \cite{BP01,Seg07,Kra19} for surveys).
	
	While it is relatively easy to see that the classic DPLL branching algorithm \cite{DP60,DLL62} exactly corresponds to tree-like resolution (where resolution derivations are in form of a tree), the \emph{relation between CDCL and resolution} is far more complex. On the one hand, resolution proofs can be generated efficiently from traces of CDCL runs on unsatisfiable formulas \cite{DBLP:journals/jair/BeameKS04}, a crucial observation being that learned clauses are derivable by resolution \cite{DBLP:journals/jair/BeameKS04,DBLP:conf/iccad/SilvaS96}. The opposite simulation is considerably more difficult, with a series of works \cite{DBLP:journals/jair/BeameKS04,HertelBPG08,DBLP:journals/ai/PipatsrisawatD11,AtseriasFT11} culminating in the result that CDCL can efficiently simulate arbitrary resolution proofs, i.e., resolution and CDCL are equivalent. This directly implies that all known lower bounds for proof size in resolution translate into lower bounds for CDCL running time. In addition, other measures such as proof space model memory requirements of SAT solvers, thereby implying lower bounds on memory consumption, in particular when considering time-space tradeoffs \cite{Nordstrom08}. 
	
	Exciting as this equivalence between CDCL and resolution is from a theoretical point of view, it has to be interpreted with care. Proof systems are inherently non-deterministic procedures, while CDCL algorithms are largely deterministic (some randomisation might occasionally be used). To overcome this discrepancy, the simulations of resolution by CDCL \cite{DBLP:journals/jair/BeameKS04,DBLP:journals/ai/PipatsrisawatD11} use arbitrary decision heuristics and perform excessive restarts, both of which diverge from practical CDCL policies. Indeed, in very recent work \cite{Vin20} it was shown that CDCL with practical decision heuristics such as VSIDS \cite{ZhangMMM01} is exponentially weaker than resolution, and similar results have been obtained for further decision heuristics \cite{MullPR19}. Regarding restarts there is intense research aiming to determine the power of CDCL without restarts from a proof complexity perspective (cf.~\cite{BussHJ08,BonetBJ14}).
	
	On the QBF level, this naturally raises the question \emph{what proof system corresponds to QCDCL}. As in propositional proof complexity, QBF resolution systems take a prominent place in the QBF proof system landscape, with the basic and historically first \qres system \cite{DBLP:journals/iandc/BuningKF95} receiving key attention. \qres is a refutational system that proves the falsity of fully quantified prenex QBFs with a CNF matrix (QCNFs). The system allows to use the propositional resolution rule~\eqref{eq:res-rule} under the conditions that the pivot $x$ is an existential variable and the resolvent $C\vee D$ is non-tautological. In addition, \qres uses a \emph{universal reduction rule}
	\begin{equation} \label{eq:red-rule}
	\frac{C\vee u}{C}\enspace,
	\end{equation}
	where $u$ is a universal literal that in the quantifier prefix is quantified right of all variables in~$C$, i.e., none of the literals in~$C$ depends on $u$. For \qres we have a number of lower bounds \cite{BWJ14,BeyersdorffCJ19,BBH19} as well as lower bound techniques, some of them lifted from propositional proof complexity \cite{BCMS17,BCS19}, but more interestingly some of them genuine to the QBF domain \cite{BBH19,BBM20} that unveil deep connections between proof size and circuit complexity \cite{BBCP20}, unparalleled in the propositional domain.
	
	Unlike in the relation between SAT and CDCL, it is has been open whether QCDCL runs can be efficiently translated into \qres. Instead, QCDCL runs can be simulated by the stronger QBF resolution system of \emph{\ldqres} \cite{ZM02,Balabanov12}. In fact, this system originates from solving, where it was noted that clauses learned from QCDCL conflicts can be derived in \ldqres \cite{ZM02}. \Ldqres implements a more liberal use of the resolution rule~\eqref{eq:res-rule}, which allows to derive certain tautologies (cf.\ Section~\ref{subsec:qres-ldqres} for details). In general, allowing to derive tautologies with \eqref{eq:res-rule} is unsound, an example is given in Section~\ref{subsec:qres-ldqres}. However, the tautologies allowed in \ldqres do not present problems for soundness and are exactly those clauses needed when learning  clauses in QCDCL. Hence \ldqres simulates QCDCL \cite{ZM02,Balabanov12}. However, it is known that \ldqres allows exponentially shorter proofs than \qres for some QBFs \cite{ELW13,BBH19,BeyersdorffBM19}.
	
	We also remark that there are further QBF resolution systems (cf.\ \cite{BeyersdorffCJ19} for an overview), some of them corresponding to other solving approaches in QBF, such as the system $\forall$\textsf{Exp+Res} that captures expansion QBF solving \cite{JM15}.
	
	In summary, it is fair to say that the relations between QCDCL solving and QBF resolution (either \qres or \ldqres) are \emph{currently not well understood.} In particular, an analogue of the equivalence of CDCL SAT solving and propositional resolution \cite{DBLP:journals/jair/BeameKS04,DBLP:journals/ai/PipatsrisawatD11,AtseriasFT11} is currently absent in the QBF domain. This brings us to the topic of this paper.

	\subsection{Our contributions}
	
	We state and explain our main contributions and provide pointers to where these are proven in the main part.
	
	\subsubsection{QCDCL and Q-resolution are incomparable}
	\label{sec:qcdcl-qres-intro}
	
	Our first contribution establishes that QCDCL and \qres are incomparable by exponential separations (Thm~\ref{TheoremIncomparable}), i.e., there exist QBFs that are easy for QCDCL, but require exponential-size \qres refutations, and vice versa. As explained above, this is in stark contrast to the propositional setting, where CDCL and resolution are equivalent.
	
	Proving the incomparability requires two families of QBFs. For the first we take the parity formulas which are known to require exponential-size \qres refutations \cite{BeyersdorffCJ19}. Here we show that $\mathtt{QParity}_n$ is easy for QCDCL.
	
	This requires to formally state QCDCL in terms of a proof system (we will denote this by \qcdcl and explain it in Sections~\ref{subsec:cdcl-ps-intro} and \ref{sec:qcdcl-ps}) and to construct specific trails and clauses learned from these trails that together comprise a short \qcdcl proof of the formulas.
	
	For the opposite separation we construct formulas $\mathtt{Trapdoor}_n$ and show that they require exponential-size \qcdcl refutations (Proposition~\ref{PropTrapdoorHardForQCDCL}). Hardness of these formulas for \qcdcl rests on the fact that 
in \qcdcl, variables have to be decided in order of the quantifier prefix. 
	On the other hand, it is easy to obtain short \qres refutations of $\mathtt{Trapdoor}_n$ (Proposition~\ref{prop:trapdoor-easy}).
	
	This establishes the separation of \qcdcl and \qres. We remark that in earlier work, Janota \cite{DBLP:conf/sat/Janota16} showed that QCDCL with a specific asserting learning scheme requires large running time on some class of QBFs, whereas the same formulas are easy for \qres. Of course, this raises the question whether another learning scheme might produce short QCDCL runs. In contrast, our  result rules out any simulation of \qres by \qcdcl (or vice versa), regardless of the learning scheme used.

	\subsubsection{Lower bounds for QCDCL}
	\label{subsec:lb-qcdcl-intro}

	The incomparability of \qres and \qcdcl raises the immediate question of what formulas are hard for QCDCL. Previous research has largely concentrated on showing lower bounds for \qres (e.g.\ \cite{DBLP:journals/iandc/BuningKF95,BeyersdorffCJ19,BBH19}). However, by our results from the last subsection, these lower bounds do not necessarily apply to QCDCL, and prior to this paper no dedicated lower bounds for QCDCL (with arbitrary learning schemes) were known.
	
	Here we show that several formulas from the QBF literature, including the equality formulas and a large class of random QBFs \cite{BBH19} are indeed hard for \qcdcl. 
			Both the equality and the random formulas are of the type $\Sigma_3^b$, i.e., they have two quantifier alternations starting with $\exists$.  Also, both require exponential-size proofs in \qres (the random formulas whp) \cite{BBH19}. This is shown in \cite{BBH19} via the size-cost-capacity technique, a semantically grounded QBF lower-bound technique that infers \qres hardness for formulas $\Phi_n$ (and in fact hardness for even stronger systems) from lower bounds for the size of countermodels for $\Phi_n$. 
	
	It is not clear how to directly apply this technique to \qcdcl. Instead, we identify a property, which we term the \emph{$XT$-property} (Definition~\ref{def:XT-property}), that we can use to lift hardness from \qres to \qcdcl. Intuitively, it says that in a $\Sigma_3^b$ formula $\Phi$ with quantifier prefix of the form $\exists X \forall U \exists T$ with blocks of variables $X$, $U$, $T$, there is no direct connection between the $X$ and $T$ variables, i.e., $\Phi$ does not contain clauses with $X$ and $T$ variables, but no $U$ variables (there are some further condition on clauses containing only $T$ variables (Definition~\ref{def:XT-property})).
	
	We can then prove that QCDCL runs on formulas with this $XT$-property can be efficiently transformed into \qres refutations, not only into \ldqres refutations. Thus for formulas with the $XT$-property we can lift the \qres lower bounds to  \qcdcl (Thm~\ref{TheoremXTpropertyRequiresLongDistanceProofsSizeS})
		It is quite easy to check that both the equality formulas as well as the random formulas above have the $XT$-property.  Thus both are exponentially hard for \qcdcl.
	
	Our findings so far reveal an interesting picture on \qcdcl hardness. Firstly, \emph{not all \qres hardness results lift to \qcdcl}: the lower bounds for equality and random formulas shown via size-cost-capacity \cite{BBH19} \emph{do}, but the lower bounds for parity shown via circuit complexity \cite{BeyersdorffCJ19} \emph{do not}. 
	
	Secondly, it is worth to compare the \qcdcl hardness results for $\mathtt{Trapdoor}$ from the previous subsection to the \qcdcl hardness results shown here for equality and random formulas. The hardness of $\mathtt{Trapdoor}$ lifts from propositional hardness for PHP, while the hardness of equality and random formulas lifts from \qres hardness. In fact, this can be made formal by using a model of QBF proof systems with access to an NP oracle \cite{BHP20}, which allows to collapse propositional subderivations of arbitrary size into just one oracle inference step. Hardness under the NP-oracle version of \qres guarantees that the hardness is `genuine' to QBF and not lifted from propositional resolution. We show here that this notion of `genuine' QBF hardness, tailored towards \qcdcl, also holds for the \qcdcl lower bounds for equality and the random QBFs (Proposition~\ref{prop:equality-random-oracle-hardness}). 
	
	On the other hand, the parity formulas also exhibit `genuine' QBF hardness, as they are hard in the NP-oracle version of \qres \cite{BBM20}. Since they are easy for \qcdcl, this means that not all genuine \qres lower bounds lift to \qcdcl.
	
	Thirdly, hardness for \qcdcl can of course also stem from hardness for \ldqres, since the latter system simulates the former.\footnote{A proof system $P$ p-simulates a proof system $S$ if each $S$ proof can be efficiently transformed into a $P$ proof of the same formula \cite{CR79}. If the systems p-simulate each other, they are p-equivalent.} However, there are only very few hardness results for \ldqres known in the literature \cite{BWJ14,BeyersdorffCJ19}, hence our hardness results shown here should be also valuable for practitioners, in particular the hardness results for the large class of random QCNFs. It is also worth noting that the equality formulas are easy for \ldqres \cite{BeyersdorffBM19}, hence our results imply an exponential separation between \qcdcl and \ldqres (Corollary~\ref{cor:qcdcl-ldqres}).
	
	\subsubsection{Our framework: QCDCL as formal proof systems}
	\label{subsec:cdcl-ps-intro}
	
	Technically, this paper hinges on the formalisation of QCDCL solving as precisely defined proof systems, which can subsequently be analysed from a proof-complexity perspective. This involves formalising a number of QCDCL ingredients (cf.\ Section~\ref{subsec:qcdcl} for an informal account on how QCDCL works). 
	
	A \emph{QCDCL trail} $\mathcal{T}$ for a QCNF $\Phi$ is a sequence of literals, which we typically denote as
	\begin{align*}
	\mathcal{T}=(p_{(0,1)},\ldots, p_{(0,g_0)};\mathbf{d_{1}},p_{(1,1)},\ldots,p_{(1,g_{1})};\ldots; \mathbf{d_r},p_{(r,1)},\ldots ,p_{(r,g_r)}  )\text{.}
	\end{align*}
	Here $d_1,\dots,d_r$ are the decision literals and the $p$ literals are propagated by unit propagation. While decisions can be either existential or universal, propagated literals are always existential. In classical QCDCL, the following \emph{decision policy} is adopted: 
	\begin{itemize}
		\item $\lo$\textbf{ - }For each $d_i\in \mathcal{T}$, all variables from quantifier blocks left of $d_i$ in the prefix of $\Phi$ appear left of $d_i$ in $\mathcal{T}$ as positive or negative literals, i.e., all variables on which $d_i$ depends have been decided or propagated (as literals) before $d_i$ is decided.		
	\end{itemize}
	This decision policy therefore follows the order of quantification in the prefix, for which reason we call it level ordered ($\lo$).
	
	In addition to $\lo$, we consider three more decision policies. The first one stems from propositional CDCL where the order of decisions is completely arbitrary:
	\begin{itemize}
		\item $\ao$\textbf{ - }Given a trail $\mathcal{T}$, we can choose any remaining literal as the next decision.
	\end{itemize}
	Before defining the remaining two decision policies, we explain our policies for unit propagation. The first comes again just from propositional CDCL:
	\begin{itemize}
		\item $\nr$\textbf{ - }For each propagated literal $p_{(i,j)}\in \mathcal{T}$ there has to be a clause $C$ in $\Phi$ such that $C$ becomes a single-literal clause under the sub-trail $\mathcal{T}[i,j-1]$ of $\mathcal{T}$ that contains all decisions and propagations in $\mathcal{T}$ before $p_{(i,j)}$.
	\end{itemize}
	To illustrate this with a small example, assume that $\Phi$ contains a clause $C=x\vee\bar y \vee z$ and $\mathcal{T}$ contains the decisions $\bar x, y$. Then $C$ is simplified to the single literal $z$ under the assignment $\mathcal{T}$, and hence $z$ is propagated and included as the next variable in $\mathcal{T}$. 
	
	This is just CDCL propagation. It is, however, not what is done in QCDCL. Assume again we have a clause $C=x\vee\bar y \vee z\vee u$ in $\Phi$ and $\exists x, y, z \forall u$ appears in the prefix of $\Phi$. If the trail contains $\bar x,y$, we cannot propagate $z$ with the policy $\nr$. However, we can use universal reduction on $u$ as in rule~\eqref{eq:red-rule} of \qres to reduce the clause $z\vee u$ (the clause $C$ under the assignment corresponding to $\mathcal{T}$) to the single-literal clause $z$. Hence we can immediately propagate $z$ with the following unit propagation policy:
	\begin{itemize}
		\item $\ar$\textbf{ - }For each propagated literal $p_{(i,j)}\in \mathcal{T}$ there is a clause $C$ in $\Phi$ such that $C$ becomes a single-literal clause under the trail $\mathcal{T}[i,j-1]$ using universal reduction.
	\end{itemize}
	
	In (Q)CDCL, whenever a trail $\mathcal{T}$ runs into a conflict, i.e., a clause $C$ from $\Phi$ is falsified, we perform conflict analysis in the form of \emph{clause learning}. This results in a clause $D$ that follows from $\Phi$ and describes a reason for the conflict. Such conflict clauses are obtained by performing resolution (for CDCL) and \ldqres (for QCDCL), starting from the conflict clause $C$ and resolving along the propagated variables in $\mathcal{T}$ in reverse order (skipping resolution steps when the pivot is missing). 
	
	We prove that this learning process works independently from our policies, e.g., even when $\ao$ or $\nr$ is used, we can correctly perform \ldqres for clause learning as in QCDCL (Proposition~\ref{PropSoundnessLongDistance}). For practical (Q)CDCL, it is important that we do not just learn any clause, but an \emph{asserting clause} $D$, which means that $D$ becomes unit after backtracking.
	
	We notice that though the policy $\ao$ is sound, it does not always allow to learn asserting clauses (Remark~\ref{rem:no-asserting}). Therefore, we introduce further policies, which are intermediate between $\lo$ and $\ao$ and still guarantee that asserting clauses can be learned (Lemmas~\ref{LemmaAssertingClauseASONR} and \ref{LemmaAssertingClauseASROAR}). 
	
	We define two policies $\aso$ and $\asro$, to be used with the unit propagation policies $\nr$ and $\ar$, respectively.
	\begin{itemize}
		\item 
		$\aso$\textbf{ - }We can decide a literal $d_k$ if it is existential, or if it is universal and it holds $\lv(d_1)\leq \ldots \leq \lv(d_k)$.
		\footnote{Under a prefix $Q_1 X_1 Q_2 X_2 \dots Q_s X_s$ 	with disjoint blocks of variables $X_i$ and alternating blocks of quantifiers~$Q_i\in\{\exists,\forall\}$, the \emph{quantifier level} of a variable $x$ is $\lv(x)=i$, if $x\in X_i$.}
		\item $\asro$\textbf{ - }We can only decide an existential variable $x$ next, if and only if we already decided all universal variables $u$ with $\lv(u)<\lv(x)$ before in $\mathcal{T}$. 
	\end{itemize}
	
	\begin{figure}
		\centering
		
		
		\tikzset {_j8eqf49a1/.code = {\pgfsetadditionalshadetransform{ \pgftransformshift{\pgfpoint{0 bp } { 0 bp }  }  \pgftransformrotate{-270 }  \pgftransformscale{2 }  }}}
		\pgfdeclarehorizontalshading{_pogy62zca}{150bp}{rgb(0bp)=(1,1,1);
			rgb(37.5bp)=(1,1,1);
			rgb(50bp)=(0.95,0.95,0.95);
			rgb(50.25bp)=(0.88,0.88,0.88);
			rgb(62.5bp)=(0.96,0.96,0.96);
			rgb(100bp)=(0.96,0.96,0.96)}
		
		
		\tikzset {_8tapmc2lb/.code = {\pgfsetadditionalshadetransform{ \pgftransformshift{\pgfpoint{0 bp } { 0 bp }  }  \pgftransformrotate{-270 }  \pgftransformscale{2 }  }}}
		\pgfdeclarehorizontalshading{_x2z759u63}{150bp}{rgb(0bp)=(1,1,1);
			rgb(37.5bp)=(1,1,1);
			rgb(50bp)=(0.95,0.95,0.95);
			rgb(50.25bp)=(0.88,0.88,0.88);
			rgb(62.5bp)=(0.96,0.96,0.96);
			rgb(100bp)=(0.96,0.96,0.96)}
		
		
		\tikzset {_glmalahg6/.code = {\pgfsetadditionalshadetransform{ \pgftransformshift{\pgfpoint{0 bp } { 0 bp }  }  \pgftransformrotate{-270 }  \pgftransformscale{2 }  }}}
		\pgfdeclarehorizontalshading{_ove9yi8c1}{150bp}{rgb(0bp)=(1,1,1);
			rgb(37.5bp)=(1,1,1);
			rgb(50bp)=(0.95,0.95,0.95);
			rgb(50.25bp)=(0.88,0.88,0.88);
			rgb(62.5bp)=(0.96,0.96,0.96);
			rgb(100bp)=(0.96,0.96,0.96)}
		
		
		\tikzset {_7lht7677c/.code = {\pgfsetadditionalshadetransform{ \pgftransformshift{\pgfpoint{0 bp } { 0 bp }  }  \pgftransformrotate{-270 }  \pgftransformscale{2 }  }}}
		\pgfdeclarehorizontalshading{_26jr9qcf1}{150bp}{rgb(0bp)=(1,1,1);
			rgb(37.5bp)=(1,1,1);
			rgb(50bp)=(0.95,0.95,0.95);
			rgb(50.25bp)=(0.88,0.88,0.88);
			rgb(62.5bp)=(0.96,0.96,0.96);
			rgb(100bp)=(0.96,0.96,0.96)}
		
		
		\tikzset {_bbj93ym0r/.code = {\pgfsetadditionalshadetransform{ \pgftransformshift{\pgfpoint{0 bp } { 0 bp }  }  \pgftransformrotate{-270 }  \pgftransformscale{2 }  }}}
		\pgfdeclarehorizontalshading{_k2der5oz4}{150bp}{rgb(0bp)=(1,1,1);
			rgb(37.5bp)=(1,1,1);
			rgb(50bp)=(0.95,0.95,0.95);
			rgb(50.25bp)=(0.88,0.88,0.88);
			rgb(62.5bp)=(0.96,0.96,0.96);
			rgb(100bp)=(0.96,0.96,0.96)}
		
		
		\tikzset {_zb9z0e029/.code = {\pgfsetadditionalshadetransform{ \pgftransformshift{\pgfpoint{0 bp } { 0 bp }  }  \pgftransformrotate{-270 }  \pgftransformscale{2 }  }}}
		\pgfdeclarehorizontalshading{_0pwh4sk7t}{150bp}{rgb(0bp)=(1,1,1);
			rgb(37.5bp)=(1,1,1);
			rgb(50bp)=(0.95,0.95,0.95);
			rgb(50.25bp)=(0.88,0.88,0.88);
			rgb(62.5bp)=(0.96,0.96,0.96);
			rgb(100bp)=(0.96,0.96,0.96)}
		
		
		\tikzset {_b74qmq0s2/.code = {\pgfsetadditionalshadetransform{ \pgftransformshift{\pgfpoint{0 bp } { 0 bp }  }  \pgftransformrotate{-270 }  \pgftransformscale{2 }  }}}
		\pgfdeclarehorizontalshading{_xsbqfayhq}{150bp}{rgb(0bp)=(1,1,1);
			rgb(37.5bp)=(1,1,1);
			rgb(50bp)=(0.95,0.95,0.95);
			rgb(50.25bp)=(0.88,0.88,0.88);
			rgb(62.5bp)=(0.96,0.96,0.96);
			rgb(100bp)=(0.96,0.96,0.96)}
		
		
		\tikzset {_hfq0n7d22/.code = {\pgfsetadditionalshadetransform{ \pgftransformshift{\pgfpoint{0 bp } { 0 bp }  }  \pgftransformrotate{-270 }  \pgftransformscale{2 }  }}}
		\pgfdeclarehorizontalshading{_s76zcr58y}{150bp}{rgb(0bp)=(1,1,1);
			rgb(37.5bp)=(1,1,1);
			rgb(50bp)=(0.95,0.95,0.95);
			rgb(50.25bp)=(0.88,0.88,0.88);
			rgb(62.5bp)=(0.96,0.96,0.96);
			rgb(100bp)=(0.96,0.96,0.96)}
		\tikzset{every picture/.style={line width=0.75pt}} 
		
		
		\tikzset {_mckjdonky/.code = {\pgfsetadditionalshadetransform{ \pgftransformshift{\pgfpoint{0 bp } { 0 bp }  }  \pgftransformrotate{-270 }  \pgftransformscale{2 }  }}}
		\pgfdeclarehorizontalshading{_pvcfv9qai}{150bp}{rgb(0bp)=(1,1,1);
			rgb(37.5bp)=(1,1,1);
			rgb(50bp)=(0.95,0.95,0.95);
			rgb(50.25bp)=(0.88,0.88,0.88);
			rgb(62.5bp)=(0.96,0.96,0.96);
			rgb(100bp)=(0.96,0.96,0.96)}
		\tikzset{every picture/.style={line width=0.75pt}} 
		
		\begin{tikzpicture}[x=0.75pt,y=0.75pt,yscale=-0.75,xscale=1]
		

		\path  [shading=_pogy62zca,_j8eqf49a1] (130,348) .. controls (130,343.58) and (133.58,340) .. (138,340) -- (293,340) .. controls (297.42,340) and (301,343.58) .. (301,348) -- (301,372) .. controls (301,376.42) and (297.42,380) .. (293,380) -- (138,380) .. controls (133.58,380) and (130,376.42) .. (130,372) -- cycle ; 
		\draw   (130,348) .. controls (130,343.58) and (133.58,340) .. (138,340) -- (293,340) .. controls (297.42,340) and (301,343.58) .. (301,348) -- (301,372) .. controls (301,376.42) and (297.42,380) .. (293,380) -- (138,380) .. controls (133.58,380) and (130,376.42) .. (130,372) -- cycle ; 
		
		\path  [shading=_x2z759u63,_8tapmc2lb] (359,348) .. controls (359,343.58) and (362.58,340) .. (367,340) -- (522,340) .. controls (526.42,340) and (530,343.58) .. (530,348) -- (530,372) .. controls (530,376.42) and (526.42,380) .. (522,380) -- (367,380) .. controls (362.58,380) and (359,376.42) .. (359,372) -- cycle ; 
		\draw   (359,348) .. controls (359,343.58) and (362.58,340) .. (367,340) -- (522,340) .. controls (526.42,340) and (530,343.58) .. (530,348) -- (530,372) .. controls (530,376.42) and (526.42,380) .. (522,380) -- (367,380) .. controls (362.58,380) and (359,376.42) .. (359,372) -- cycle ; 
		
		\path  [shading=_ove9yi8c1,_glmalahg6] (130,278) .. controls (130,273.58) and (133.58,270) .. (138,270) -- (293,270) .. controls (297.42,270) and (301,273.58) .. (301,278) -- (301,302) .. controls (301,306.42) and (297.42,310) .. (293,310) -- (138,310) .. controls (133.58,310) and (130,306.42) .. (130,302) -- cycle ; 
		\draw   (130,278) .. controls (130,273.58) and (133.58,270) .. (138,270) -- (293,270) .. controls (297.42,270) and (301,273.58) .. (301,278) -- (301,302) .. controls (301,306.42) and (297.42,310) .. (293,310) -- (138,310) .. controls (133.58,310) and (130,306.42) .. (130,302) -- cycle ; 
		
		\path  [shading=_26jr9qcf1,_7lht7677c] (359,278) .. controls (359,273.58) and (362.58,270) .. (367,270) -- (522,270) .. controls (526.42,270) and (530,273.58) .. (530,278) -- (530,302) .. controls (530,306.42) and (526.42,310) .. (522,310) -- (367,310) .. controls (362.58,310) and (359,306.42) .. (359,302) -- cycle ; 
		\draw   (359,278) .. controls (359,273.58) and (362.58,270) .. (367,270) -- (522,270) .. controls (526.42,270) and (530,273.58) .. (530,278) -- (530,302) .. controls (530,306.42) and (526.42,310) .. (522,310) -- (367,310) .. controls (362.58,310) and (359,306.42) .. (359,302) -- cycle ; 
		
		\path  [shading=_k2der5oz4,_bbj93ym0r] (359,208) .. controls (359,203.58) and (362.58,200) .. (367,200) -- (522,200) .. controls (526.42,200) and (530,203.58) .. (530,208) -- (530,232) .. controls (530,236.42) and (526.42,240) .. (522,240) -- (367,240) .. controls (362.58,240) and (359,236.42) .. (359,232) -- cycle ; 
		\draw   (359,208) .. controls (359,203.58) and (362.58,200) .. (367,200) -- (522,200) .. controls (526.42,200) and (530,203.58) .. (530,208) -- (530,232) .. controls (530,236.42) and (526.42,240) .. (522,240) -- (367,240) .. controls (362.58,240) and (359,236.42) .. (359,232) -- cycle ; 
		
		\path  [shading=_0pwh4sk7t,_zb9z0e029] (130,208) .. controls (130,203.58) and (133.58,200) .. (138,200) -- (293,200) .. controls (297.42,200) and (301,203.58) .. (301,208) -- (301,232) .. controls (301,236.42) and (297.42,240) .. (293,240) -- (138,240) .. controls (133.58,240) and (130,236.42) .. (130,232) -- cycle ; 
		\draw   (130,208) .. controls (130,203.58) and (133.58,200) .. (138,200) -- (293,200) .. controls (297.42,200) and (301,203.58) .. (301,208) -- (301,232) .. controls (301,236.42) and (297.42,240) .. (293,240) -- (138,240) .. controls (133.58,240) and (130,236.42) .. (130,232) -- cycle ; 
		
		
		\path  [shading=_xsbqfayhq,_b74qmq0s2] (130,138) .. controls (130,133.58) and (133.58,130) .. (138,130) -- (293,130) .. controls (297.42,130) and (301,133.58) .. (301,138) -- (301,162) .. controls (301,166.42) and (297.42,170) .. (293,170) -- (138,170) .. controls (133.58,170) and (130,166.42) .. (130,162) -- cycle ; 
		\draw   (130,138) .. controls (130,133.58) and (133.58,130) .. (138,130) -- (293,130) .. controls (297.42,130) and (301,133.58) .. (301,138) -- (301,162) .. controls (301,166.42) and (297.42,170) .. (293,170) -- (138,170) .. controls (133.58,170) and (130,166.42) .. (130,162) -- cycle ; 
		
		\path  [shading=_s76zcr58y,_hfq0n7d22] (245,68) .. controls (245,63.58) and (248.58,60) .. (253,60) -- (408,60) .. controls (412.42,60) and (416,63.58) .. (416,68) -- (416,92) .. controls (416,96.42) and (412.42,100) .. (408,100) -- (253,100) .. controls (248.58,100) and (245,96.42) .. (245,92) -- cycle ; 
		\draw   (245,68) .. controls (245,63.58) and (248.58,60) .. (253,60) -- (408,60) .. controls (412.42,60) and (416,63.58) .. (416,68) -- (416,92) .. controls (416,96.42) and (412.42,100) .. (408,100) -- (253,100) .. controls (248.58,100) and (245,96.42) .. (245,92) -- cycle ; 
			\path  [shading=_pvcfv9qai,_mckjdonky] (240,68) .. controls (240,63.58) and (243.58,60) .. (248,60) -- (412,60) .. controls (416.42,60) and (420,63.58) .. (420,68) -- (420,92) .. controls (420,96.42) and (416.42,100) .. (412,100) -- (248,100) .. controls (243.58,100) and (240,96.42) .. (240,92) -- cycle ; 
		\draw   (240,68) .. controls (240,63.58) and (243.58,60) .. (248,60) -- (412,60) .. controls (416.42,60) and (420,63.58) .. (420,68) -- (420,92) .. controls (420,96.42) and (416.42,100) .. (412,100) -- (248,100) .. controls (243.58,100) and (240,96.42) .. (240,92) -- cycle ; 
		\draw    (220,310) -- (220,340) ;
		\draw    (220,240) -- (220,270) ;
		\draw    (450,310) -- (450,340) ;
		\draw    (450,240) -- (450,270) ;
		\draw    (220,170) -- (220,200) ;
		\draw    (330,100) -- (220,130) ;
		\draw    (330,100) -- (450,200) ;
		
		\draw (450.27,359.5) node   [align=left] {$\loar=$\:\qcdcl\:};
		\draw (450.27,290.5) node   [align=left] {$\asroar$};
		\draw (450.27,220.5) node   [align=left] {$\aoar$};
		\draw (221.27,360.5) node   [align=left] {$\lonr$};
		\draw (221.27,290.5) node   [align=left] {$\asonr$};
		\draw (221.27,219.5) node   [align=left] {$\aonr$};
		\draw (221.27,150.5) node   [align=left] {\qres};
		\draw (330.27,80.5) node   [align=left] {\ldqres};
		
		\end{tikzpicture}
		
		\caption{Overview of the defined QCDCL proof systems. Lines denote p-simulations. \label{fig:simulation-qcdcl}}
	\end{figure}

	Combining the two policies $\ar$ and $\nr$ for unit propagation and the four policies $\ao$, $\lo$, $\aso$, and $\asro$, we obtain six QCDCL systems. These are depicted in Figure~\ref{fig:simulation-qcdcl} (we are not interested in the systems $\asoar$ and $\asronr$ since $\aso$ and $\asro$ would not be beneficial in these combinations). As mentioned, combining $\lo$ with $\ar$ yields the standard QCDCL system. The other five variants are introduced here for the first time.
	
	To actually show that these systems are sound and to proof-theoretically analyse their strength, we turn these six systems into formal refutational proof systems for QBF (Definition~\ref{DefinitionQCDCLrefutation}). 
	%
	%
	We establish that all these systems are sound and complete QBF proof systems (Thm~\ref{TheoremSystemsSimulatedByQRes} \& Thm~\ref{thm:qcdcl-complete}).
	
	Soundness is shown via efficiently constructing \ldqres proofs from QCDCL proofs. Crucially, when using the unit-propagation policy $\nr$, then no long-distance steps are actually needed and we just construct \qres proofs. The resulting simulations are depicted in Figure~\ref{fig:simulation-qcdcl}. Simulations between the QCDCL calculi follow by definition. We remark already here that this simulation order  simplifies further due to our results described in the next two subsections (cf.\ Figure~\ref{fig:qcdcl-sim-order}).

	From a theoretical point of view, formalising the QCDCL ingredients into proof systems enables a precise proof-theoretic analysis of the QCDCL systems and their comparison to \qres. This already was the underlying feature of our results in the previous two subsections, showing the incomparability of \qres and \qcdcl (Section~\ref{sec:qcdcl-qres-intro}) and the lower bounds for \qcdcl (Section~\ref{subsec:lb-qcdcl-intro}). We use it further to obtain a version of  QCDCL  that is even p-equivalent to \qres.

	\subsubsection{A QCDCL system that characterises Q-resolution}
	
	In one of our main results we obtain a QCDCL characterisation of \qres. Of course, given that \qres and \qcdcl are incomparable (Section~\ref{sec:qcdcl-qres-intro}), we cannot hope to achieve such a characterisation by simply strengthening some of the QCDCL policies.\footnote{Such hope might not have seemed totally implausible prior to this paper, e.g.\ \cite{DBLP:conf/sat/Janota16} states that `CDCL QBF solving appears to be quite weak compared to general \qres.'} As explained in the previous subsection, traditional QCDCL is using the decision policy $\lo$ and the unit-propagation policy $\ar$. To obtain a QCDCL system equivalent to \qres, we will have to change both policies. We will \emph{strengthen} the decision policy and replace $\lo$ by $\ao$ (we could also replace it with the intermediate version $\aso$). In addition, we will somewhat \emph{weaken} the unit propagation policy from $\ar$ to $\nr$.\footnote{While intuitively $\nr$ might indeed appear weaker then $\ar$ (it produces fewer unit propagations), we show in the next subsection that they are in fact incomparable, cf.~Figure~\ref{fig:qcdcl-sim-order}.} 
		This leads to a characterisation of \qres in terms of $\aonr$, and $\asonr$ (Theorem~\ref{TheoremSimComplete}).

	\subsubsection{The simulation order of QCDCL proof systems}
	
	We further analyse the simulation order of the defined QCDCL and QBF resolution systems, cf.\ Figure~\ref{fig:qcdcl-sim-order} which almost completely determines the simulations and separations between the systems involved (cf.\ Section~\ref{sec:conclusion} for the open cases).
	
	We highlight the most interesting findings (in addition to the results already described). 
	Firstly, we show that the unit-propagation policies $\ar$ and $\nr$ are incomparable when fixing the decision policy $\lo$ used in practical QCDCL.  
	%
	Secondly, we show that replacing the decision policy $\lo$ in \qcdcl with the more liberal decision policy $\asro$ yields exponentially shorter QCDCL runs, which we demonstrate on the  $\mathtt{Equality}_n$ formulas.
	Again, this theoretical result identifies potential for improvements in practical solving (cf.\ also the discussion in the concluding Section~\ref{sec:conclusion}).

	

	\begin{figure}\label{FigureAllConnections}
		\begin{center}
			
			
			\tikzset {_w0ng3dawj/.code = {\pgfsetadditionalshadetransform{ \pgftransformshift{\pgfpoint{0 bp } { 0 bp }  }  \pgftransformrotate{-270 }  \pgftransformscale{2 }  }}}
			\pgfdeclarehorizontalshading{_bvxyh9yzk}{150bp}{rgb(0bp)=(1,1,1);
				rgb(37.5bp)=(1,1,1);
				rgb(50bp)=(0.95,0.95,0.95);
				rgb(50.25bp)=(0.88,0.88,0.88);
				rgb(62.5bp)=(0.96,0.96,0.96);
				rgb(100bp)=(0.96,0.96,0.96)}
			
			
			\tikzset {_xufmlvnjs/.code = {\pgfsetadditionalshadetransform{ \pgftransformshift{\pgfpoint{0 bp } { 0 bp }  }  \pgftransformrotate{-270 }  \pgftransformscale{2 }  }}}
			\pgfdeclarehorizontalshading{_kqvw9gypw}{150bp}{rgb(0bp)=(1,1,1);
				rgb(37.5bp)=(1,1,1);
				rgb(50bp)=(0.95,0.95,0.95);
				rgb(50.25bp)=(0.88,0.88,0.88);
				rgb(62.5bp)=(0.96,0.96,0.96);
				rgb(100bp)=(0.96,0.96,0.96)}
			
			
			\tikzset {_s03qht2bn/.code = {\pgfsetadditionalshadetransform{ \pgftransformshift{\pgfpoint{0 bp } { 0 bp }  }  \pgftransformrotate{-270 }  \pgftransformscale{2 }  }}}
			\pgfdeclarehorizontalshading{_1x1c9muq4}{150bp}{rgb(0bp)=(1,1,1);
				rgb(37.5bp)=(1,1,1);
				rgb(50bp)=(0.95,0.95,0.95);
				rgb(50.25bp)=(0.88,0.88,0.88);
				rgb(62.5bp)=(0.96,0.96,0.96);
				rgb(100bp)=(0.96,0.96,0.96)}
			
			
			\tikzset {_r5s47byy9/.code = {\pgfsetadditionalshadetransform{ \pgftransformshift{\pgfpoint{0 bp } { 0 bp }  }  \pgftransformrotate{-270 }  \pgftransformscale{2 }  }}}
			\pgfdeclarehorizontalshading{_t36e4exrq}{150bp}{rgb(0bp)=(1,1,1);
				rgb(37.5bp)=(1,1,1);
				rgb(50bp)=(0.95,0.95,0.95);
				rgb(50.25bp)=(0.88,0.88,0.88);
				rgb(62.5bp)=(0.96,0.96,0.96);
				rgb(100bp)=(0.96,0.96,0.96)}
			
			
			\tikzset {_3vslq5rn6/.code = {\pgfsetadditionalshadetransform{ \pgftransformshift{\pgfpoint{0 bp } { 0 bp }  }  \pgftransformrotate{-270 }  \pgftransformscale{2 }  }}}
			\pgfdeclarehorizontalshading{_x55r0diw0}{150bp}{rgb(0bp)=(1,1,1);
				rgb(37.5bp)=(1,1,1);
				rgb(50bp)=(0.95,0.95,0.95);
				rgb(50.25bp)=(0.88,0.88,0.88);
				rgb(62.5bp)=(0.96,0.96,0.96);
				rgb(100bp)=(0.96,0.96,0.96)}
			
			
			\tikzset {_jip2nb5qg/.code = {\pgfsetadditionalshadetransform{ \pgftransformshift{\pgfpoint{0 bp } { 0 bp }  }  \pgftransformrotate{-270 }  \pgftransformscale{2 }  }}}
			\pgfdeclarehorizontalshading{_nd0h36q6n}{150bp}{rgb(0bp)=(1,1,1);
				rgb(37.5bp)=(1,1,1);
				rgb(50bp)=(0.95,0.95,0.95);
				rgb(50.25bp)=(0.88,0.88,0.88);
				rgb(62.5bp)=(0.96,0.96,0.96);
				rgb(100bp)=(0.96,0.96,0.96)}
			
			
			\tikzset {_2d7jsth3w/.code = {\pgfsetadditionalshadetransform{ \pgftransformshift{\pgfpoint{0 bp } { 0 bp }  }  \pgftransformrotate{-270 }  \pgftransformscale{2 }  }}}
			\pgfdeclarehorizontalshading{_2u2ycd5gz}{150bp}{rgb(0bp)=(1,1,1);
				rgb(37.5bp)=(1,1,1);
				rgb(50bp)=(0.95,0.95,0.95);
				rgb(50.25bp)=(0.88,0.88,0.88);
				rgb(62.5bp)=(0.96,0.96,0.96);
				rgb(100bp)=(0.96,0.96,0.96)}
			
			
			\tikzset {_697w8ohxo/.code = {\pgfsetadditionalshadetransform{ \pgftransformshift{\pgfpoint{0 bp } { 0 bp }  }  \pgftransformrotate{-270 }  \pgftransformscale{2 }  }}}
			\pgfdeclarehorizontalshading{_p7azjjclf}{150bp}{rgb(0bp)=(1,1,1);
				rgb(37.5bp)=(1,1,1);
				rgb(50bp)=(0.95,0.95,0.95);
				rgb(50.25bp)=(0.88,0.88,0.88);
				rgb(62.5bp)=(0.96,0.96,0.96);
				rgb(100bp)=(0.96,0.96,0.96)}
			
			
			\tikzset {_2q5z798r7/.code = {\pgfsetadditionalshadetransform{ \pgftransformshift{\pgfpoint{0 bp } { 0 bp }  }  \pgftransformrotate{-270 }  \pgftransformscale{2 }  }}}
			\pgfdeclarehorizontalshading{_4eq2ylx40}{150bp}{rgb(0bp)=(1,1,1);
				rgb(37.5bp)=(1,1,1);
				rgb(50bp)=(0.95,0.95,0.95);
				rgb(50.25bp)=(0.88,0.88,0.88);
				rgb(62.5bp)=(0.96,0.96,0.96);
				rgb(100bp)=(0.96,0.96,0.96)}
			
			
			\tikzset {_gr8uh6ycu/.code = {\pgfsetadditionalshadetransform{ \pgftransformshift{\pgfpoint{0 bp } { 0 bp }  }  \pgftransformrotate{-270 }  \pgftransformscale{2 }  }}}
			\pgfdeclarehorizontalshading{_6hqbi887a}{150bp}{rgb(0bp)=(1,1,1);
				rgb(37.5bp)=(1,1,1);
				rgb(50bp)=(0.95,0.95,0.95);
				rgb(50.25bp)=(0.88,0.88,0.88);
				rgb(62.5bp)=(0.96,0.96,0.96);
				rgb(100bp)=(0.96,0.96,0.96)}
			
			
			\tikzset {_9owjmjol5/.code = {\pgfsetadditionalshadetransform{ \pgftransformshift{\pgfpoint{0 bp } { 0 bp }  }  \pgftransformrotate{-270 }  \pgftransformscale{2 }  }}}
			\pgfdeclarehorizontalshading{_7oy3akdpb}{150bp}{rgb(0bp)=(1,1,1);
				rgb(37.5bp)=(1,1,1);
				rgb(50bp)=(0.95,0.95,0.95);
				rgb(50.25bp)=(0.88,0.88,0.88);
				rgb(62.5bp)=(0.96,0.96,0.96);
				rgb(100bp)=(0.96,0.96,0.96)}
			
			
			\tikzset {_d358x9nmc/.code = {\pgfsetadditionalshadetransform{ \pgftransformshift{\pgfpoint{0 bp } { 0 bp }  }  \pgftransformrotate{-270 }  \pgftransformscale{2 }  }}}
			\pgfdeclarehorizontalshading{_hpkgrp60p}{150bp}{rgb(0bp)=(1,1,1);
				rgb(37.5bp)=(1,1,1);
				rgb(50bp)=(0.95,0.95,0.95);
				rgb(50.25bp)=(0.88,0.88,0.88);
				rgb(62.5bp)=(0.96,0.96,0.96);
				rgb(100bp)=(0.96,0.96,0.96)}
			
			
			\tikzset {_z2l2iylow/.code = {\pgfsetadditionalshadetransform{ \pgftransformshift{\pgfpoint{0 bp } { 0 bp }  }  \pgftransformrotate{-270 }  \pgftransformscale{2 }  }}}
			\pgfdeclarehorizontalshading{_xcnnim56a}{150bp}{rgb(0bp)=(1,1,1);
				rgb(37.5bp)=(1,1,1);
				rgb(50bp)=(0.95,0.95,0.95);
				rgb(50.25bp)=(0.88,0.88,0.88);
				rgb(62.5bp)=(0.96,0.96,0.96);
				rgb(100bp)=(0.96,0.96,0.96)}
			
			
			\tikzset {_0nigzdy6z/.code = {\pgfsetadditionalshadetransform{ \pgftransformshift{\pgfpoint{0 bp } { 0 bp }  }  \pgftransformrotate{-270 }  \pgftransformscale{2 }  }}}
			\pgfdeclarehorizontalshading{_fo6h6lj6i}{150bp}{rgb(0bp)=(1,1,1);
				rgb(37.5bp)=(1,1,1);
				rgb(50bp)=(0.95,0.95,0.95);
				rgb(50.25bp)=(0.88,0.88,0.88);
				rgb(62.5bp)=(0.96,0.96,0.96);
				rgb(100bp)=(0.96,0.96,0.96)}
			\tikzset{every picture/.style={line width=0.75pt}} 
			
			\begin{tikzpicture}[x=0.75pt,y=0.75pt,yscale=-1.02,xscale=1.02]
			
			\path  [shading=_bvxyh9yzk,_w0ng3dawj] (50,169.83) .. controls (50,165.93) and (53.17,162.76) .. (57.07,162.76) -- (407.56,162.76) .. controls (411.47,162.76) and (414.63,165.93) .. (414.63,169.83) -- (414.63,191.05) .. controls (414.63,194.95) and (411.47,198.12) .. (407.56,198.12) -- (57.07,198.12) .. controls (53.17,198.12) and (50,194.95) .. (50,191.05) -- cycle ; 
			\draw   (50,169.83) .. controls (50,165.93) and (53.17,162.76) .. (57.07,162.76) -- (407.56,162.76) .. controls (411.47,162.76) and (414.63,165.93) .. (414.63,169.83) -- (414.63,191.05) .. controls (414.63,194.95) and (411.47,198.12) .. (407.56,198.12) -- (57.07,198.12) .. controls (53.17,198.12) and (50,194.95) .. (50,191.05) -- cycle ; 
			
			\path  [shading=_kqvw9gypw,_xufmlvnjs] (447.92,169.83) .. controls (447.92,165.93) and (451.08,162.76) .. (454.99,162.76) -- (602.93,162.76) .. controls (606.83,162.76) and (610,165.93) .. (610,169.83) -- (610,191.05) .. controls (610,194.95) and (606.83,198.12) .. (602.93,198.12) -- (454.99,198.12) .. controls (451.08,198.12) and (447.92,194.95) .. (447.92,191.05) -- cycle ; 
			\draw   (447.92,169.83) .. controls (447.92,165.93) and (451.08,162.76) .. (454.99,162.76) -- (602.93,162.76) .. controls (606.83,162.76) and (610,165.93) .. (610,169.83) -- (610,191.05) .. controls (610,194.95) and (606.83,198.12) .. (602.93,198.12) -- (454.99,198.12) .. controls (451.08,198.12) and (447.92,194.95) .. (447.92,191.05) -- cycle ; 

			\path  [shading=_1x1c9muq4,_s03qht2bn] (305.51,231.71) .. controls (305.51,227.81) and (308.67,224.64) .. (312.58,224.64) -- (407.02,224.64) .. controls (410.92,224.64) and (414.09,227.81) .. (414.09,231.71) -- (414.09,252.93) .. controls (414.09,256.83) and (410.92,260) .. (407.02,260) -- (312.58,260) .. controls (308.67,260) and (305.51,256.83) .. (305.51,252.93) -- cycle ; 
			\draw   (305.51,231.71) .. controls (305.51,227.81) and (308.67,224.64) .. (312.58,224.64) -- (407.02,224.64) .. controls (410.92,224.64) and (414.09,227.81) .. (414.09,231.71) -- (414.09,252.93) .. controls (414.09,256.83) and (410.92,260) .. (407.02,260) -- (312.58,260) .. controls (308.67,260) and (305.51,256.83) .. (305.51,252.93) -- cycle ; 
			
			\path  [shading=_t36e4exrq,_r5s47byy9] (360.61,46.07) .. controls (360.61,42.17) and (363.77,39) .. (367.68,39) -- (494.86,39) .. controls (498.77,39) and (501.94,42.17) .. (501.94,46.07) -- (501.94,67.29) .. controls (501.94,71.19) and (498.77,74.36) .. (494.86,74.36) -- (367.68,74.36) .. controls (363.77,74.36) and (360.61,71.19) .. (360.61,67.29) -- cycle ; 
			\draw   (360.61,46.07) .. controls (360.61,42.17) and (363.77,39) .. (367.68,39) -- (494.86,39) .. controls (498.77,39) and (501.94,42.17) .. (501.94,46.07) -- (501.94,67.29) .. controls (501.94,71.19) and (498.77,74.36) .. (494.86,74.36) -- (367.68,74.36) .. controls (363.77,74.36) and (360.61,71.19) .. (360.61,67.29) -- cycle ; 
			
			\path  [shading=_x55r0diw0,_3vslq5rn6] (447.92,107.95) .. controls (447.92,104.05) and (451.08,100.88) .. (454.99,100.88) -- (602.93,100.88) .. controls (606.83,100.88) and (610,104.05) .. (610,107.95) -- (610,129.17) .. controls (610,133.07) and (606.83,136.24) .. (602.93,136.24) -- (454.99,136.24) .. controls (451.08,136.24) and (447.92,133.07) .. (447.92,129.17) -- cycle ; 
			\draw   (447.92,107.95) .. controls (447.92,104.05) and (451.08,100.88) .. (454.99,100.88) -- (602.93,100.88) .. controls (606.83,100.88) and (610,104.05) .. (610,107.95) -- (610,129.17) .. controls (610,133.07) and (606.83,136.24) .. (602.93,136.24) -- (454.99,136.24) .. controls (451.08,136.24) and (447.92,133.07) .. (447.92,129.17) -- cycle ; 

			\draw [shading=_nd0h36q6n,_jip2nb5qg] [dash pattern={on 4.5pt off 4.5pt}]  (414.63,180.44) -- (447.37,180.44) ;
			\draw [shading=_2u2ycd5gz,_2d7jsth3w]   (360.07,224.64) -- (360.07,198.12) ;
			\draw [shading=_p7azjjclf,_697w8ohxo] [dash pattern={on 4.5pt off 4.5pt}]  (360.07,224.64) -- (530,200) ;
			\draw [shading=_4eq2ylx40,_2q5z798r7]   (529,136.48) -- (529,163) ;
			\draw [shading=_6hqbi887a,_gr8uh6ycu]   (425.55,74.36) .. controls (427.56,73.14) and (429.18,73.54) .. (430.4,75.55) .. controls (431.63,77.56) and (433.25,77.96) .. (435.26,76.74) .. controls (437.28,75.52) and (438.9,75.92) .. (440.11,77.94) .. controls (441.34,79.95) and (442.96,80.35) .. (444.97,79.13) .. controls (446.98,77.91) and (448.6,78.31) .. (449.83,80.32) .. controls (451.05,82.33) and (452.67,82.73) .. (454.68,81.51) .. controls (456.69,80.29) and (458.31,80.69) .. (459.54,82.7) .. controls (460.75,84.72) and (462.37,85.12) .. (464.39,83.9) .. controls (466.4,82.68) and (468.02,83.08) .. (469.25,85.09) .. controls (470.48,87.1) and (472.1,87.5) .. (474.11,86.28) .. controls (476.12,85.06) and (477.74,85.46) .. (478.96,87.47) .. controls (480.19,89.48) and (481.81,89.88) .. (483.82,88.66) .. controls (485.84,87.44) and (487.46,87.84) .. (488.67,89.86) .. controls (489.9,91.87) and (491.52,92.27) .. (493.53,91.05) .. controls (495.54,89.83) and (497.16,90.23) .. (498.38,92.24) .. controls (499.61,94.25) and (501.23,94.65) .. (503.24,93.43) .. controls (505.25,92.21) and (506.87,92.61) .. (508.1,94.62) .. controls (509.31,96.64) and (510.93,97.04) .. (512.95,95.82) .. controls (514.96,94.6) and (516.58,95) .. (517.81,97.01) .. controls (519.03,99.02) and (520.65,99.42) .. (522.66,98.2) .. controls (524.67,96.98) and (526.29,97.38) .. (527.52,99.39) -- (530,100) -- (530,100) ;
			\draw [shading=_7oy3akdpb,_9owjmjol5]   (425.55,74.36) -- (360.07,162.76) ;
			\draw  [fill={rgb, 255:red, 255; green, 255; blue, 255 }  ,fill opacity=1 ] (334.5,212) .. controls (334.5,207.03) and (338.98,203) .. (344.5,203) .. controls (350.02,203) and (354.5,207.03) .. (354.5,212) .. controls (354.5,216.97) and (350.02,221) .. (344.5,221) .. controls (338.98,221) and (334.5,216.97) .. (334.5,212) -- cycle ;
			\draw  [fill={rgb, 255:red, 255; green, 255; blue, 255 }  ,fill opacity=1 ] (440,227) .. controls (440,222.03) and (444.48,218) .. (450,218) .. controls (455.52,218) and (460,222.03) .. (460,227) .. controls (460,231.97) and (455.52,236) .. (450,236) .. controls (444.48,236) and (440,231.97) .. (440,227) -- cycle ;
			\draw  [fill={rgb, 255:red, 255; green, 255; blue, 255 }  ,fill opacity=1 ] (420.5,168) .. controls (420.5,163.03) and (424.98,159) .. (430.5,159) .. controls (436.02,159) and (440.5,163.03) .. (440.5,168) .. controls (440.5,172.97) and (436.02,177) .. (430.5,177) .. controls (424.98,177) and (420.5,172.97) .. (420.5,168) -- cycle ;
			\draw  [fill={rgb, 255:red, 255; green, 255; blue, 255 }  ,fill opacity=1 ] (371.5,109) .. controls (371.5,104.03) and (375.98,100) .. (381.5,100) .. controls (387.02,100) and (391.5,104.03) .. (391.5,109) .. controls (391.5,113.97) and (387.02,118) .. (381.5,118) .. controls (375.98,118) and (371.5,113.97) .. (371.5,109) -- cycle ;
			\draw  [fill={rgb, 255:red, 255; green, 255; blue, 255 }  ,fill opacity=1 ] (540,149) .. controls (540,144.03) and (544.48,140) .. (550,140) .. controls (555.52,140) and (560,144.03) .. (560,149) .. controls (560,153.97) and (555.52,158) .. (550,158) .. controls (544.48,158) and (540,153.97) .. (540,149) -- cycle ;
			
			\draw  [fill={rgb, 255:red, 255; green, 255; blue, 255 }  ,fill opacity=1 ] (53,20) -- (343,20) -- (343,130) -- (53,130) -- cycle ;
			\draw [shading=_hpkgrp60p,_d358x9nmc]   (83,30) -- (63,50) ;
			
			\draw [shading=_xcnnim56a,_z2l2iylow] [dash pattern={on 4.5pt off 4.5pt}]  (83,66) -- (63,86) ;
			\draw [shading=_fo6h6lj6i,_0nigzdy6z]   (83,101) .. controls (83,103.36) and (81.82,104.54) .. (79.46,104.54) .. controls (77.11,104.54) and (75.93,105.72) .. (75.93,108.07) .. controls (75.93,110.43) and (74.75,111.61) .. (72.39,111.61) .. controls (70.04,111.61) and (68.86,112.79) .. (68.86,115.14) .. controls (68.86,117.5) and (67.68,118.68) .. (65.32,118.68) -- (63,121) -- (63,121) ;
			
			\draw (232.32,180.44) node  [color={rgb, 255:red, 0; green, 0; blue, 0 }  ,opacity=1 ] [align=left] {\textbf{Q-resolution}$\displaystyle \equiv_p \asonr\displaystyle \equiv_p \aonr$};
			\draw (359.8,242.32) node  [color={rgb, 255:red, 0; green, 0; blue, 0 }  ,opacity=1 ] [align=left] {\textcolor[rgb]{0,0,0}{$\lonr$}};
			\draw (431.27,56.68) node   [align=left] {\textcolor[rgb]{0,0,0}{\textbf{LD-Q-resolution}}};
			\draw (340.14,206.72) node [anchor=north west][inner sep=0.75pt]  [font=\scriptsize] [align=left] {4};
			\draw (446,222) node [anchor=north west][inner sep=0.75pt]  [font=\scriptsize] [align=left] {5};
			\draw (426.14,162.72) node [anchor=north west][inner sep=0.75pt]  [font=\scriptsize] [align=left] {1};
			\draw (377.14,103.72) node [anchor=north west][inner sep=0.75pt]  [font=\scriptsize] [align=left] {2};
			\draw (94,26) node [anchor=north west][inner sep=0.75pt]  [font=\scriptsize] [align=left] {{\footnotesize strictly stronger}\\{\footnotesize (p-simulation + exponential separation)}};
			\draw (94,62) node [anchor=north west][inner sep=0.75pt]  [font=\scriptsize] [align=left] {{\footnotesize incomparable (exponential separations }\\{\footnotesize in both directions)}};
			\draw (94,97) node [anchor=north west][inner sep=0.75pt]  [font=\scriptsize] [align=left] {{\footnotesize p-simulation}\\{\footnotesize (equivalence/separation open)}};
			\draw (528.96,118.56) node   [align=left] {\textcolor[rgb]{0,0,0}{$\asroar$}};
			\draw (528.96,180.44) node  [color={rgb, 255:red, 0; green, 0; blue, 0 }  ,opacity=1 ] [align=left] {\textcolor[rgb]{0,0,0}{$\loar=$\:\qcdcl}};
			\draw (545.64,143.72) node [anchor=north west][inner sep=0.75pt]  [font=\scriptsize] [align=left] {3};

			\end{tikzpicture}

		\end{center}
		
		\begin{center}
			
			\tcbox[left=0mm,right=0mm,top=0mm,bottom=0mm,boxsep=0mm,
			toptitle=0.5mm,bottomtitle=0.5mm]{%
				\begin{tabular}{|c|p{\textwidth*3/7}|}
					\circled{1} & Theorem \ref{TheoremIncomparable}  ($\mathtt{QParity}_n$, $\mathtt{Trapdoor}_n$)  \\
					\circled{2} & \citespace\cite{ELW13,BBH19,BeyersdorffCJ19} ($\mathtt{Equality}_n$, $\mathtt{QParity}_n$, $\mathtt{KBKF}_n$) \\
					\circled{3} & Theorem \ref{CorASROARdstrongerThanLOAR} ($\mathtt{Equality}_n$)\\
					\circled{4} & Proposition \ref{PropLonsingsFormula} ($\Lon_n$)  \\
					\circled{5} & Theorem \ref{TheoremLONRandLOARincomparable} ($\mathtt{QParity}_n$, $\mathtt{Trapdoor}_n$)\\
			\end{tabular}}

		\end{center}
		
		\caption{The simulation order of QCDCL and QBF resolution systems. The table contains pointers to the separating formulas. \label{fig:qcdcl-sim-order}}
	\end{figure}

	\subsection{Organisation}
	
	The main part of the article is organised slightly differently from the order of results as outlined above. We start in Section~\ref{sec:prelim} with reviewing relevant notions concerning quantified Boolean logic, QBF proof systems, and QCDCL.
	
	Section~\ref{sec:qcdcl-ps} formalises QCDCL with a number of different policies for variable decision and unit propagation as proof systems. This constitutes the formal framework for the rest of the paper. The proof systems are shown to be sound and complete.
	
	Section~\ref{sec:separation-qcdcl-qres} shows the incomparability of \qcdcl and \qres by exponential separations. This is followed in Section~\ref{sec:hardness-qcdcl} by further hardness results for \qcdcl. 
	
	In Section~\ref{sec:qres-characterisation} we obtain the characterisation of \qres in terms of the new QCDCL proof system $\aonr$. We discuss implications of this result to the equivalence of propositional resolution and CDCL in
	Section~\ref{sec:prop-cdcl}.

	Section~\ref{sec:sim-order} reveals the full picture of the simulation order of the defined QCDCL proof systems. We conclude in Section~\ref{sec:conclusion} with some open questions and a discussion of the potential impact of our results for practice.

	\section{Preliminaries} \label{sec:prelim}
	
	\subsection{Propositional and quantified formulas}		
	We will consider propositional and quantified formulas over a countable set of variables. Variables and negations of variables are called  \emph{literals}, i.e., for a variable $x$ we can form two literals: $x$ and its negation $\bar{x}$. Sometimes we write $x^1$ instead of $x$ and $x^0$ instead of $\bar{x}$. We denote the corresponding variable as $\var(x):=\var(\bar{x}):=x$.
	
	A \emph{clause} is a disjunction $\ell_1\vee \ldots\vee \ell_m$ of some literals $\ell_1,\ldots,\ell_m$. We will sometimes view a clause as a set of literals, i.e., we will use the notation $\ell\in C$ if the literal $\ell$ is one of the literals in the clause $C$. If $m=1$, we will often write $(\ell_1)$ to emphasize the difference between literals and clauses. The \emph{empty clause} is the clause consisting of zero literals, denoted by $(\bot)$. For reasons of consistency it is helpful to define an \emph{empty literal}, denoted by $\bot$ in our case. As a consequence, we have $\bot\in (\bot)$, although we define the empty clause as a clause with zero literals.

	The negation of a clause $C=\ell_1\vee\ldots\vee\ell_m$ is called a term, i.e., terms are conjunctions $\bar{\ell}_m\wedge\ldots\wedge\bar{\ell}_m$ of literals. Similarly terms can be considered as sets of literals. A \emph{CNF} (\emph{conjunctive normal form}) is a conjunction of clauses. 
	
	Let $C=\ell_1\vee\ldots\vee \ell_m$. We define $\var(C):=\{ \var(\ell_1),\ldots,\var(\ell_m) \}$. For a CNF $\phi=C_1\wedge\ldots\wedge C_n$ we define $\var(\phi):=\bigcup_{i=1}^n\var(C_i)$.

	A clause or a set $C$ of literals is called \emph{tautological}, if there is a variable $x$ with $x,\bar{x}\in C$.
	
	An \emph{assignment} $\sigma$ of a set of variables $X$ is a non-tautological set of literals, such that for all $x\in X$ there is $\ell\in \sigma$ with $\var(\ell)=x$. The restriction of a clause $C$ by an assignment $\sigma$ is defined as
	\begin{align*}
	C|_{\sigma}:=\left\{ \begin{array}{ll} \top\text{ (true)}&\text{if $C\cap \sigma \neq \emptyset$,} \\ 
	\bigvee\limits_{\substack{\ell\in C\\\bar{\ell}\not\in \sigma}}\ell  &\text{otherwise.}\end{array}\right.
	\end{align*}
	For example, let $C=t\vee x\vee y\vee \bar{z}$ and define the assignment $\sigma:=\{ \bar{x},z,w \}$. Then we have $C|_{\sigma}=t\vee y$. Note that the set of assigned variables might differ from $\var(C)$. In our case, $\sigma$ is an assignment of the set $X:=\{ x,z,w \}$.

	One can interpret $\sigma$ as an operator that sets all literals from $\sigma$ to the boolean constant $1$. We denote the set of assignments of $X$ by $\langle X\rangle$. Assignments can also operate on CNFs in the natural sense. A CNF $\phi$ \emph{entails} another CNF $\psi$ if each assignment that satisfies $\phi$ also satisfies $\psi$ (denoted by $\phi \vDash \psi$).
	
	A \emph{QBF} (\emph{quantified Boolean formula}) $\Phi=\mathcal{Q}\cdot \phi$ is a propositional formula $\phi$ (also called \emph{matrix}) together with a \emph{prefix} $\mathcal{Q}$.   A prefix $Q_1x_1Q_2x_2\ldots Q_kx_k$ consists of variables $x_1,\ldots,x_k$ and quantifiers $Q_1,\ldots,Q_k\in \{\exists,\forall\}$. We obtain an equivalent formula if we unite adjacent quantifiers of the same type. Therefore we can always assume that our prefix is in the form of
	\begin{align*}
	\mathcal{Q}=Q'_1 X_1 Q'_2 X_2 \ldots Q'_s X_s
	\end{align*}
	with nonempty sets of variables $X_1,\ldots,X_s$ and quantifiers $Q'_1,\ldots,Q'_s\in \{\exists,\forall\}$ such that $Q'_{i}\neq Q'_{i+1}$ for $i\in [s-1]$.  For a variable $x$ in $\mathcal{Q}$ we denote the \emph{quantifier level} with respect to $\mathcal{Q}$ by $\lv(x)=\lv_{\Phi}(x)=i$, if $x\in X_i$. Variables from $\Phi$ are called \emph{existential}, if the corresponding quantifier is $\exists$, and \emph{universal} if the quantifier is $\forall$. We denote the set of existential variables from $\Phi$ by $\var_\exists(\Phi)$, and the set of universal variables by $\var_\forall(\Phi)$.
	
	A QBF whose matrix is a CNF is called a \emph{QCNF}. We require that all clauses from a matrix of a QCNF are non-tautological, otherwise we would just delete these clauses. This requirement is crucial for the correctness of the derivation rules we define later for our proof systems. Since we will only discuss refutational proof systems, we will always assume that all QCNFs we consider are false.
	
	A QBF can be interpreted as a game between two players: The $\exists$-player and the $\forall$-player. These players have to assign the respective variables one by one along the quantifier order from left to right. The $\forall$-player wins the game if and only if the matrix of the QBF gets falsified by this assignment. It is well known that for every false QBF $\Phi=\mathcal{Q}\cdot \phi$ there exists a winning strategy for the $\forall$-player.

	\subsection{Proof systems}
	
	A \emph{proof system} for a language $\mathcal{L}$ is a polynomial-time computable surjective function $f:\: \{ 0,1 \}^*\rightarrow \mathcal{L}$ \cite{CR79}. A \emph{proof} for $\phi \in \mathcal{L}$ is some $\pi\in \{0,1\}^*$ such that $f(\pi)=\phi$. In our case, the language $\mathcal{L}$ will be mostly  $\mathsf{UNSAT}$ (unsatisfiable formulas) or $\mathsf{FQBF}$ (false QBFs). For unsatisfiable or false formulas we often call the system \emph{refutational}. 
	
	To show that such a polynomial-time function $f$ is actually a proof system for $\mathcal{L}$, we have to verify two properties:
	\begin{itemize}
		\item \emph{soundness}: $f(\{ 0,1 \}^*)\subseteq \mathcal{L}$.
		\item \emph{completeness}: $f(\{ 0,1 \}^*)\supseteq \mathcal{L}$.
	\end{itemize}
	
	A proof system $f$ for a language $\mathcal{L}$ is \emph{simulated} by another proof system $g$ for $\mathcal{L}$, if there exists a function $h:\: \{0,1\}^*\rightarrow \{ 0,1 \}^*$ such that $g\circ h=f$ (denoted $f\leq g$) \cite{KP89}. If $h$ is polynomial-time computable, then we say that $g$ \emph{p-simulates} $f$ (denoted $f\leq_pg$) \cite{CR79}. If two systems p-simulate each other, they are \emph{p-equivalent} (denoted $f\equiv_p g$).

	\subsection{Q-resolution and long-distance Q-resolution}
	\label{subsec:qres-ldqres}
	
	Let $C_1$ and $C_2$ be two clauses of a QCNF $\Phi$ and let $\ell$ be an existential literal with $\var(\ell)\not\in \var(C_1)\cup\var(C_2)$. The \emph{resolvent} of  $C_1\vee \ell $ and $C_2\vee \bar{\ell}$ over $\ell $ is defined as
	\begin{align*}
	(C_1\vee \ell)\resop{\ell}(C_2\vee \bar{\ell})		
	:=C_1\vee C_2 \text{.}
	\end{align*}
	Let $C:=u_1\vee\ldots \vee u_m\vee x_1\vee\ldots \vee x_n \vee v_1\vee \ldots \vee v_s$ be a clause from $\Phi$, where $u_1,\ldots,u_m,v_1,\ldots,v_s$ are universal literals, $x_1,\ldots,x_n$ are existential literals and
	\begin{align*}
	\{ v\in C:\: \text{$v$ is universal and $\lv(v)>\lv(x_i)$ for all $i\in[n]$} \}=\{ v_1,\ldots,v_s \}\text{.}
	\end{align*}
	Then we can perform a \emph{reduction} step and obtain  
	\begin{align*}
	\red(C):=u_1\vee\ldots\vee u_m\vee x_1\vee \ldots \vee x_n\text{.}
	\end{align*}
	For a  CNF $\phi=\{ C_1,\ldots,C_k \}$ we define
	\begin{align*}
	\red(\phi):=\{\red(C_1),\ldots,\red(C_k)  \}\text{.}
	\end{align*}

	{\qres} \cite{DBLP:journals/iandc/BuningKF95} is a refutational proof system for false QCNFs. A \qres proof $\pi$ of a clause $C$ from a QCNF $\Phi=\mathcal{Q}\cdot \phi$ is a sequence of clauses $\pi=C_1,\ldots,C_m$ with $C_m=C$. Each $C_i$ has to be derived by one of the  following three rules:
	\begin{itemize}
		\item \emph{Axiom:} $C_i\in \phi$;
		\item \emph{Resolution:} $C_i=C_j\resop{x}C_k$ for some $j,k<i$ and $x\in \var_\exists(\Phi)$, and $C_i$ is non-tautological;
		\item \emph{Reduction:} $C_i=\red(C_j)$ for some $j<i$.
	\end{itemize}
	
	Note that none of our axioms are tautological by definition.
	A \emph{refutation} of a QCNF $\Phi$ is a proof of the empty clause $(\bot)$.

	For the simulation of the original version of QCDCL, the proof system \ldqres was introduced in \cite{ZM02,Balabanov12}. This extension of \qres allows to derive universal tautologies under specific conditions. As in \qres, there are three rules by which a clause $C_i$ can be derived. The axiom and reduction rules are identical to \qres, but the resolution rule is changed to 
	\begin{itemize}
		\item \emph{Resolution (long-distance):} $C_i= C_j\resop{x}C_k$ for some $j,k<i$ and $x\in \var_\exists(\Phi)$. The resolvent $C_i$ is allowed to contain a tautology $u\vee \bar{u}$ if $u$ is a universal variable. If $u\in \var(C_j)\cap \var(C_k)$, then we additionally require $\lv(u)>\lv(x)$.
	\end{itemize}
	
	
	Note that a \ldqres proof without tautologies is just a \qres proof.
	
	Creating universal tautologies without any assumptions is unsound in general. For example, consider the true QCNF $\Psi:=\forall u \exists x \cdot (u\vee \bar{x}) \wedge (\bar{u}\vee x)$. There is a winning strategy for the $\exists$-player by assigning $x$ equal to $u$. Hence, the step $\red\left( (u\vee \bar{x})\resop{x}(\bar{u}\vee x )    \right)=(\bot)$ is unsound since we resolved over an existential literal $x$ with $\lv_\Psi(x)>\lv_\Psi(u)$ while generating $u\vee \bar{u}$.
	
	\subsection{QCDCL}
	\label{subsec:qcdcl}
	
	Quantified conflict-driven clause learning (QCDCL) is the quantified version of the well-known \cdcl algorithm (see \cite{ZhangMMM01,DBLP:series/faia/SilvaLM09} for further details on \cdcl, and \cite{GiunchigliaNT06,LonsingDissertation,ZM02} for QCDCL). 
	Let $\Phi=\mathcal{Q}\cdot \phi$ be a false QCNF. 
	Roughly speaking, QCDCL consists of two processes: The \emph{propagation process} and the \emph{learning process}. 
	
	In the \emph{propagation process} we generate assignments  to the end that we obtain a conflict. We start with clauses from $\phi$ that force us to assign literals such that we do not falsify these clauses (subsequently called unit clauses). The underlying idea of this process is \emph{unit propagation}. One can think of a clause $x_1\vee \ldots\vee x_n$ as an implication $(\bar{x}_1\wedge\ldots\wedge \bar{x}_{n-1})\rightarrow x_n$. That is, if we already assigned the literals $\bar{x}_1,\ldots,\bar{x}_{n-1}$, then we are forced to assign $x_n$ in order to verify this clause. If $x_n$ was universal, this would already be a conflict since this clause must be true for both assignments of $x_n$ in order to not get falsified.  In general, we also have to insert reduction steps into this process. Hence we are interested in clauses that become unit after reduction. For example, the clause $(\bar{x}_1\wedge\ldots\wedge \bar{x}_{n-1})\rightarrow (x_n\vee u)$ for an existential literal $x_n$ and a universal literal $u$ with $\lv(x_n)<\lv(u)$ can also be used as an implication of $x_n$ for unit propagation.
	
	Of course we could also assign $\bar{x}_n$, immediately leading to a conflict by falsifying the clause $x_1\vee \ldots\vee x_n$, but this conflict would have been solely caused by this clause and would not give us any new information for the learning process. Our goal is to prolong a conflict as long as possible in the hope of learning something  helpful from it. However, it is not guaranteed that we can even perform any unit propagations by just starting with the formula.
	
	Therefore we will make \emph{decisions}, i.e., we assign literals without any solid reason. With the aid of these decisions (one can also think of assumptions) we can provoke further unit propagations. Since decision making is one of the non-deterministic components of the algorithm, we will try to keep its influence as low as possible. In detail, this means we will only make decisions if there are no more unit propagations available. In the classical QCDCL these decisions have to follow a level-order. This means we always have to decide the next available variable with the lowest quantifier level. However, we will later show that this condition is not necessary for soundness or completeness. There are even QCNFs whose hardness are based simply on this level-order, so leaving out this limitation would actually strengthen the algorithm. In our model these two policies will be denoted by $\lo$ and $\ao$.
	
	After we obtained a conflict, we can start the \emph{clause learning process}. Here the underlying idea is to use \qres resp.\ \ldqres. We start with the clause that caused our conflict and  resolve it with clauses that implied previous literals in the assignment in the reversed propagation order. At the end we  get a (hopefully) new clause such that each assignment that falsifies this clause also leads to a conflict. In addition we get a \ldqres-derivation of this learned clause from $\Phi$. We will add the learned clause to $\phi$, backtrack to a state before we assigned all literals of this clause  and start with the propagation process again. The algorithm ends as soon as we learn the empty clause $(\bot)$ and therefore obtain a refutation of $\Phi$. 
	
	In our work we want to formalize this algorithm as a proof system and slightly modify the system such that this new proof system becomes equivalent to  \qres. To  ensure we actually construct a \qres proof out of the learning process, we have to prevent the introduction of universal tautologies. As will become clear later, the reasons for these tautologies are reductions in the propagation process. This is the only way where both literals of universal variables can be introduced. This motivates disallowing these reductions in the propagation process. Later in the definitions we will denote these policies by $\ar$ and $\nr$.

	
	Usually QCDCL has to handle both refutations of false formulas as well as proving the validity of true formulas. For this purpose one would need to implement the so called \emph{cube learning} (or \emph{term learning}) for fulfilling assignments. But since we are  only interested in the refutation of formulas (otherwise we could not compare this system to \qres), we will omit this aspect of QCDCL.

	\section{Our framework: versions of QCDCL as proof systems}
	\label{sec:qcdcl-ps}
	
	In this section we define formal proof systems that capture QCDCL solving. 
	For this we need to formally define central ingredients of QCDCL solving, including trails, decision policies, unit propagation, and clause learning. For decisions and unit propagation we will consider different policies: those corresponding to QCDCL solving in practice and new policies, yet unexplored. We will show that the corresponding QCDCL proof systems are all sound and complete.
	
	We start with defining trails, decisions, unit propagations and our collection of policies. 	
	\begin{defi}[trails and policies for decision/unit propagation]\label{DefTrails}

		Let $\Phi=\mathcal{Q}\cdot \phi$ be a QCNF in $n$ variables.
		A \emph{trail} $\mathcal{T}$ for $\Phi$ is a sequence of literals (or $\bot$) of variables from $\Phi$  with some specific properties. We distinguish two types of literals in $\mathcal{T}$: \emph{decision literals}, that can be both existential and universal, and propagated literals, that are either existential or $\bot$.  Most of the time we write a  trail $\mathcal{T}$ as 
		\begin{align*}
		\mathcal{T}=(p_{(0,1)},\ldots, p_{(0,g_0)};\mathbf{d_{1}},p_{(1,1)},\ldots,p_{(1,g_{1})};\ldots; \mathbf{d_r},p_{(r,1)},\ldots ,p_{(r,g_r)}  )\text{.}
		\end{align*}
		We typically denote decision literals by $d_i$ and propagated literals by $p_{(i,j)}$. To emphasize decisions, we will set decision literals in the trail in \textbf{boldface} and put a semicolon at the end of each decision level. The literal $p_{(i,j)}$ represents the $j^{\text{th}}$ propagated literal in the $i^{\text{th}}$ decision level, determined by the corresponding decision $d_i$. The decision level $0$ is the only level where we do not have a decision literal. Similarly to clauses, we can view $\mathcal{T}$ as a set of literals or as an assignment and use the notation $x\in \mathcal{T}$ if the literal $x$ is contained in $\mathcal{T}$. 
		
		Let $s\in \{0,\ldots,r\}$ and $t\in \{0,\ldots,g_s\}$. The \emph{subtrail of $\mathcal{T}$ at the time $(s,t)$} is  the trail consisting of all literals from the leftmost literal in $\mathcal{T}$ up to (including) $p_{(s,t)}$, if $t\neq 0$, or $d_s$ otherwise. We denote this subtrail by $\mathcal{T}[s,t]$. The subtrail $\mathcal{T}[0,0]$ is defined as the empty trail.
		
		Now, we need some further requirements for $\mathcal{T}$ to be a trail for a QCNF $\Phi$.
		
		The decisions have to be non-tautological and non-repeating, i.e., we require $\var(d_i)\neq \var({d}_k)$ for each $i\neq k\in \{0,\dots,r\}$. If $\bot\in \mathcal{T}$, then this must be the last (rightmost) literal in $\mathcal{T}$. In this case we will say that $\mathcal{T}$ has \emph{run into a conflict}. 
		
		We define four policies, concerning the decision of literals, from which we can choose exactly one at a time:
		\begin{itemize}
			\item $\lo$\textbf{ - }For each $d_i\in \mathcal{T}$ we have $\lv(d_i)\leq \lv(x)$ for all $x\in \var(\phi)\backslash \var(\mathcal{T}[i-1,g_{i-1}])$. That means we have to decide the variables along the quantification order.
			\item $\aso$\textbf{ - }We can decide a literal $d_k$ if it is existential, or if it is universal and it holds $\lv(d_1)\leq \ldots \leq \lv(d_k)$. 
			\item $\asro$\textbf{ - }We can only decide an existential variable $x$ next, if and only if we already assigned all universal variables $u$ with $\lv(u)<\lv(x)$ before. 
			\item $\ao$\textbf{ - }We can choose any remaining literal as the next decision.
		\end{itemize}
		
		
		We define two more policies concerning  unit propagation. Again, we have to choose exactly one:
		\begin{itemize}
			\item $\ar$\textbf{ - }For each $p_{(i,j)}\in \mathcal{T}$ there has to be a clause $C\in \phi$ such that $\red(C|_{\mathcal{T}[i,j-1]})=(p_{(i,j)})$.
			\item $\nr$\textbf{ - }For each $p_{(i,j)}\in \mathcal{T}$ there has to be a clause $C\in \phi$ with $C|_{\mathcal{T}[i,j-1]}=(p_{(i,j)})$.
		\end{itemize}
		These clauses $C$ as described in the policies are called \emph{antecedent clauses}, which will be denoted by $\ante_\mathcal{T}(p_{(i,j)}):=C$. There could be more than one such suitable clause, in this case we will just choose one of them arbitrarily. These antecedent clauses clearly depend on the unit propagation policy we use.

		The size of a trail $\mathcal{T}$ can be measured by $|\mathcal{T}|$ (i.e., the cardinality of $\mathcal{T}$ as a set). Because each trail can at most contain all variables, we always have $|\mathcal{T}|\in \mathcal{O}(n)$.
		
	\end{defi}
	
	The policies $\ar$ and $\nr$ determine the notion of unit clauses, which are important for  unit propagation.

	\begin{defi}[unit clauses]
		Let $C$ be a clause. 		
		In the policy \ar, we call $C$ a \emph{unit clause} if $\red(C)=(x)$ for an existential literal $x$ or $x=\bot$.
		
		Otherwise, for \nr, we call $C$ a \emph{unit clause} if $C=(x)$ for an existential literal $x$ or $x=\bot$.
	\end{defi}
	
	Note that $(u)$ is not a unit clause under the policy $\nr$ for a universal literal $u$.
	
	Next we will formalise the process of clause learning from trails that run into a conflict. The idea is to resolve all antecedent clauses, starting from the end of the trail, until we stop at some point. We will always resolve over the corresponding propagated literal and skip literals not used for the implication of the conflict.

	\begin{defi}[learnable clauses]\label{DefLearnableClauses}
		Let $\Phi=\mathcal{Q}\cdot \phi$ be a QCNF and let 
		\begin{align*}
		\mathcal{T}=(p_{(0,1)},\ldots, p_{(0,g_0)};\mathbf{d_{1}},p_{(1,1)},\ldots,p_{(1,g_{1})};\ldots; \mathbf{d_r},p_{(r,1)},\ldots ,p_{(r,g_r)}  )
		\end{align*} be a trail with $p_{(r,g_r)}=\bot$ that follows policies $P\in \{ \lo,\aso,\ao  \}$ and $R\in \{ \ar,\nr \}$.
		We call a  clause \emph{learnable from} $\mathcal{T}$ if it appears in the  sequence
		\begin{align*}
		\mathcal{L_{\mathcal{T}}}:=(C_{(r,g_r)},\ldots,C_{(r,1)},\ldots,C_{(1,g_{1})},\ldots,C_{(1,1)},C_{(0,g_0)},\ldots,C_{(0,1)})
		\end{align*}
		where $C_{(r,g_r)}:=\red(\ante(p_{(r,g_r)}))$,
		\begin{align*}
		C_{(i,j)}:=\left\{ \begin{array}{ll}
		\red\left(   C_{(i,j+1)}\resop{p_{(i,j)}}\red(\ante(p_{(i,j)}))            \right)&\text{if ${\bar{p}_{(i,j)}}\in C_{(i,j+1)}$,} \\ 
		C_{(i,j+1)}&\text{otherwise}\end{array}\right.
		\end{align*}
		for $i\in \{0,\dots,r\}$, $j\in[g_i-1]$, and
		\begin{align*}
		C_{(i,g_i)}:=\left\{ \begin{array}{ll}
		\red\left(  C_{(i+1,1)}\resop{p_{(i,g_i)}}   \red(\ante(p_{(i,g_i)}))            \right)&\text{if $\bar{p}_{(i,g_i)}\in C_{(i+1,1)}$,} \\ 
		C_{(i+1,1)}&\text{otherwise}\end{array}\right.
		\end{align*}
		for $i\in \{0,\dots,r-1\}$.
	\end{defi}

	Note that clause learning works independently from the used policy. Even if we choose the policy \nr, we might have to make reduction steps in this process. After the construction of each trail $\mathcal{T}$ we will choose to learn exactly one clause from $\mathcal{L}_\mathcal{T}$. This choice represents a nondeterminism in the learning process.

	Next we formalise natural trails, where we are not allowed to skip unit propagations.
	
	\begin{defi}[natural trails]\label{DefNaturalTrails}
		We  call a trail $\mathcal{T}$ \emph{natural}, if the following holds:  For any time $(s,t)$, $s\in \{ 0,\ldots,r \}$ and $t\in [g_s]$, if $\{D_1,\ldots,D_h\}$ are all clauses from the corresponding QCNF that become unit under the assignment $\mathcal{T}[s,t-1]$ with literals $\ell_1,\ldots,\ell_h$, the next propagated literal has to be one of the $\ell_i$ together with $D_i$ as antecedent clause. If one of the $\ell_i$ is $\bot$, then we have to choose this $\ell_i$. I.e., conflicts have higher priority.
	\end{defi}

	The next definition presents the main framework for the whole paper. After having defined trails in a general sense, we want to specify the way a trail can be generated during a QCDCL run. We will give the notion of QCDCL-based proofs consisting of three components: the naturally created trails, the clauses we learned from each trail, and the proof of each learned clause.

	\begin{defi}[QCDCL proof systems]\label{DefinitionQCDCLrefutation}
		Let $\Phi=\mathcal{Q}\cdot \phi$ be a QCNF in $n$ variables.  We call a triple of sequences
		\begin{align*}
		\iota=({(\mathcal{T}_1,\ldots,\mathcal{T}_m)},{(C_1,\ldots,C_m)},{(\pi_1,\ldots,\pi_m)})
		\end{align*}
		a $\qcdcl^P_R$ \emph{proof} from $\Phi$ of a clause $C$  for $P\in \{ \lo,\aso,\asro,\ao \}$ and $R\in \{ \ar,\nr \}$, if for all $i\in [m]$ the trail $\mathcal{T}_i$ follows the policies $P$ and $R$ and uses the QCNF $\mathcal{Q}\cdot(\phi\cup \{ C_1,\ldots,C_{i-1} \})$, where $C_j\in \mathcal{L}_{\mathcal{T}_j}$ is a clause learnable  from $\mathcal{T}_j$ and $C_m=C$. Each $\pi_i$ is the derivation of the clause $C_i$ from $\mathcal{Q}\cdot(\phi\cup \{ C_1,\ldots,C_{i-1} \})$ as defined recursively in Definition~\ref{DefLearnableClauses}.  Note that all these trails need to run into a conflict in order to start clause learning. If $C=(\bot)$ we call $\iota$ a \emph{refutation}. 
		
		We also require that $\mathcal{T}_1$ is natural and for each $i\in\{ 2,\ldots,m \}$ there exist indices $(s,t)$ such that the following holds: 
		\begin{itemize}
			\item $\mathcal{T}_i[s,t]=\mathcal{T}_{i-1}[s,t]$.
			\item For each subtrail $\mathcal{T}_i[a,b]$ with $\mathcal{T}_i[s,t]\subseteq\mathcal{T}_i[a,b]$ and $\bot\not \in\mathcal{T}_i[a,b]$ let $D_1,\ldots,D_h$ be all the clauses in $\phi\cup \{ C_1,\ldots,C_{i-1} \}$ such that under the assignment $\mathcal{T}_i[a,b]$ these clauses get unit (under the policy~$R$) with corresponding literals $\ell_1,\ldots,\ell_h$. Then we have to propagate one of these literals next, i.e., $\ell_j\in \mathcal{T}_i[a,b+1]$ for some $j \in [h]$, and take the corresponding clause $D_j$ as antecedent. 
			\item In the situation above, if $\bot\in\{ \ell_1,\ldots,\ell_h \}$, then $\bot\in \mathcal{T}_i[a,b+1]$. I.e., we have to run into a conflict as soon as we find one. 
			
		\end{itemize}
		
		We call that \emph{backtracking} to $\mathcal{T}_i[s,t]$. Backtracking to $\mathcal{T}_i[0,0]$ is called \emph{restarting}.
		
		The size of such a proof $\iota$ is measured by $|\iota|:=\sum_{j=1}^{m}|T_j|\in\mathcal{O}(mn)$.

		The corresponding (refutational) proof system for false QCNFs is denoted $\qcdcl^P_R$. We will refer to these systems as QCDCL \emph{proof systems}. A trail $\mathcal{T}$ that follows the policies $P$ and $R$ is sometimes also called a \emph{$\qcdcl^P_R$ trail}.
		
	\end{defi}

	Note that the first trail $\mathcal{T}_1$ of each proof $\iota$ is always natural.
	
	In combination with $\ar$, the policy $\lo$  represents the original QCDCL algorithm without further modifications. The order policies $\aso$ and $\asro$ might seem slightly unintuitive at first sight. We will show later that these policies guarantee the learning of so-called asserting clauses (which will be defined in Definition~\ref{def:asserting}) in association with $\nr$ resp.\ $\ar$. Since $\aso$ (resp.\ $\asro$) will only unfold its impact in combination with $\nr$ (resp.\ $\ar$), we will not consider the systems $\asoar$ or $\asronr$.
	
	We will show later that $\pi_1,\ldots,\pi_m$ from $\iota$ in Definition~\ref{DefinitionQCDCLrefutation} are valid \ldqres proofs. In order to prove the correctness of these proofs, we will now argue that in proof systems with $\nr$ we cannot derive any tautologies, while with $\ar$ we can at most derive universal tautologies.

	\begin{exa}
		Let $\Phi:=\exists x \forall u \exists y,z \cdot (x\vee u\vee y)\wedge (\bar{y})\wedge (\bar{x}\vee z)\wedge (\bar{x}\vee \bar{z})$. We want to construct a natural trail $\mathcal{T}$ under the policies $\lo$ and $\ar$. Then $\mathcal{T}$ could represent a first trail in a $\loar$ proof from $\Phi$.
		
		Since we have a unit clause, we have to start by propagating $\bar{y}$. After a reduction of $u$ we can use $x\vee u \vee y$ to imply $x$. After this we can propagate $z$ via $\bar{x}\vee z$ and run into a conflict because of $\bar{x}\vee \bar{z}$.
		
		We now have constructed the following trail:
		
		\begin{align*}
		\mathcal{T}=(  \bar{y},x,z,\bot )
		\end{align*}
		
		Note that we did not need to make any decisions, hence swapping the decision policy $\lo$ would not change anything.
		
		The learnable clauses from $\mathcal{T}$ are, by definition, all clauses in the sequence
		
		\begin{align*}
		\mathcal{L}_\mathcal{T}=(C_{(0,4)},C_{(0,3)},C_{(0,2)},C_{(0,1)})
		\end{align*}
		with
		\begin{align*}
		C_{(0,4)}&=\ante_\mathcal{T}(\bot)=\bar{x}\vee \bar{z}\text{,}\\
		C_{(0,3)}&=\red\left(  C_{(0,4)}\resop{z}  \red(\ante_{\mathcal{T}}(z))     \right)=\red\left(    (\bar{x}\vee \bar{z})\resop{z} (  \bar{x}\vee z )      \right)=(\bar{x})\text{,}\\
		C_{(0,2)}&=\red\left(    C_{(0,3)}\resop{x} \red(\ante_{\mathcal{T}}(x))    \right)=\red\left(  (\bar{x})\resop{x}( x\vee u\vee y  )   \right)=u\vee y\text{,}\\
		C_{(0,1)}&=\red\left(    C_{(0,2)}\resop{y} \red(\ante_\mathcal{T}(\bar{y}))   \right)=\red\left( (u\vee y) \resop{y}  (\bar{y})      \right)=(\bot)\text{.}
		\end{align*}

		Let us now construct a natural trail $\mathcal{U}$ for the same formula $\Phi$ under the policies $\lo$ and $\nr$ (instead of $\ar$). Again, we have to start by propagating $\bar{y}$. However, since we are not allowed to use reduction in the propagation process, we do not gain a unit clause and therefore have to make a decision. Because of $\lo$, we are forced to decide $x$, which we will set to $1$ (we could also set $x$ to $0$, though). After this we can basically do the same steps as in $\mathcal{T}$, receiving the trail 
		\begin{align*}
		\mathcal{U}=( \bar{y};\mathbf{x},z,\bot )\text{.}
		\end{align*}
		
		The learnable clauses from $\mathcal{U}$ are
		\begin{align*}
		\mathcal{L}_\mathcal{U}=(D_{(1,2)},D_{(1,1)},D_{(0,1)})
		\end{align*}
		
		with 
		\begin{align*}
		D_{(1,2)}&=\red(\ante_\mathcal{U}(\bot))=\bar{x}\vee \bar{z}\text{,}\\
		D_{(1,1)}&=\red\left(       D_{(1,2)}\resop{z}\red( \ante_\mathcal{U}(z)        )\right)=\red\left(       (\bar{x}\vee \bar{z})\resop{z}(\bar{x}\vee z)     \right)=(\bar{x})\text{,}\\
		D_{(0,1)}&=D_{( 1,1)}\text{.}
		\end{align*}
		
		Note that $D_{(0,1)}$ is equal to $D_{(1,1)}$ since $D_{(1,1)}$ does not contain the literal $y$.		
	\end{exa}

	The above example could create the impression that using $\nr$ instead of $\ar$ weakens the underlying system. However, as we will show later, this is not the case.

	\begin{lem}\label{LemmaNoTautologies}
		It is not possible to create tautological clauses in any of the QCDCL proof systems with $\nr$ during the derivation of learnable clauses as described in Definition~\ref{DefLearnableClauses}. 
		If we use the policy $\ar$ instead, we are at least able to avoid any tautologies with existential literals.
	\end{lem}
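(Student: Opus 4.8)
The plan is to prove a single strengthened invariant by induction on the position in the sequence $\mathcal{L}_\mathcal{T}=(C_{(r,g_r)},\ldots,C_{(0,1)})$, i.e.\ along the order in which the clauses $C_{(i,j)}$ of Definition~\ref{DefLearnableClauses} are built: \emph{every literal of $C_{(i,j)}$ is falsified by the full trail $\mathcal{T}$} in the case of $\nr$, and \emph{every existential literal of $C_{(i,j)}$ is falsified by $\mathcal{T}$} in the case of $\ar$. Since trails are non-tautological — the decisions are so by Definition~\ref{DefTrails}, and each propagated literal $p_{(i,j)}$ has its variable unassigned in $\mathcal{T}[i,j-1]$, because $\ante_\mathcal{T}(p_{(i,j)})|_{\mathcal{T}[i,j-1]}$ (after reduction, in the case of $\ar$) equals the unit clause $(p_{(i,j)})$ — this invariant immediately yields the lemma: if some variable occurred both positively and negatively in $C_{(i,j)}$, a complementary pair of literals would lie in $\mathcal{T}$. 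I note that the decision policy $P\in\{\lo,\aso,\asro,\ao\}$ plays no role at all; the argument only uses the unit-propagation policy.

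For the base case $C_{(r,g_r)}=\red(\ante_\mathcal{T}(\bot))$: under $\nr$ we have $\ante_\mathcal{T}(\bot)|_{\mathcal{T}[r,g_r-1]}=(\bot)$, so every literal of $\ante_\mathcal{T}(\bot)$ is falsified by $\mathcal{T}[r,g_r-1]\subseteq\mathcal{T}$, and reduction (which only deletes literals) preserves this; under $\ar$ we have $\red(\ante_\mathcal{T}(\bot)|_{\mathcal{T}[r,g_r-1]})=(\bot)$, and since reduction never removes existential literals, the restriction $\ante_\mathcal{T}(\bot)|_{\mathcal{T}[r,g_r-1]}$ contains no existential literal, i.e.\ every existential literal of $\ante_\mathcal{T}(\bot)$ is already falsified by $\mathcal{T}$, and again reduction preserves this.

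For the inductive step, $C_{(i,j)}$ is built from a previously constructed clause $C'$ (namely $C_{(i,j+1)}$, or $C_{(i+1,1)}$ when $j=g_i$). If $\bar p_{(i,j)}\notin C'$ then $C_{(i,j)}=C'$ and the invariant is inherited. Otherwise $C_{(i,j)}=\red(A\vee B)$ with $A=C'\setminus\{\bar p_{(i,j)}\}$ and $B=\red(\ante_\mathcal{T}(p_{(i,j)}))\setminus\{p_{(i,j)}\}$; the literals of $A$ satisfy the invariant by the induction hypothesis for $C'$. For $B$ I would invoke the unit-propagation policy: under $\nr$, the identity $\ante_\mathcal{T}(p_{(i,j)})|_{\mathcal{T}[i,j-1]}=(p_{(i,j)})$ forces every literal of $\ante_\mathcal{T}(p_{(i,j)})$ other than $p_{(i,j)}$ to be falsified by $\mathcal{T}[i,j-1]$, hence the same holds for the smaller clause $B$; under $\ar$, the analogous conclusion holds for the \emph{existential} literals of $\ante_\mathcal{T}(p_{(i,j)})$ — an existential literal not falsified by $\mathcal{T}[i,j-1]$ would survive both the restriction and the subsequent reduction, contradicting $\red(\ante_\mathcal{T}(p_{(i,j)})|_{\mathcal{T}[i,j-1]})=(p_{(i,j)})$ — and, reduction being unable to delete existential literals, every existential literal of $B$ is falsified by $\mathcal{T}$. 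Finally, passing from $A\vee B$ to $\red(A\vee B)=C_{(i,j)}$ only removes universal literals, so the respective invariant is preserved, completing the induction.

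The step I expect to need the most care is the treatment of $B$ under $\ar$. One must resist hoping for the stronger ``falsified by $\mathcal{T}$'' invariant there, since $\red(\ante_\mathcal{T}(p_{(i,j)}))$ can genuinely be a universal tautology: for instance an antecedent $p\vee u\vee\bar u\vee y$ with $\lv(p)<\lv(u)<\lv(y)$ and $\bar y\in\mathcal{T}[i,j-1]$ propagates $p$ yet reduces to itself, so $B=u\vee\bar u\vee y$ already carries the pair $u,\bar u$. The argument therefore has to be carried out at the level of existential literals only, crucially using that universal reduction never deletes existential literals. This same example also shows that the $\ar$ bound in the lemma cannot be strengthened to full non-tautology, which is exactly why the statement only claims absence of existential tautologies in that case.
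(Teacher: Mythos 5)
Your proof is correct, and it reaches the statement by a somewhat different route than the paper. The paper argues by contradiction at the step where a tautology would first be created: assuming $x\in C_{(i,j+1)}$ and $\bar x\in\red(\ante_{\mathcal{T}}(p_{(i,j)}))$, it deduces $x\in\mathcal{T}[i,j-1]$ from the unit condition (for $\ar$ using that reduction cannot delete existential literals), and then traces the occurrence of $x$ in $C_{(i,j+1)}$ back to an antecedent clause $\ante_{\mathcal{T}}(p_{(k,m)})$ of a literal propagated later in the trail, which would then already be satisfied before its use for unit propagation --- contradicting the propagation condition. You instead prove, by explicit induction along $\mathcal{L}_{\mathcal{T}}$, the stronger invariant that every literal (every existential literal, under $\ar$) of each $C_{(i,j)}$ is falsified by $\mathcal{T}$, and conclude from $\mathcal{T}$ being a non-tautological set of literals. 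Both arguments rest on exactly the same two facts --- under $\nr$/$\ar$ all non-pivot (existential) literals of an antecedent are falsified at the moment of propagation, and universal reduction never removes existential literals --- so the proofs are close in substance; your packaging is the standard CDCL invariant that learned clauses are falsified by the current trail, which has the small extra benefits of showing that the pivots occur with the correct polarities (so the resolution steps of Definition~\ref{DefLearnableClauses} are well-defined) and of making transparent why the $\ar$ case cannot be strengthened beyond existential literals. One minor point: in justifying that $\mathcal{T}$ is non-tautological you cover decisions against decisions and propagated literals against everything earlier; a clash of a decision with an earlier propagated literal is excluded because, by Definition~\ref{DefTrails}, the trail is viewed as an assignment and decisions are only taken on still unassigned variables --- worth stating explicitly, but not a gap.
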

	\begin{proof}
		Let $\mathcal{T}$ be a trail and $\mathcal{L}_\mathcal{T}$ be the sequence of learnable clauses as described in Definition~\ref{DefLearnableClauses}.   It suffices to concentrate on the case $C_{(i,j)}$, in particular
		\begin{align*}
		C_{(i,j)}=\red\left(    C_{(i,j+1)}\resop{p_{(i,j)}}\red(\ante(p_{(i,j)}))             \right)
		\end{align*}
		for $i\in \{0,\ldots,r\}$, $j\in[g_i-1]$. This is analogous to the case for $C_{(i,g_i)}$ (the case $C_{(r,g_r)}$ is trivial).

		Assume there is a literal $x$ with $\var(x)\neq \var(p_{(i,j)})$ such that $x\in C_{(i,j+1)}$ and $\bar{x}\in \red(\ante_\mathcal{T}(p_{(i,j)}))$. 
		If $\nr$ is chosen, we need $A|_{\mathcal{T}[i,j-1]}=(p_{(i,j)})$ for $A:=\ante_\mathcal{T}(p_{(i,j)})$ and therefore $x\in \mathcal{T}[i,j-1]$. 
		
		In case of $\ar$ we have $\red(A|_{\mathcal{T}[i,j-1]})=(p_{(i,j)})$. Because we can assume that $x$ is existential in this case, we also conclude $x\in \mathcal{T}[i,j-1]$ since existential literals cannot be reduced.
		
		On the other hand, we have $x\in C_{(i,j+1)}$, where $C_{(i,j+1)}$ is a learnable clause which is derived with the aid of antecedent clauses of literals occurring right of $p_{(i,j)}$ in the trail. In particular, we can find some $p_{(k,m)}$ right of $p_{(i,j)}$ in the trail with $x\in \ante_\mathcal{T}(p_{(k,m)})$.
		Because of $x\in \mathcal{T}[i,j-1]$, this gives a contradiction  since $\ante_\mathcal{T}(p_{(k,m)})$ must not become true before propagating $p_{(k,m)}$.	
	\end{proof}
	
	We have seen that systems with $\nr$ cannot contain tautologies in their extracted proofs. It remains to show that the derivation of tautological clauses in systems with $\ar$ fulfils the properties of \ldqres.
	
	\begin{prop}\label{PropSoundnessLongDistance}
		Let  $\Phi$ be a QCNF and let 
		\begin{align*}
		\mathcal{T}=(p_{(0,1)},\ldots, p_{(0,g_0)};\mathbf{d_{1}},p_{(1,1)},\ldots,p_{(1,g_{1})};\ldots; \mathbf{d_r},p_{(r,1)},\ldots ,p_{(r,g_r)}  )\text{.}
		\end{align*} be a trail under a QCDCL proof system with the policy $\ar$. Then the corresponding derivation of any learned clause is a valid \ldqres derivation.
	\end{prop}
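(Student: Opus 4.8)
The derivation $\pi$ attached to a learned clause, as unfolded in Definition~\ref{DefLearnableClauses}, is assembled from only three kinds of inference: introducing an antecedent clause $\ante(p_{(r,g_r)})$ or $\ante(p_{(i,j)})$, which lies in $\phi\cup\{C_1,\dots,C_{i-1}\}$ and hence is an axiom of the \ldqres derivation; applying $\red(\cdot)$ to a clause; and resolving the running clause $C_{(i,j+1)}$ (or $C_{(i+1,1)}$) against $\red(\ante(p_{(i,j)}))$ over the pivot $p_{(i,j)}$, followed by $\red(\cdot)$. The plan is to first dispose of the easy parts. Axiom steps need no checking: every clause in $\phi\cup\{C_1,\dots,C_{i-1}\}$ is admissible, since matrix clauses are non-tautological and, by Lemma~\ref{LemmaNoTautologies} applied to the earlier learning steps under $\ar$, the already-learned clauses contain no existential tautology, while \ldqres tolerates universal ones. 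Each step $\red(C)$ is realised by a sequence of single-literal universal reductions: delete the universal literals of $C$ that $\red$ removes one at a time, in order of decreasing quantifier level; since a reduction never deletes an existential literal, each such literal still satisfies the side condition of the reduction rule at the moment it is removed. Finally, because the unit-propagation policy is $\ar$, Lemma~\ref{LemmaNoTautologies} ensures that no existential tautology is ever produced along $\pi$, so every tautology occurring in an intermediate clause is over a universal variable, as \ldqres permits.

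The one substantive point is the side condition on the (long-distance) resolution step $C_{(i,j+1)}\resop{p_{(i,j)}}\red(\ante(p_{(i,j)}))$ --- this resolution is performed only when $\bar p_{(i,j)}\in C_{(i,j+1)}$, and otherwise $C_{(i,j)}=C_{(i,j+1)}$ and there is nothing to verify --- and likewise for the step $C_{(i+1,1)}\resop{p_{(i,g_i)}}\red(\ante(p_{(i,g_i)}))$. Here I must confirm that the pivot is existential, which is immediate since propagated literals in a trail are by definition existential, and that every universal variable occurring in both parents has level strictly above that of the pivot, so that any merged literal $u\vee\bar u$ created in the resolvent is admissible (and, together with the no-existential-tautology property, the resolvent operator is well-defined).

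To establish the level condition I would prove, by induction along $\mathcal{L}_{\mathcal{T}}$ in the order in which the clauses are computed (starting from $C_{(r,g_r)}=\red(\ante(p_{(r,g_r)}))$), a conflict-clause invariant roughly of the form: for every clause $C_{(i,j)}$ in $\mathcal{L}_{\mathcal{T}}$, each literal $\ell\in C_{(i,j)}$ is falsified by the subtrail $\mathcal{T}[i,j-1]$, i.e.\ $\bar\ell\in\mathcal{T}[i,j-1]$, unless $\ell$ is universal and $\lv(\var(\ell))>\lv(p_{(i,j)})$, in which case $\var(\ell)$ need not be assigned on $\mathcal{T}[i,j-1]$ at all. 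The base case is immediate from the definition of an antecedent clause under $\ar$, namely $\red(\ante(p_{(r,g_r)})|_{\mathcal{T}[r,g_r-1]})=(\bot)$. For the inductive step producing $C_{(i,j)}$ one combines the invariant for $C_{(i,j+1)}$ with an analysis of $A:=\ante(p_{(i,j)})$: a universal literal surviving in $\red(A)$ is either (i) on a variable not yet assigned in $\mathcal{T}[i,j-1]$, in which case $\red(A|_{\mathcal{T}[i,j-1]})=(p_{(i,j)})$ forces its level above $\lv(p_{(i,j)})$, or (ii) on a variable already falsified with a fixed polarity by $\mathcal{T}[i,j-1]$; since a trail is a consistent assignment, a universal variable shared by $\red(A)$ and $C_{(i,j+1)}$ in case~(ii) carries the same literal in both parents, so no merge is created, whereas in case~(i) the required level inequality holds outright. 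The concluding $\red$ then re-establishes the invariant for $C_{(i,j)}$, using that reduction removes only universal literals whose level exceeds all existential levels present; the steps at $C_{(i,g_i)}$ and the ``otherwise'' branches are handled the same way. The main obstacle is calibrating the invariant precisely --- in particular pinning down which universal literals survive each $\red$ and with which polarity relative to the trail --- so that it is at once preserved by a resolution-then-reduction step and strong enough to certify the \ldqres side condition directly; with the invariant in hand, the verification of each inference type is routine.
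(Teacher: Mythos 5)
Your plan --- induction along $\mathcal{L}_\mathcal{T}$ with a ``falsified-or-unassigned'' invariant, the axiom/reduction steps dispatched separately, and the level condition at a merge extracted from the fact that $\red(\ante(p_{(i,j)})|_{\mathcal{T}[i,j-1]})=(p_{(i,j)})$ forces any unassigned universal literal of the antecedent to have level above $\lv(p_{(i,j)})$ --- is a genuinely different route from the paper, which instead assumes a violating step (a universal $u$ with $\lv(u)<\lv(p_{(i,j)})$ occurring in both parents with a tautology involved) and derives a contradiction by tracing how the two $u$-literals disappeared in the antecedents (by assignment versus by reduction). Your case~(i)/(ii) analysis is essentially the right verification. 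However, the inductive statement you propose to carry this is false as stated: it is not true that every literal of $C_{(i,j)}$ that is unassigned on $\mathcal{T}[i,j-1]$ is universal of level above $\lv(p_{(i,j)})$. An unassigned universal $u$ enters the running clause through the antecedent of a pivot whose level is \emph{below} $\lv(u)$ (which only yields a comparison with \emph{that} pivot), and it can persist past later pivots of level above $\lv(u)$, either because the resolution there is skipped or because other high-level existential literals in the clause block the reduction.

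Concretely, take the false QCNF with prefix $\exists a,e\,\forall u\,\exists b,c$ and matrix $(e)$, $\bar{e}\vee b$, $\bar{e}\vee c$, $a\vee u\vee\bar{b}\vee\bar{c}$, $\bar{a}\vee\bar{b}\vee\bar{c}$. The natural trail $(e,b,c,a,\bot)$ (all at decision level $0$, with the obvious antecedents) gives $C_{(0,5)}=\bar{a}\vee\bar{b}\vee\bar{c}$, then $C_{(0,4)}=\bar{b}\vee\bar{c}\vee u$ (pivot $a$; $u$ cannot be reduced because of $\bar{b},\bar{c}$), and then, resolving over the level-$3$ pivot $c$, $C_{(0,3)}=\bar{b}\vee\bar{e}\vee u$, where $u$ is unassigned and $\lv(u)=2<3=\lv(p_{(0,3)})$ --- even though every step here is a perfectly legal \qres step. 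So the induction, as you set it up, breaks at such a step. The repair is to drop the level clause from the invariant altogether: it suffices to maintain that no literal of $C_{(i,j)}$ is satisfied by $\mathcal{T}[i,j-1]$ (each literal is falsified or unassigned there). This is preserved trivially, since antecedent clauses are never satisfied before their use and the subtrails only shrink as the learning sequence proceeds, and it is all you need: if a shared universal variable is assigned, both parents contain it falsified, hence with the same polarity (your case~(ii)); if it is unassigned, the antecedent side forces $\lv(u)>\lv(p_{(i,j)})$ (your case~(i)), which also covers tautologies already present in one of the parents. With that weakening your argument goes through; as written, the central inductive claim fails.
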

	
	\begin{proof} Let
		\begin{align*}
		\mathcal{L_{\mathcal{T}}}:=(C_{(r,g_r)},\ldots,C_{(r,1)},\ldots,C_{(1,g_{1})},\ldots,C_{(1,1)},C_{(0,g_0)},\ldots,C_{(0,1)})
		\end{align*}
		be the sequence of learnable clauses. Because of Lemma \ref{LemmaNoTautologies} it remains to show that the derivation of clauses with universal tautologies are sound. 
		
		Assume otherwise. That means there are $i\in\{0,\ldots,r\}$, $j\in[g_i]$ and $u\in \var_\forall(\Phi)$ such that $\lv(u)<\lv(p_{(i,j)})$ and one of the following four cases holds, where we resolve over $p_{(i,j)}$:
		\begin{enumerate}
			\item $u\in C_{(i,j+1)}$ and $\bar{u}\in \ante_\mathcal{T}(p_{(i,j)})$\label{case1},
			\item $u\vee \bar{u}\in C_{(i,j+1)}$ and $\bar{u}\in \ante_\mathcal{T}(p_{(i,j)})$\label{case2},
			\item $u\in C_{(i,j+1)}$ and $u\vee \bar{u}\in \ante_\mathcal{T}(p_{(i,j)})$\label{case3},
			\item $u\vee \bar{u}\in C_{(i,j+1)}$ and $u\vee \bar{u}\in \ante_\mathcal{T}(p_{(i,j)})$\label{case4}.
		\end{enumerate}
		Consider case \ref{case1}. Since $u\in C_{(i,j+1)}$ there has to be a propagated literal $p_{(k,m)}$ right of $p_{(i,j)}$ in the trail such that $u\in \ante (p_{(k,m)})$. 
		In order to become unit, the $u$ in $\ante(p_{(k,m)})$ needed to vanish. We distinguish two cases:
		
		\underline{Case (i):} $\bar{u}$ was decided before $p_{(k,m)}$ was propagated, i.e., $\bar{u}\in \mathcal{T}[k,m-1]$.
		
		Then we have $u\not\in \mathcal{T}$ since each variable can occur at most once in $\mathcal{T}$. That means reducing $\bar{u}$ is the only way $\ante_\mathcal{T}(p_{(i,j)})$ could have become unit. But for the soundness of
		reduction we need $\lv(u)>\lv(p_{(i,j)})$. This gives a contradiction.

		\underline{Case (ii):} $u\in \ante_\mathcal{T}(p_{(k,m)})$ has vanished via reduction.
		
		Assume that $u$ was decided before $p_{(i,j)}$ was propagated, i.e., $u\in \mathcal{T}[i,j-1]$. But then $\ante(p_{(k,m)})$ would have become true under $\mathcal{T}[k,m-1]\supseteq\mathcal{T}[i,j-1]$. Therefore $\ante_\mathcal{T}(p_{(k,m)})$ could not have been used for unit propagation. 
		Thus $\bar{u}\in\ante_\mathcal{T}(p_{(i,j)})$ must have vanished via reduction, which implies $\lv(u)>\lv(p_{(i,j)})$, a contradiction.
		
		The same reasoning works for case \ref{case2}. Cases \ref{case3} and \ref{case4} are  easier since the only way for $u\vee \bar{u}$ to vanish in $\ante_\mathcal{T}(p_{(i,j)})$ is via reduction. Then we get the contradiction $\lv(u)>\lv(p_{(i,j)})$ as well.
		
		Also the same argumentation works if we consider $C_{(i+1,1)}$ and $\ante_\mathcal{T}(p_{(i,g_i)})$ instead of $C_{(i,j+1)}$ and $\ante_\mathcal{T}(p_{(i,j)})$.
	\end{proof}
	We combine the two results above to an argument of soundness of the defined QCDCL proof systems.
	
	\begin{thm}\label{TheoremSystemsSimulatedByQRes}
		All defined QCDCL proof systems are p-simulated by \ldqres and the systems with $\nr$ are even p-simulated by \qres. In fact,  for a proof $\iota$ of a clause $C$ in the   QCDCL system we can get a \ldqres (resp.\ \qres) proof $\pi$ with $|\pi|\in \mathcal{O}(|\iota|)$.
		
		In particular, all defined QCDCL proof systems are sound.
	\end{thm}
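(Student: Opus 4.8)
The plan is to build one \ldqres (resp.\ \qres) derivation of $C$ from $\Phi$ by concatenating the derivations $\pi_1,\dots,\pi_m$ that are already part of the proof object $\iota=((\mathcal{T}_1,\dots,\mathcal{T}_m),(C_1,\dots,C_m),(\pi_1,\dots,\pi_m))$. First I would recall that, by Definitions~\ref{DefinitionQCDCLrefutation} and~\ref{DefLearnableClauses}, each $\pi_i$ is the derivation of $C_i$ obtained by reading the learnable-clause sequence $\mathcal{L}_{\mathcal{T}_i}$ from the conflict end backwards: it starts from $\red(\ante_{\mathcal{T}_i}(p_{(r,g_r)}))$ and, step by step, resolves the current clause over a propagated literal $p_{(i,j)}$ with the reduced antecedent $\red(\ante_{\mathcal{T}_i}(p_{(i,j)}))$, inserting reduction steps, until it reaches the chosen clause $C_i\in\mathcal{L}_{\mathcal{T}_i}$. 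The only axioms used in $\pi_i$ are the antecedent clauses, all of which lie in $\mathcal{Q}\cdot(\phi\cup\{C_1,\dots,C_{i-1}\})$, and the only inference steps are resolutions over propagated literals (which are existential) and reductions $\red(\cdot)$.

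Next I would certify that each $\pi_i$ is a legal derivation. For the policy $\ar$ this is exactly Proposition~\ref{PropSoundnessLongDistance}, which says the learning derivation for an $\ar$ trail is a valid \ldqres derivation. For the policy $\nr$ I would observe that every $\nr$ trail is in particular an $\ar$ trail, since $C|_{\mathcal{T}[i,j-1]}=(p_{(i,j)})$ with $p_{(i,j)}$ existential forces $\red(C|_{\mathcal{T}[i,j-1]})=(p_{(i,j)})$ and leaves the antecedent clauses (hence $\mathcal{L}_{\mathcal{T}}$) unchanged; thus Proposition~\ref{PropSoundnessLongDistance} again applies, and additionally Lemma~\ref{LemmaNoTautologies} guarantees that no tautological clause ever occurs in $\pi_i$. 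Since a \ldqres derivation without tautologies is a \qres derivation, each $\pi_i$ is a valid \qres derivation when $R=\nr$. I would then form the concatenation $\pi=\pi_1,\pi_2,\dots,\pi_m$: whenever some $\pi_i$ cites a previously learned clause $C_j$ ($j<i$) as an axiom, that clause has already been derived at the end of the block $\pi_j$, so the axiom reference is legitimate; the remaining axioms are genuine clauses of $\phi$. Hence $\pi$ is a valid \ldqres (resp.\ \qres, under $\nr$) derivation of $C_m=C$ from $\Phi$, and all six QCDCL systems are p-simulated accordingly, which settles Figure~\ref{fig:simulation-qcdcl}.

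For the size bound I would note that $\mathcal{L}_{\mathcal{T}_i}$ contains one clause per propagated literal of $\mathcal{T}_i$, and each such clause costs at most one new axiom plus a bounded number of resolution and reduction steps; therefore $|\pi_i|=\mathcal{O}(|\mathcal{T}_i|)$ and $|\pi|=\sum_{i=1}^m|\pi_i|=\mathcal{O}\bigl(\sum_{i=1}^m|\mathcal{T}_i|\bigr)=\mathcal{O}(|\iota|)$. Soundness is then immediate: if $\iota$ is a refutation, i.e.\ $C=(\bot)$, then $\pi$ is a \ldqres (or \qres) refutation of $\Phi$, and since both systems are sound, $\Phi$ is false. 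I do not expect a genuine obstacle here, as the substantive content is already carried by Lemma~\ref{LemmaNoTautologies} and Proposition~\ref{PropSoundnessLongDistance}; the one point requiring care is the bookkeeping in the concatenation — namely checking that each recursively defined $\pi_i$ is really a derivation terminating at the chosen $C_i$ with axiom set inside $\phi\cup\{C_1,\dots,C_{i-1}\}$, so that citations of earlier learned clauses remain valid after concatenation, and that the per-trail cost is linear.
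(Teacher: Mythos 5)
Your proposal is correct and follows essentially the same route as the paper: the paper's proof also just concatenates the derivations $\pi_1,\dots,\pi_m$ and cites Lemma~\ref{LemmaNoTautologies} together with Proposition~\ref{PropSoundnessLongDistance}, with the linear size bound immediate from the per-trail construction. Your additional observations (that $\nr$ propagation steps satisfy the $\ar$ condition, and the bookkeeping for citations of earlier learned clauses) are fine and merely make explicit what the paper leaves implicit.
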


	\begin{proof} This follows directly from Lemma \ref{LemmaNoTautologies} and Proposition \ref{PropSoundnessLongDistance}. Let 
		\begin{align*}
		\iota=((\mathcal{T}_1,\ldots,\mathcal{T}_m),(C_1,\ldots,C_m),(\pi_1,\ldots,\pi_m))
		\end{align*}
		be a proof of a clause $C$ in the   QCDCL system. We get a \ldqres (resp.\ \qres) proof by sticking together the proofs $\pi_1,\ldots,\pi_m$ from $\iota$.
	\end{proof}

	Next we  introduce \emph{asserting learning schemes}. These are commonly used in practice since they guarantee a kind of progression in a run. These learning schemes are important to prevent a trail from backtracking too often (we will discuss this later). 
	
	\begin{defi}[asserting clauses and asserting learning schemes] \label{def:asserting}	
		Let $\Phi:=\mathcal{Q}\cdot \phi$ be a QCNF in any of the defined QCDCL systems. Let 
		\begin{align*}
		\mathcal{T}=(p_{(0,1)},\ldots, p_{(0,g_0)};\mathbf{d_{1}},p_{(1,1)},\ldots,p_{(1,g_{1})};\ldots; \mathbf{d_r},p_{(r,1)},\ldots ,p_{(r,g_r)} =\bot )
		\end{align*} 
		be a trail which follows the corresponding policies and $\mathcal{L}_{\mathcal{T}}$ the sequence of learnable clauses. A nonempty clause $ C\in \mathcal{L}_{\mathcal{T}}$ is called an \emph{asserting clause}, if it becomes unit after backtracking, i.e., there exists a time $(s,t)$ with $s\in \{  0,\ldots,r-1 \}$ and $t\in [g_s]$ such that $C|_{\mathcal{T}[s,t]}$ is a unit clause under the corresponding system.

		Let  $\mathbb{T}$ be the set of trails $\mathcal{T}$ for $\Phi$ such that $\bot\in \mathcal{T}$.  
		A \emph{learning scheme} $\xi$ is a map with domain $\mathbb{T}$, which maps each $\mathcal{T}$ to a clause $\xi(  \mathcal{T})\in \mathcal{L}_{\mathcal{T}}$.

		A learning scheme $\xi$ is called \emph{asserting} if it maps to asserting clauses or $(\bot)$ as long as $\mathcal{L}_{\mathcal{T}}$ contains such.
		
	\end{defi}

	\begin{rem} \label{rem:no-asserting}
		It is not guaranteed that we will always find asserting clauses for our trails. For example consider the false QCNF $\forall u\exists x\cdot (u\vee x)\wedge (u\vee \bar{x})\wedge(\bar{u}\vee x)\wedge(\bar{u}\vee \bar{x})$ and the trail $\mathcal{T}=(\textbf{x};\textbf{u},\bot)$ under the system $\aonr$. We can only learn the clause $(\bar{u}\vee \bar{x})$, which is non-unit under $\mathcal{T}[0,0]=\emptyset$. 
	\end{rem}
	Therefore despite allowing any decision order, we still need some restrictions in order to be able to learn asserting clauses.
	
	Next we show that asserting learning schemes return new clauses that have not been learned before. 
	\begin{lem}\label{LemmaNewClauseAsserting}
		Let $\mathcal{T}$ be a natural trail for a QCNF $\Phi=\mathcal{Q}\cdot \phi$ in any of the defined QCDCL systems and $A\in \mathcal{L}_\mathcal{T}$ be an asserting clause. Then we have $A\not\in \phi$.
	\end{lem}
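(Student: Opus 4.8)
I would prove this by contradiction, assuming $A\in\phi$. Write $\mathcal{T}=(p_{(0,1)},\dots;\mathbf{d_1},\dots;\dots;\mathbf{d_r},p_{(r,1)},\dots,p_{(r,g_r)}=\bot)$. Since $A$ is asserting, by Definition~\ref{def:asserting} there is a time $(s,t)$ with $s\le r-1$ and $t\in[g_s]$ for which $A|_{\mathcal{T}[s,t]}$ is a unit clause under the propagation policy $R$ in force; let $\ell$ be the corresponding unit literal (an existential literal or $\bot$), and note that in the existential case $\ell\in A$ and $\var(\ell)\notin\var(\mathcal{T}[s,t])$. The first ingredient I would isolate is the auxiliary fact that \emph{every clause $C\in\mathcal{L}_{\mathcal{T}}$ is ``$\exists$-blocked'' by $\mathcal{T}$}, meaning $\bar m\in\mathcal{T}$ for every existential literal $m\in C$. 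This follows by backward induction along $\mathcal{L}_{\mathcal{T}}$ (Definition~\ref{DefLearnableClauses}): the base clause $C_{(r,g_r)}=\red(\ante(p_{(r,g_r)}))$ inherits its existential literals from the conflict clause $\ante(\bot)$, all of whose existential literals are falsified by $\mathcal{T}[r,g_r-1]$; in the recursive step, the resolvent $C_{(i,j+1)}\resop{p_{(i,j)}}\red(\ante(p_{(i,j)}))$ over the existential pivot $p_{(i,j)}$ contains, besides the existential literals of $C_{(i,j+1)}$ other than $\bar p_{(i,j)}$, only existential literals of $\ante(p_{(i,j)})$ distinct from $p_{(i,j)}$, which are falsified by $\mathcal{T}[i,j-1]$ since $\ante(p_{(i,j)})$ is unit with unit literal $p_{(i,j)}$ there; universal reductions never remove existential literals and ``skip'' steps change nothing (the level-boundary clauses $C_{(i,g_i)}$ are handled identically). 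This step is closely related in spirit to Lemma~\ref{LemmaNoTautologies}.

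Next I would rule out the case $\ell=\bot$ using the naturalness of $\mathcal{T}$ (Definition~\ref{DefNaturalTrails}). If $\ell=\bot$ then $A$ is, after reduction, falsified by $\mathcal{T}[s,t]$, i.e.\ $A$ is already a conflict clause at time $(s,t)$ with $s<r$. But in a natural trail a conflict clause counts as ``unit with literal $\bot$'', so by the conflict-priority clause of naturalness the very next step must propagate $\bot$, and in particular no further decision may be made; since $\bot$ is the unique last literal of $\mathcal{T}$ and occurs at decision level $r$, this forces $s=r$, a contradiction. Hence $\ell$ is an existential literal with $\ell\in A$.

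Finally the main case. Because $s\le r-1$, the decision $d_{s+1}$ occurs in $\mathcal{T}$; consider the subtrail $\mathcal{T}[s,g_s]$ immediately preceding it. If $t=g_s$ then $A$ is unit under $\mathcal{T}[s,g_s]$, which contradicts naturalness, since unit propagation must be exhausted before $d_{s+1}$ can be decided. So $t<g_s$, and between times $(s,t)$ and $(s,g_s)$ the trail consists only of existential propagations $p_{(s,t+1)},\dots,p_{(s,g_s)}$, because decisions occur only at the start of a level. As every existential variable of $A$ other than $\var(\ell)$ already lies in $\var(\mathcal{T}[s,t])$, and no universal variable of $A$ is assigned during level $s$ after time $(s,t)$, the restriction $A|$ can change only if $\var(\ell)$ is propagated at some $(s,t')$ with $t<t'\le g_s$. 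If it is not, $A$ remains unit under $\mathcal{T}[s,g_s]$, again contradicting naturalness. If $p_{(s,t')}=\bar\ell$, then $A$ becomes a conflict clause at time $(s,t')$ with $s<r$, again forcing $\bot$ too early as in the previous paragraph. The only remaining possibility is $p_{(s,t')}=\ell$, so $\ell\in\mathcal{T}$; but then the auxiliary $\exists$-blocking fact applied to $A\in\mathcal{L}_{\mathcal{T}}$ gives $\bar\ell\in\mathcal{T}$, contradicting that $\mathcal{T}$ is a non-tautological assignment. All cases being contradictory, $A\notin\phi$.

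I expect the auxiliary $\exists$-blocking fact to demand the most care, mainly because of the interplay between restriction and universal reduction under $\ar$ and the bookkeeping at decision-level boundaries in $\mathcal{L}_{\mathcal{T}}$; since reductions leave existential literals untouched this should nonetheless go through routinely. The other load-bearing point, used three times above, is the naturalness principle that a conflict clause forces the immediate propagation of $\bot$ and forbids any subsequent decision, which pins the unique conflict of $\mathcal{T}$ to decision level $r$.
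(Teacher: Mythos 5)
Your proof is correct and takes essentially the same route as the paper's: locate the earlier time $(s,t)$ at which the asserting clause $A$ would be unit with literal $\ell$, use naturalness to force either its propagation or a premature conflict, rule out $\ell=\bot$ because the only conflict occurs at level $r$, and derive the contradiction $\ell,\bar{\ell}\in\mathcal{T}$ from the fact that every existential literal of a learnable clause is falsified by the trail. The only difference is that you spell out two steps the paper leaves implicit (the ``$\exists$-blocking'' property of clauses in $\mathcal{L}_\mathcal{T}$ and the case analysis showing $\var(\ell)$ must be assigned within level $s$), which is a sound elaboration rather than a different argument.
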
	
	\begin{proof} Let $\mathcal{T}$ be the trail 
		\begin{align*}
		\mathcal{T}=(p_{(0,1)},\ldots, p_{(0,g_0)};\mathbf{d_{1}},p_{(1,1)},\ldots,p_{(1,g_{1})};\ldots; \mathbf{d_r},p_{(r,1)},\ldots ,p_{(r,g_r)}=\bot  )\text{.}
		\end{align*} 
		
		Since we learned an asserting clause, we obviously have $r>0$ by definition.

		Assume that $A\in \phi$. Then there is a time $(s,t)$ with $s\in \{ 0,\ldots,r-1 \}$ and $t\in[g_s]$ such that $A|_{\mathcal{T}[s,t]}$ is a unit clause with a literal $\ell$. This literal is either existential or $\bot$. Because of Definition \ref{DefNaturalTrails} we are forced to propagate $\ell$ not later than level $s$. Since $\mathcal{T}$ has run into a conflict at level $r$, we could not get a conflict before level $r$, hence $\ell\neq \bot$. Therefore $\ell=p_{(s,b)}\in \mathcal{T}$ for some $b\in[g_s]$. However, this is a contradiction because we need $\ell\in A$, which is only possible if $\bar{\ell}\in \mathcal{T}$ by definition of clause learning.
	\end{proof}	
	
	\begin{rem}
		In general, the above result is not true for arbitrary trails. For example, let $\mathcal{T}=(\mathbf{\bar{x}_1};\mathbf{\bar{x}_2},\bot)$ be a trail for the false QCNF $\exists x_1,x_2\forall u\cdot (x_1\vee x_2)\wedge (u)$. Then $x_1\vee x_2$ is an asserting learnable clause, which is already an axiom.
	\end{rem}

	Now we know that learning schemes always return new clauses.  We also need to examine the circumstances under which we are actually able to learn such asserting clauses.

	\begin{lem}\label{LemmaAssertingClauseASONR}
		Let $\Phi=\mathcal{Q}\cdot \phi$ be a QCNF and let 
		\begin{align*}
		\mathcal{T}=(p_{(0,1)},\ldots, p_{(0,g_0)};\mathbf{d_{1}},p_{(1,1)},\ldots,p_{(1,g_{1})};\ldots; \mathbf{d_r},p_{(r,1)},\ldots ,p_{(r,g_r)} =\bot )
		\end{align*} be a trail under the policies $\aso$ and $\nr$. If $(\bot)\not\in \mathcal{L_{\mathcal{T}}}$, then there exists an asserting clause $D\in \mathcal{L}_\mathcal{T}$.

	\end{lem}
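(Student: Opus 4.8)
The plan is to carry out the standard first-UIP argument of propositional CDCL in the QBF setting, the point of the $\aso$-policy being precisely to force the unit literal of the first-UIP clause to be existential, so that this clause really is a $\nr$-unit clause after backtracking.

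First I would record two structural facts about the sequence $\mathcal{L}_\mathcal{T}=(C_{(r,g_r)},\dots,C_{(0,1)})$, which by hypothesis contains no empty clause. Viewing $\mathcal{T}$ also as an assignment, an induction along the recursion of Definition~\ref{DefLearnableClauses} — using the $\nr$-policy and Lemma~\ref{LemmaNoTautologies} — shows that each $C_{(i,j)}$ is non-tautological and wholly falsified by $\mathcal{T}$. Moreover, every antecedent used in the recursion is falsified, apart from the single literal it propagates, by a sub-trail ending strictly before its own position; hence once the process resolves over a propagated literal $p_{(i,k)}$, the literal $\bar p_{(i,k)}$ can never re-enter a later clause of $\mathcal{L}_\mathcal{T}$. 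In particular, once the process has passed all propagations of a decision level $i$, the only literal of the current clause whose negation is assigned at level $i$ can be the decision literal $\bar d_i$.

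Now the UIP step. For a non-empty clause $C$ falsified by $\mathcal{T}$, let $i(C)$ be the highest decision level of a negated literal of $C$ and $c(C)$ the number of literals of $C$ at that level. Reading $\mathcal{L}_\mathcal{T}$ backwards, a copy step changes nothing, a reduction step only deletes universal literals, and a resolution step over $p_{(i,k)}$ with $i=i(C)$ removes exactly one level-$i$ literal and introduces only literals of level $\le i$. So, while the top level stays $i$, the quantity $c$ eventually drops to $1$ — the surviving level-$i$ literal being some $\bar p_{(i,k)}$ or $\bar d_i$ — or to $0$, and then the top level strictly decreases. Since the process ends at $(\bot)$ (excluded) or at $C_{(0,1)}$, there is a clause $D\in\mathcal{L}_\mathcal{T}$ with $c(D)=1$; write $i^*:=i(D)$ and let $\ell$ be its unique literal at level $i^*$, all other literals of $D$ lying at levels $<i^*$. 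If $\ell$ is existential, then $D|_{\mathcal{T}[s,t]}=(\ell)$ for a backtrack time with $s<i^*$, so $D$ is asserting; and if $i^*\le r-1$, then backtracking to a time inside level $i^*$ restricts $D$ to $(\bot)$, so $D$ is still asserting. This settles $\ell=\bar p_{(i^*,k)}$ (propagated literals are existential) and $\ell=\bar d_{i^*}$ with $d_{i^*}$ existential or with $i^*<r$.

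The one remaining case, $\ell=\bar d_{i^*}$ with $d_{i^*}$ universal and $i^*=r$, is where $\aso$ is indispensable, and I expect it to be the main obstacle. I would rule it out by showing that $\bar d_r$ cannot survive the reductions of the learning process as far as such a $D$. Whenever $\bar d_r$ is present the current clause must contain an existential literal of quantifier level $\ge\lv(d_r)$, for otherwise the reduction step would delete $\bar d_r$. The negation of such a witness cannot be a decision literal: an existential decision $d_k$ would, by the $\aso$-condition $\lv(d_1)\le\cdots\le\lv(d_r)$ that holds because $d_r$ is universal, satisfy $\lv(d_k)\le\lv(d_r)$ unless $k>r$ (absurd), and $\lv(d_k)=\lv(d_r)$ is impossible since a quantifier level is not shared by existential and universal variables. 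Hence the witness is a negated propagated literal, which by the first step is resolved over later in the process and there replaced by literals sitting at strictly earlier positions; iterating this, the reductions must eventually remove $\bar d_r$, so the top level drops below $r$ and we are back in an already-settled case. The same reasoning shows $C_{(0,1)}$ consists only of negated existential decision literals, so it is itself asserting (backtrack just before its highest such decision). The few degenerate situations with empty decision levels — consecutive decisions with no propagation between them — are treated directly, once more using that $\aso$ excludes exactly the universal-decision orderings that would block backtracking, in the spirit of Remark~\ref{rem:no-asserting}. Assembling the cases gives an asserting clause in $\mathcal{L}_\mathcal{T}$.
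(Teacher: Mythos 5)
Your core argument rests on exactly the same use of $\aso$ as the paper's proof: a universal decision literal survives the reductions of Definition~\ref{DefLearnableClauses} only if the clause contains an existential literal of higher quantifier level, and under $\aso$ such a witness cannot be (the negation of) a decision, hence is a negated propagated literal, which the learning process eventually resolves away. The paper, however, gets the lemma with far less machinery: it takes the \emph{rightmost} learnable clause (the scheme $\mathtt{DEC}$), notes that under $\nr$ it consists solely of negated decision literals and is reduced, and applies the above observation once to the decision $d_{h_a}$ of maximal index: if $d_{h_a}$ were universal, $\aso$ would force every other literal of the clause to sit at quantifier level at most $\lv(d_{h_a})$ (strictly below for the existential ones), so $\bar d_{h_a}$ would have been reduced away; hence $d_{h_a}$ is existential and backtracking to $(h_{a-1},0)$ leaves the unit $(\bar d_{h_a})$. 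Your first-UIP search for a clause with a unique top-level literal, and the iteration eliminating a surviving universal $\bar d_r$, are therefore dispensable; note also that the existence of a clause with $c=1$ is immediate, since the rightmost clause contains at most one literal per decision level.

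Two concrete points need repair. First, the claim that $C_{(0,1)}$ \emph{consists only of negated existential decision literals} is false: a universal decision $\bar d_j$ with $j<r$ can survive the final reduction if the clause also contains a \emph{later} existential decision of higher quantifier level (e.g.\ prefix $\exists x_1\forall u\exists x_2,t$ with clauses $\bar x_1\vee\bar u\vee\bar x_2\vee t$ and $\bar x_1\vee\bar u\vee\bar x_2\vee\bar t$ and decisions $x_1,u,x_2$ yields the learned clause $\bar x_1\vee\bar u\vee\bar x_2$). What your $\aso$ argument actually gives — and all that is needed — is that the literal of \emph{maximal decision index} in this clause is existential, which is precisely the paper's statement. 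Second, in the case of a unique top-level literal that is a universal decision at level $i^*\le r-1$ you call the clause asserting because it restricts to $(\bot)$ after backtracking; this follows the letter of Definition~\ref{def:asserting} (the empty clause counts as unit) but not its intended use — Proposition~\ref{PropMakeUnreliable}, for instance, needs an asserting \emph{literal} — and it also quietly requires an admissible backtrack time to exist. Both issues vanish if you run your argument directly on the rightmost clause, as the paper does.
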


	\begin{proof} Consider the sequence of learnable clauses:
		\begin{align*}
		\mathcal{L_{\mathcal{T}}}:=(C_{(r,g_r)},\ldots,C_{(r,1)},\ldots,C_{(1,g_{1})},\ldots,C_{(1,1)},C_{(0,g_0)},\ldots,C_{(0,1)})
		\end{align*}
		The learning scheme 
		that maps each trail to the rightmost clause in $\mathcal{L_{\mathcal{T}}}$ is asserting. In particular we learn a clause $D:=\red(C)$ such that $C$ is a subclause of $\bar{d}_1\vee\ldots\vee\bar{d}_r$. Note that, because of $\nr$, the clause $C$ does not contain further (universal) literals that were reduced in the trail.  Suppose that $D\neq (\bot)$. Write   $D=\bar{d}_{h_1}\vee \ldots \vee \bar{d}_{h_a}$ for some $h_1<\ldots<h_a$. Because of $\aso$ the last literal $d_{h_a}$ will always be existential. Then we can backtrack to the time $(h_{a-1},0)$ (resp.\ $(0,0)$ if $a=1$) and get $D|_{\mathcal{T}[h_{a-1},0]}=(\bar{d}_{h_a})$.
	\end{proof}
	
	The above result is not true for  QCDCL systems with the policies $\aso$ and $\ar$. Consider the following counterexample: $\Phi:=\exists x\forall u \exists z\cdot (\bar{x}\vee u\vee \bar{z})$ and the trail $\mathcal{T}=(\textbf{x};\textbf{z},\bot)$. We can only learn the clause $(\bar{x}\vee u\vee \bar{z})$. But since this is already an axiom, this clause cannot become unit at an earlier level.

	However, the next lemma shows that under the policy $\asro$ we can always learn asserting clauses.

	\begin{lem}\label{LemmaAssertingClauseASROAR}
		Let $\Phi=\mathcal{Q}\cdot \phi$ be a QCNF and let 
		\begin{align*}
		\mathcal{T}=(p_{(0,1)},\ldots, p_{(0,g_0)};\mathbf{d_{1}},p_{(1,1)},\ldots,p_{(1,g_{1})};\ldots; \mathbf{d_r},p_{(r,1)},\ldots ,p_{(r,g_r)}=\bot  )
		\end{align*} be a trail under the policies $\asro$ and $\ar$. If $(\bot)\not\in \mathcal{L_{\mathcal{T}}}$, then there exists an asserting clause $D\in \mathcal{L}_\mathcal{T}$.
		
	\end{lem}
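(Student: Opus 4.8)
The plan is to mimic the proof of Lemma~\ref{LemmaAssertingClauseASONR} and show that the learning scheme $\mathtt{DEC}$, which picks the rightmost clause $C_{(0,1)}$ of $\mathcal{L}_\mathcal{T}$, always yields an asserting clause under the policies $\asro$ and $\ar$. First I would record the shape of $C_{(0,1)}$: the construction in Definition~\ref{DefLearnableClauses} resolves out every propagated literal of $\mathcal{T}$, and every existential literal of an antecedent clause other than the propagated one is falsified by the relevant subtrail, so no negated propagated literal survives in $C_{(0,1)}$ and every existential literal of $C_{(0,1)}$ is the negation of an (existential) decision literal. The one genuinely new phenomenon relative to the $\nr$ case is that an $\ar$-antecedent $\ante(p_{(i,j)})$ may carry universal ``tail'' literals $u$ with $\lv(u)>\lv(p_{(i,j)})$; such a $u$ can survive all intervening reductions down to $C_{(0,1)}$ only if it is ultimately shielded by an existential literal $\bar d_k\in C_{(0,1)}$ with $\lv(d_k)>\lv(u)$. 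But then $\asro$ forces $\var(u)$ to be decided before the existential decision $d_k$, and since $u$ is not satisfied by $\mathcal{T}$ that decision assigns $u$ false, so $u$ is again a negated decision literal. Hence $C_{(0,1)}=\red(C)$ for some subclause $C$ of $\bar d_1\vee\ldots\vee\bar d_r$, and by the hypothesis $(\bot)\notin\mathcal{L}_\mathcal{T}$ the learned clause is nonempty, say $D=\bar d_{h_1}\vee\ldots\vee\bar d_{h_a}$ with $h_1<\ldots<h_a$.

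The crux is to show that the last decision literal $d_{h_a}$ appearing in $D$ is existential. Suppose it were universal. Since $D=\red(C)$ and $\bar d_{h_a}$ was not reduced away, $D$ must contain an existential literal $\bar d_{h_j}$ with $\lv(d_{h_j})>\lv(d_{h_a})$; as $d_{h_a}$ has the largest index in $D$ we get $j<a$, so the existential variable $d_{h_j}$ was decided at level $h_j$, strictly before the universal variable $d_{h_a}$ was decided at level $h_a$. But $\asro$ requires every universal variable of level below $\lv(d_{h_j})$ --- in particular $d_{h_a}$, of level $\lv(d_{h_a})<\lv(d_{h_j})$ --- to be decided before $d_{h_j}$, a contradiction. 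Hence $d_{h_a}$ is existential (the degenerate case $a=1$ is also immediate, since otherwise $\red(C)$ would collapse to $(\bot)$).

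To finish, backtrack to the time $(h_{a-1},0)$ (resp.\ $(0,0)$ if $a=1$): the corresponding subtrail already contains $d_{h_1},\ldots,d_{h_{a-1}}$, while $\var(d_{h_a})$ is still unassigned since a decision variable is fresh at the moment it is decided; thus $D|_{\mathcal{T}[h_{a-1},0]}=(\bar d_{h_a})$, which is a unit clause under $\ar$ because $\bar d_{h_a}$ is existential and $\red((\bar d_{h_a}))=(\bar d_{h_a})$. So $D\in\mathcal{L}_\mathcal{T}$ is asserting, proving the lemma. I expect the main obstacle to be the first step: pinning down rigorously why, under $\ar$, no literal other than a negated decision can survive in $C_{(0,1)}$ --- this is exactly where $\asro$ does real work, and one has to follow how the tail universals created by $\ar$-propagation behave under the successive $\red$ operations of Definition~\ref{DefLearnableClauses}. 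The remaining two steps are short and parallel the argument in Lemma~\ref{LemmaAssertingClauseASONR}.
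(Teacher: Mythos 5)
Your overall template (pick one learnable clause, argue its last decision literal is existential, then backtrack) matches the paper's, but the clause you pick and the structural claim you base everything on do not survive the $\ar$ policy. The flawed step is ``since $u$ is not satisfied by $\mathcal{T}$ that decision assigns $u$ false''. For a universal literal $u$ that enters the learning process through an $\ar$-antecedent $\ante(p)$, all you know is that $\var(u)$ is unassigned in the subtrail \emph{at the moment $p$ is propagated}; the decision on $\var(u)$ that $\asro$ forces to precede the shielding existential decision $d_k$ may perfectly well occur \emph{after} $p$'s propagation, and then it can set $u$ to true. In that case $C_{(0,1)}$ contains a literal satisfied by $\mathcal{T}$ that is not a negated decision, your normal form $C_{(0,1)}=\red(C)$ with $C\subseteq \bar d_1\vee\ldots\vee\bar d_r$ fails, and in fact $C_{(0,1)}$ need not be asserting at all, so the $\mathtt{DEC}$ scheme cannot be reused here (unlike in Lemma~\ref{LemmaAssertingClauseASONR}). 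Concretely, take the prefix $\exists x_1,x_2\forall u\exists y,e,z$ with clauses $\bar x_1\vee y$, \ $\bar x_1\vee\bar y\vee x_2\vee u$, \ $\bar e\vee\bar x_2\vee z$, \ $\bar x_1\vee\bar e\vee\bar z$, and the (even natural) $\asroar$ trail $\mathcal{T}=(\mathbf{x_1},y,x_2;\mathbf{u};\mathbf{e},z,\bot)$. Following Definition~\ref{DefLearnableClauses}, the learnable clauses are $\bar x_1\vee\bar e\vee\bar z$, \ $\bar x_1\vee\bar e\vee\bar x_2$, \ $\bar x_1\vee\bar e\vee\bar y\vee u$, \ $\bar x_1\vee\bar e\vee u$; the rightmost one, $\bar x_1\vee\bar e\vee u$, contains $u\in\mathcal{T}$. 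After backtracking to decision level $2$ it is satisfied, and at any earlier point $u$ is shielded by $\bar e$ (which has higher level) so reduction cannot remove it: no backtracking point makes it unit, i.e.\ it is not asserting, even though $(\bot)\notin\mathcal{L}_\mathcal{T}$ and an asserting clause ($\bar x_1\vee\bar e\vee\bar x_2$) does exist.

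The paper's proof sidesteps exactly this by \emph{not} resolving all the way down to $C_{(0,1)}$: it takes $k$ maximal such that $\bar d_k$ occurs in some learnable clause (the last existential decision contributing to the conflict) and learns $D:=C_{(\ell,m)}$, where $p_{(\ell,m)}$ is the first propagated literal right of $d_k$. Every antecedent entering $D$ then belongs to a literal propagated \emph{after} $d_k$ was decided, and by $\asro$ all universal variables of level below $\lv(d_k)$ are already decided at that point, so in those antecedents they can only occur falsified; consequently, after backtracking to $\mathcal{T}[k-1,g_{k-1}]$ the restriction of $D$ consists of $\bar d_k$ plus universal literals of level above $\lv(d_k)$, which reduce away, giving a unit clause under $\ar$. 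Your last two steps (the $\asro$ argument that the final decision literal is existential, and the backtrack) are fine in themselves, but they rest on the false normal form, so the proposal as it stands does not go through; you would have to switch to a choice of learned clause along the paper's lines (or otherwise exclude satisfied low-level universals), which is precisely the part you flagged as the ``main obstacle''.
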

	\begin{proof}  Consider the sequence of learnable clauses:
		\begin{align*}
		\mathcal{L_{\mathcal{T}}}:=(C_{(r,g_r)},\ldots,C_{(r,1)},\ldots,C_{(1,g_{1})},\ldots,C_{(1,1)},C_{(0,g_0)},\ldots,C_{(0,1)})
		\end{align*}
		If $r=0$ or if all decision literals are universal, we can just take the rightmost clause in $\mathcal{L_{\mathcal{T}}}$, which is $(\bot)$. Hence we can assume	 that $r>0$ and there exists at least one existential decision literal.
		
		Let $k\in [r]$ be maximal with respect that $\bar{d}_k$ is contained in some clause from $\mathcal{L}_\mathcal{T}$. This must exist since otherwise we could resolve over all propagation literals $p_{(i,j)}$ and reduce all universal literals during the learning process. Then we would be able to learn $(\bot)$.
		
		Intuitively, $d_k$ is the last existential decision that contributed to the conflict.  Let $p_{(\ell,m)}$ be the next propagated literal right of $d_k$ in $\mathcal{T}$ (this does not necessarily have to be $p_{(k,1)}$). Set $D:=C_{(\ell,m)}$.
		
		We claim that this clause $D$ is asserting. Let us learn this clause and backtrack to $\mathcal{T}[k-1,g_{k-1}]$. It is easy to see that for all existential literals $y\in D\backslash\{ \bar{d}_k \}$ we need $\bar{y}\in\mathcal{T}[k-1,g_{k-1}]$. All universal variables $z$ with $\lv_\Phi(z)<\lv_\Phi(d_k)$ are assigned earlier than $d_k$ in $\mathcal{T}$ because of the policy $\asro$.
		
		Now consider $E:=D|_{\mathcal{T}[k-1,g_{k-1}]}$.  The only type of literals that can occur in $E$, aside from $\bar{d_k}$, are universal literals $u$ with $\lv_\Phi(u)>\lv_\Phi(d_k)$. 
		Suppose there are such literals $u_1,\ldots,u_m$ with $C=\bar{d}_k\vee u_1\vee\ldots\vee u_m$ and $\lv_\Phi(d_k)<\lv_\Phi(u_i)$ for all $i\in [m]$. But then we can conclude $\red(E)=(\bar{d}_k)$ and therefore $E$ is a unit clause.
	\end{proof}
	
	Now that we have clarified how to gain asserting clauses, we can finally prove the completeness of all systems.
	
	\begin{thm} \label{thm:qcdcl-complete}
		All defined QCDCL proof systems are complete.
		
	\end{thm}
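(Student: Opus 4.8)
The plan is to show that every false QCNF $\Phi=\mathcal{Q}\cdot\phi$ has a refutation in each of the six systems $\qcdcl^P_R$ by the obvious iterative scheme: starting from $\phi$, repeatedly build a natural trail that runs into a conflict, learn a fresh clause from it, add this clause to the matrix, and continue until $(\bot)$ is learned. Since there are only finitely many non-tautological clauses over $\var(\Phi)$ and each learned clause is new, the process must terminate, and the only way it can terminate is by producing the empty clause.

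The technical core is a \emph{conflict-reachability} statement: if $\Phi'=\mathcal{Q}\cdot\phi'$ is false and $(\bot)\notin\phi'$, then for every choice of policies $P$ and $R$ there is a natural $\qcdcl^P_R$ trail for $\Phi'$ that runs into a conflict. It suffices to produce such a trail whose decisions follow $\lo$, because a $\lo$-trail is simultaneously an $\aso$-, $\asro$- and $\ao$-trail: when a variable of level $\ell$ is decided under $\lo$, every variable of smaller level has already been assigned, hence in particular every universal of smaller level has already been decided. To construct the trail I would fix a winning strategy $\tau$ for the $\forall$-player (which exists for any false QBF, as recalled in the preliminaries) and grow the trail as follows: whenever some clause is unit under $R$, propagate it, always taking the empty literal $\bot$ if it is available (so that the trail stays natural); whenever no propagation is possible, decide the least-level unassigned variable, with arbitrary polarity if it is existential and with the value dictated by $\tau$ if it is universal. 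Because of $\lo$, at the moment a universal $u$ is decided all variables of level $<\lv(u)$ are already assigned, so $\tau$ is indeed applicable. If the resulting trail never ran into a conflict it would extend to a total assignment $\alpha$; but in a natural trail no clause can be falsified without a conflict being detected along the way (such a clause becomes $R$-unit, or empty, at an earlier stage, forcing a propagation or, with priority, a conflict), so $\alpha$ would have to satisfy $\phi'$ --- whereas $\alpha$ agrees with $\tau$ on all universal variables, so $\alpha$ falsifies $\phi'$, a contradiction. Hence the trail hits a conflict.

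Given such a natural conflict trail $\mathcal{T}$ for the current matrix $\phi'=\phi\cup\{C_1,\dots,C_{i-1}\}$, I would next choose the clause $C_i$ to learn. If $(\bot)\in\mathcal{L}_\mathcal{T}$, learn it and stop. Otherwise, since the trail follows $\lo$ and hence also $\aso$ (when $R=\nr$) respectively $\asro$ (when $R=\ar$), Lemma~\ref{LemmaAssertingClauseASONR} respectively Lemma~\ref{LemmaAssertingClauseASROAR} guarantees an asserting clause $C_i\in\mathcal{L}_\mathcal{T}$, and Lemma~\ref{LemmaNewClauseAsserting} gives $C_i\notin\phi'$, so the learned clause is genuinely new. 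The accompanying derivation $\pi_i$ prescribed by Definition~\ref{DefLearnableClauses} is a legal \ldqres derivation, and a legal \qres derivation when $R=\nr$, by Lemma~\ref{LemmaNoTautologies} and Proposition~\ref{PropSoundnessLongDistance}. Backtracking all the way down to $\mathcal{T}_i[0,0]=\emptyset$ (a restart) is always admissible, so the sequences built so far form a legal $\qcdcl^P_R$ proof prefix. As the $C_i$ are pairwise distinct (non-tautological under $\nr$, universal-tautological at worst under $\ar$, in both cases finitely many over $\var(\Phi)$), the loop cannot run forever, so it must stop by learning $(\bot)$; the final triple of sequences is the desired $\qcdcl^P_R$ refutation of $\Phi$.

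The main obstacle is the conflict-reachability lemma, and within it the interplay between unit propagation and the prefix order: under the policy $\ar$, universal reduction lets unit propagation assign existential variables out of prefix order, so one must check with care that decisions can still be taken level by level (so that $\tau$ stays applicable at every universal decision) and, conversely, that a natural trail genuinely detects every conflict rather than quietly running into a satisfying assignment of the matrix. Once this lemma is secured, the termination count and the selection of asserting clauses are immediate consequences of the lemmas already established.
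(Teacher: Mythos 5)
Your proposal is correct and follows essentially the same route as the paper's proof: a level-ordered trail built with arbitrary existential decisions and universal decisions dictated by a $\forall$-player winning strategy must run into a conflict, an asserting clause is then learned via Lemma~\ref{LemmaAssertingClauseASONR} (resp.\ Lemma~\ref{LemmaAssertingClauseASROAR}), its newness comes from Lemma~\ref{LemmaNewClauseAsserting}, one restarts, and termination follows from the finiteness of clauses over the given variables; reducing all six systems to the $\lo$ case is exactly the paper's observation that the other systems are strengthenings of $\loar$ and $\lonr$. Your explicit conflict-reachability argument merely spells out what the paper states more briefly.
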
	
	\begin{proof} We will  concentrate on the systems $\loar$ and $\lonr$ since all the other systems are just strengthenings of these two proof systems.
		
		Let $\Phi=\mathcal{Q}\cdot \phi$ be a false QCNF over the set of variables $V$. W.l.o.g.\ we write the prefix  as 
		\begin{align*}
		\mathcal{Q}=	\exists X_1 \forall U_1 \exists X_2 \forall U_2 \ldots \exists X_m \forall U_m
		\end{align*}
		with $X_1\uplus\ldots\uplus X_m\uplus U_1\uplus \ldots \uplus U_m=V$. This is possible if we allow empty sets.
		
		Now because $\Phi$ is false, there exists a winning strategy for the $\forall$-player. That means we can find functions $f_i:\:\langle X_1\cup \ldots \cup X_i\rangle \rightarrow \langle U_i \rangle$ such that $\phi$ gets falsified under any assignment 
		\begin{align*}
		\sigma_1\cup f_1(\sigma_1)\cup \sigma_2 \cup f_2(\sigma_1\cup \sigma_2)\cup\ldots \cup \sigma_m \cup f_m(\sigma_1\cup \ldots \cup \sigma_m)\text{.}
		\end{align*}
		We can now construct a trail for $\Phi$ in the respective system. All the unit propagations can be done automatically. When we have to make decisions, we distinguish two cases:
		
		\underline{Case 1:} We have to decide an existential variable $x$. 
		
		We can choose an arbitrary polarity for this decision of $x$ (e.g. set all variables to $1$).
		
		\underline{Case 2:} We have to decide a universal variable.
		
		Suppose we have to handle the variable $u\in U_i$. Since our trail follows the policy $\lo$, all variables from $X_1\cup \ldots \cup X_i$ are already assigned. Let $\sigma$ be the corresponding assignment. Then we decide $u$ in the same polarity as it occurs in $f_i(\sigma)$.\\
		
		After making the decisions we continue with unit propagation as usual. The trail that is generated this way represents an assignment as it gets created in the game between the two players. The $\forall$-player was able to use the winning strategy, therefore we falsify the matrix $\phi$ at some point and run into a conflict. After this we start clause learning where we can always learn an asserting clause $C$ by Lemma~\ref{LemmaAssertingClauseASONR} (resp.\ Lemma~\ref{LemmaAssertingClauseASROAR}) until we learn the empty clause.  By Lemma \ref{LemmaNewClauseAsserting} we conclude that $C$ is in fact a new clause. We add $C$ to $\phi$, restart (i.e., backtrack to $(0,0)$) and start again. Since there are only finitely many clauses with variables in $V$, this process will end after finitely many runs. In the last run we have to learn the empty clause, hence we created a (possibly exponential-size) refutation 
		\begin{align*}
		\iota=((\mathcal{T}_1,\ldots,\mathcal{T}_n),(C_1,\ldots,C_n=(\bot)),(\pi_1,\ldots,\pi_n))
		\end{align*}
		of $\Phi$ in the QCDCL system.
	\end{proof}
	
	Proving that QCDCL decisions do not necessarily need to follow the order of quantification (as is done in practical QCDCL with policy $\lo$), might be a somewhat surprising discovery. It seems to us that inside the QBF community there is the wide-spread belief that following the quantification order in decisions is needed for soundness (cf.\ e.g.\ \cite{LonsingEG13,ZM02,GiunchigliaMN09}).\footnote{In fact we thought so too, prior to this paper.} While this is true for QDPLL \cite{GiunchigliaMN09,CadoliGS98},\footnote{The fact that the earlier QDPLL algorithm \cite{CadoliGS98} needs to obey the quantifier order might have been one reason why this policy was adopted in QCDCL as well \cite{ZM02}. Additionally, following prefix order in decisions allows to learn asserting constraints, which otherwise cannot be always guaranteed (cf.\ \cite{BPB22}).} it is actually not needed in QCDCL: the quantification order is immaterial for the decisions as long as the quantification order is correctly taken into account when deriving learned clauses.\footnote{We note, however, that the approach of \emph{dependency learning} \cite{PeitlSS19} starts with an empty set of dependency conditions (cf.\ \cite{SlivovskyS16,BeyersdorffBCSS19} for background on dependencies) and incrementally learns new dependencies. As decisions only need to respect the learned dependencies, they can initially be made out of order \cite{PeitlSS19}.}
	Hence our theoretical work also opens the door towards \emph{new solving approaches in practice} (cf.\ the discussion in the 
	concluding Section~\ref{sec:conclusion}).

	\section{Separating classic QCDCL and Q-resolution}
	\label{sec:separation-qcdcl-qres}
	
	In this section we will show that classic \qcdcl (i.e., $\loar$) and \qres are incomparable. This requires exponential separations in both directions between the two systems. This also motivates that when searching for a QCDCL algorithm that exploits the full power of \qres (a topic we will address in Section~\ref{sec:qres-characterisation}), we will have to modify the policies in $\loar$. 
	We start the separation by introducing QBFs that will turn out to be easy for $\loar$, but hard for \qres.
	
	\begin{defiC}[\cite{BeyersdorffCJ19}]
		The QCNF $\mathtt{QParity}_n$ consists of the prefix $\exists x_1 \ldots x_n\forall z\exists t_2 \ldots t_n$ and the matrix
		\begin{align*}
		&x_1\vee x_2\vee \bar{t}_2,\
		x_1\vee \bar{x}_2\vee t_2,\
		\bar{x}_1\vee x_2 \vee t_2,\
		\bar{x}_1\vee \bar{x}_2\vee \bar{t}_2,\\
		&x_i\vee t_{i-1}\vee \bar{t}_i,\
		x_i\vee \bar{t}_{i-1}\vee t_i,\
		\bar{x}_i\vee t_{i-1}\vee t_i,\
		\bar{x}_i\vee \bar{t}_{i-1}\vee \bar{t}_i,\\
		&t_n\vee z,\
		\bar{t}_n\vee \bar{z},
		\end{align*} 
		for $i\in \{ 2,\ldots,n \}$.
	\end{defiC}

	One can interpret $\mathtt{QParity}_n$ as follows: Each $t_i$ symbolizes the partial sum $x_1\oplus x_2 \oplus\ldots \oplus x_i$. The last two clauses can only be true if $t_n$, which represents $x_1 \oplus x_2 \oplus\ldots \oplus x_n$, is neither $0$ nor~$1$. Hence this formula is obviously false. It was shown in \cite{BeyersdorffCJ19} that $\mathtt{QParity}_n$ needs exponential-size \qres (and even \textsf{QU-resolution}) refutations. 
	
	We will demonstrate that $\mathtt{QParity}_n$ is in fact easy for $\loar$, i.e., for  classical \qcdcl.
	
	\begin{prop}\label{PropQParityHardForQCDCL}
		$\mathtt{QParity}_n$ has polynomial-size $\loar$ refutations.
	\end{prop}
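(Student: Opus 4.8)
The plan is to construct an explicit $\loar$ refutation of $\mathtt{QParity}_n$ of size $\mathrm{poly}(n)$. The structural key is that under the policies $\lo$ and $\ar$ the universal variable $z$ (level $2$) is never actually decided: as soon as $x_1,\dots,x_n$ (level $1$) are all assigned, unit propagation forces $t_2,\dots,t_n$ in turn through the matrix clauses $x_i\vee t_{i-1}\vee\bar t_i,\dots$, and then one of $t_n\vee z$, $\bar t_n\vee\bar z$ reduces (by universal reduction of $z$, which is now the deepest variable of that clause) to $(\bot)$; a conflict thus occurs before $z$ ever becomes the remaining variable of lowest level. So every trail we use has the shape ``propagate what is forced at decision level $0$; decide the still-free $x_i$'s, one per decision level, each followed immediately by the propagation of the associated $t_i$; run into a conflict'', and has length $O(n)$.

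For clause learning the crucial point -- and the reason the refutation can be short while \qres cannot -- is what happens when the backward resolution reaches an $x$-pivot: if at that point we resolve against a previously learned clause carrying the \emph{opposite} $z$-polarity, the resolvent contains the tautology $z\vee\bar z$, and by Proposition~\ref{PropSoundnessLongDistance} this is a legitimate (long-distance) step, since the pivot $x_i$ has level $1<2=\lv(z)$. A clause $z\vee\bar z\vee C$ with $C$ built only from $x$- and $t$-literals is in turn exactly what we need later: once $C$ is falsified along a trail, reduction collapses $z\vee\bar z$ and produces $(\bot)$, i.e.\ an immediate conflict. This mirrors the known short \ldqres refutation of $\mathtt{QParity}_n$, which we realise inside $\loar$: from the matrix one first gets ``seed'' clauses $z\vee\bar x_k^{\,a}\vee\bar t_{k-1}^{\,b}$ and $\bar z\vee\bar x_k^{\,a}\vee\bar t_{k-1}^{\,b}$; merging two seeds of opposite $z$-polarity over an $x_k$-pivot yields $z\vee\bar z\vee t_{k-1}^{\pm}$; resolving these against matrix clauses over $t$-pivots (sound because $z$ is then not in the second parent) and merging over an $x_{k-1}$-pivot gives $z\vee\bar z\vee t_{k-2}^{\pm}$; iterating down to $k=2$ produces $z\vee\bar z$, which reduces to the empty clause.

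Concretely the plan is: (i) for each level $k\in\{2,\dots,n\}$ obtain the four seed clauses as learnable clauses -- they occur as intermediate entries of $\mathcal{L}_{\mathcal{T}}$ for trails that decide all of $x_1,\dots,x_n$ so as to fix the relevant value of $t_{k-1}$ and of $x_k$ -- which costs $O(n)$ trails of length $O(n)$; (ii) with the seeds available, run trails that decide $x_1,\dots,x_{k-1}$ so that $t_{k-1}$ takes a prescribed value, then let a seed propagate $x_k$ and reach a conflict -- in the learning of such a trail the conflict clause is resolved over $t_k$ and then over the \emph{propagated} $x_k$ against a seed, which is precisely the merge step producing $z\vee\bar z\vee\bar t_{k-1}^{\pm}$, a clause appearing in $\mathcal{L}_{\mathcal{T}}$, so we learn it; (iii) use the pair $z\vee\bar z\vee t_{k-1}$, $z\vee\bar z\vee\bar t_{k-1}$ (together with level-$(k-1)$ seeds to propagate $x_{k-1}$) to climb one rung down, eventually reaching $z\vee\bar z$ and hence $(\bot)$. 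This is $O(n)$ phases of $O(n)$ trails of length $O(n)$, hence polynomial size, and soundness of the extracted \ldqres derivation is Proposition~\ref{PropSoundnessLongDistance}. The part I expect to be the real work is the trail design in (ii)--(iii): choosing polarities so the backward resolution produces exactly the intended ladder clause in $\mathcal{L}_{\mathcal{T}}$, ordering the propagations so the pivots are hit in the correct order, and verifying the asserting-type conditions from Definition~\ref{DefinitionQCDCLrefutation} that make the whole object a legal $\loar$ proof and guarantee termination.
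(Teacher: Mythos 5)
There is a genuine gap in your phase (i), and the later phases depend on it. You claim that the three-literal ``seed'' clauses $z^{\pm}\vee \bar x_k^{\,a}\vee \bar t_{k-1}^{\,b}$ for \emph{every} level $k$ occur as intermediate entries of $\mathcal{L}_{\mathcal{T}}$ for trails that decide all of $x_1,\dots,x_n$. This is only true for $k=n$. In such a trail every $x_j$ is a \emph{decision}, and decision literals are never pivots in the learning recursion of Definition~\ref{DefLearnableClauses}; each backward resolution over a propagated $t_j$ uses a matrix antecedent that reintroduces an $x_j$-literal, which then stays. So the entries of $\mathcal{L}_{\mathcal{T}}$ are $z^{\pm}\vee \ell_{x_n}\vee\ell_{x_{n-1}}\vee\dots\vee\ell_{x_k}\vee\ell_{t_{k-1}}$ (width $n-k+3$), and the final entry is a full-width $x$-clause after the $z$-literal reduces; the three-literal seeds for $k<n$ never appear. (They are in fact not even propositionally implied by the matrix, and in these trails no reduction fires before the last step, so no choice of polarities rescues this.) Consequently your phases (ii)--(iii), which presuppose seeds at every level to propagate $x_k$, can only produce the single rung $z\vee\bar z\vee t_{n-1}^{\pm}$ and then stall: the clauses $z\vee\bar z\vee t_{k-1}^{\pm}$ contain no $x$-literal, so they cannot propagate $x_{k-1}$, and deciding $x_{k-1}$ instead puts its literal back into everything you learn.

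The repair is essentially the paper's proof, and it shows where your plan deviates: the paper merges your ``seed'' and ``ladder'' roles into one family of learned clauses $C_j=x_j\vee t_{j-1}\vee z\vee\bar z$ and $D_j=x_j\vee\bar t_{j-1}\vee z\vee\bar z$, i.e.\ it deliberately \emph{keeps exactly one} $x$-literal (that of the deepest decision) in each learned clause. In the next trail, $x_{j+1},\dots,x_n$ are then \emph{propagated} rather than decided --- under $\ar$ the tautology $z\vee\bar z$ reduces away once the only remaining existential literal is an $x$ of level $1$, so $C_{j+1},D_{j+2},\dots$ act as unit antecedents --- and therefore these variables are eliminated in the backward resolution, which stops at the next four-literal clause $C_j$ (resp.\ $D_j$). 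Your observations that $z$ is never decided, that the merge over an $x$-pivot is a legal long-distance step by Proposition~\ref{PropSoundnessLongDistance}, and that $z\vee\bar z\vee C'$ yields an immediate conflict once $C'$ is falsified are all correct and are used in the paper; the missing idea is that the learned ladder clauses must retain the $x$-literal so that they themselves supply the unit propagations that your nonexistent lower-level seeds were supposed to provide.
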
	
	\begin{proof} We construct the trails $(\mathcal{T}_n,\mathcal{U}_n,\mathcal{T}_{n-1},\mathcal{U}_{n-1},\ldots,\mathcal{T}_2,\mathcal{U}_2,\mathcal{T}_1,\mathcal{U}_1)$. 
		
		Let $\mathcal{T}_n$ be the trail
		\begin{align*}
		\mathcal{T}_n=( \mathbf{\bar{x}_1};
		\mathbf{\bar{x}_2},\bar{t}_2;
		\mathbf{\bar{x}_3},\bar{t}_3;
		\ldots;
		\mathbf{\bar{x}_{n-1}},\bar{t}_{n-1};
		\mathbf{\bar{x}_{n}},\bar{t}_{n},\bot)
		\text{}
		\end{align*}
		with antecedent clauses
		\begin{align*}
		\ante_{\mathcal{T}_n}(\bar{t}_2)&={x}_1\vee x_2\vee \bar{t}_2\text{,}\\
		\ante_{\mathcal{T}_n}(\bar{t}_i)&=x_{i}\vee {t}_{i-1}\vee \bar{t}_i\text{ for $i=3,\ldots,n$,}\\
		\ante_{\mathcal{T}_n}(\bot)&=t_n\vee z\text{.}
		\end{align*}
		After resolving over $\bar{t}_n$ we can derive $C_n:= x_n\vee t_{n-1}\vee z$ as a learnable clause and backtrack to $\mathcal{T}_n[n-2,1]$.
		
		Further, let $\mathcal{U}_n$ be the trail 
		\begin{align*}
		\mathcal{U}_n=( \mathbf{\bar{x}_1};
		\mathbf{\bar{x}_2},\bar{t}_2;
		\mathbf{\bar{x}_3},\bar{t}_3;
		\ldots;
		\mathbf{\bar{x}_{n-2}},\bar{t}_{n-2};
		\mathbf{{x}_{n-1}},{t}_{n-1};
		\mathbf{\bar{x}_{n}},{t}_{n},\bot)
		\text{}
		\end{align*}
		with antecedent clauses 
		\begin{align*}
		\ante_{\mathcal{U}_n}(\bar{t}_2)&={x}_1\vee x_2\vee \bar{t}_2\text{,}\\
		\ante_{\mathcal{U}_n}(\bar{t}_i)&=x_{i}\vee {t}_{i-1}\vee \bar{t}_i\text{ for $i=3,\ldots,n-2$,}\\
		\ante_{\mathcal{U}_n}({t}_{n-1})&=\bar{x}_{n-1}\vee {t}_{n-2}\vee {t}_{n-1}\text{,}\\
		\ante_{\mathcal{U}_n}({t}_{n})&={x}_{n}\vee \bar{t}_{n-1}\vee {t}_{n}\text{,}\\
		\ante_{\mathcal{U}_n}(\bot)&=\bar{t}_n\vee \bar{z}\text{.}
		\end{align*}
		Symmetrically to $\mathcal{T}_n$, we can learn the clause $D_n:=x_n\vee \bar{t}_{n-1}\vee \bar{z}$ by resolving over $t_n$ and backtrack up to $\mathcal{U}_n[n-2,1]$.
		
		We continue with the construction of $\mathcal{T}_{n-1}$.
		\begin{align*}
		\mathcal{T}_{n-1}=( \mathbf{\bar{x}_1};
		\mathbf{\bar{x}_2},\bar{t}_2;
		\mathbf{\bar{x}_3},\bar{t}_3;
		\ldots;
		\mathbf{\bar{x}_{n-2}},\bar{t}_{n-2};
		\mathbf{\bar{x}_{n-1}},\bar{t}_{n-1},x_n,t_n,\bot)
		\text{}
		\end{align*}
		with antecedent clauses
		\begin{align*}
		\ante_{\mathcal{T}_{n-1}}(\bar{t}_2)&={x}_1\vee x_2\vee \bar{t}_2\text{,}\\
		\ante_{\mathcal{T}_{n-1}}(\bar{t}_i)&=x_{i}\vee {t}_{i-1}\vee \bar{t}_i\text{ for $i=3,\ldots,n-1$,}\\
		\ante_{\mathcal{T}_{n-1}}(x_n)&=C_n=x_n\vee t_{n-1}\vee z\text{,}\\
		\ante_{\mathcal{T}_{n-1}}({t}_n)&=\bar{x}_n\vee t_{n-1}\vee t_n\text{,}\\
		\ante_{\mathcal{T}_{n-1}}(\bot)&=\bar{t}_n\vee \bar{z}\text{.}
		\end{align*}
		After resolving over $t_n$, $x_n$ and finally $\bar{t}_{n-1}$ we get the learned clause $C_{n-1}:=x_{n-1}\vee t_{n-2}\vee z\vee \bar{z}$ and backtrack up to $\mathcal{T}_{n-1}[n-3,1]$.

		As before, we can symmetrically create the next trail $\mathcal{U}_{n-1}$.
		\begin{align*}
		\mathcal{U}_{n-1}=( \mathbf{\bar{x}_1};
		\mathbf{\bar{x}_2},\bar{t}_2;
		\mathbf{\bar{x}_3},\bar{t}_3;
		\ldots;
		\mathbf{\bar{x}_{n-3}},\bar{t}_{n-3};
		\mathbf{{x}_{n-2}},{t}_{n-2};
		\mathbf{\bar{x}_{n-1}},{t}_{n-1},x_n,\bar{t}_n,\bot)
		\text{}
		\end{align*}
		with antecedent clauses 
		\begin{align*}
		\ante_{\mathcal{U}_{n-1}}(\bar{t}_2)&={x}_1\vee x_2\vee \bar{t}_2\text{,}\\
		\ante_{\mathcal{U}_{n-1}}(\bar{t}_i)&=x_{i}\vee {t}_{i-1}\vee \bar{t}_i\text{ for $i=3,\ldots,n-3$,}\\
		\ante_{\mathcal{U}_{n-1}}({t}_{n-2})&=\bar{x}_{n-2}\vee {t}_{n-3}\vee {t}_{n-2}\text{,}\\
		\ante_{\mathcal{U}_{n-1}}({t}_{n-1})&={x}_{n-1}\vee \bar{t}_{n-2}\vee {t}_{n-1}\text{,}\\
		\ante_{\mathcal{U}_{n-1}}(x_n)&=D_n=x_n\vee \bar{t}_{n-1}\vee \bar{z}\text{,}\\
		\ante_{\mathcal{U}_{n-1}}(\bar{t}_n)&=\bar{x}_n\vee \bar{t}_{n-1}\vee \bar{t}_n\text{,}\\
		\ante_{\mathcal{U}_{n-1}}(\bot)&={t}_n\vee {z}\text{.}
		\end{align*}
		By resolving over $\bar{t}_n$, $x_n$ and $t_{n-1}$ we can derive $D_{n-1}:=x_{n-1}\vee \bar{t}_{n-2} \vee z\vee \bar{z}$ and backtrack up to $\mathcal{U}_{n-1}[n-3,1]$.

		We now describe the general step. Let $j\in \{ 3,\ldots,n-2 \}$ and suppose we already learned the clauses $C_n,D_n,C_{n-1},D_{n-1},\ldots,C_{j+1},D_{j+1}$ with the aid of the earlier trails $\mathcal{T}_n,\mathcal{U}_n,\ldots,\mathcal{T}_{j+1},\mathcal{U}_{j+1}$, where $C_k:= x_{k}\vee t_{k-1}\vee z\vee \bar{z}$ and $D_k:=x_k\vee \bar{t}_{k-1}\vee z\vee\bar{z}$ for $k\in \{ j+1,\ldots,n-1 \}$. Then we can construct $\mathcal{T}_j$ as follows:
		\begin{align*}
		\mathcal{T}_j=( \mathbf{\bar{x}_1};
		\mathbf{\bar{x}_2},\bar{t}_2;
		\mathbf{\bar{x}_3},\bar{t}_3;
		\ldots;
		\mathbf{\bar{x}_{j-1}},\bar{t}_{j-1};
		\mathbf{\bar{x}_j},  \bar{t}_j,{x}_{j+1},{t}_{j+1},{x}_{j+2},\bar{t}_{j+2},\ldots,{x}_n,{t}^{e}_n, \bot  )\text{}
		\end{align*}
		with $e\in \{0,1\}$ such that $e\equiv n-j$ (mod $2$) and antecedent clauses
		\begin{align*}
		\ante_{\mathcal{T}_{j}}(\bar{t}_2)&={x}_1\vee x_2\vee \bar{t}_2\text{,}\\
		\ante_{\mathcal{T}_{j}}(\bar{t}_i)&=x_{i}\vee {t}_{i-1}\vee \bar{t}_i\text{ for $i=3,\ldots,j$,}\\
		\ante_{\mathcal{T}_{j}}(x_{j+1})&=C_{j+1}=x_{j+1}\vee t_{j}\vee z\vee \bar{z}\text{,}\\
		\ante_{\mathcal{T}_{j}}({t}_{j+1})&=  \bar{x}_{j+1}\vee t_j\vee t_{j+1} \text{,}\\
		\ante_{\mathcal{T}_{j}}(x_{j+2})&=D_{j+2}=x_{j+2}\vee \bar{t}_{j+1}\vee z\vee \bar{z}\text{,}\\
		\ante_{\mathcal{T}_{j}}(\bar{t}_{j+2})&=  \bar{x}_{j+2}\vee \bar{t}_{j+1}\vee \bar{t}_{j+2} \text{,}\\
		\vdots&\\
		\ante_{\mathcal{T}_{j}}(x_{n})&=\left\{ \begin{array}{ll}
		C_n&=x_n\vee t_{n-1}\vee \bar{z}\text{, if $n-j$ is odd,}\\
		D_n&=x_n\vee \bar{t}_{n-1}\vee {z}\text{, if $n-j$ is even,}\\          \end{array}\right.\\
		\ante_{\mathcal{T}_{j}}(t_n^e)&=  \bar{x}_n\vee t_{n-1 }^{e}  \vee t_n^e \text{,}\\
		\ante_{\mathcal{T}_{j}}(\bot)&=t_n^{1-e}\vee {z}^{1-e}\text{.}
		\end{align*}
		We resolve over $t_n^e,x_n,\ldots,t_{j+1},x_{j+1}$ and $\bar{t}_j$, derive $C_j:=x_j\vee t_{j-1}\vee z\vee \bar{z}$ and backtrack up to $\mathcal{T}_j[j-2,1]$.
		
		In a similar way we will create $\mathcal{U}_j$:
		\begin{align*}
		\mathcal{U}_j=&( \mathbf{\bar{x}_1};
		\mathbf{\bar{x}_2},\bar{t}_2;
		\mathbf{\bar{x}_3},\bar{t}_3;
		\ldots;
		\mathbf{\bar{x}_{j-2}},\bar{t}_{j-2};
		\mathbf{{x}_{j-1}},{t}_{j-1};
		\mathbf{\bar{x}_j},  {t}_j,{x}_{j+1},\bar{t}_{j+1},{x}_{j+2},{t}_{j+2},\ldots,{x}_n,\\&{t}^{f}_n, \bot  )
		\text{}
		\end{align*}
		with $f\in \{0,1\}$ such that $f\equiv n-j+1$ (mod $2$) and antecedent clauses 
		\begin{align*}
		\ante_{\mathcal{U}_{j}}(\bar{t}_2)&={x}_1\vee x_2\vee \bar{t}_2\text{,}\\
		\ante_{\mathcal{U}_{j}}(\bar{t}_i)&=x_{i}\vee {t}_{i-1}\vee \bar{t}_i\text{ for $i=3,\ldots,j-2$,}\\
		\ante_{\mathcal{U}_{j}}({t}_{j-1})&=\bar{x}_{j-1}\vee {t}_{j-2}\vee {t}_{j-1}\text{,}\\
		\ante_{\mathcal{U}_{j}}({t}_{j})&={x}_{j}\vee \bar{t}_{j-1}\vee {t}_{j}\text{,}\\
		\ante_{\mathcal{U}_{j}}(x_{j+1})&=D_{j+1}=x_{j+1}\vee \bar{t}_{j}\vee z\vee \bar{z}\text{,}\\
		\ante_{\mathcal{U}_{j}}(\bar{t}_{j+1})&=  \bar{x}_{j+1}\vee \bar{t}_j\vee \bar{t}_{j+1} \text{,}\\
		\ante_{\mathcal{U}_{j}}(x_{j+2})&=C_{j+2}=x_{j+2}\vee {t}_{j+1}\vee z\vee \bar{z}\text{,}\\
		\ante_{\mathcal{U}_{j}}({t}_{j+2})&=  \bar{x}_{j+2}\vee {t}_{j+1}\vee \bar{t}_{j+2} \text{,}\\
		\vdots&\\
		\ante_{\mathcal{U}_{j}}(x_{n})&=\left\{ \begin{array}{ll}
		C_n&=x_n\vee t_{n-1}\vee \bar{z}\text{, if $n-j$ is even,}\\
		D_n&=x_n\vee \bar{t}_{n-1}\vee {z}\text{, if $n-j$ is odd,}\\          \end{array}\right.\\
		\ante_{\mathcal{U}_{j}}(t_n^f)&=  \bar{x}_n\vee t_{n-1 }^{f}  \vee t_n^f \text{,}\\
		\ante_{\mathcal{U}_{j}}(\bot)&=t_n^{1-f}\vee {z}^{1-f}\text{.}
		\end{align*}
		Resolving over $t_n^f,x_n,\ldots\bar{t}_{j+1},x_{j+1}$ and $t_j$ gives us $D_j:=x_j\vee \bar{t}_{j-1}\vee z\vee \bar{z}$. We backtrack to $\mathcal{U}_j[j-2,1]$.

		We end the refutation with the following four trails whose antecedent clauses are almost as before:
		\begin{align*}
		\mathcal{T}_2=( \mathbf{\bar{x}_1};
		\mathbf{\bar{x}_2},\bar{t}_2,x_3,t_3,x_4,\bar{t}_4,\ldots,\bot )\text{, }
		\end{align*}
		from which we learn $C_2:=x_1\vee x_2$, and
		\begin{align*}
		\mathcal{U}_2=( &\mathbf{\bar{x}_1},{x}_2,{t}_2,x_3,\bar{t}_4,\ldots,\bot)
		\text{, }
		\end{align*}
		where we can derive the unit clause $(x_1)$. We continue with 
		\begin{align*}
		\mathcal{T}_1=( {{x}_1};
		\mathbf{\bar{x}_2},{t}_2,x_3,\bar{t}_3,x_4,{t}_4,\ldots,\bot )\text{, }
		\end{align*}
		learn $(x_2)$, and can finally derive the empty clause $(\bot)$ via the last trail 
		\begin{align*}
		\mathcal{U}_1=( {{x}_1}, {{x}_2},\bar{t}_2,x_3,{t}_3,x_4,\bar{t}_4,\ldots,\bot )\text{. } \tag*{\qedhere}
		\end{align*}
	\end{proof}
	
	For the separation in the other direction we need CNFs that are hard for general \resolution. One of the famous examples is the pigeonhole principle, but also any other formula that is hard for propositional resolution would serve the purpose. 
	
	\begin{defi}
		The pigeonhole principle $\mathtt{PHP}_n^m$ is a propositional CNF consisting of the variables $x_{i,j}$, for $i\in [m]$ and $j\in [n]$, and the clauses
		\begin{align*}
		\bigvee_{k\in[n]}x_{i,k}\\
		\bar{x}_{i_1,j}\vee \bar{x}_{i_2,j}
		\end{align*}
		for all $i,i_1,i_2\in [m]$, $i_1\neq i_2$ and $j\in [n]$.
	\end{defi}
	
	\begin{propC}[\cite{Hak85}]\label{PropHakenPHP}
		The CNFs $\mathtt{PHP}^{n+1}_n$ are unsatisfiable and require exponential-size \resolution refutations.	
	\end{propC}
	
	We embed $\mathtt{PHP}^{n+1}_n$ into a QCNF which we will call $\mathtt{Trapdoor}_n$. Intuitively, if we have chosen $\ar$, we are forced to reach a conflict in the propositionally hard formula $\mathtt{PHP}^{n+1}_n$. However, forbidding reduction by choosing the policy $\nr$ allows us to avoid the pigeonhole principle and instead derive a conflict in part that is easier to refute. 
	
	\begin{defi} \label{def:trapdoor}
		Let $\mathtt{PHP}_n^{n+1}$ be the set of clauses for the pigeonhole principle with parameters $n$ and $n+1$ in the variables $x_1,\ldots,x_{s_n}$. Let $\mathtt{Trapdoor}_n$ be the QCNF in the variables $x_1,\ldots,x_{s_n},y_1,\ldots,y_{s_n},u,t,w$ with the prefix
		\begin{align*}
		\exists y_1,\ldots,y_{s_n}\forall w\exists t, x_1,\ldots,x_{s_n}\forall u
		\end{align*}
		and the matrix
		\begin{align*}
		&\mathtt{PHP}_n^{n+1}(x_1,\ldots,x_{s_n})\\
		&\bar{y}_i\vee x_i \vee u,\ 
		y_i\vee \bar{x_i}\vee u\\
		&y_i\vee w\vee t, \
		y_i\vee w\vee \bar{t},\ 
		\bar{y}_i\vee w\vee t ,\ 
		\bar{y}_i\vee w\vee \bar{t}
		\end{align*}
		for $i=1,\ldots,s_n$.
	\end{defi}
	
	
	The next result shows the hardness for $\mathtt{Trapdoor}_n$ in $\loar$. It is clear that this hardness is directly caused by $\mathtt{PHP}^{n+1}_n$. However, it remains to prove that it is still retained by the embedding.
	
	\begin{prop}\label{PropTrapdoorHardForQCDCL}
		The QCNFs $\mathtt{Trapdoor}_n$ require exponential-size $\loar$ refutations.
	\end{prop}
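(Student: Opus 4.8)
The plan is to show that the combination of the level-ordered decision policy $\lo$ with the reduction-enabled propagation policy $\ar$ \emph{traps} every $\loar$ trail for $\mathtt{Trapdoor}_n$ inside the embedded pigeonhole principle: a short $\loar$ refutation would then yield a short \resolution refutation of $\mathtt{PHP}^{n+1}_n$, contradicting Haken's bound (Proposition~\ref{PropHakenPHP}).

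First I would record the quantifier levels $\lv(y_i)=1$, $\lv(w)=2$, $\lv(t)=\lv(x_i)=3$, $\lv(u)=4$, and observe that no clause of $\mathtt{Trapdoor}_n$ --- nor any clause over the $y_i$ alone, which is the only kind that will ever be learned --- is initially unit under $\ar$: the $\mathtt{PHP}$-clauses contain no universals and at least two literals, the clauses $\bar y_i\vee x_i\vee u$ and $y_i\vee\bar x_i\vee u$ reduce only to the binary clauses $\bar y_i\vee x_i$, $y_i\vee\bar x_i$, and the clauses $y_i\vee w\vee t$ (and the three sign variants) admit no reduction at all, since their only universal literal $w$ has level $2<3=\lv(t)$. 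Hence the first move of any trail is a decision, and $\lo$ forces it to be a level-$1$ variable, i.e.\ some $y_i$.

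The technical core is a structural lemma, proved by induction over the refutation and within each trail: \emph{in any $\loar$ refutation $\iota$ of $\mathtt{Trapdoor}_n$, every trail decides only $y$-variables; its propagated literals are exactly $x$-literals, propagated via reduced antecedents $\bar y_i\vee x_i$ or $y_i\vee\bar x_i$ right after $y_i$ is fixed (so $x_i$ always copies the value of $y_i$), together with $y$-literals propagated from previously learned clauses; and $w$, $t$, $u$ are never assigned.} While some $y_i$ is unassigned, $\lo$ forbids touching any variable of level $\ge 2$; and as soon as $y_i$ is fixed, $\ar$ forces $x_i$ to propagate with $\red(\ante(x_i))\in\{\bar y_i\vee x_i,\ y_i\vee\bar x_i\}$. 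Since $w$ is never assigned, every clause $y_i\vee w\vee t$ (or variant) stays completely inert --- it never becomes unit under $\ar$ and is never falsified --- and $u$ disappears under every reduction; so the only clauses that can become unit or get falsified are the $\mathtt{PHP}(x)$-clauses and the already learned $y$-clauses. Because $\mathtt{PHP}^{n+1}_n$ is unsatisfiable (Proposition~\ref{PropHakenPHP}) and $x_i$ carries the value of $y_i$, once all $y_i$ have been decided (and all $x_i$ propagated) some $\mathtt{PHP}(x)$-clause is falsified; hence a conflict is reached while only $y$'s have been decided. Running clause learning (Definition~\ref{DefLearnableClauses}) from the conflict clause, every propagated literal resolved over is either an $x$-literal with antecedent $\bar y_i\vee x_i$ / $y_i\vee\bar x_i$, or a propagated $y$-literal with a learned $y$-clause as antecedent; all reductions are vacuous (no universal literal is ever present, and by Lemma~\ref{LemmaNoTautologies} no existential tautology arises), so every learned clause is a clause over the $y_i$ only.

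To finish I would translate via the renaming $\sigma\colon x_i\mapsto y_i$. Applied to the learning derivation $\pi_i$ of each learned clause $C_i$ (see Definition~\ref{DefLearnableClauses} and Theorem~\ref{TheoremSystemsSimulatedByQRes}), $\sigma$ sends the conflict $\mathtt{PHP}(x)$-clauses to the axioms of a copy of $\mathtt{PHP}^{n+1}_n$ on the variables $y_1,\dots,y_{s_n}$, sends each antecedent $\bar y_i\vee x_i$ or $y_i\vee\bar x_i$ to the tautology $\bar y_i\vee y_i$, and turns each learning step that resolves $x_i$ away against such an antecedent into the trivial deletion of the corresponding $y_i$-literal; all other learning steps become genuine \resolution steps on $y$-variables. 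Thus $\sigma(\pi_i)$ cleans up to a \resolution derivation of $\sigma(C_i)=C_i$ from $\mathtt{PHP}^{n+1}_n\cup\{C_1,\dots,C_{i-1}\}$ of size $O(|\mathcal T_i|)$, and concatenating these (ending in $C_m=(\bot)$) gives a \resolution refutation of $\mathtt{PHP}^{n+1}_n$ of size $O(|\iota|)$. By Haken's theorem this size is $2^{\Omega(s_n)}=2^{\Omega(n^2)}$, so $|\iota|$ must be exponential. The hard part is the structural lemma: one has to control how unit propagation interacts with the growing pool of learned $y$-clauses (which can themselves trigger $y$-propagations) and verify that the $w$-$t$ gadget together with the universal variables $u,w$ genuinely never enter any trail of a $\loar$ refutation; the final translation is then the routine observation that $\sigma$ collapses the $x_i\leftrightarrow y_i$ equivalence gadget.
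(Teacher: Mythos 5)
Your overall strategy is the same as the paper's: show that under $\lo$ every trail reaches its conflict inside the $\mathtt{PHP}$/equivalence part before $w$ or $t$ can ever be assigned, so the extracted refutation never uses the $w$--$t$ axioms, and then merge $x_i$ with $y_i$ to obtain a \resolution refutation of $\mathtt{PHP}_n^{n+1}$ and invoke Haken. The first half of your argument (no initial units, first decisions are $y$-variables, $w,t,u$ never assigned, conflict forced before $w$ can be decided) is sound and matches the paper.

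However, your ``structural lemma'' is false as stated, and the false parts are exactly what your final cleanup relies on. Propagated $x$-literals need not have the reduced equivalence clauses as antecedents: once some $x_{i_1,j}$ is set to true, the hole clause $\bar{x}_{i_1,j}\vee\bar{x}_{i_2,j}$ becomes unit and forces $\bar{x}_{i_2,j}$ with a $\mathtt{PHP}$-clause as antecedent, even though $y_{i_2,j}$ is still unassigned; afterwards $\bar{y}_{i_2,j}$ is propagated from $\bar{y}_{i_2,j}\vee x_{i_2,j}\vee u$, i.e.\ $y$-literals are also propagated from equivalence clauses, not only from learned clauses. Moreover, the learner may pick \emph{any} clause of $\mathcal{L}_\mathcal{T}$, including $\red(\ante(\bot))$ (a $\mathtt{PHP}$ clause over the $x$'s) or intermediate resolvents whose antecedents are $\mathtt{PHP}$ clauses, so learned clauses are in general over $x\cup y$, not over the $y_i$ alone, and $\sigma(C_i)=C_i$ fails. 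Worse, since a trail can contain, say, $\bar{x}_a$ together with $y_a$ (e.g.\ $\bar x_a$ forced by a hole clause, then $y_a$ forced by a previously learned clause, with the conflict arriving only afterwards), clauses in the learning derivations can contain both $x_a$ and $\bar{y}_a$, which your renaming $\sigma$ turns into tautologies; so ``$\sigma(\pi_i)$ cleans up to a \resolution derivation'' is precisely the delicate point and is not justified by anything you proved. The paper avoids this by working at the level of the extracted \qres proof rather than the trails: it first reduces $u$ to get a refutation from $\mathtt{PHP}_n^{n+1}(x)$ plus the binary clauses $\bar{y}_i\vee x_i$, $y_i\vee\bar{x}_i$, then normalizes the proof so that no derived clause properly contains one of these binary axioms, and only then performs the $y_i\mapsto x_i$ merging with an explicit case analysis showing no tautology can arise. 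To repair your proposal you should drop the trail-level lemma and argue at the proof level in this way (or otherwise supply an argument eliminating the tautological images). A small additional quibble: Haken's bound gives $2^{\Omega(n)}$, not $2^{\Omega(n^2)}$, though this does not affect the statement.
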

	
	\begin{proof} Each $\loar$ trail for $\mathtt{Trapdoor}_n$ starts with some decisions of $y$ variables. Before reaching the variables $w$ and $t$, we can only have propagated $x$ or $y$ variables. Since $\mathtt{PHP}_n^{n+1}$ is unsatisfiable, we will reach a conflict before deciding $w$. The learned clauses will only contain $x$, $y$ and $u$ variables, hence the same situation will happen after restarting which leads to a refutation that does not use the last four types of axioms in $\mathtt{Trapdoor}_n$. 
		
		Therefore we have gained a \ldqres refutation $\pi''$ of $\mathtt{Trapdoor}_n$ which only makes use of the axioms
		\begin{align*}
		&\mathtt{PHP}_n^{n+1}(x_1,\ldots,x_{s_n})\\
		&\bar{y}_i\vee x_i \vee u \\
		&y_i\vee \bar{x_i}\vee {u}\text{.}
		\end{align*} 
		Define $\psi':=\{ \bar{y}_i\vee x_i \vee u,y_i\vee \bar{x_i}\vee {u}:\: i=1,\ldots,s_n  \}$.
		We can construct a \qres refutation $\pi'$ by reducing the $u$-variable right after any introduction of a clause from $\psi'$ and performing the same resolution steps as in $\pi''$ afterwards. Then $|\pi'|\in \mathcal{O}(|\pi''|)$.
		
		Now deleting any axiom of $\psi'$ from $\pi'$ results in a \resolution refutation $\pi$ of the unsatisfiable CNF
		\begin{align*}
		&\mathtt{PHP}_n^{n+1}(x_1,\ldots,x_{s_n})\\
		&\bar{y}_i\vee x_i \\
		&y_i\vee \bar{x_i}\text{}
		\end{align*}
		with $|\pi|\in \mathcal{O}(|\pi'|)$.
		
		Let $\psi=\{ \bar{y}_i\vee x_i , y_i\vee \bar{x_i}:\: i=1,\ldots,s_n \}$. Next we will create a \resolution refutation $\mu$ of $\mathtt{PHP}_n^{n+1}$.
		
		Let $\pi$ consist of the clauses $C'_1,\ldots,C'_m$. W.l.o.g.\ we can assume that none of these clauses are of the form $D\vee\bar{y}_i\vee x_i$ or $D\vee  y_i\vee \bar{x_i}$ for a non-empty  subclause $D$, because otherwise we can shorten the refutation by taking the corresponding axioms in $\psi$ instead.
		
		Also let $f$ be the function on clauses that replaces all occurrences of $y_i$ (resp.\ $\bar{y}_i$) in a clause with $x_i$ (resp.\ $\bar{x}_i$). We will show that the proof $\mu$ we get by deleting all clauses of $\pi$ contained in $\psi$ and replacing all other clauses $C$ by $f(C)$ is a correct \resolution refutation. 
		
		Let $C_\ell'$ be a clause in $\pi$. If $C_\ell'$ is an axiom, then either $C_\ell'\in \mathtt{PHP}_n^{n+1}$ and therefore $C_\ell:=f(C_\ell')=C_\ell'$, or $C_\ell'\in   \psi$, in which case we delete $C_\ell'$ (or replace it with a placeholder in $\mu$).
		
		If $C_\ell'$ was not an axiom, then we can find two parental clauses $C_j'$ and $C_k'$ of $C_\ell'$. We distinguish two cases.
		
		\underline{Case 1:} One of the clauses $C_j'$, $C_k'$ is from $\psi$.
		
		W.l.o.g. let $C_j'\in \psi$ and $C_k'\not\in \psi$. Note that it is not possible for both clauses to be contained in $\psi$. By induction we know that $C_k:=f(C_k')\in \mu$. Since $C_j'$ was deleted during the transition into $\mu$, we can only set $C_\ell:=C_k=f(C_k')$. Because resolving with clauses of $\psi$ just swap $x_i$-variable and $y_i$-variables, we immediately get $f(C_k')=f(C_\ell')$, hence $C_\ell=f(C_\ell')$.
		
		\underline{Case 2:} None of the clauses $C_j'$, $C_k'$ are contained in $\psi$.
		
		Then there exists $C_j:=f(C_j')\in \mu$ and $C_k:=f(C_k')\in \mu$. 
		Let  $C_\ell'=C_j'\resop{z}C_k' $ with $z\in \{ x_i,y_i \}$ for an $i\in [s_n]$. 
		We set $C_\ell:= C_j\resop{x_i}C_k  $. The only chance for this resolution to become unsound (with respect to \resolution) is w.l.o.g. $x_a\in C_j'$ and $\bar{y}_a\in C_k'$ for an $i\neq a\in \{1,\ldots, s_n \}$. Then we would receive $x_a\vee \bar{y}_a$ in $C_\ell'$ and a tautology $x_a\vee \bar{x}_a$ in $C_\ell$. However, this cannot happen due to our assumption that  the clauses of $\psi$ are no subclauses of $C_j',C_k',C_\ell'$. It is easy to see that $C_\ell=f(C_\ell')$.

		We now have constructed a \textsf{resolution} refutation $\mu$ with $|\mu|\in \mathcal{O}(|\pi|)=\mathcal{O}(|\pi''|)$ which only uses the clauses of $\mathtt{PHP}_n^{n+1}$ as axioms. By Proposition \ref{PropHakenPHP} the formula $\mathtt{PHP}_n^{n+1}$ needs exponential sized \textsf{resolution} refutations. Therefore $|\pi''|\in 2^{\Omega(n)}$, which is the size of the corresponding $\loar$ refutation as well.
	\end{proof}
	
	We remark that the hardness of $\mathtt{Trapdoor}_n$ crucially depends on propositional hardness. Note that it is not possible to just substitute $\mathtt{PHP}_n^{n+1}$ in $\mathtt{Trapdoor}_n$ by some QCNF that is hard for \ldqres or $\loar$ since it is not guaranteed that the conflict in the trail will occur in this embedded formula.
	
	In contrast, the $\mathtt{Trapdoor}_n$ formulas are easy in \qres.
	\begin{prop}  \label{prop:trapdoor-easy}
		The QCNFs $\mathtt{Trapdoor}_n$ have constant-size \qres refutations.
	\end{prop}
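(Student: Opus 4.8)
The plan is to simply ignore the $\mathtt{PHP}_n^{n+1}$ block and the clauses $\bar y_i\vee x_i\vee u$, $y_i\vee\bar x_i\vee u$ entirely, and to refute $\mathtt{Trapdoor}_n$ using only the four clauses $y_1\vee w\vee t$, $y_1\vee w\vee\bar t$, $\bar y_1\vee w\vee t$, $\bar y_1\vee w\vee\bar t$ for the single index $i=1$, together with one universal reduction on $w$. The conceptual point is that this $w$--$t$ gadget is already contradictory in \qres as soon as reduction on $w$ is permitted: $\loar$ cannot profit from it because, as shown in the proof of Proposition~\ref{PropTrapdoorHardForQCDCL}, an $\loar$ trail is forced to run into $\mathtt{PHP}_n^{n+1}$ before ever deciding $w$ or $t$, whereas a \qres derivation is subject to no such ordering constraint.

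Concretely, I would exhibit the derivation
\begin{align*}
&(y_1\vee w\vee t)\resop{t}(y_1\vee w\vee\bar t)=y_1\vee w,\\
&(\bar y_1\vee w\vee t)\resop{t}(\bar y_1\vee w\vee\bar t)=\bar y_1\vee w,\\
&(y_1\vee w)\resop{y_1}(\bar y_1\vee w)=w,\\
&\red(w)=(\bot),
\end{align*}
which is a \qres refutation of $\mathtt{Trapdoor}_n$ consisting of four axioms and four derived clauses, hence of constant size independent of $n$.

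The only thing to verify is legality of each inference, and this is the whole content of the argument. The pivots $t$ and $y_1$ are existential (the prefix is $\exists y_1,\ldots,y_{s_n}\,\forall w\,\exists t,x_1,\ldots,x_{s_n}\,\forall u$), so both resolution steps are allowed. One must resolve over $t$ first between the two axioms sharing the \emph{same} $y_1$-polarity, so that the resolvents $y_1\vee w$ and $\bar y_1\vee w$ are non-tautological as \qres requires; resolving over $t$ across opposite $y_1$-polarities would instead produce the forbidden tautology $y_1\vee\bar y_1\vee w$. Finally, $(w)$ contains no existential literal, so the condition ``$\lv(w)>\lv(x_i)$ for all existential $x_i\in(w)$'' holds vacuously, and hence $\red((w))=(\bot)$. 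There is no genuine obstacle here: the statement is essentially an observation, and the only thing requiring any thought is recognising that the $w$--$t$ gadget, unlike the propagation discipline of $\loar$, hands \qres an immediate shortcut to the empty clause.
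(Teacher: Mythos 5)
Your refutation is correct and is essentially the paper's own proof: the paper's Figure~\ref{fig:trapdoor} likewise uses only the four $i=1$ clauses of the $w$--$t$ gadget, resolving over $t$ within matching $y_1$-polarities, and differs only in applying the reduction on $w$ to $y_1\vee w$ and $\bar y_1\vee w$ (yielding $(y_1)$ and $(\bar y_1)$) before the final resolution, whereas you resolve over $y_1$ first and reduce $(w)$ last — an immaterial reordering, both legal in \qres.
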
	
	
	\begin{proof} The refutation is given in Figure~\ref{fig:trapdoor}.
		\begin{figure}

			
			\tikzset {_b67p9bkvt/.code = {\pgfsetadditionalshadetransform{ \pgftransformshift{\pgfpoint{0 bp } { 0 bp }  }  \pgftransformrotate{-270 }  \pgftransformscale{2 }  }}}
			\pgfdeclarehorizontalshading{_aqs9dmaug}{150bp}{rgb(0bp)=(1,1,1);
				rgb(37.5bp)=(1,1,1);
				rgb(50bp)=(0.95,0.95,0.95);
				rgb(50.25bp)=(0.88,0.88,0.88);
				rgb(62.5bp)=(0.96,0.96,0.96);
				rgb(100bp)=(0.96,0.96,0.96)}
			
			
			\tikzset {_dwcu8ya73/.code = {\pgfsetadditionalshadetransform{ \pgftransformshift{\pgfpoint{0 bp } { 0 bp }  }  \pgftransformrotate{-270 }  \pgftransformscale{2 }  }}}
			\pgfdeclarehorizontalshading{_q4jfvwa3y}{150bp}{rgb(0bp)=(1,1,1);
				rgb(37.5bp)=(1,1,1);
				rgb(50bp)=(0.95,0.95,0.95);
				rgb(50.25bp)=(0.88,0.88,0.88);
				rgb(62.5bp)=(0.96,0.96,0.96);
				rgb(100bp)=(0.96,0.96,0.96)}
			
			
			\tikzset {_wd5jfg42q/.code = {\pgfsetadditionalshadetransform{ \pgftransformshift{\pgfpoint{0 bp } { 0 bp }  }  \pgftransformrotate{-270 }  \pgftransformscale{2 }  }}}
			\pgfdeclarehorizontalshading{_ssm8l1hsb}{150bp}{rgb(0bp)=(1,1,1);
				rgb(37.5bp)=(1,1,1);
				rgb(50bp)=(0.95,0.95,0.95);
				rgb(50.25bp)=(0.88,0.88,0.88);
				rgb(62.5bp)=(0.96,0.96,0.96);
				rgb(100bp)=(0.96,0.96,0.96)}
			
			
			\tikzset {_vpotmfoom/.code = {\pgfsetadditionalshadetransform{ \pgftransformshift{\pgfpoint{0 bp } { 0 bp }  }  \pgftransformrotate{-270 }  \pgftransformscale{2 }  }}}
			\pgfdeclarehorizontalshading{_dzeqq20tp}{150bp}{rgb(0bp)=(1,1,1);
				rgb(37.5bp)=(1,1,1);
				rgb(50bp)=(0.95,0.95,0.95);
				rgb(50.25bp)=(0.88,0.88,0.88);
				rgb(62.5bp)=(0.96,0.96,0.96);
				rgb(100bp)=(0.96,0.96,0.96)}
			
			
			\tikzset {_zj4bcwbdv/.code = {\pgfsetadditionalshadetransform{ \pgftransformshift{\pgfpoint{0 bp } { 0 bp }  }  \pgftransformrotate{-270 }  \pgftransformscale{2 }  }}}
			\pgfdeclarehorizontalshading{_1rb0hxmp1}{150bp}{rgb(0bp)=(1,1,1);
				rgb(37.5bp)=(1,1,1);
				rgb(50bp)=(0.95,0.95,0.95);
				rgb(50.25bp)=(0.88,0.88,0.88);
				rgb(62.5bp)=(0.96,0.96,0.96);
				rgb(100bp)=(0.96,0.96,0.96)}
			
			
			\tikzset {_lan23thwm/.code = {\pgfsetadditionalshadetransform{ \pgftransformshift{\pgfpoint{0 bp } { 0 bp }  }  \pgftransformrotate{-270 }  \pgftransformscale{2 }  }}}
			\pgfdeclarehorizontalshading{_r9fgf10oq}{150bp}{rgb(0bp)=(1,1,1);
				rgb(37.5bp)=(1,1,1);
				rgb(50bp)=(0.95,0.95,0.95);
				rgb(50.25bp)=(0.88,0.88,0.88);
				rgb(62.5bp)=(0.96,0.96,0.96);
				rgb(100bp)=(0.96,0.96,0.96)}
			
			
			\tikzset {_mmfu4e1ux/.code = {\pgfsetadditionalshadetransform{ \pgftransformshift{\pgfpoint{0 bp } { 0 bp }  }  \pgftransformrotate{-270 }  \pgftransformscale{2 }  }}}
			\pgfdeclarehorizontalshading{_obl1fw1gx}{150bp}{rgb(0bp)=(1,1,1);
				rgb(37.5bp)=(1,1,1);
				rgb(50bp)=(0.95,0.95,0.95);
				rgb(50.25bp)=(0.88,0.88,0.88);
				rgb(62.5bp)=(0.96,0.96,0.96);
				rgb(100bp)=(0.96,0.96,0.96)}
			
			
			\tikzset {_lrf37g14m/.code = {\pgfsetadditionalshadetransform{ \pgftransformshift{\pgfpoint{0 bp } { 0 bp }  }  \pgftransformrotate{-270 }  \pgftransformscale{2 }  }}}
			\pgfdeclarehorizontalshading{_lgn8c6nvu}{150bp}{rgb(0bp)=(1,1,1);
				rgb(37.5bp)=(1,1,1);
				rgb(50bp)=(0.95,0.95,0.95);
				rgb(50.25bp)=(0.88,0.88,0.88);
				rgb(62.5bp)=(0.96,0.96,0.96);
				rgb(100bp)=(0.96,0.96,0.96)}
			
			
			\tikzset {_qdnn8ubu5/.code = {\pgfsetadditionalshadetransform{ \pgftransformshift{\pgfpoint{0 bp } { 0 bp }  }  \pgftransformrotate{-270 }  \pgftransformscale{2 }  }}}
			\pgfdeclarehorizontalshading{_2idvnz66g}{150bp}{rgb(0bp)=(1,1,1);
				rgb(37.5bp)=(1,1,1);
				rgb(50bp)=(0.95,0.95,0.95);
				rgb(50.25bp)=(0.88,0.88,0.88);
				rgb(62.5bp)=(0.96,0.96,0.96);
				rgb(100bp)=(0.96,0.96,0.96)}
			
			
			\tikzset {_wbxhettr6/.code = {\pgfsetadditionalshadetransform{ \pgftransformshift{\pgfpoint{0 bp } { 0 bp }  }  \pgftransformrotate{-270 }  \pgftransformscale{2 }  }}}
			\pgfdeclarehorizontalshading{_yf0fwe76h}{150bp}{rgb(0bp)=(1,1,1);
				rgb(37.5bp)=(1,1,1);
				rgb(50bp)=(0.95,0.95,0.95);
				rgb(50.25bp)=(0.88,0.88,0.88);
				rgb(62.5bp)=(0.96,0.96,0.96);
				rgb(100bp)=(0.96,0.96,0.96)}
			
			
			\tikzset {_06dz5ey8y/.code = {\pgfsetadditionalshadetransform{ \pgftransformshift{\pgfpoint{0 bp } { 0 bp }  }  \pgftransformrotate{-270 }  \pgftransformscale{2 }  }}}
			\pgfdeclarehorizontalshading{_6t83t1rjz}{150bp}{rgb(0bp)=(1,1,1);
				rgb(37.5bp)=(1,1,1);
				rgb(50bp)=(0.95,0.95,0.95);
				rgb(50.25bp)=(0.88,0.88,0.88);
				rgb(62.5bp)=(0.96,0.96,0.96);
				rgb(100bp)=(0.96,0.96,0.96)}
			
			
			\tikzset {_kvjkvr9hh/.code = {\pgfsetadditionalshadetransform{ \pgftransformshift{\pgfpoint{0 bp } { 0 bp }  }  \pgftransformrotate{-270 }  \pgftransformscale{2 }  }}}
			\pgfdeclarehorizontalshading{_294rqfp8o}{150bp}{rgb(0bp)=(1,1,1);
				rgb(37.5bp)=(1,1,1);
				rgb(50bp)=(0.95,0.95,0.95);
				rgb(50.25bp)=(0.88,0.88,0.88);
				rgb(62.5bp)=(0.96,0.96,0.96);
				rgb(100bp)=(0.96,0.96,0.96)}
			
			
			\tikzset {_r9lkvwfhd/.code = {\pgfsetadditionalshadetransform{ \pgftransformshift{\pgfpoint{0 bp } { 0 bp }  }  \pgftransformrotate{-270 }  \pgftransformscale{2 }  }}}
			\pgfdeclarehorizontalshading{_bu24oqrgm}{150bp}{rgb(0bp)=(1,1,1);
				rgb(37.5bp)=(1,1,1);
				rgb(50bp)=(0.95,0.95,0.95);
				rgb(50.25bp)=(0.88,0.88,0.88);
				rgb(62.5bp)=(0.96,0.96,0.96);
				rgb(100bp)=(0.96,0.96,0.96)}
			
			
			\tikzset {_npnc1sey6/.code = {\pgfsetadditionalshadetransform{ \pgftransformshift{\pgfpoint{0 bp } { 0 bp }  }  \pgftransformrotate{-270 }  \pgftransformscale{2 }  }}}
			\pgfdeclarehorizontalshading{_tsx65n2kg}{150bp}{rgb(0bp)=(1,1,1);
				rgb(37.5bp)=(1,1,1);
				rgb(50bp)=(0.95,0.95,0.95);
				rgb(50.25bp)=(0.88,0.88,0.88);
				rgb(62.5bp)=(0.96,0.96,0.96);
				rgb(100bp)=(0.96,0.96,0.96)}
			
			
			\tikzset {_q66hd1r5b/.code = {\pgfsetadditionalshadetransform{ \pgftransformshift{\pgfpoint{0 bp } { 0 bp }  }  \pgftransformrotate{-270 }  \pgftransformscale{2 }  }}}
			\pgfdeclarehorizontalshading{_5mgklpi68}{150bp}{rgb(0bp)=(1,1,1);
				rgb(37.5bp)=(1,1,1);
				rgb(50bp)=(0.95,0.95,0.95);
				rgb(50.25bp)=(0.88,0.88,0.88);
				rgb(62.5bp)=(0.96,0.96,0.96);
				rgb(100bp)=(0.96,0.96,0.96)}
			
			
			\tikzset {_n3git8u72/.code = {\pgfsetadditionalshadetransform{ \pgftransformshift{\pgfpoint{0 bp } { 0 bp }  }  \pgftransformrotate{-270 }  \pgftransformscale{2 }  }}}
			\pgfdeclarehorizontalshading{_x0dm62i22}{150bp}{rgb(0bp)=(1,1,1);
				rgb(37.5bp)=(1,1,1);
				rgb(50bp)=(0.95,0.95,0.95);
				rgb(50.25bp)=(0.88,0.88,0.88);
				rgb(62.5bp)=(0.96,0.96,0.96);
				rgb(100bp)=(0.96,0.96,0.96)}
			
			
			\tikzset {_b4rpf18pj/.code = {\pgfsetadditionalshadetransform{ \pgftransformshift{\pgfpoint{0 bp } { 0 bp }  }  \pgftransformrotate{-270 }  \pgftransformscale{2 }  }}}
			\pgfdeclarehorizontalshading{_rh3modo1b}{150bp}{rgb(0bp)=(1,1,1);
				rgb(37.5bp)=(1,1,1);
				rgb(50bp)=(0.95,0.95,0.95);
				rgb(50.25bp)=(0.88,0.88,0.88);
				rgb(62.5bp)=(0.96,0.96,0.96);
				rgb(100bp)=(0.96,0.96,0.96)}
			\tikzset{every picture/.style={line width=0.75pt}} 
			
			\begin{tikzpicture}[x=0.75pt,y=0.75pt,yscale=-1,xscale=0.9]
			
			\path  [shading=_aqs9dmaug,_b67p9bkvt] (50.5,46) .. controls (50.5,42.69) and (53.19,40) .. (56.5,40) -- (174,40) .. controls (177.31,40) and (180,42.69) .. (180,46) -- (180,64) .. controls (180,67.31) and (177.31,70) .. (174,70) -- (56.5,70) .. controls (53.19,70) and (50.5,67.31) .. (50.5,64) -- cycle ; 
			\draw  [color={rgb, 255:red, 0; green, 0; blue, 0 }  ,draw opacity=1 ] (50.5,46) .. controls (50.5,42.69) and (53.19,40) .. (56.5,40) -- (174,40) .. controls (177.31,40) and (180,42.69) .. (180,46) -- (180,64) .. controls (180,67.31) and (177.31,70) .. (174,70) -- (56.5,70) .. controls (53.19,70) and (50.5,67.31) .. (50.5,64) -- cycle ; 
			
			\path  [shading=_q4jfvwa3y,_dwcu8ya73] (190,46) .. controls (190,42.69) and (192.69,40) .. (196,40) -- (313.5,40) .. controls (316.81,40) and (319.5,42.69) .. (319.5,46) -- (319.5,64) .. controls (319.5,67.31) and (316.81,70) .. (313.5,70) -- (196,70) .. controls (192.69,70) and (190,67.31) .. (190,64) -- cycle ; 
			\draw  [color={rgb, 255:red, 0; green, 0; blue, 0 }  ,draw opacity=1 ] (190,46) .. controls (190,42.69) and (192.69,40) .. (196,40) -- (313.5,40) .. controls (316.81,40) and (319.5,42.69) .. (319.5,46) -- (319.5,64) .. controls (319.5,67.31) and (316.81,70) .. (313.5,70) -- (196,70) .. controls (192.69,70) and (190,67.31) .. (190,64) -- cycle ; 
			
			\path  [shading=_ssm8l1hsb,_wd5jfg42q] (120.5,106) .. controls (120.5,102.69) and (123.19,100) .. (126.5,100) -- (244,100) .. controls (247.31,100) and (250,102.69) .. (250,106) -- (250,124) .. controls (250,127.31) and (247.31,130) .. (244,130) -- (126.5,130) .. controls (123.19,130) and (120.5,127.31) .. (120.5,124) -- cycle ; 
			\draw  [color={rgb, 255:red, 0; green, 0; blue, 0 }  ,draw opacity=1 ] (120.5,106) .. controls (120.5,102.69) and (123.19,100) .. (126.5,100) -- (244,100) .. controls (247.31,100) and (250,102.69) .. (250,106) -- (250,124) .. controls (250,127.31) and (247.31,130) .. (244,130) -- (126.5,130) .. controls (123.19,130) and (120.5,127.31) .. (120.5,124) -- cycle ; 
			
			\path  [shading=_dzeqq20tp,_vpotmfoom] (341,46) .. controls (341,42.69) and (343.69,40) .. (347,40) -- (464.5,40) .. controls (467.81,40) and (470.5,42.69) .. (470.5,46) -- (470.5,64) .. controls (470.5,67.31) and (467.81,70) .. (464.5,70) -- (347,70) .. controls (343.69,70) and (341,67.31) .. (341,64) -- cycle ; 
			\draw  [color={rgb, 255:red, 0; green, 0; blue, 0 }  ,draw opacity=1 ] (341,46) .. controls (341,42.69) and (343.69,40) .. (347,40) -- (464.5,40) .. controls (467.81,40) and (470.5,42.69) .. (470.5,46) -- (470.5,64) .. controls (470.5,67.31) and (467.81,70) .. (464.5,70) -- (347,70) .. controls (343.69,70) and (341,67.31) .. (341,64) -- cycle ; 
			
			\path  [shading=_1rb0hxmp1,_zj4bcwbdv] (480.5,46) .. controls (480.5,42.69) and (483.19,40) .. (486.5,40) -- (604,40) .. controls (607.31,40) and (610,42.69) .. (610,46) -- (610,64) .. controls (610,67.31) and (607.31,70) .. (604,70) -- (486.5,70) .. controls (483.19,70) and (480.5,67.31) .. (480.5,64) -- cycle ; 
			\draw  [color={rgb, 255:red, 0; green, 0; blue, 0 }  ,draw opacity=1 ] (480.5,46) .. controls (480.5,42.69) and (483.19,40) .. (486.5,40) -- (604,40) .. controls (607.31,40) and (610,42.69) .. (610,46) -- (610,64) .. controls (610,67.31) and (607.31,70) .. (604,70) -- (486.5,70) .. controls (483.19,70) and (480.5,67.31) .. (480.5,64) -- cycle ; 
			
			\path  [shading=_r9fgf10oq,_lan23thwm] (410,106) .. controls (410,102.69) and (412.69,100) .. (416,100) -- (533.5,100) .. controls (536.81,100) and (539.5,102.69) .. (539.5,106) -- (539.5,124) .. controls (539.5,127.31) and (536.81,130) .. (533.5,130) -- (416,130) .. controls (412.69,130) and (410,127.31) .. (410,124) -- cycle ; 
			\draw  [color={rgb, 255:red, 0; green, 0; blue, 0 }  ,draw opacity=1 ] (410,106) .. controls (410,102.69) and (412.69,100) .. (416,100) -- (533.5,100) .. controls (536.81,100) and (539.5,102.69) .. (539.5,106) -- (539.5,124) .. controls (539.5,127.31) and (536.81,130) .. (533.5,130) -- (416,130) .. controls (412.69,130) and (410,127.31) .. (410,124) -- cycle ; 
			
			\path  [shading=_obl1fw1gx,_mmfu4e1ux] (120.5,166) .. controls (120.5,162.69) and (123.19,160) .. (126.5,160) -- (244,160) .. controls (247.31,160) and (250,162.69) .. (250,166) -- (250,184) .. controls (250,187.31) and (247.31,190) .. (244,190) -- (126.5,190) .. controls (123.19,190) and (120.5,187.31) .. (120.5,184) -- cycle ; 
			\draw  [color={rgb, 255:red, 0; green, 0; blue, 0 }  ,draw opacity=1 ] (120.5,166) .. controls (120.5,162.69) and (123.19,160) .. (126.5,160) -- (244,160) .. controls (247.31,160) and (250,162.69) .. (250,166) -- (250,184) .. controls (250,187.31) and (247.31,190) .. (244,190) -- (126.5,190) .. controls (123.19,190) and (120.5,187.31) .. (120.5,184) -- cycle ; 
			
			\path  [shading=_lgn8c6nvu,_lrf37g14m] (410,166) .. controls (410,162.69) and (412.69,160) .. (416,160) -- (533.5,160) .. controls (536.81,160) and (539.5,162.69) .. (539.5,166) -- (539.5,184) .. controls (539.5,187.31) and (536.81,190) .. (533.5,190) -- (416,190) .. controls (412.69,190) and (410,187.31) .. (410,184) -- cycle ; 
			\draw  [color={rgb, 255:red, 0; green, 0; blue, 0 }  ,draw opacity=1 ] (410,166) .. controls (410,162.69) and (412.69,160) .. (416,160) -- (533.5,160) .. controls (536.81,160) and (539.5,162.69) .. (539.5,166) -- (539.5,184) .. controls (539.5,187.31) and (536.81,190) .. (533.5,190) -- (416,190) .. controls (412.69,190) and (410,187.31) .. (410,184) -- cycle ; 
			
			\path  [shading=_2idvnz66g,_qdnn8ubu5] (267,226) .. controls (267,222.69) and (269.69,220) .. (273,220) -- (390.5,220) .. controls (393.81,220) and (396.5,222.69) .. (396.5,226) -- (396.5,244) .. controls (396.5,247.31) and (393.81,250) .. (390.5,250) -- (273,250) .. controls (269.69,250) and (267,247.31) .. (267,244) -- cycle ; 
			\draw  [color={rgb, 255:red, 0; green, 0; blue, 0 }  ,draw opacity=1 ] (267,226) .. controls (267,222.69) and (269.69,220) .. (273,220) -- (390.5,220) .. controls (393.81,220) and (396.5,222.69) .. (396.5,226) -- (396.5,244) .. controls (396.5,247.31) and (393.81,250) .. (390.5,250) -- (273,250) .. controls (269.69,250) and (267,247.31) .. (267,244) -- cycle ; 
			
			\draw [shading=_yf0fwe76h,_wbxhettr6]   (120,70) -- (180,100) ;
			\draw [shading=_6t83t1rjz,_06dz5ey8y]   (250,70) -- (190,100) ;
			\draw [shading=_294rqfp8o,_kvjkvr9hh]   (410,70) -- (470,100) ;
			\draw [shading=_bu24oqrgm,_r9lkvwfhd]   (540,70) -- (480,100) ;
			
			\draw [shading=_tsx65n2kg,_npnc1sey6]   (185,160) -- (185,141) -- (185,130) ;
			\draw [shading=_5mgklpi68,_q66hd1r5b]   (476,160) -- (476,141) -- (476,130) ;
			\draw [shading=_x0dm62i22,_n3git8u72]   (330,220) -- (190,190) ;
			\draw [shading=_rh3modo1b,_b4rpf18pj]   (330,220) -- (480,190) ;
			
			\draw (115.25,55) node  [color={rgb, 255:red, 255; green, 255; blue, 255 }  ,opacity=1 ] [align=left] {$\displaystyle \textcolor[rgb]{0,0,0}{\mathbf{y_{1} \lor w\lor t}}$};
			\draw (254.75,55) node  [color={rgb, 255:red, 255; green, 255; blue, 255 }  ,opacity=1 ] [align=left] {$\displaystyle \mathbf{\textcolor[rgb]{0,0,0}{y_{1} \lor w\lor \overline{t}}}$};
			\draw (185.25,115) node  [color={rgb, 255:red, 255; green, 255; blue, 255 }  ,opacity=1 ] [align=left] {$\displaystyle \mathbf{\textcolor[rgb]{0,0,0}{y_{1} \lor w}}$};
			\draw (545.25,55) node  [color={rgb, 255:red, 255; green, 255; blue, 255 }  ,opacity=1 ] [align=left] {$\displaystyle \mathbf{\textcolor[rgb]{0,0,0}{\overline{y}_{1} \lor w\lor \overline{t}}}$};
			\draw (405.75,55) node  [color={rgb, 255:red, 255; green, 255; blue, 255 }  ,opacity=1 ] [align=left] {$\displaystyle \mathbf{\textcolor[rgb]{0,0,0}{\overline{y}_{1} \lor w\lor t}}$};
			\draw (474.75,115) node  [color={rgb, 255:red, 255; green, 255; blue, 255 }  ,opacity=1 ] [align=left] {$\displaystyle \mathbf{\textcolor[rgb]{0,0,0}{\overline{y}_{1} \lor w}}$};
			\draw (185.25,175) node  [color={rgb, 255:red, 255; green, 255; blue, 255 }  ,opacity=1 ] [align=left] {$\displaystyle \mathbf{\textcolor[rgb]{0,0,0}{(y_{1})}}$};
			\draw (474.75,175) node  [color={rgb, 255:red, 255; green, 255; blue, 255 }  ,opacity=1 ] [align=left] {$\displaystyle \mathbf{\textcolor[rgb]{0,0,0}{(\overline{y}_{1})}}$};
			\draw (330.75,235) node  [color={rgb, 255:red, 255; green, 255; blue, 255 }  ,opacity=1 ] [align=left] {$\displaystyle \textcolor[rgb]{0,0,0}{\mathbf{(\bot)}}$};
			
			\end{tikzpicture}
			\caption{Short refutations of $\mathtt{Trapdoor}_n$ in  \qres  \label{fig:trapdoor}}
		\end{figure}
	\end{proof}

	These two results immediately lead to the following separation.
	
	\begin{thm}\label{CorIncomparable}\label{TheoremIncomparable}
		The systems \qres and $\loar$ are incomparable.
	\end{thm}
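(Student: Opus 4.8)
The plan is to assemble the theorem directly from the four results established immediately above, with no further work required. Recall that two refutational proof systems for the common language $\mathsf{FQBF}$ are incomparable precisely when neither p-simulates the other; by the definition of p-simulation, this is witnessed by exhibiting, in each direction, a family of false QCNFs that admits polynomial-size refutations in one system but requires superpolynomial-size refutations in the other. So the proof amounts to naming the two separating families and invoking the relevant size bounds.

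First I would treat the direction that \qres does not p-simulate $\loar$. The separating family here is $\mathtt{QParity}_n$: Proposition~\ref{PropQParityHardForQCDCL} provides polynomial-size $\loar$ refutations, while the \qres lower bound of \cite{BeyersdorffCJ19} shows that every \qres refutation of $\mathtt{QParity}_n$ has size $2^{\Omega(n)}$. Since a p-simulation of $\loar$ by \qres would in particular turn the polynomial-size $\loar$ refutations into polynomial-size \qres refutations, no such simulation can exist. For the opposite direction, that $\loar$ does not p-simulate \qres, the separating family is $\mathtt{Trapdoor}_n$: Proposition~\ref{prop:trapdoor-easy} gives constant-size \qres refutations (displayed in Figure~\ref{fig:trapdoor}), whereas Proposition~\ref{PropTrapdoorHardForQCDCL} shows that every $\loar$ refutation of $\mathtt{Trapdoor}_n$ has size $2^{\Omega(n)}$. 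The only minor bookkeeping point is to confirm that both $\mathtt{QParity}_n$ and $\mathtt{Trapdoor}_n$ are genuinely false QCNFs, so that the two separations really live over the same language $\mathsf{FQBF}$; this is immediate from the respective definitions (the parity contradiction for the former, and the embedded unsatisfiable $\mathtt{PHP}^{n+1}_n$ for the latter).

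Combining the two one-directional non-simulations yields incomparability. There is essentially no obstacle remaining at this level of the argument: all the technical content has already been discharged, namely the short $\loar$ run for $\mathtt{QParity}_n$ (Proposition~\ref{PropQParityHardForQCDCL}), the reduction of $\loar$-hardness of $\mathtt{Trapdoor}_n$ to Haken's exponential resolution lower bound for $\mathtt{PHP}^{n+1}_n$ (Proposition~\ref{PropTrapdoorHardForQCDCL} together with Proposition~\ref{PropHakenPHP}), the constant-size \qres refutation of $\mathtt{Trapdoor}_n$ (Proposition~\ref{prop:trapdoor-easy}), and the imported \qres lower bound for $\mathtt{QParity}_n$ from \cite{BeyersdorffCJ19}. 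If I were to single out the conceptually delicate point it would be the one internal to Proposition~\ref{PropTrapdoorHardForQCDCL}: arguing that, under the $\lo$ decision policy combined with $\ar$, every trail is forced to run into a conflict within the $\mathtt{PHP}^{n+1}_n$ part before the variables $w$ and $t$ can be reached, so that the propositional hardness is faithfully inherited by the QBF embedding — but that obstacle has already been resolved in the proof of that proposition, and the present theorem is a pure corollary.
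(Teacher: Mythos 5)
Your proposal is correct and follows exactly the paper's argument: the theorem is obtained by combining Proposition~\ref{PropQParityHardForQCDCL} (short $\loar$ refutations of $\mathtt{QParity}_n$) with the \qres lower bound from \cite{BeyersdorffCJ19} for one direction, and Proposition~\ref{prop:trapdoor-easy} with Proposition~\ref{PropTrapdoorHardForQCDCL} (via Haken's bound) for the other. Nothing further is needed, as both systems are refutational proof systems for false QCNFs, so the two exponential separations immediately rule out p-simulations in either direction.
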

	
	From Theorem \ref{TheoremIncomparable}  we can conclude not only that we have to modify the QCDCL proof system if we aim to characterise \qres, it also implies that a simple strengthening (or weakening) of one of the two systems cannot result in the desired equivalence. We mentioned earlier that the policies $\ar$ and $\nr$ seem to operate orthogonally to each other. In Section~\ref{sec:sim-order} we will get back to this point and formally prove this intuition. This also motivates why switching from policy $\ar$ to $\nr$ can be helpful in obtaining a QCDCL system that characterises \qres.

	\section{Hard formulas for  QCDCL} \label{sec:hardness-qcdcl}
	As we have shown in Section \ref{sec:separation-qcdcl-qres}, \qres is incomparable to \qcdcl. This leaves open the question of what formulas are hard for \qcdcl, without relying on hardness of \qres or \ldqres. 
	\begin{defi}
		We call a \ldqres proof $\pi$ of a clause $C$ from a QCNF $\Phi$ \emph{a \ldqcdcl proof of $C$ from $\Phi$}, if there exists a $\loar$ proof $\iota$ of $C$ from $\Phi$ such that the \ldqres proof $\pi$ is obtained by pasting together the sub-proofs $(\pi_1,\ldots,\pi_m)$ from~$\iota$ (cf.\  Definition~\ref{DefinitionQCDCLrefutation}).
	\end{defi}
	The system \ldqcdcl identifies a fragment of \ldqres, which collects all \ldqres proofs that appear in $\loar$ derivations.  By definition therefore, \ldqcdcl and $\loar$ are p-equivalent proof systems.
	
	Our next goal is to identify a whole class of QCNFs that witness the hardness of \qcdcl.
	
	\begin{defi}  \label{def:XT-property}
		Let $\Phi$ be a QCNF of the form 
		\begin{align*}
		\exists X \forall U \exists T \cdot \phi
		\end{align*}
		with sets of variables $X=\{ x_1, \ldots,x_a \}$, $U=\{u_1,\ldots,u_b\}$ and $T=\{ t_1,\ldots,t_c \}$. 
		
		We call a clause $C$ in the variables of $\Phi$
		\begin{itemize}
			\item \emph{$T$-clause}, if $\var(C)\cap X=\emptyset$, $ \var(C)\cap U=\emptyset$ and $\var(C)\cap T\neq \emptyset$,
			\item \emph{$XT$-clause}, if $\var(C)\cap X\neq\emptyset$, $ \var(C)\cap U=\emptyset$ and $\var(C)\cap T\neq \emptyset$,
			\item \emph{$XUT$-clause}, if $\var(C)\cap X\neq\emptyset$, $ \var(C)\cap U\neq\emptyset$ and $\var(C)\cap T\neq \emptyset$.
		\end{itemize}
		
		We say that $\Phi$ fulfils the \emph{$XT$-property} if $\phi$ contains no $XT$-clauses as well as no unit $T$-clauses and there do not exist two $T$-clauses $C_1,C_2\in \phi$ that are resolvable.
	\end{defi}

The next lemma shows that, under the XT-property, we cannot derive any XT-clauses or new T-clauses.

\begin{lem}\label{LemmaNoTclause}\label{LemmaNoXTclause}
	Let $\Phi$ be a QBF with the XT-property and let $C$ be a clause derived by \ldqres from $\Phi$ such that $C$ contains a $T$-literal. Then $C$ is an axiom or it contains a $U$-literal. 
\end{lem}
\begin{proof} 
	We prove this by induction. Let $\pi$ be the \ldqres proof of $C$. If $C$ is an axiom, then there is nothing to prove. Therefore, let us assume that $C$ is not an axiom. Then the last step in the derivation must have been a resolution step, because we cannot reduce literals when a $T$-literal is present.
	
	
	Hence there are clauses $D,E\in \pi$ and a literal $\ell$ with $\ell\in D$ and $\bar{\ell}\in E$ such that $C=D\resop{\ell}E$.  By our induction hypothesis, $D$ (as well as $E$) is an axiom or it contains a $U$-literal. If one of them contains a $U$-literal, then $C$ also contains one and we are done. Therefore, let us consider the case where both $D$ and $E$ are axioms, but do not contain $U$-literals. Since XT-clauses are forbidden by the XT-property, $D$ and $E$ are either $X$-clauses or $T$-clauses. Again, by the XT-property, $\Phi$ does not contain two resolvable T-clauses, therefore $D$ and $E$ must be X-clauses. But then $C$ cannot contain a $T$-literal, which is a contradiction. 
\end{proof}
	
	We will show later that we need to resolve two $XUT$-clauses over an $X$-literal in order to introduce tautologies. Now we prove that this is not possible in \ldqcdcl under the $XT$-property.
	
	\begin{lem}\label{LemmaResolveTwoXUTclausesNotPossible}
		It is not possible to resolve two $XUT$-clauses over an $X$-literal in a \ldqcdcl proof of a QCNF $\Phi$ that fulfils the $XT$-property.
	\end{lem}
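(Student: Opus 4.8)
The plan is to reduce the statement to a single structural fact about $\loar$ trails for QCNFs with the $XT$-property, and then read it off the shape of the \ldqres derivations that clause learning produces. Throughout, let $\iota=((\mathcal{T}_1,\dots,\mathcal{T}_m),(C_1,\dots,C_m),(\pi_1,\dots,\pi_m))$ be the $\loar$ proof of $\Phi=\exists X\forall U\exists T\cdot\phi$ underlying the given \ldqcdcl proof $\pi$, so that $\pi$ is the concatenation of $\pi_1,\dots,\pi_m$. First I would note that the $XT$-property is inherited by the intermediate QCNFs: pasting $\pi_1,\dots,\pi_k$ shows each learned clause $C_k$ is derivable from $\Phi$ by \ldqres, so Lemma~\ref{LemmaNoXTclause} says no $C_k$ is an $XT$-clause and Lemma~\ref{LemmaNoTclause} says every $C_k$ which is a $T$-clause already occurs as an axiom of $\phi$. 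Hence each $\mathcal{Q}\cdot(\phi\cup\{C_1,\dots,C_{i-1}\})$ again fulfils the $XT$-property (no $XT$-clauses, no unit $T$-clauses, and its $T$-clauses are exactly those of $\phi$, so still no two of them are resolvable).

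The key claim to establish is: in \emph{any} $\loar$ trail $\mathcal{T}$ for a QCNF $\Psi=\exists X\forall U\exists T\cdot\psi$ having the $XT$-property, as long as some $X$-variable is still unassigned, no $U$- or $T$-variable has been assigned. I would prove this by picking a shortest prefix $\mathcal{T}'$ of $\mathcal{T}$ with $\var(\mathcal{T}')\cap(U\cup T)\neq\emptyset$ and $X\not\subseteq\var(\mathcal{T}')$, and analysing its last literal $\ell$. Dropping $\ell$ would give a shorter violating prefix unless $\var(\ell)\in U\cup T$ and unless the subtrail $\sigma:=\mathcal{T}'\setminus\{\ell\}$ assigns only $X$-variables, so these hold by minimality. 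If $\ell$ is a decision, then $\lv(\ell)\ge 2$ and policy $\lo$ forces every level-$1$ variable to lie in $\sigma$, contradicting that an $X$-variable is unassigned. If $\ell$ is a propagation, it is existential, hence a $T$-literal of some variable $t$, with antecedent $B\in\psi$ satisfying $\red(B|_\sigma)=(\ell)$. Since the only universal block $U$ sits at level $2<3=\lv(t)$, no universal literal of $B|_\sigma$ is reducible, and no existential literal other than $\ell$ may survive (existential literals are never reduced); as $\sigma$ touches only $X$-variables, this forces $B$ to contain no $U$-literal, no $T$-literal besides $\ell$, and to have all its $X$-literals falsified by $\sigma$. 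Then $B\subseteq\{\ell\}\cup\{X\text{-literals}\}$, so $B$ is either an $XT$-clause or the unit $T$-clause $(\ell)$ — both forbidden by the $XT$-property, a contradiction.

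To finish, recall that by Definition~\ref{DefLearnableClauses} every resolution inference in $\pi$ occurs inside some $\pi_i$ and has the form $E\resop{p}\red(\ante_{\mathcal{T}_i}(p))$, where $p=p_{(i,j)}$ is a propagated (hence existential) literal. Suppose such a step has pivot variable $x:=\var(p)\in X$. Because $p$ is propagated at time $(i,j)$ we have $x\notin\var(\mathcal{T}_i[i,j-1])$: otherwise $\ante_{\mathcal{T}_i}(p)$ restricted by that subtrail would be satisfied or would lose the literal $p$, so it could not reduce to $(p)$. By the key claim applied to $\Psi=\mathcal{Q}\cdot(\phi\cup\{C_1,\dots,C_{i-1}\})$, the subtrail $\mathcal{T}_i[i,j-1]$ assigns no $U$- or $T$-variable; hence any $T$-literal of $A:=\ante_{\mathcal{T}_i}(p)$ survives in $A|_{\mathcal{T}_i[i,j-1]}$ and, being existential, is not removed by reduction, contradicting $\red(A|_{\mathcal{T}_i[i,j-1]})=(p)$ with $\var(p)\in X$. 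So $A$, and therefore $\red(A)=\red(\ante_{\mathcal{T}_i}(p))$ (reduction deletes only $U$-literals), contains no $T$-variable, so it is not an $XUT$-clause. Thus in every resolution step over an $X$-literal at least one parent fails to be an $XUT$-clause, which is exactly the assertion. I expect the only genuinely delicate part to be the bookkeeping inside the key claim — verifying that the "first" $U/T$-assignment along the trail is forced to be a decision, and in the propagation case squeezing the $XT$-property contradiction out of the precise shape of the antecedent clause; everything else is a routine case distinction.
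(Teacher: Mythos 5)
Your proof is correct, and it reorganises the argument in a genuinely different way from the paper. The paper argues locally by contradiction: it assumes two $XUT$-clauses are resolved over an $X$-pivot, observes that one parent must be the antecedent of the propagated pivot, and then chases the $T$-literal in that antecedent backwards through a chain of antecedents $E, E', \dots$ (each forced by level-ordering, Lemma~\ref{LemmaNoXTclause}, Lemma~\ref{LemmaNoTclause} and the $XT$-property to contain yet another previously falsified $T$-literal), obtaining a contradiction by infinite descent over the finitely many $T$-variables. You instead isolate a global trail invariant --- in a $\loar$ trail over a matrix with the $XT$-property, no $U$- or $T$-variable is assigned while an $X$-variable of the matrix is still unassigned --- proved by a minimal-counterexample analysis whose propagation case collapses the paper's whole regress into one step: at the \emph{first} $U/T$-assignment only $X$-variables are set, so the antecedent would have to be an $XT$-clause or a unit $T$-clause, both excluded. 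From the invariant you then conclude that the antecedent of any propagated $X$-literal is entirely $T$-free, so one parent of every $X$-pivot resolution is never an $XUT$-clause; this is slightly stronger than what the paper extracts and avoids the descent bookkeeping, at the cost of first verifying explicitly (as you do, correctly, via Lemmas~\ref{LemmaNoXTclause} and \ref{LemmaNoTclause}) that the learned-clause-augmented matrices inherit the $XT$-property --- something the paper handles only implicitly by citing those lemmas inside its chase. One small point of hygiene: since the policy $\lo$ is phrased in terms of $\var(\phi)$, your key claim should be stated for $X$-variables occurring in the current matrix (or for the fixed pivot variable, which is all your application needs); as literally stated it could fail for prefix variables absent from the matrix, though this does not affect the argument.
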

	
	\begin{proof} Assume there is a \ldqcdcl proof $\pi$ that contains such a resolution step over an $X$-literal $x$. Let $C_1$ and $C_2$ be the corresponding $XUT$-clauses. One of these clauses, say $C_1$, had to be an antecedent clause in a $\loar$ trail $\mathcal{T}$ that implied $x$. Since our decisions in the trail are level-ordered and we did not skip any decisions, either $x$ was propagated at decision level $0$, or at a decision level in which we decided another $X$-literal. 
		
		Because $C_1$ is an $XUT$-clause, we can find a $T$-literal $t\in C_1$. The literal $\bar{t}$ must have been propagated before we implied $x$ ($\bar{t}$ could not have been decided because the decisions are level-ordered). That means that  for the same trail we can find $E:=\ante_\mathcal{T}(\bar{t})$. Now, $E$ cannot be a unit $T$-clause by the $XT$-property and Lemma \ref{LemmaNoTclause}. Hence $E$ must contain further $X$-, $U$-, or $T$-literals. If $E$  contains a $U$-literal, then we would have had to decide this $U$-literal before we use $E$ as an antecedent clause, contradicting the level-order of our decisions. Also, this $U$-literal cannot be reduced since we want to imply a $T$-literal with the help of $E$. Therefore we conclude that $E$ contains an $X$-literal or a $T$-literal. If $E$ contains an $X$-literal, then $E$ is an $XT$-clause, which is not possible by Lemma \ref{LemmaNoXTclause}.
		
		Therefore $E$ contains at least another $T$-literal $\ell\in E$. As before, the literal $\bar{\ell}$ was propagated before we implied $\bar{t}$ and $x$. We set $E':=\ante_\mathcal{T}(\bar{\ell})$ and argue in the same way as with $E$. This process would repeat endlessly, which is a contradiction since we only have finitely many $T$-variables.  
	\end{proof}
	
	We combine the above results and show that the hardness of QCNFs with the $XT$-property lifts from \qres to \ldqcdcl.

	\begin{thm}\label{TheoremXTpropertyRequiresLongDistanceProofsSizeS}
		If $\Phi$ fulfils the $XT$-property and requires \qres refutations of size~$s$, then
		each \ldqcdcl refutation (and therefore also each $\loar$ refutation) of $\Phi$ has at least size $s$ as well. In detail, each \ldqcdcl refutation of $\Phi$ is in fact a \qres refutation.
	\end{thm}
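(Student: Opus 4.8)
The plan is to reduce the statement to the assertion that every \ldqcdcl refutation of $\Phi$ is \emph{tautology-free}. Indeed, by definition a \ldqcdcl refutation is the \ldqres refutation $\pi$ obtained by concatenating the learning derivations $\pi_1,\dots,\pi_m$ of some $\loar$ refutation $\iota$ of $\Phi$, and a \ldqres proof without tautologies is literally a \qres proof. So once tautology-freeness is established, $\pi$ is a \qres refutation of $\Phi$, hence $|\pi|\ge s$ by hypothesis; and since $\iota$ yields $\pi$ with $|\pi|\in\mathcal{O}(|\iota|)$ (Theorem~\ref{TheoremSystemsSimulatedByQRes}), the same bound propagates to $\loar$. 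This already gives both the size claim and the ``in detail'' part.

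For tautology-freeness I would argue by contradiction. Suppose $\pi$ contains a tautological clause and let $D$ be the first one in $\pi$. Axioms of $\phi$ are non-tautological, and neither a reduction step nor a ``copy'' step $C_{(i,j)}=C_{(i,j+1)}$ in the sense of Definition~\ref{DefLearnableClauses} can turn a non-tautological clause into a tautological one, so $D$ must be produced inside some $\pi_i$ by a genuine resolution-and-reduction step, i.e.\ $D=\red(C'\resop{x}C'')$ with $C'=C_{(i,j+1)}$, $C''=\red(\ante_{\mathcal{T}_i}(x))$ and $x=p_{(i,j)}$ a propagated literal of the trail $\mathcal{T}_i$ (the step across a decision level, with $C'=C_{(i+1,1)}$ and pivot $p_{(i,g_i)}$, is handled identically). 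By minimality of $D$ both parents $C'$ and $C''$ are tautology-free.

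Now I would pin down the offending clause. The tautology carried by $D$ has the form $u\vee\bar u$ with $u\in U$; since $C'$ and $C''$ are tautology-free, one of them contains $u$ and the other $\bar u$, so $u\in\var(C')\cap\var(C'')$ and the long-distance side condition forces $\lv(u)>\lv(x)$. As $x$ is existential with $\lv(x)<\lv(u)$, the prefix $\exists X\forall U\exists T$ forces $\lv(x)=1$, i.e.\ $x\in X$. For the tautology to survive the final reduction the resolvent must contain a $T$-literal; in particular no universal literal of $C''=\red(\ante_{\mathcal{T}_i}(x))$ was removed, so $C''$, and hence $\ante_{\mathcal{T}_i}(x)$, contains one of $u,\bar u$. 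A universal literal survives the reduction of $\ante_{\mathcal{T}_i}(x)$ only if that clause has an existential literal of higher level to its right, i.e.\ a $T$-literal. Together with $x\in X$ this shows $\ante_{\mathcal{T}_i}(x)$ is an $XUT$-clause serving as the antecedent of a propagated $X$-literal in a $\loar$ trail. But exactly this configuration is excluded by the argument proving Lemma~\ref{LemmaResolveTwoXUTclausesNotPossible} (the infinite descent along $T$-literals, which uses level-orderedness of $\loar$ decisions together with Lemmas~\ref{LemmaNoXTclause} and~\ref{LemmaNoTclause}). This contradiction shows $\pi$ is tautology-free, completing the proof.

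The part I expect to be most delicate is the bookkeeping in the third paragraph: making airtight that the first tautology-creating step resolves over an $X$-literal and that its antecedent is forced to be an $XUT$-clause. This rests on a careful analysis of which literals do and do not survive $\red$ (universal literals disappear only when no $T$-literal stands to their right, while $X$-literals never disappear) and on correctly singling out the first tautological clause in the concatenated proof. Everything else is either routine bookkeeping about the shape of \ldqcdcl learning derivations or a direct appeal to the already-established Lemmas~\ref{LemmaNoXTclause}--\ref{LemmaResolveTwoXUTclausesNotPossible} and Theorem~\ref{TheoremSystemsSimulatedByQRes}.
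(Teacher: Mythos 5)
Your proposal is correct and follows essentially the same route as the paper: take the first tautological clause in the concatenated learning derivations, show it must come from a resolution over an $X$-literal whose reduced parent forces an $XUT$-clause to serve as antecedent of a propagated $X$-literal, rule this out via the $XT$-property machinery (Lemmas~\ref{LemmaNoXTclause}, \ref{LemmaNoTclause}, \ref{LemmaResolveTwoXUTclausesNotPossible}), and then transfer the \qres{} lower bound using Theorem~\ref{TheoremSystemsSimulatedByQRes}. The only cosmetic differences are that the paper shows \emph{both} parents are $XUT$-clauses and applies the statement of Lemma~\ref{LemmaResolveTwoXUTclausesNotPossible}, whereas you identify a single $XUT$-antecedent and invoke that lemma's infinite-descent argument directly, and that the first tautological clause is strictly the unreduced resolvent rather than its reduction---neither point affects correctness.
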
	
	\begin{proof} Let $\pi$ be a \ldqcdcl refutation of $\Phi$. We show
		that $\pi$ does not contain any tautological clause C and hence $\pi$ is in fact a
		Q-resolution refutation. 
		
		Assume that $\pi$ contains some tautological clause $C$. W.l.o.g.\ let $C$ be the first tautological clause in $\pi$. Clearly, $C$ has to be derived by a resolution step over an $X$-literal. Let $C_1$ and $C_2$ be the parent clauses of $C$. Both of them contain some $X$-literals and some $U$-literals. They also have to contain $T$-literals, otherwise we would reduce all $U$-literals (in the learning process we reduce as soon as possible). Therefore $C_1$ and $C_2$ are both $XUT$-clauses that are resolved over an $X$-literal, which is not possible by Lemma \ref{LemmaResolveTwoXUTclausesNotPossible}.
		
		Therefore such a clause $C$ cannot exist. Hence each \ldqcdcl refutation of $\Phi$ is even a \qres refutation and  the result follows.
	\end{proof}
	
	One can conclude that QCNFs with the $XT$-property that are hard for \qres are also hard for \ldqres and therefore \qcdcl. We will give some examples for these cases.
	
	\begin{defiC}[\cite{BBH19}]
		The formula $\mathtt{Equality}_n$  is defined as the QCNF
		$$
		\exists x_1\ldots x_n\forall u_1\ldots u_n \exists t_1\ldots t_n \cdot (\bar{t}_1\vee\ldots\vee \bar{t}_n)\wedge \bigwedge_{i=1}^n((\bar{x}_i\vee \bar{u}_i\vee t_i)\wedge (x_i\vee u_i\vee t_i))\text{.}
		$$
		
	\end{defiC}
	
	This QCNF is obviously false since the $\forall$-player has a winning strategy by assigning each $u_i$ equal to the assignment of $x_i$.
	
	\begin{thmC}[\cite{BBH19}]
		$\mathtt{Equality}_n$ requires \qres refutations of size $2^n$.
	\end{thmC}
	
	It is easy to see that $\mathtt{Equality}_n$ fulfils the $XT$-property for $n\geq 2$. Therefore we obtain:
	
	\begin{cor}  \label{cor:equality-hardness}
		$\mathtt{Equality}_n$ requires $\loar$ refutations of size $2^n$.
	\end{cor}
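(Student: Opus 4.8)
The plan is to obtain Corollary~\ref{cor:equality-hardness} as an immediate application of Theorem~\ref{TheoremXTpropertyRequiresLongDistanceProofsSizeS}, so the only substantive work is to check that $\mathtt{Equality}_n$ satisfies the $XT$-property (Definition~\ref{def:XT-property}) for $n\geq 2$, and then to feed in the known $2^n$ lower bound for \qres refutations of $\mathtt{Equality}_n$ quoted above from \cite{BBH19}.

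First I would fix the block decomposition: the prefix of $\mathtt{Equality}_n$ is $\exists X\forall U\exists T$ with $X=\{x_1,\ldots,x_n\}$, $U=\{u_1,\ldots,u_n\}$, $T=\{t_1,\ldots,t_n\}$, which is exactly the shape demanded in Definition~\ref{def:XT-property}. Next I would classify the axioms. Each clause $\bar{x}_i\vee\bar{u}_i\vee t_i$ and $x_i\vee u_i\vee t_i$ contains an $X$-, a $U$-, and a $T$-variable, hence is an $XUT$-clause; in particular none of them is an $XT$-clause, so $\phi$ contains no $XT$-clauses. The only remaining axiom is the long clause $\bar{t}_1\vee\ldots\vee\bar{t}_n$, which involves only $T$-variables and is therefore the unique $T$-clause; since $n\geq 2$ it has at least two literals, so it is not a unit $T$-clause. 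This is precisely where the hypothesis $n\geq 2$ is used: for $n=1$ the long clause degenerates to the unit $T$-clause $(\bar{t}_1)$. Finally, the requirement that there be no two resolvable $T$-clauses holds vacuously, since $\bar{t}_1\vee\ldots\vee\bar{t}_n$ is the only $T$-clause in $\phi$. Hence all three conditions of the $XT$-property are met.

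With the $XT$-property in hand, I would invoke the theorem of \cite{BBH19} stating that $\mathtt{Equality}_n$ requires \qres refutations of size $2^n$, and plug $s=2^n$ into Theorem~\ref{TheoremXTpropertyRequiresLongDistanceProofsSizeS}. This yields that every \ldqcdcl refutation of $\mathtt{Equality}_n$ is in fact a \qres refutation, and therefore has size at least $2^n$; a fortiori every $\loar$ refutation of $\mathtt{Equality}_n$ has size at least $2^n$, which is the claim. There is essentially no technical obstacle here — the result is a clean corollary of the structural theorem — and the only point needing a moment's care is the bookkeeping of clause types, in particular observing that the mixed clauses $x_i\vee u_i\vee t_i$ count as $XUT$-clauses rather than $XT$-clauses (so they do not violate the first part of the $XT$-property), together with remembering why the statement is restricted to $n\geq 2$.
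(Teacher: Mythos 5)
Your proposal is correct and follows exactly the paper's route: the paper derives this corollary precisely by observing that $\mathtt{Equality}_n$ satisfies the $XT$-property (for $n\geq 2$) and then combining the $2^n$ \qres lower bound of \cite{BBH19} with Theorem~\ref{TheoremXTpropertyRequiresLongDistanceProofsSizeS}. Your explicit verification of the three conditions of Definition~\ref{def:XT-property} is just a spelled-out version of the paper's ``it is easy to see'' step.
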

	
	Since the equality formulas are easy in \ldqres \cite{BeyersdorffBM19}, we obtain an exponential separation between \qcdcl and \ldqres.
	\begin{cor} \label{cor:qcdcl-ldqres}
		\Ldqres is exponentially stronger than \qcdcl, i.e., \ldqres p-simulates \qcdcl and there are QCNFs that require exponential-size proofs in \qcdcl, but admit polynomial-size proofs in \ldqres.
	\end{cor}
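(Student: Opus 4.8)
The plan is to assemble the statement from three ingredients already in place, so the argument is short. For the p-simulation direction there is nothing to do beyond quoting Theorem~\ref{TheoremSystemsSimulatedByQRes}: it states that every QCDCL proof system defined here is p-simulated by \ldqres, and in particular this applies to $\loar=\qcdcl$. Hence \ldqres p-simulates \qcdcl, and it remains only to exhibit a family of QCNFs witnessing an exponential gap.

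For the separating family I would take the $\mathtt{Equality}_n$ formulas. On the \qcdcl side, Corollary~\ref{cor:equality-hardness} already records that $\mathtt{Equality}_n$ requires $\loar$ refutations of size $2^n$; this rested on the observation that $\mathtt{Equality}_n$ satisfies the $XT$-property for $n\ge 2$ together with the $2^n$ \qres lower bound of \cite{BBH19}, pushed through Theorem~\ref{TheoremXTpropertyRequiresLongDistanceProofsSizeS}. Since $\qcdcl=\loar$, this is exactly an exponential lower bound for \qcdcl. On the \ldqres side I would invoke \cite{BeyersdorffBM19}, where polynomial-size \ldqres refutations of $\mathtt{Equality}_n$ are constructed. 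Combining the two bounds with the p-simulation above gives the claimed exponential separation.

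There is essentially no remaining obstacle inside the corollary itself — all the substance lives in Corollary~\ref{cor:equality-hardness} (hence ultimately in the $XT$-property machinery and the size--cost--capacity lower bound) and in the upper bound from \cite{BeyersdorffBM19}. The only point to be careful about in writing it up is the reading of ``exponentially stronger'': it is the conjunction of a one-directional p-simulation and a concrete exponentially separating family, not an equivalence, so I would simply state the two halves explicitly, exactly as the corollary already phrases them.
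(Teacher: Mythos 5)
Your proposal is correct and matches the paper's own argument exactly: the p-simulation comes from Theorem~\ref{TheoremSystemsSimulatedByQRes}, and the separating family is $\mathtt{Equality}_n$, hard for $\loar$ by Corollary~\ref{cor:equality-hardness} (via the $XT$-property) and easy for \ldqres by \cite{BeyersdorffBM19}. Nothing is missing.
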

	
	Next we will define a whole class of randomly generated QCNFs. With high probability, they also serve as hard examples for \qcdcl. 
	\begin{defiC}[\cite{BBH19}]
		For each $1\leq i\leq n$ let $C_i^{(1)},\ldots,C_i^{(cn)}$ be clauses picked uniformly at random from the set of clauses containing 1 literal from the set $U_i=\{ u_i^{(1)},\ldots,u_i^{(m)} \}$ and 2 literals from $X_i=\{ x_i^{(1)},\ldots,x_i^{(n)} \}$. Define the randomly generated QCNF $Q(n,m,c)$ as:
		\begin{align*}
		Q(n,m,c):=\exists X_1,\ldots,X_n\forall U_1,\ldots,U_n\exists t_1,\ldots,t_n \cdot \bigwedge_{i=1}^n\bigwedge_{j=1}^{cn}(\bar{t}_i\vee C_i^{(j)})\wedge (t_1\vee \ldots \vee t_n)
		\end{align*}
	\end{defiC}
	
	Suitably choosing the parameters $c$ and $m$, we gain false and indeed hard formulas. 
	
	\begin{thmC}[\cite{BBH19}]\label{PropQnmcFalseProbability}
		Let $1<c<2$ be a constant and $m\leq (1-\epsilon)\log_2n$ for some constant $\epsilon>0$. With probability $1-o(1)$ the random QCNF $Q(n,m,c)$ is false and requires  \qres refutations of size $2^{\Omega(n^\epsilon)}$. 
	\end{thmC}
	
	Again, it is easy to see that all $Q(n,m,c)$-formulas fulfil the $XT$-property.

	\begin{cor} \label{cor:random-hardness}
		Let $1<c<2$ be a constant and $m\leq (1-\epsilon)\log_2n$ for some constant $\epsilon>0$. With probability $1-o(1)$   the random QCNF $Q(n,m,c)$ is false and requires  $\loar$ refutations of size $2^{\Omega(n^\epsilon)}$. 
	\end{cor}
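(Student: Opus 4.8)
The plan is to combine the \qres lower bound for $Q(n,m,c)$ from Theorem~\ref{PropQnmcFalseProbability} with the transfer principle of Theorem~\ref{TheoremXTpropertyRequiresLongDistanceProofsSizeS}, which lifts \qres hardness to \ldqcdcl (and hence to $\loar$) for any QCNF satisfying the $XT$-property. The only new work needed is to verify that every instance in the support of the distribution $Q(n,m,c)$ fulfils the $XT$-property of Definition~\ref{def:XT-property}. Note that this is a purely structural condition on the matrix, independent of the random draws, so it holds deterministically; the $1-o(1)$ probability in the statement will come entirely from Theorem~\ref{PropQnmcFalseProbability}.

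To check the $XT$-property I would first identify the three variable blocks relative to the prefix $\exists X_1\ldots X_n\,\forall U_1\ldots U_n\,\exists t_1\ldots t_n$, namely $X=X_1\cup\cdots\cup X_n$, $U=U_1\cup\cdots\cup U_n$ and $T=\{t_1,\ldots,t_n\}$. The matrix of $Q(n,m,c)$ has exactly two kinds of clauses. Each clause $\bar t_i\vee C_i^{(j)}$ contains one $T$-literal ($\bar t_i$), one $U$-literal (the one chosen from $U_i$) and two $X$-literals (chosen from $X_i$), hence it is an $XUT$-clause and in particular not an $XT$-clause. The single long clause $t_1\vee\ldots\vee t_n$ contains only $T$-literals, so it is a $T$-clause of width $n$. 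Consequently: $\phi$ contains no $XT$-clause; the only $T$-clause is not unit for $n\ge 2$; and since $\phi$ contains exactly one $T$-clause, no pair of $T$-clauses is resolvable. All three requirements of Definition~\ref{def:XT-property} are thus met, so $Q(n,m,c)$ has the $XT$-property whenever $n\ge 2$.

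With this in hand the corollary follows by bookkeeping. By Theorem~\ref{PropQnmcFalseProbability}, with probability $1-o(1)$ the formula $Q(n,m,c)$ is false and every \qres refutation of it has size $2^{\Omega(n^\epsilon)}$. Since $Q(n,m,c)$ satisfies the $XT$-property, Theorem~\ref{TheoremXTpropertyRequiresLongDistanceProofsSizeS} applies, so every \ldqcdcl refutation of $Q(n,m,c)$ — and therefore every $\loar$ refutation — is in fact a \qres refutation and hence also has size $2^{\Omega(n^\epsilon)}$. Combining, with probability $1-o(1)$ the formula $Q(n,m,c)$ is false and requires $\loar$ refutations of size $2^{\Omega(n^\epsilon)}$.

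I do not expect any genuinely hard step: the argument is a straightforward chaining of two earlier results. The single point I would double-check carefully is that the $XT$-property holds for \emph{every} instance producible by the random process rather than merely with high probability, so that the probability bound is inherited verbatim from Theorem~\ref{PropQnmcFalseProbability}; this is immediate from the fixed clause shapes ($1$ universal and $2$ existential literals per random clause, plus the fixed purely-existential long clause) in the definition of $Q(n,m,c)$.
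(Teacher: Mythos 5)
Your proposal is correct and follows exactly the paper's route: the paper also derives the corollary by noting (without spelling it out) that every $Q(n,m,c)$ instance fulfils the $XT$-property and then chaining Theorem~\ref{PropQnmcFalseProbability} with Theorem~\ref{TheoremXTpropertyRequiresLongDistanceProofsSizeS}; your explicit structural check (each random clause is an $XUT$-clause, the single $T$-clause $t_1\vee\ldots\vee t_n$ is non-unit for $n\geq 2$ and has no resolvable partner) is precisely the omitted "easy to see" step.
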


	The hardness of $\mathtt{Equality}_n$ and the random formulas does not rely on propositional hardness. In order to make the notion of `propositional hardness' formal, we will use a strengthening of \qres that allows oracle calls. This framework was introduced in \cite{BHP20} and tailored towards \qres in \cite{BBM20}. The oracle allows to collapse arbitrary propositional sub-derivations into just one inference step.
	

	\begin{defiC}[\cite{BBM20}]
		\qnpres is defined as \qres, but the resolution rule is replaced by the following:
		\begin{itemize}
			\item \emph{$\Sigma^\exists_1$-rule:} For some $\mathcal{G}\subseteq \{ C_1,\ldots,C_{i-1} \}$,
			\begin{enumerate}
				\item $\bigwedge_{B\in \mathcal{G}}B^\exists \vDash C_i^\exists$, and
				\item for each $B\in \mathcal{G}$, $B^\forall$ is a subclause from $C_i^\forall$,
			\end{enumerate}
			where $C^\exists$ and $C^\forall$ denote the existential and universal subclauses of any clause $C$. 
		\end{itemize}
		We use the notation $C_1\wedge\ldots\wedge C_{i-1}\entailsSigma C_i$ for referring to such a step.
	\end{defiC}
	
	In fact, the lower bounds for the equality and random formulas above hold in this stronger model.
	\begin{thmC}[\cite{BBM20,BBH19}]\label{PropQNPResOfEqualityIsLarge}
		$\mathtt{Equality}_n$ requires \qnpres refutations of size $2^n$. Likewise, the random formulas $Q(n,m,c)$ (with the same parameters as in Theorem~\ref{PropQnmcFalseProbability}) are false and require  $\qnpres$ refutations of size $2^{\Omega(n^\epsilon)}$. 
	\end{thmC}
	
	An equivalent way of stating this theorem is to say that the equality and random formulas require exponentially many reduction steps in \qres proofs. This measure is also applicable to QCDCL trails and in particular to \ldqcdcl proofs. 
	
	\begin{prop} \label{prop:equality-random-oracle-hardness}
		The number of reduction steps in each \ldqcdcl refutation (and also each $\loar$ refutation)  of $\mathtt{Equality}_n$  is at least $2^n$. The same holds for the false formulas $Q(n,m,c)$ with $2^{\Omega(n^\epsilon)}$ reduction steps.
	\end{prop}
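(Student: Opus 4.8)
The plan is to assemble the proposition from three facts already available: that QCNFs with the $XT$-property only admit \qres-style \ldqcdcl refutations (Theorem~\ref{TheoremXTpropertyRequiresLongDistanceProofsSizeS}), that $\mathtt{Equality}_n$ and the random formulas are hard for \qnpres (Theorem~\ref{PropQNPResOfEqualityIsLarge}), and the observation recorded just above the proposition that a \qnpres size bound is the same thing as a lower bound on the number of reduction steps in \qres proofs. First I would fix an arbitrary \ldqcdcl refutation of $\mathtt{Equality}_n$ (for $n\ge 2$), and, for the random case, of a false $Q(n,m,c)$, which happens with probability $1-o(1)$ by Theorem~\ref{PropQnmcFalseProbability}. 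By definition this refutation is the \ldqres proof $\pi$ obtained by pasting together the learning derivations $\pi_1,\dots,\pi_m$ of some underlying $\loar$ refutation $\iota$ (Definition~\ref{DefinitionQCDCLrefutation}); in particular the reduction steps occurring in $\pi$ are exactly the reduction steps occurring across $\pi_1,\dots,\pi_m$, i.e.\ they are the quantity the proposition asks us to bound.

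Since $\mathtt{Equality}_n$ ($n\ge 2$) and every $Q(n,m,c)$ fulfil the $XT$-property, Theorem~\ref{TheoremXTpropertyRequiresLongDistanceProofsSizeS} applies: $\pi$ contains no tautologies and is in fact a genuine \qres refutation of the formula. I would then invoke Theorem~\ref{PropQNPResOfEqualityIsLarge} in its reduction-step reformulation: every \qres refutation of $\mathtt{Equality}_n$ uses at least $2^n$ reduction steps, and every \qres refutation of a false $Q(n,m,c)$ (with the parameters of Theorem~\ref{PropQnmcFalseProbability}) uses at least $2^{\Omega(n^\epsilon)}$ reduction steps. The bridge between the \qnpres size bound and this reduction count is the standard collapse of a reduction-poor \qres refutation into a short \qnpres refutation (cf.\ \cite{BBM20}): one keeps only the conclusions of the reduction steps, together with the final empty clause, as the lines of a \qnpres proof, and derives each such reduced clause from earlier reduced clauses and the axioms by a single $\Sigma_1^\exists$-inference, absorbing the reduction-free resolution segment between two consecutive reductions into the $\Sigma_1^\exists$-entailment. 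Applying this bound to $\pi$ yields the claim for \ldqcdcl; for $\loar$ it then follows a fortiori, since a $\loar$ refutation contains all the reduction steps of its extracted \ldqcdcl proof plus possibly further reductions carried out during unit propagation under the policy $\ar$.

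Given the two cited theorems, the proposition is essentially an assembly step, so the only place that needs genuine care is the bookkeeping inside the collapse of the previous paragraph, namely that every $\Sigma_1^\exists$-inference so constructed is legal, i.e.\ that the universal subclause of each premise is a subclause of the universal subclause of the conclusion. This rests on the fact that along a reduction-free resolution segment universal literals are never removed (the pivot is always existential), so that every universal literal of a premise reappears in the clause that is finally reduced; and on the $XT$-property — via Lemmas~\ref{LemmaNoXTclause} and \ref{LemmaNoTclause} — which forces any such universal literal to be guarded by a $T$-literal and hence to survive the closing reduction. Everything else is routine, using in addition only the already-noted observations that $\mathtt{Equality}_n$ and $Q(n,m,c)$ satisfy the $XT$-property.
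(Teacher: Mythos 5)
Your high-level route is the same as the paper's (use the $XT$-property via Theorem~\ref{TheoremXTpropertyRequiresLongDistanceProofsSizeS} to see that the extracted \ldqcdcl refutation is a genuine \qres refutation $\pi$, collapse its resolution segments into $\Sigma_1^\exists$-steps to get a \qnpres refutation, and invoke Theorem~\ref{PropQNPResOfEqualityIsLarge}), but the specific collapse you describe does not work. You keep only the reduced clauses $T_i=\red(R_{(i,s_i)})$ (plus the final empty clause) as lines and want to derive each of them by a single $\Sigma_1^\exists$-inference from axioms and earlier $T_j$'s; for this the universal part of every premise must be a subclause of $T_i^\forall$. Your justification --- that each universal literal of a premise is guarded by a $T$-literal and hence survives the closing reduction --- is false: for the prefixes $\exists X\forall U\exists T$ at hand, a reduction step either removes nothing (if the clause contains a $T$-literal, every $U$-literal is blocked) or removes \emph{all} universal literals (if it contains none). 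So precisely when a reduction is non-trivial, $T_i$ has empty universal part while the premises one actually needs do not (e.g.\ the axioms $\bar{x}_j\vee\bar{u}_j\vee t_j$ of $\mathtt{Equality}_n$), and the side condition of the $\Sigma_1^\exists$-rule is violated; restricting $\mathcal{G}$ to universal-free clauses kills the existential entailment (the clause $\bar{x}_1\vee\ldots\vee\bar{x}_n$, obtained by reducing $\bar{x}_1\vee\ldots\vee\bar{x}_n\vee\bar{u}_1\vee\ldots\vee\bar{u}_n$, is not entailed by $\bar{t}_1\vee\ldots\vee\bar{t}_n$ alone). A further symptom is that your construction would yield a \qnpres refutation containing no reduction steps at all, whereas the translation is meant to preserve them.

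The repair is exactly the paper's bookkeeping: keep, for each reduction, the clause $R_{(i,s_i)}$ \emph{being reduced} as a line, derive it by one $\Sigma_1^\exists$-step from the axioms and the earlier $T_j$'s --- here the side condition does hold, by the observation you correctly make that universal literals persist along existential-pivot resolutions, so every universal literal of a leaf premise reappears in $R_{(i,s_i)}$; no appeal to the $XT$-property or to Lemmas~\ref{LemmaNoXTclause} and \ref{LemmaNoTclause} is needed at this point --- and then obtain $T_i$ by an ordinary reduction step, which \qnpres retains. The resulting refutation $\pi'$ has the same number $r$ of reductions as $\pi$ and, after the axioms, alternates $\Sigma_1^\exists$-steps with reductions, so $r\in\Omega(|\pi'|)$ and Theorem~\ref{PropQNPResOfEqualityIsLarge} gives the stated bounds. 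The remaining ingredients of your write-up (falsity of $Q(n,m,c)$ with high probability, the $XT$-property of both families, and the transfer to $\loar$) are fine.
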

	\begin{proof} 
		Let $\pi=C_1,\ldots,C_m$ be a \ldqcdcl refutation of $\mathtt{Equality}_n$ or $Q(n,m,c)$ and let $r$ be the number of clauses in $\pi$ that were directly derived by reduction of some preceding clause. By Proposition \ref{TheoremXTpropertyRequiresLongDistanceProofsSizeS}, $\pi$ is even a \qres refutation. W.l.o.g.\ all axioms are introduced at the beginning of $\pi$. I.e., for an $\ell\in \N$ we have that $C_1,\ldots,C_\ell$ are all axioms from $\mathtt{Equality}_n$. This does not change the size of $\pi$ as we need all axioms anyway. 
		
		We can assume that $\pi$ is in the form of 
		\begin{align*}
		\pi=&C_1,\ldots,C_\ell,R_{(1,1)},\ldots,R_{(1,s_1)},T_1,R_{(2,1)},\ldots,R_{(2,s_2)},T_2,\ldots , R_{(r,1)},\ldots,R_{(r,s_r)},T_r,\\&R_{(r+1,1)},\ldots,R_{(r+1,s_{r+1})}\text{}
		\end{align*}
		where each $R_{(i,j)}$ is (directly) derived by the resolution rule, and for each $T_i$ we have $T_i=\red(R_{(i,s_i)})$. One can inductively show that for each $R_{(c,d)}$ we have
		\begin{align*}
		\bigwedge_{i=1}^\ell C_i \wedge \bigwedge_{j=1}^{c-1} T_j \entailsSigma R_{(c,d)}\text{}
		\end{align*}
		due to 
		\begin{align*}
		\bigwedge_{i=1}^\ell C_i \wedge \bigwedge_{j=1}^{c-1} T_j \entailsSigma R_{(a,b)}\text{}
		\end{align*}
		for every $R_{(a,b)}$ left of $R_{(c,d)}$ in $\pi$. 

		We can replace consecutive resolution steps with the $\Sigma_1^\exists$-rule and obtain a \qnpres refutation $\pi'$ with the same number of reductions as $\pi$. We have $|\pi'|\in 2^{\Omega(n)}$ (resp.\ $2^{\Omega(n^\epsilon)}$ for $Q(n,m,c)$) by Proposition \ref{PropQNPResOfEqualityIsLarge}. This new refutation $\pi'$ is  of the form 
		\begin{align*}
		\pi'=C_1,\ldots,C_\ell,R_{(1,s_1)},T_{1},R_{(2,s_2)},T_2,\ldots,R_{(r,s_r)},T_r,R_{(r+1,s_{r+1})}\text{.}
		\end{align*}  
		I.e., these two kinds of derivation steps are alternating after $C_\ell$. Therefore the number of reductions $r$ of $\pi'$ (and also $\pi$) can be estimated by
		\begin{align*}
		r\in \Omega\left(\frac{|\pi'|-\ell}{2}\right)=\Omega(|\pi'|)\text{.} \tag*{\qedhere}
		\end{align*}
	\end{proof}
	
	Compared to $\mathtt{Equality}_n$, the formulas $\mathtt{Trapdoor}_n$ just need a linear number of reduction steps in $\loar$.

	\section{A QCDCL system equivalent to Q-resolution}\label{SectionMainResult}
	\label{sec:qres-characterisation}

	We now show that the QCDCL system $\asonr$ has exactly the right strength to characterise \qres. 
	
	Our central notion in this section will be that of the \emph{reliability} of a clause. The motivation for this definition is as follows: In our QCDCL systems we generate trails and proofs in a natural way, i.e., we are not allowed to skip propagations or conflicts. Nevertheless, when aiming to simulate \qres proofs, we also want  to  prescribe a sequence of literals as decisions in order to create trails along these decisions. However, it is not guaranteed that we can even make all these decisions without problems. We could run into situations that prevent us from continuing with the desired decisions. To classify these situations, we use the notion of reliability.
	
	\begin{defi} \label{def:reliable}
		Let $\Phi=\mathcal{Q}\cdot \phi$ be a QCNF and $C$ be a non-tautological clause. If there is a $\asonr$ trail $\mathcal{T}$, an existential literal $\ell\in C$ and a set of literals $\alpha\subseteq \bar{C}\backslash\{ \bar{\ell} \}$ such that $\alpha$ is the set of decision literals in $\mathcal{T}$ and $\ell\in \mathcal{T}$, then $C$ is called \emph{unreliable with respect to} $\Phi$. Alternatively, we say that the decisions $\bar{C}$ are \emph{blocking each other}. 
		
		If $C$ is not unreliable, we call $C$ \emph{reliable}.

	\end{defi}

	\begin{rem}\label{RemConstructTrailWithDecisions} \hfill 
\begin{enumerate}
						\item Let us give an intuitive interpretation of unreliability: Let us assume the two clauses $(x\vee y\vee a)$ and $(\bar{x}\vee z\vee b)$ are given (we ignore quantifiers for the moment). These two clauses can be resolved over $x$, and we receive $(x\vee y\vee a)\resop{x}(\bar{x}\vee z\vee b)=(y\vee z\vee a \vee b)$. We would like to simulate this resolution step via trails. The most obvious (and naive) approach would be to decide $\bar{y}$ and $\bar{a}$, propagate $x$, decide $\bar{z}$ and $\bar{b}$, receive a conflict. From that trail $(\mathbf{\bar{y}};\mathbf{\bar{a}},x;\mathbf{\bar{z}};\mathbf{\bar{b}},\bot)$ we can learn the clause $(y\vee z\vee a \vee b)$. However, this would not work in practice: After deciding $\bar{z}$, we would detect the unit clause $(b)$, which will trigger the propagation of $b$ and prevent us from finding a conflict. Hence, we would actually create the trail $(\mathbf{\bar{y}};\mathbf{\bar{a}},x;\mathbf{\bar{z}},b)$. This shows that the clause $(y\vee z\vee a \vee b)$ is unreliable (with respect to our formula). Although this might look like a disadvantage, making clauses unreliable or detecting their unreliability is the aim of our simulation method. Basically, unreliability means that, even though this clause might not be contained in our formula as an axiom, we can still (or already) make use of its impact to trigger a unit propagation as if we would know the clause directly. In our example, although we have never learned the clause $(y\vee z\vee a \vee b)$, we have found out how we can use the implication $(\bar{y}\wedge \bar{z}\wedge \bar{a})\rightarrow b$ as an indirect unit propagation.
			
			We will show later that unreliable clauses are as good as learned clauses. Also, it is not important which literal got implied by the witness of unreliability. For our example that means that, even if we later need the clause $(y\vee z\vee a \vee b)$ for a propagation of $a$, it suffices to know how we can imply $b$, or any other literal of the formula, by deciding the negations of the  remaining literals.
			
			\item 	Let $\Phi=\mathcal{Q}\cdot \phi$ be a QCNF and let $C$ be reliable with respect to $\Phi$. Then we can construct a trail $\mathcal{T}$ by choosing $\bar{C}$ as decisions one by one and doing the propagations more or less automatically. These decisions cannot block each other. However, it is possible that we propagate a literal from $\bar{C}$ in the same polarity before deciding it. In this case we have to skip the decision. Also, we could reach a conflict before deciding all literals, then we abort the trail as usual. Both of these  cases are still fine for our purposes. We stop the construction when we either reach a conflict, or if all literals from $\bar{C}$ are assigned and we cannot make further propagations.
			
			We will use this technique in our later proofs and refer to it as \emph{constructing $\mathcal{T}$ with decisions $\bar{C}$}. Note that all trails we construct in this way are natural. 
		\end{enumerate}
	\end{rem}

	\begin{exas}
		We give some minimal examples of the situations of Remark \ref{RemConstructTrailWithDecisions} in the system $\asonr$. In all four examples, we want to check if the clause $(\bar{x}\vee\bar{u} \vee \bar{y})$ is reliable with respect to the corresponding formula $\Phi$.
		\begin{enumerate}
			\item Let $\Phi:=\exists x\forall u\exists y,z \cdot (\bar{x}\vee \bar{u}\vee \bar{y}\vee z)$.  Constructing a trail with decisions $(x,u,y)$ in this order gives us $\mathcal{T}:=( \textbf{x};\textbf{u};\textbf{y},z  )$. In this situation the decisions $(x,u,y)$ are not blocking each other. In fact, the clause $(\bar{x}\vee \bar{u}\vee \bar{y})$ is reliable with respect to $\Phi$ since we would need to decide $\bar{z}$ in order to propagate $\bar{x}$ or $\bar{y}$.
			\item Let $\Phi:=\exists x\forall u\exists y,z \cdot (\bar{x}\vee \bar{u}\vee \bar{y}\vee z)\wedge (\bar{x}\vee \bar{u}\vee y)$.  After constructing a trail $\mathcal{T}$ with decisions $(x,u,y)$ we get  $\mathcal{T}=(\textbf{x};\textbf{u},y,z)$. This is still fine even if we did not actually use $y$ as a decision. The clause $(\bar{x}\vee\bar{u} \vee \bar{y})$ is still reliable with respect to $\Phi$ because we would need to decide $\bar{z}$ or $\bar{y}$ in order to imply $\bar{x}$.
			\item Let $\Phi:=\exists x\forall u\exists y,z \cdot (\bar{x}\vee \bar{u}\vee \bar{y}\vee z)\wedge (\bar{x}\vee \bar{u}\vee z)\wedge (\bar{x}\vee \bar{u}\vee \bar{z})$. When we want to construct a trail with decisions $(x,u,y)$ we get $\mathcal{T}=\{  \textbf{x},\textbf{u},z,\bot \}$. This is fine even though $y$ does not appear in $\mathcal{T}$, since we run into a conflict beforehand. The clause $(\bar{x}\vee\bar{u} \vee \bar{y})$ is reliable with respect to $\Phi$, because the only way to achieve a situation of unreliability would be implying $\bar{x}$ with decisions $u$ and $y$, or implying $\bar{y}$ with decisions $x$ and $u$. In the first case we would not get any unit clauses, and in the second one we would propagate $\bot$ before $\bar{y}$.
			\item Let $\Phi:=\exists x\forall u\exists y,z \cdot (\bar{x}\vee \bar{u}\vee z)\wedge(\bar{z}\vee \bar{y})$. When we try to construct a trail $\mathcal{T}$ with decisions $(x,u,y)$ we get stuck at $\mathcal{T}=(\textbf{x};\textbf{u},z,\bar{y})$. Now we have propagated $\bar{y}$ although we wanted to decide $y$. This shows that these decisions block each other and the clause $(\bar{x}\vee \bar{u}\vee \bar{y})$ is unreliable with respect to $\Phi$ and $\mathcal{T}$ serves as a witness.
		\end{enumerate}
	\end{exas}
	
	The next lemma shows that we can basically `copy' a trail that was created at a previous point by deciding or propagating all of its decisions. This will help us later during the simulation of resolution or reductions steps, where we want to copy the trails that serve as a witness for the unreliability of the parent clauses.
	
	\begin{lem}\label{LemmaFindEveryPropagation}
		Let $\Phi=\mathcal{Q}\cdot \phi$ and $\Psi=\mathcal{Q}\cdot \psi$ be two QCNFs with the same prefix such that $\phi\subseteq \psi$. Let $\mathcal{T}$ be a $\asonr$ trail for $\Phi$ and $\mathcal{U}$ be a natural trail for $\Psi$. Let $\alpha$ be the decisions in $\mathcal{T}$ and $\alpha\subseteq \mathcal{U}$. If $\mathcal{U}$ does not run into a conflict, then every propagated literal from $\mathcal{T}$ is contained in~$\mathcal{U}$.
	\end{lem}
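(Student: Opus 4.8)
The statement is a natural "copying" lemma: if a $\asonr$ trail $\mathcal{T}$ for $\Phi$ has all its decisions available in a natural trail $\mathcal{U}$ for a superset $\Psi$, and $\mathcal{U}$ avoids a conflict, then $\mathcal{U}$ contains every propagation of $\mathcal{T}$ too. The plan is to prove this by induction on the position of a propagated literal in $\mathcal{T}$, reading $\mathcal{T}$ left to right. So list the propagated literals of $\mathcal{T}$ in trail-order as $q_1, q_2, \ldots, q_k$ (where each $q_i = p_{(a_i,b_i)}$), and show by strong induction on $i$ that $q_i \in \mathcal{U}$.

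First I would set up the induction hypothesis carefully: assume $q_1,\ldots,q_{i-1}\in\mathcal{U}$, and recall that $\alpha\subseteq\mathcal{U}$ by hypothesis, so in fact every literal of the subtrail $\mathcal{T}[a_i, b_i-1]$ lies in $\mathcal{U}$ (it consists only of decisions, which are in $\alpha$, and earlier propagations $q_1,\ldots,q_{i-1}$). Now consider the antecedent clause $C := \ante_\mathcal{T}(q_i)\in\phi\subseteq\psi$, for which $C|_{\mathcal{T}[a_i,b_i-1]} = (q_i)$ under the $\nr$ policy (no reduction). Since $\mathcal{T}[a_i,b_i-1]\subseteq\mathcal{U}$ as a set of literals, the clause $C$ also becomes unit (or possibly already falsified, or already satisfied) under the corresponding sub-assignment of $\mathcal{U}$ — more precisely, under any prefix of $\mathcal{U}$ that contains all of $\mathcal{T}[a_i,b_i-1]$. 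There are three cases. If $C$ is satisfied by some literal of $\mathcal{U}$: that literal $\ell$ satisfies $C$, so $\ell\in C$ and $\bar\ell\notin\mathcal{T}[a_i,b_i-1]$; but then $\ell$ would either be $q_i$ itself (done) or a literal of $\mathcal{T}$ with $\bar\ell\in\mathcal{T}[a_i,b_i-1]$ — here I need to argue, using $\nr$ and the definition of antecedent, that no \emph{other} literal of $C$ can be set true without contradicting that $C$ was unit in $\mathcal{T}$; this forces the satisfying literal to be $q_i$. If $C$ is falsified by some prefix of $\mathcal{U}$, then $\mathcal{U}$ runs into a conflict (by naturalness, $\mathcal{U}$ is forced to propagate $\bot$ at the latest at that point), contradicting the hypothesis that $\mathcal{U}$ avoids a conflict. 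If neither: then at the first moment in $\mathcal{U}$ at which all literals of $\mathcal{T}[a_i,b_i-1]$ have appeared, the clause $C$ is unit with literal $q_i$, and since $\mathcal{U}$ is natural it must propagate some unit literal at that point and at every subsequent decision level; I then argue that $q_i$ must eventually enter $\mathcal{U}$ — either it is propagated directly via $C$ (or another antecedent), or its variable is decided; but $\var(q_i)$ is existential and so can never be a decision in $\mathcal{U}$ unless\ldots wait, $\mathcal{U}$ is only assumed natural, not to follow a particular decision policy, so existential decisions could in principle occur. The cleaner route is: naturalness of $\mathcal{U}$ forces, at the step where $\mathcal{T}[a_i,b_i-1]\subseteq\mathcal{U}[\cdot,\cdot]$ holds and $\bot$ is not yet reached, a propagation of \emph{some} unit literal; iterating, all unit literals get propagated before any new decision; $C$ is among the unit clauses, with literal $q_i$; hence $q_i$ (or $\bar q_i$) enters $\mathcal{U}$; if $\bar q_i$ entered, $C$ is falsified, contradiction; so $q_i\in\mathcal{U}$.

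The main obstacle is the bookkeeping in the "$C$ gets satisfied in $\mathcal{U}$" case: I must rule out that $C$ is satisfied in $\mathcal{U}$ by a literal $\ell\neq q_i$, which would happen if some variable of $C\setminus\{q_i\}$ — whose negation sits in $\mathcal{T}[a_i,b_i-1]$ — has been assigned the opposite polarity in $\mathcal{U}$ than in $\mathcal{T}$. But that cannot happen precisely because $\mathcal{T}[a_i,b_i-1]\subseteq\mathcal{U}$ as literal sets, and $\mathcal{U}$, being an assignment, cannot also contain $\ell=\bar{(\bar\ell)}$ for $\bar\ell\in\mathcal{T}[a_i,b_i-1]\subseteq\mathcal{U}$. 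So in fact the only literal of $C$ that $\mathcal{U}$ can possibly set to true is $q_i$ itself. This is the key consistency observation and it closes the satisfied-case immediately, reducing everything to: $C$ becomes unit in $\mathcal{U}$ with literal exactly $q_i$, then naturalness + no-conflict forces $q_i\in\mathcal{U}$.

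I would write the argument as a single induction, emphasising at the start that "$\mathcal{T}[a_i,b_i-1]\subseteq\mathcal{U}$ as sets of literals" follows from $\alpha\subseteq\mathcal{U}$ plus the inductive hypothesis, and then handling the three cases (satisfied / falsified / unit) for the antecedent clause $C$ evaluated along $\mathcal{U}$. The falsified case uses the no-conflict hypothesis; the unit case uses naturalness of $\mathcal{U}$ (Definition~\ref{DefNaturalTrails}) to force $q_i$ to be propagated; the satisfied case is impossible for a literal other than $q_i$ by the consistency observation, and if $q_i$ itself satisfies $C$ in $\mathcal{U}$ we are already done since then $q_i\in\mathcal{U}$.
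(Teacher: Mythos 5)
Your proof is correct and follows essentially the same route as the paper's: the paper takes the leftmost propagated literal of $\mathcal{T}$ missing from $\mathcal{U}$ (your induction, in contrapositive form), observes that all earlier literals of $\mathcal{T}$ lie in $\mathcal{U}$ so the antecedent clause from $\phi\subseteq\psi$ becomes unit along $\mathcal{U}$, and invokes naturalness (no skipped propagations) plus the no-conflict hypothesis to force the literal into $\mathcal{U}$. Your explicit satisfied/falsified/unit case split just spells out details the paper leaves implicit.
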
	
	
	\begin{proof}  Suppose that $\mathcal{U}$ did not run into a conflict. 
		
		Write $\mathcal{T}$ as
		\begin{align*}
		\mathcal{T}=(p_{(0,1)},\ldots, p_{(0,g_0)};\mathbf{d_{1}},p_{(1,1)},\ldots,p_{(1,g_{1})};\ldots; \mathbf{d_r},p_{(r,1)},\ldots ,p_{(r,g_r)}  )\text{.}
		\end{align*}
		Assume that there are some propagated literals from $\mathcal{T}$ that are not contained in $\mathcal{U}$.
		Let $p_{(i,j)}$ be the first (leftmost) literal in $\mathcal{T}$ of this kind. 
		Then we know that
		\begin{align*}
		\ante_\mathcal{T}(p_{(i,j)})|_{\mathcal{T}[i,j-1]}=(p_{(i,j)})
		\end{align*}
		with $\ante_\mathcal{T}(p_{(i,j)})\in \phi\subseteq \psi$. Since $p_{(i,j)}$ was the first propagated literal not contained in $\mathcal{U}$ and all decisions $\alpha$ for $\mathcal{T}$ are in $\mathcal{U}$, all literals from $\mathcal{T}[i,j-1]$ are contained in $\mathcal{U}$. This means that at some point in $\mathcal{U}$ the clause $\ante_\mathcal{T}(p_{(i,j)})$ could have been used to imply $p_{(i,j)}$. Because we are not allowed to skip propagations, we must have propagated $p_{(i,j)}$ somewhere in $\mathcal{U}$, contradicting our assumption.	
	\end{proof}
	
	In the following proposition we will give a simple counting argument. This gives another motivation for using asserting schemes. Informally, it says that we only have to backtrack polynomially often under a specific decision sequence until we reach a desired state of unreliability.
	
	\begin{prop}\label{PropMakeUnreliable}
		Let $\Phi:=\mathcal{Q}\cdot \phi$ be a QCNF in $n$ variables, $\xi$ be an asserting learning scheme, $D$ be a clause  and let $\mathcal{T}$ be a natural $\asonr$ trail for $\Phi$ with decision set $\bar{D}$ and $\bot\in \mathcal{T}$. Then there exists a clause $E$ and a $\asonr$-proof 
		\begin{align*}
		\iota=((\mathcal{T}_1,\ldots,\mathcal{T}_{f_n}),(\xi(\mathcal{T}_1),\ldots,\xi(\mathcal{T}_{f_n})),(\pi_1,\ldots,\pi_{f_n}))
		\end{align*}
		from $\Phi$ of $E$ such that $|\iota|\in \mathcal{O}(n^3)$.  If $E\neq (\bot)$, then $D$ is unreliable with respect to $\mathcal{Q}\cdot (\phi\cup\{  \xi(\mathcal{T}_1),\ldots,\xi(\mathcal{T}_{f_n}) \})$. 
		
		In particular, in case $E\neq (\bot)$ we can find a trail $\mathcal{W}$ that witnesses the unreliability of $D$ after having learnt the clauses $\xi(\mathcal{T}_1),\ldots,\xi(\mathcal{T}_{f_n})$ in the proof $\iota$, i.e., there is a trail $\mathcal{W}$ for $\mathcal{Q}\cdot (\phi\cup\{  \xi(\mathcal{T}_1),\ldots,\xi(\mathcal{T}_{f_n}) \})$, an existential literal $\ell\in D$ with $\ell\in \mathcal{W}$ and decisions $\alpha\subseteq\bar{D}\backslash\{ \bar{\ell} \}$ that follow the same order as the decisions of $\mathcal{T}$.
	\end{prop}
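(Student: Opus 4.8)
The plan is to produce $\iota$ by repeatedly running the construction of Remark~\ref{RemConstructTrailWithDecisions}: build natural $\asonr$ trails with decisions $\bar D$ (offered in a fixed order in which the universal literals of $\bar D$ respect the quantifier levels, so that the order is admissible under $\asonr$), learn one clause from each trail via $\xi$, and stop as soon as we either learn $(\bot)$ or witness the unreliability of $D$. Formally, put $\mathcal{T}_1:=\mathcal{T}$; having built $\mathcal{T}_1,\dots,\mathcal{T}_{i-1}$ and set $C_j:=\xi(\mathcal{T}_j)$, construct the next trail $\mathcal{T}_i$ with decisions $\bar D$ for the QCNF $\Phi_i:=\mathcal{Q}\cdot(\phi\cup\{C_1,\dots,C_{i-1}\})$. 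By Remark~\ref{RemConstructTrailWithDecisions} every such trail is natural, and its construction terminates in one of three ways: \emph{(a)} it runs into a conflict; \emph{(b)} just before some $\bar\ell\in\bar D$ (necessarily existential) would be decided, the literal $\ell$ has already been propagated, which by Definition~\ref{def:reliable} exhibits the unreliability of $D$; \emph{(c)} all literals of $\bar D$ become assigned and no further propagation is possible.

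First I would discard case~\emph{(c)}. There the finished trail is a natural trail for $\Phi_i\supseteq\Phi$ whose set of assigned literals contains all of $\bar D$, hence in particular contains the decisions of $\mathcal{T}$; since $\mathcal{T}$ ran into a conflict, Lemma~\ref{LemmaFindEveryPropagation} forces $\bot$ into this trail as well, contradicting case~\emph{(c)}. So only \emph{(a)} or \emph{(b)} can occur. In case~\emph{(b)} I stop: setting $f_n:=i-1$ (note $f_n\ge 1$, as $\mathcal{T}_1=\mathcal{T}$ falls under case~\emph{(a)}) and $E:=\xi(\mathcal{T}_{f_n})$, the current partially built trail, restricted to its prefix up to the propagation of $\ell$, is a $\asonr$ trail for $\mathcal{Q}\cdot(\phi\cup\{\xi(\mathcal{T}_1),\dots,\xi(\mathcal{T}_{f_n})\})$ whose decisions $\alpha$ form a subset of $\bar D\setminus\{\bar\ell\}$ in the prescribed order with $\ell$ propagated; this is the witness $\mathcal{W}$. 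In case~\emph{(a)} I set $C_i:=\xi(\mathcal{T}_i)$; since $\mathcal{T}_i$ is a natural $\asonr$ trail, Lemma~\ref{LemmaAssertingClauseASONR} ensures $\mathcal{L}_{\mathcal{T}_i}$ contains $(\bot)$ or an asserting clause, so $C_i$ is $(\bot)$ (then I stop with $E:=(\bot)$) or an asserting clause, in which case $C_i\notin\phi\cup\{C_1,\dots,C_{i-1}\}$ by Lemma~\ref{LemmaNewClauseAsserting}, and the construction continues. Taking the $\pi_i$ to be the associated derivations from Definition~\ref{DefinitionQCDCLrefutation} (which are genuine \ldqres, in fact \qres, derivations by Lemma~\ref{LemmaNoTautologies} and Proposition~\ref{PropSoundnessLongDistance}), and backtracking from $\mathcal{T}_{i-1}$ to the longest common prefix of $\mathcal{T}_{i-1}$ and $\mathcal{T}_i$ (a legal backtrack, since the two trails coincide until $C_{i-1}$ first becomes unit), the triple $\iota=((\mathcal{T}_1,\dots,\mathcal{T}_{f_n}),(\xi(\mathcal{T}_1),\dots,\xi(\mathcal{T}_{f_n})),(\pi_1,\dots,\pi_{f_n}))$ is a bona fide $\asonr$-proof of $E$ from $\Phi$.

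It remains to bound $f_n$. Because the learned clauses are pairwise distinct and there are only finitely many clauses over $n$ variables, the process halts; and if case~\emph{(b)} never occurs it can only halt by learning $(\bot)$, so in that case $E=(\bot)$. For the quantitative bound $f_n\in\mathcal{O}(n^2)$ I would use a double-counting argument in the spirit of the simulation of \resolution\ by \cdcl\ of Pipatsrisawat and Darwiche~\cite{DBLP:journals/ai/PipatsrisawatD11}: the decision level at which $\mathcal{T}_i$ runs into its conflict is non-increasing in $i$ (the asserting clause $C_{i-1}$ becomes unit strictly below the previous conflict level, and, using Lemma~\ref{LemmaFindEveryPropagation} once more, the resulting additional early propagations can only make the next conflict occur at the same level or earlier), while as long as this level stays constant a suitable quantity bounded by $n$, namely the number of literals propagated below that level, strictly increases from one trail to the next. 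Hence $f_n\in\mathcal{O}(n^2)$, and since every trail has size $\mathcal{O}(n)$ we obtain $|\iota|\in\mathcal{O}(f_n\cdot n)=\mathcal{O}(n^3)$.

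The main obstacle is exactly this last counting step; everything else is routine bookkeeping on top of Lemmas~\ref{LemmaFindEveryPropagation}, \ref{LemmaAssertingClauseASONR}, and \ref{LemmaNewClauseAsserting}. Certifying the monotonicity of the conflict level, and the growth of the auxiliary quantity while it is constant, requires a careful analysis of how a freshly learned asserting clause reshapes the next trail, in particular of how many of the intended decisions $\bar D$ it converts into forced propagations, which is precisely the behaviour flagged in Remark~\ref{RemConstructTrailWithDecisions}.
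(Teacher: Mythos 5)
Your construction phase is essentially the paper's: iterate trails with decisions $\bar D$ in the order of $\mathcal{T}$, learn an asserting clause (or $(\bot)$) via $\xi$ from each conflicting trail, backtrack to where it becomes unit, and stop either at $(\bot)$ or at the first blocking situation, which then serves as the witness $\mathcal{W}$. Your explicit elimination of case \emph{(c)} via Lemma~\ref{LemmaFindEveryPropagation} (noting that $\bot$ counts as a propagated literal) is a nice touch that the paper leaves implicit, and the appeal to Lemmas~\ref{LemmaAssertingClauseASONR} and \ref{LemmaNewClauseAsserting} for the qualitative part is fine. Two small points: the decision order must be taken to be the decision order of $\mathcal{T}$ itself (not just any $\aso$-admissible order), since the final claim requires the witness's decisions to follow that order; and the finiteness/novelty argument is redundant once a quantitative bound is in place.

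The genuine gap is exactly the step you flag: the bound $f_n\in\mathcal{O}(n^2)$, which is the quantitative core of the proposition, is not established. Your proposed lexicographic invariant is doubtful as stated. The ``conflict level'' is not obviously non-increasing, because after backtracking some intended decisions are converted into propagations (or vice versa, since the reshaped prefix can destroy propagations that previously caused skips), so decision levels are renumbered between consecutive trails and a direct comparison of conflict levels needs an argument you do not give. Worse, your secondary quantity, the number of literals propagated below the conflict level, can \emph{decrease}: backtracking discards everything strictly above the backtrack level, including propagations that sat below the old conflict level, and there is no guarantee the next trail re-derives them at low levels. The paper closes this step differently, with a per-variable potential: $\delta(i,z)$ is the minimum, over the trails constructed so far, of the decision level at which variable $z$ has appeared (in either polarity), with $\delta(i,z)=\infty$ if it has not appeared. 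This is non-increasing by construction, and in each round the asserting variable is propagated at the backtrack level, strictly below its previous minimum, so the sum of the potentials (at most $n$ values, each in $\{0,\ldots,n\}$) strictly drops every iteration, giving $f_n\in\mathcal{O}(n^2)$ and $|\iota|\in\mathcal{O}(n^3)$. Replacing your unproved invariant by such a running-minimum potential (or proving your invariant in full) is what is needed to complete the proof.
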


	
	\begin{proof}  Set $\mathcal{T}_1:=\mathcal{T}$. We will now construct a proof $\iota$ with a sequence of trails $\mathcal{T}_1,\ldots,\mathcal{T}_{f_n}$ for some $f_n\in \N$. Arguing inductively, suppose that $\mathcal{T}_i$ was created with decisions $\bar{D}$ and runs into a conflict. We start clause learning and derive $\xi(\mathcal{T}_i)$, which is an asserting clause. After that we will backtrack to a point at which $\xi(\mathcal{T}_i)$ becomes unit, say $\mathcal{T}_i[s_i,t_i]$, and go on constructing the next trail $\mathcal{T}_{i+1}$. From there on we will complete the trail by choosing the same decision literals in the same order as before (while still considering the situations described in Remark~\ref{RemConstructTrailWithDecisions}).    
		Either $\mathcal{T}_{i+1}$ also runs into a conflict, in which case we repeat this whole process, or the decisions $\bar{D}$ block each other in $\mathcal{T}_{i+1}$. If this happens, or if we derive $(\bot)$, we stop. Note that we will always follow the same decision order, even if we skip some decisions. This proves the last statement of the proposition. 
		
		We will argue that the number of these backtracking steps is polynomially bounded, in fact $f_n\in \mathcal{O}(n^2)$. For a variable $z\in\text{Var}(\Phi)$ and a trail $\mathcal{U}$ we define $\nu_\mathcal{U}(z)$ as the level in which $z$ was propagated or decided in $\mathcal{U}$, regardless of the polarity of $z$. If $z$ does not occur in $\mathcal{U}$ in either polarity, then we set $\nu_\mathcal{U}(z):=\infty$. Let $\delta$ be the map defined as follows: 
		\begin{align*}
		\delta:&\:[f_n]\times \var(\phi)\longrightarrow \{0,\ldots,n,\infty\}\\
		\delta(1,z)&:=\nu_{\mathcal{T}_1}(z),\\
		\delta(i,z)&:=\min(\delta(i-1,z),\nu_{\mathcal{T}_{i}}(z)) \quad\text{ for $i\in \{ 2,\ldots,f_n \}$.}
		\end{align*}
		Intuitively, $\delta(i,z)$ returns the smallest decision level in which the variable $z$ occurred in the trails $\{ \mathcal{T}_1,\ldots,\mathcal{T}_i \}$ in any polarity.  By construction, we get $\delta(i+1,z)\leq \delta(i,z)$ for all $i\in [f_n-1]$ and $z\in \var(\phi)$. Furthermore, because $\xi$ is asserting,   we can find $y_i\in \var(\phi)$ (e.g.\ the asserting literal in step $i$) with $\delta(i+1,y_i)< \delta(i,y_i)$ for each $i\in [f_n-1]$. This follows from the fact that  by definition we have to backtrack at least one level. We have to finish after at most $\mathcal{O}(n^2)$ steps since after that $\delta$ would return $0$ for each variable. Therefore $f_n\in \mathcal{O}(n^2)$ and $|\iota|\in \mathcal{O}(n^3)$.
	\end{proof}
	
	The proof of this proposition not only confirms the existence of such a proof $\iota$, it also gives us an algorithm for creating it (although that is not essential for us here). The connection between $\mathcal{T}$ and $\iota$ might not seem obvious at first sight, as $\mathcal{T}$ only serves as a witness in order to start the process of constructing $\iota$.
	
	Let us outline the rest of this section: we aim to construct a $\asonr$ refutation from a \qres refutation. For this we will go through all clauses in the \qres refutation and make them unreliable using Proposition~\ref{PropMakeUnreliable}. For this purpose we have to repeatedly find a natural trail $\mathcal{T}$ with decision set $\bar{D}$ that fulfils the postulated properties. There are three ways a clause could have been derived: as an axiom, via resolution or via reduction. Therefore the next three lemmas will concentrate on this goal.

	In the following results we will assume that all clauses, which we intend to make unreliable, are in fact reliable at the beginning. This guarantees that all construction steps can be correctly carried out. However, since we want to p-simulate \qres (which includes computing the simulation in polynomial time), we have to argue that we can efficiently find witnesses of unreliability. For example, axioms will typically be unreliable (purely universal clauses are not, for instance), but we might not know a witness in advance (although this witness exists by definition). What we can do is to act as if  the axiom is reliable and try to perform the construction steps as described below. Dropping the assumption of reliability, we would lose the 	guarantee that these construction steps work correctly. However, if not we will obtain a witness of unreliability just by constructing this trail, which similarly serves our purpose.  We will discuss this in greater detail when proving Theorem~\ref{TheoremSimComplete}.

	We start with considering the axiom case.

	\begin{lem}\label{LemmaSimAxiom}
		Let $\Phi:=\mathcal{Q}\cdot \phi$ be a QCNF in $n$ variables and $C\in \phi$.  If $C$ is reliable with respect to $\Phi$, 
		there exists a $\asonr$-proof $\iota$ with  trails $\mathcal{T}_1,\ldots,\mathcal{T}_{f_n}$ from $\Phi$ of some clause $E$ that uses the learning scheme $\xi$ such that $|\iota|\in \mathcal{O}(n^3)$. If $E\neq(\bot)$, then $C$ is unreliable with respect to $\mathcal{Q}\cdot (\phi\cup\{  \xi(\mathcal{T}_1),\ldots,\xi(\mathcal{T}_{f_n}) \})$.
	\end{lem}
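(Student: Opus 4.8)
The plan is to reduce the statement directly to Proposition~\ref{PropMakeUnreliable}, which already packages everything we need: from an asserting learning scheme $\xi$ and a natural $\asonr$ trail with decision set $\bar D$ that runs into a conflict, it produces a short $\asonr$-proof $\iota$ (of size $\mathcal{O}(n^3)$) of some clause $E$, and it guarantees that $D$ becomes unreliable with respect to the enlarged formula as soon as $E\neq(\bot)$. Thus the only work is to manufacture a suitable initial trail with $D:=C$.

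First I would invoke the hypothesis that $C$ is reliable together with Remark~\ref{RemConstructTrailWithDecisions}: this lets me build a natural $\asonr$ trail $\mathcal{T}_1$ by taking $\bar C$ as the decisions (in some $\asonr$-admissible order), doing all propagations automatically. Reliability is exactly the condition that the decisions $\bar C$ do not block each other, so the construction never gets stuck because a literal $\ell\in C$ has been propagated before we could decide $\bar\ell$; at worst we skip the decision of an existential literal of $\bar C$ that has already been propagated in the same polarity, which is harmless. The point that needs care is that $\mathcal{T}_1$ actually runs into a conflict, i.e.\ $\bot\in\mathcal{T}_1$. Here the key observation is that propagated literals are always existential, so the universal literals of $\bar C$ are never propagated and hence never skipped; therefore the construction of Remark~\ref{RemConstructTrailWithDecisions} cannot terminate in its ``all literals of $\bar C$ assigned, no further propagations'' state without a conflict, because in that state the axiom $C\in\phi$ would be falsified by $\mathcal{T}_1$, i.e.\ $C|_{\mathcal{T}_1}=(\bot)$, which is a unit clause (with literal $\bot$) under $\nr$, forcing $\bot$ to be propagated. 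So $\mathcal{T}_1$ ends in a conflict in all cases.

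Having produced $\mathcal{T}_1$, I would apply Proposition~\ref{PropMakeUnreliable} with $D:=C$, $\mathcal{T}:=\mathcal{T}_1$, and the given asserting scheme $\xi$, and simply read off the proof $\iota=((\mathcal{T}_1,\ldots,\mathcal{T}_{f_n}),(\xi(\mathcal{T}_1),\ldots,\xi(\mathcal{T}_{f_n})),(\pi_1,\ldots,\pi_{f_n}))$, the clause $E$, the bound $|\iota|\in\mathcal{O}(n^3)$, and the conclusion that $C$ is unreliable with respect to $\mathcal{Q}\cdot(\phi\cup\{\xi(\mathcal{T}_1),\ldots,\xi(\mathcal{T}_{f_n})\})$ whenever $E\neq(\bot)$. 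The main (indeed only non-bookkeeping) obstacle is the conflict claim of the previous paragraph, which is why I isolate the ``universal literals are never propagated, hence never skipped'' observation; the degenerate case of a purely universal axiom $C$ (no existential $\ell\in C$, so $C$ reliable vacuously) is subsumed, since $\mathcal{T}_1$ still reaches a conflict and the run will in fact end with $E=(\bot)$, the branch the lemma does not constrain. Finally, I would note that the whole construction is polynomial time ($|\mathcal{T}_1|\in\mathcal{O}(n)$ and $f_n\in\mathcal{O}(n^2)$ by Proposition~\ref{PropMakeUnreliable}), which is what makes it usable inside the efficient simulation of Theorem~\ref{TheoremSimComplete}.
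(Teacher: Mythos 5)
Your proposal is correct and follows essentially the same route as the paper: build a natural trail with decisions $\bar{C}$ (possible since $C$ is reliable), observe that it must end in a conflict because otherwise $C$ itself would be falsified while conflicts cannot be skipped, and then invoke Proposition~\ref{PropMakeUnreliable} with $D:=C$. Your extra remarks (universal literals of $\bar{C}$ are never propagated, and the purely universal / degenerate case) are fine but not needed beyond what the paper's three-line argument already covers.
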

	
	\begin{proof} Construct a (natural) trail $\mathcal{T}$ by choosing $\bar{C}$ as level-ordered decisions. If we do not run into a conflict, we get a contradiction since we falsified $C$ and are not allowed to skip conflicts. After applying Proposition \ref{PropMakeUnreliable} we have either derived $(\bot)$, or $C$ becomes unreliable.
	\end{proof}

	The next two results cover the simulation of resolution and reduction steps. As before, we assume that the clause for which we intend to find a witness of unreliability  is actually reliable at the beginning. Dropping this assumption removes the guarantee that the decisions will not block each other in the construction. But then we might find a witness of unreliability even earlier.  
	
	For these next results, we will use the notation $\theta(\iota):=(\mathcal{T}_1,\ldots,\mathcal{T}_m)$, $\lambda(\iota):=(C_1,\ldots,C_m)$ and $\rho(\iota):=(\pi_1,\ldots,\pi_m)$ for a $\asonr$ proof
	\begin{align*}
		\iota=((\mathcal{T}_1,\ldots,\mathcal{T}_m),(C_1,\ldots,C_m),(\pi_1,\ldots,\pi_m)).
	\end{align*} 
	
	\begin{lem}\label{PropSimResolution}
		Let $\Phi:=\mathcal{Q}\cdot \phi$ be a QCNF in $n$ variables. Also let $C_1\vee x$ be a clause that is unreliable with respect to $\Psi:=\mathcal{Q}\cdot\psi$ with $\psi\subseteq\phi$ and $C_2\vee \bar{x}$ unreliable with respect to $\Upsilon:=\mathcal{Q}\cdot \tau$ with $\tau\subseteq\phi$, such that $C_1\vee C_2$ is non-tautological. Let $\xi$ be an asserting learning scheme. If $C_1\vee C_2$ is reliable with respect to $\Phi$, there exists a $\asonr$-proof $\iota$ with  $\theta(\iota)=\mathcal{T}_1,\ldots,\mathcal{T}_{f_n}$ from $\Phi$ of some clause $E$ that uses the learning scheme $\xi$ such that $|\iota|\in \mathcal{O}(n^3)$. If $E\neq(\bot)$, then $C_1\vee C_2$ is unreliable with respect to $\mathcal{Q}\cdot (\phi\cup\{  \xi(\mathcal{T}_1),\ldots,\xi(\mathcal{T}_{f_n}) \})$.
		
	\end{lem}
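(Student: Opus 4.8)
The plan is to reduce the resolution-simulation step to two applications of the already-proved machinery: first create a trail witnessing the unreliability of $C_1 \vee C_2$ by combining the witnesses for the two parent clauses, and then feed that trail into Proposition~\ref{PropMakeUnreliable}. So I would argue as follows. Since $C_1 \vee x$ is unreliable with respect to $\Psi = \mathcal{Q}\cdot\psi$, there is an $\asonr$ trail $\mathcal{S}_1$ for $\Psi$, an existential literal $\ell_1 \in C_1 \vee x$, and decisions $\alpha_1 \subseteq \overline{C_1 \vee x}\setminus\{\bar\ell_1\}$, with $\ell_1 \in \mathcal{S}_1$; likewise there is a witness $\mathcal{S}_2$, $\ell_2 \in C_2 \vee \bar x$, $\alpha_2$ for $C_2 \vee \bar x$ over $\Upsilon$. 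The idea is to build a single trail $\mathcal{T}$ for $\Phi$ (which contains both $\psi$ and $\tau$) whose decision set is $\overline{C_1 \vee C_2}$, by deciding the negated literals of $C_1 \vee C_2$ in an order that first exhausts $\overline{C_1}$ (in the order of $\mathcal{S}_1$, skipping the slot for $\bar x$) and then $\overline{C_2}$.

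The key observation is a case split on what happens along this construction. If at some point along the $\overline{C_1}$ phase the trail propagates $\ell_1$, then $\bar x$ must be among the still-unassigned decisions (since $\ell_1 \in C_1\vee x$ and $\ell_1 \notin$ the decisions chosen so far); here we would want to use that $\mathcal{T}$ restricted to its first phase agrees with $\mathcal{S}_1$ on propagations via Lemma~\ref{LemmaFindEveryPropagation} (applied with $\Phi$ in the role of the larger formula, $\Psi$ the smaller, and $\mathcal{S}_1$ the trail being copied), so $\ell_1 \in \mathcal{T}$. Since $\bar x$ has not yet been decided, either $\ell_1 = x$ (so we propagated $x$, cannot decide $\bar x$, the decisions block each other — $C_1\vee C_2$ is already witnessed unreliable, even earlier than expected) or $\ell_1 \in C_1$, and then $\ell_1$ together with the decision subset $\alpha_1 \subseteq \overline{C_1\vee C_2}\setminus\{\bar\ell_1\}$ already witnesses unreliability of $C_1 \vee C_2$ — again we stop. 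If instead the $\overline{C_1}$ phase completes without propagating $\ell_1$, then in particular $x$ was propagated at some point (since $\overline{C_1}$ being falsified forces $C_1 \vee x$ down to $(x)$ after reduction, or a conflict occurs — in the conflict case we hand $\mathcal{T}$ to Proposition~\ref{PropMakeUnreliable} directly). Now we enter the $\overline{C_2}$ phase with $x$ already in $\mathcal{T}$; by Lemma~\ref{LemmaFindEveryPropagation} applied to $\mathcal{S}_2$ and the fact that $\{x\}\cup\alpha_2 \subseteq \mathcal{T}$, the propagation $\ell_2$ of $\mathcal{S}_2$ appears in $\mathcal{T}$, and as before either $\ell_2 = \bar x$ (impossible, $x$ is already assigned positively — in fact this would force a conflict, which is fine), or $\ell_2 \in C_2$ giving the witness, or the decisions $\overline{C_2}$ block each other already. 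In every branch $\mathcal{T}$ either runs into a conflict with decision set $\overline{C_1\vee C_2}$ — to which we apply Proposition~\ref{PropMakeUnreliable} to get the $\asonr$ proof $\iota$ with $f_n \in \mathcal{O}(n^2)$ trails and $|\iota|\in\mathcal{O}(n^3)$ — or we have already exhibited a witness of unreliability of $C_1\vee C_2$ with respect to $\Phi$ (hence also w.r.t.\ the larger formula obtained after learning), and we can take $\iota$ to be the single-trail proof produced by Proposition~\ref{PropMakeUnreliable} started from that trail; if $E = (\bot)$ we are done, and if $E \neq (\bot)$ the concluding statement of Proposition~\ref{PropMakeUnreliable} yields the required witness $\mathcal{W}$.

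The main obstacle I expect is the bookkeeping in the first case split: one has to be careful that the decision order chosen for $\mathcal{T}$ is a legal $\asonr$ order (existential decisions unconstrained, universal decisions in nondecreasing level order, and the decisions of $\mathcal{T}$ must be exactly $\overline{C_1\vee C_2}$ modulo the ones skipped due to early propagation), and that the two witness trails $\mathcal{S}_1, \mathcal{S}_2$ can genuinely be "stitched" — i.e.\ that nothing propagated in the first phase falsifies the decision literals we still owe from $\overline{C_2}$. The latter is exactly what the reliability hypothesis on $C_1\vee C_2$ buys us (Remark~\ref{RemConstructTrailWithDecisions}): because $C_1\vee C_2$ is assumed reliable, the construction-with-decisions procedure of Remark~\ref{RemConstructTrailWithDecisions} is guaranteed to terminate either in a conflict or with all of $\overline{C_1\vee C_2}$ assigned, and the non-blocking guarantee means we never get stuck mid-way; the subtle point, handled exactly as in the remark and as flagged in the paragraph before Lemma~\ref{PropSimResolution}, is that if the reliability assumption were to fail we would simply discover a blocking witness earlier, which is still good enough. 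So the proof structure is: (1) reliability of $C_1 \vee C_2$ plus Remark~\ref{RemConstructTrailWithDecisions} gives a natural trail $\mathcal{T}$ with decisions $\overline{C_1\vee C_2}$ that ends in a conflict (else we have an early unreliability witness); (2) Proposition~\ref{PropMakeUnreliable} turns $\mathcal{T}$ into the proof $\iota$ of the claimed size; (3) the disjunction "$E = (\bot)$ or $C_1 \vee C_2$ unreliable afterwards" is precisely the output of Proposition~\ref{PropMakeUnreliable}, so there is nothing further to check. The roles of $\mathcal{S}_1, \mathcal{S}_2$ and Lemma~\ref{LemmaFindEveryPropagation} enter only to justify step (1) in the branch where we claim an \emph{early} witness; in the main branch one only needs the conflict.
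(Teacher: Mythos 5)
Your construction has a genuine gap in the main branch. You claim that once the $\overline{C_1}$ phase is completed without conflict, ``$x$ was propagated at some point (since $\overline{C_1}$ being falsified forces $C_1\vee x$ down to $(x)$)''. But $C_1\vee x$ is in general \emph{not} a clause of the current formula: it is merely a derived \qres clause that has been made unreliable, and unit propagation only fires on clauses of $\phi$ (plus previously learned clauses); moreover, under $\nr$ no reduction is available during propagation anyway. So nothing forces $x\in\mathcal{T}$ after the first phase. Without $x\in\mathcal{T}$ you cannot apply Lemma~\ref{LemmaFindEveryPropagation} to $\mathcal{S}_2$, because $\alpha_2\subseteq\overline{(C_2\vee\bar x)}\setminus\{\bar\ell_2\}$ may contain $x$, which is neither decided (it is not in $\overline{C_1\vee C_2}$) nor guaranteed to be propagated. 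Hence there is no argument that your two-phase trail runs into a conflict at all, and Proposition~\ref{PropMakeUnreliable} (which needs a conflicting natural trail with the prescribed decision set) cannot be started. The same defect already appears in the first phase: $\alpha_1$ may contain $\bar x$, so Lemma~\ref{LemmaFindEveryPropagation} does not apply to $\mathcal{S}_1$ either, and your sub-case ``$\ell_1=x$: we propagated $x$, the decisions block each other, $C_1\vee C_2$ is witnessed unreliable'' is incorrect — $x\notin C_1\vee C_2$ and $\bar x$ is not among the planned decisions, so propagating $x$ neither blocks anything nor witnesses unreliability of the resolvent; it only re-witnesses the (known) unreliability of $C_1\vee x$.

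This is exactly the difficulty the paper's proof is built around, and it cannot be avoided by a single trail whose decisions are $\overline{C_1\vee C_2}$. The paper distinguishes three cases according to whether $\ell_1=x$ and $\ell_2=\bar x$, and in each case chooses a \emph{different} decision set that makes the two witnesses usable: in Case 2 the set contains $\bar\ell_2$, and in the hard Case 3 ($\ell_1\neq x$, $\ell_2\neq\bar x$) it contains $\bar\ell_1$ together with $\bar x$ itself, decided \emph{last} — a literal outside $\overline{C_1\vee C_2}$, which is precisely what your construction forbids itself from using. Case 3 then may require two successive applications of Proposition~\ref{PropMakeUnreliable}: first one targeted at $C_1\vee x$, whose order-respecting witness (with $\bar x$ excluded from its decisions) either already witnesses unreliability of $C_1$, or propagates $x$ and feeds a second round in the style of Case 2; the two proofs are concatenated with a restart in between. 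Your proposal collapses these cases into one and loses exactly the information (a forced propagation of the pivot, or of $\ell_2$, leading to a contradiction that forces a conflict) that makes the argument go through.
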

	
	\begin{proof} Since $C_1\vee x$ is unreliable, there is a literal $\ell_1\in C_1\vee x$ and a trail $\mathcal{U}_1$ for $\Psi$ with decisions $\alpha_1\subseteq \overline{(C_1\vee x)}\backslash \{ \bar{\ell}_1 \}$ such that $\ell_1\in \mathcal{U}_1$. The same is true for $C_2\vee  \bar{x}$, thus we get a trail $\mathcal{U}_2$ for $\Upsilon$, a literal $\ell_2$ and decisions $\alpha_2\subseteq\overline{(C_2\vee \bar{x})}\backslash\{ \bar{\ell_2} \}$.

		We now distinguish three cases. In each case we will define a set of decisions that are not blocking each other, construct a natural trail with these decisions (see Remark \ref{RemConstructTrailWithDecisions}) and run into a conflict. After that we will unite all three cases and start clause learning. The created trail will serve as a starting point for Proposition \ref{PropMakeUnreliable}.
		
		\underline{Case 1:} $\ell_1=x$ and $\ell_2=\bar{x}$.
		
		We choose the set $\alpha_1\cup \alpha_2\subseteq \bar{C}$ as level-ordered decisions for a new, natural trail $\mathcal{T}$. These decisions cannot block each other since $C$ is still reliable. If we assume that $\mathcal{T}$ does not run into a conflict, then we get $\alpha_1\cup \alpha_2\subseteq \mathcal{T}$. We can now apply Lemma \ref{LemmaFindEveryPropagation} and conclude that all propagated literals from $\mathcal{U}_1$ and $\mathcal{U}_2$ are contained in $\mathcal{T}$. However, this is a contradiction since we would need to propagate both $\ell_1=x$ and $\ell_2=\bar{x}$. Therefore $\mathcal{T}$ has to run into a conflict.
		
		\underline{Case 2:} $\ell_1=x$ and $\ell_2\neq \bar{x}$ (or analogously $\ell_1\neq x$ and $\ell_2=\bar{x}$).
		
		We choose the set $(\{ \bar{\ell}_2 \} \cup \alpha_1\cup\alpha_2)\backslash\{ x \}\subseteq \bar{C}$ as level-ordered decisions for a natural trail $\mathcal{T}$. As before, these decisions are not blocking each other and we can at least do the same propagations as in $\mathcal{U}_1$ as long as we do not run into a conflict because of $\alpha_1\subseteq\mathcal{T}$. In particular we are able to propagate $x=\ell_1$ (e.g. via $\ante_{\mathcal{U}_1}(\ell_1)$). Now we have decided or propagated all decisions of $\mathcal{U}_2$, i.e., $\alpha_2\subseteq \mathcal{T}$. Hence after applying Lemma \ref{LemmaFindEveryPropagation} again we can do at least all propagations of $\mathcal{U}_2$. But then we would need to propagate $\ell_2$ in $\mathcal{T}$. That is contradictory to $\bar{\ell_2}\in \mathcal{T}$.  This means likewise $\mathcal{T}$ has to run into a conflict.

		\underline{Case 3:} $\ell_1\neq x$ and $\ell_2\neq \bar{x}$.
		
		We know that $\bar{x}\in \alpha_1$, otherwise $C$ would be unreliable. We choose $ \{\bar{\ell}_1\} \cup\alpha_1$ as decisions for the natural trail $\mathcal{T}$, but we demand that $\bar{x}$ will be decided last and all the other decisions are level-ordered. 
		
		The decisions before $\bar{x}$ cannot block each other, since its negations are literals in $C$. 
		
		If we propagated $x$ somewhere, we can instead start with a new natural trail $\mathcal{T}'$ using the non-blocking level-ordered decisions $(\{ \bar{\ell}_1, \bar{\ell}_2 \}\cup\alpha_1\cup \alpha_2)\backslash \{ x,\bar{x} \}\subseteq \bar{C}$.  Provided that $\mathcal{T'}$ does not run into a conflict, applying Lemma \ref{LemmaFindEveryPropagation} gives us $x\in \mathcal{T}'$ since we already implied $x$ in $\mathcal{T}$.  Therefore we have decided or propagated all decisions of $\mathcal{U}_2$, i.e., $\alpha_2\subseteq \mathcal{T}'$. Hence we get a contradiction by Lemma \ref{LemmaFindEveryPropagation} because we would need to propagate $\ell_2$. Therefore $\mathcal{T}'$ runs into a conflict.

		If we actually decide or propagate $\bar{x}$, we will run into a conflict afterwards. Otherwise we would have made all propagations from $\mathcal{U}_1$ (since $\alpha_1\subseteq \mathcal{T}$), receiving a contradiction by Lemma \ref{LemmaFindEveryPropagation} again.\\

		Using Proposition \ref{PropMakeUnreliable}, for Case 1, Case 2 and the first part of Case 3 (where we propagated $x$ before we could decide $\bar{x}$) we can construct a $\asonr$-proof $\iota$ with  $|\iota|\in \mathcal{O}(n^3)$ that fulfils the desired properties.  Note that in each case we have followed the policy $\aso$. 
		
		The last part of Case 3 (where we actually decide or propagate $\bar{x}$) works slightly different. Here our decisions were $\{ \bar{\ell}_1 \}\cup \alpha_1\subseteq \overline{(C_1\vee x)}$, therefore we also apply Proposition \ref{PropMakeUnreliable} and construct a $\asonr$-proof $\iota'$ with  $\theta(\iota')=\mathcal{T}'_1,\ldots,\mathcal{T}'_{f_n'}$ from $\Phi$ of a clause $E'$ that uses $\xi$ such that $|\iota'|\in \mathcal{O}(n^3)$. If we have $E'=(\bot)$, we are done. If not, then 
		$C_1\vee x$ became unreliable with respect to $\mathcal{Q}\cdot (\phi\cup\{  \xi(\mathcal{T}'_1),\ldots,\xi(\mathcal{T}'_{f_n'}) \})$. Here we need the last statement of Proposition \ref{PropMakeUnreliable}: There is a trail $\mathcal{W}$ for $\mathcal{Q}\cdot (\phi\cup\{  \xi(\mathcal{T}'_1),\ldots,\xi(\mathcal{T}'_{f_n'}) \})$, an existential literal $\ell\in C_1\vee x$  with $\ell\in \mathcal{W}$ and decisions $\alpha\subseteq (\overline{C_1\vee x})\backslash\{ \bar{\ell} \}$ that follow the same order as the decisions of $\mathcal{T}$. Since in this order the literal $\bar{x}$ was always the last literal, we conclude $\bar{x}\not\in \alpha$ (otherwise we would not have a chance to achieve a situation where the decisions block each other). There are two remaining possibilities: If $\ell\neq x$, then $\mathcal{W}$ is a witness for the unreliability of $C_1$ (and therefore also $C_1\vee C_2$), which gives us the desired result. 
		
		However, if $\ell=x$, we can go back to Case 2 and use the trails $\mathcal{W}$ and $\mathcal{U}_2$. We create another $\asonr$-proof $\iota''$ with  $\theta(\iota'')=\mathcal{T}_1'',\ldots,\mathcal{T}''_{f''_n}$ from  $\mathcal{Q}\cdot (\phi\cup\{  \xi(\mathcal{T}'_1),\ldots,\xi(\mathcal{T}'_{f_n'}) \})$ of a clause $E$ that uses the learning scheme $\xi$ such that $|\iota''|\in \mathcal{O}(n^3)$. We can combine these two proofs $\iota'$ and $\iota''$ into a proof $\iota$ with $\theta(\iota)=\mathcal{T}'_1,\ldots,\mathcal{T}'_{f_n'},\mathcal{T}''_{1},\ldots,\mathcal{T}''_{f_n''}$ from $\Phi$ of the clause $E$ by connecting the components $\theta(\iota')$ with $\theta(\iota'')$, $\lambda(\iota')$ with $\lambda(\iota'')$ and $\rho(\iota')$ with $\rho(\iota'')$ such that $|\iota|\in \mathcal{O}(n^3)$. Between the trails $\mathcal{T}'_{f'_n}$ and $\mathcal{T}''_1$ we backtrack to the time $(0,0)$ (i.e., we restart the trail). If $E\neq (\bot)$, then $C_1\vee C_2$ became unreliable with respect to $\mathcal{Q}\cdot (\phi\cup\{  \xi(\mathcal{T}'_1),\ldots,\xi(\mathcal{T}'_{f_n'}),\xi(\mathcal{T}''_{1}),\ldots, \xi(\mathcal{T}''_{f''_n}) \})$.
	\end{proof}

	It remains to show a similar result for the reduction step.
	
	\begin{lem}\label{PropSimReduction}
		Let $\Phi:=\mathcal{Q}\cdot \phi$ be a QCNF in $n$ variables, let $D:=C\vee u_1\vee\ldots\vee u_m$ be a non-tautological clause with universal literals $u_1,\ldots,u_m$ and $\red(D)=C$, such that $D$ is unreliable with respect to a QCNF $\Psi=\mathcal{Q}\cdot \psi$ with $\psi\subseteq \phi$. 
		Let $\xi$ be an asserting learning scheme. If $C$ is reliable with respect to $\Phi$, there exists a $\asonr$-proof $\iota$ with  $\theta(\iota)=\mathcal{T}_1,\ldots,\mathcal{T}_{f_n}$ from $\Phi$ of some clause $E$ that uses the learning scheme $\xi$ such that $|\iota|\in \mathcal{O}(n^3)$. If $E\neq(\bot)$, then $C$ is unreliable with respect to $\mathcal{Q}\cdot (\phi\cup\{  \xi(\mathcal{T}_1),\ldots,\xi(\mathcal{T}_{f_n}) \})$.
	\end{lem}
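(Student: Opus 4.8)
The plan is to follow the template of Lemmas~\ref{LemmaSimAxiom} and~\ref{PropSimResolution}: decode the assumed unreliability of $D$ into a concrete witness trail, use it to produce a natural $\asonr$ trail $\mathcal{T}$ for $\Phi$ whose decision set is $\bar C$ and which, unless one can already read off unreliability of $C$, must run into a conflict, and then hand $\mathcal{T}$ to Proposition~\ref{PropMakeUnreliable}, which yields the desired $\asonr$-proof $\iota$ of size $\mathcal{O}(n^3)$ after which $C$ is unreliable or $E=(\bot)$.

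First I would unpack the hypothesis. By Definition~\ref{def:reliable} there are a $\asonr$ trail $\mathcal{U}$ for $\Psi$, an existential literal $\ell\in D$, and a decision set $\alpha\subseteq\bar D\setminus\{\bar\ell\}$ of $\mathcal{U}$ with $\ell\in\mathcal{U}$. Since $u_1,\ldots,u_m$ are universal, $\ell$ is in fact a literal of $C$, and $\bar D\setminus\{\bar\ell\}=(\bar C\setminus\{\bar\ell\})\cup\{\bar u_1,\ldots,\bar u_m\}$, so $\alpha$ splits as $\alpha_C\sqcup\alpha_U$ with $\alpha_C\subseteq\bar C\setminus\{\bar\ell\}$ and $\alpha_U\subseteq\{\bar u_1,\ldots,\bar u_m\}$. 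The structural fact I would exploit is that, by the very definition of $\red$, each $u_j$ is quantified to the right of every existential variable of $C$; moreover a universal literal that survives in $C=\red(D)$ is necessarily blocked by some existential variable of $C$, so \emph{every} variable occurring in $C$ sits at a strictly smaller quantifier level than every $u_j$. In particular, under the decision discipline $\aso$ a decision $\bar u_j\in\alpha_U$ can only be made after all decisions in $\alpha_C$, i.e.\ the $u_j$-decisions cluster at the tail of $\mathcal{U}$.

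The easy case is $\alpha_U=\emptyset$. Here I would construct, as in Lemma~\ref{LemmaSimAxiom}, a natural $\asonr$ trail $\mathcal{T}$ for $\Phi$ with level-ordered decision set $\bar C$ following Remark~\ref{RemConstructTrailWithDecisions}; reliability of $C$ guarantees the decisions do not block each other. If $\mathcal{T}$ did not reach a conflict we would have $\bar C\subseteq\mathcal{T}$, and applying Lemma~\ref{LemmaFindEveryPropagation} (with $\Psi$ in the role of the smaller formula and $\mathcal{T}$ in the role of the natural trail) every propagated literal of $\mathcal{U}$, in particular $\ell$, would lie in $\mathcal{T}$, contradicting $\bar\ell\in\mathcal{T}$. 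Hence $\mathcal{T}$ runs into a conflict, and Proposition~\ref{PropMakeUnreliable} applied to the clause $C$ finishes this case.

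The main obstacle is the case $\alpha_U\neq\emptyset$: then $\ell$ may become propagated in $\mathcal{U}$ only after some $\bar u_j$ has been decided, and a clause used for that propagation may genuinely mention $u_j$ (falsified by $\bar u_j$); read inside a trail that never assigns $u_j$ it restricts to something still containing $u_j$ and is not unit under $\nr$. Unlike the pivot $x$ in Case~3 of Lemma~\ref{PropSimResolution}, a universal $u_j$ can never be \emph{propagated}, so the ``$x$ is forced'' escape is unavailable. The plan is instead to use that the $u_j$-decisions sit at the tail of $\mathcal{U}$: either $\ell$ is already propagated in the prefix of $\mathcal{U}$ preceding every $\bar u_j$ decision --- in which case that prefix is a witness of unreliability of $D$ with decision set inside $\bar C\setminus\{\bar\ell\}$, reducing to the easy case --- or not, in which case I would, as in Case~3, defer the $\bar u_j$ decisions to the very end of the trail being built, apply Proposition~\ref{PropMakeUnreliable} first on the prefix $\bar C$, and then use its last statement (which preserves the decision order, and in particular keeps $\bar u_j$ last) to peel off the $u_j$-decisions one quantifier level at a time: each round either learns $(\bot)$ or, because $\bar u_j$ cannot be the last literal of a blocking configuration, strictly shrinks the set of $u_j$-decisions the current witness requires. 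Since $m\le n$, this terminates with $C$ unreliable, and concatenating the sub-proofs produced along the way (as in Lemma~\ref{PropSimResolution}, restarting between them) keeps $|\iota|\in\mathcal{O}(n^3)$. Making this peeling argument precise --- in particular checking that the asserting-learning bookkeeping of Proposition~\ref{PropMakeUnreliable} interacts correctly with the deferred $u_j$-decisions --- is where the real work lies.
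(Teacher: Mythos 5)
Your scaffolding (decoding the witness $\mathcal{U}$ for $D$, putting the universal decisions last, invoking Lemma~\ref{LemmaFindEveryPropagation} to force a conflict, then Proposition~\ref{PropMakeUnreliable}) is the right one, and your easy case $\alpha_U=\emptyset$ is essentially correct. But the hard case, which you yourself flag as ``where the real work lies'', is exactly the crux, and your plan for it has a genuine gap. The paper needs no peeling and no case split: it builds \emph{one} natural trail with decision set $\{\bar\ell\}\cup\alpha$ (so including $\alpha_U$, which is what makes the conflict argument via Lemma~\ref{LemmaFindEveryPropagation} go through --- with decisions only from $\bar C$ you cannot force a conflict, so ``apply Proposition~\ref{PropMakeUnreliable} first on the prefix $\bar C$'' is not well-founded), existential decisions level-ordered first and all $\bar u_j$ last, and applies Proposition~\ref{PropMakeUnreliable} once to $D$. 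The missing observation is then a one-shot order argument: if $E\neq(\bot)$, the returned witness $\mathcal{W}$ of $D$'s unreliability follows the same decision order, its blocking literal $y$ is existential (only existential literals are propagated), hence $\bar y\in\bar C$ is one of the existential decisions scheduled \emph{before} every $\bar u_j$; since $\var(y)$ would already be assigned once we reach the universal decisions, the accidental propagation of $y$ must occur before the first $\bar u_j$ is decided, so the witness's decisions satisfy $\beta\subseteq\bar C\setminus\{\bar y\}$ and $\mathcal{W}$ directly witnesses unreliability of $C$. In other words, the ``shrinkage'' you hope to achieve gradually happens totally in a single round.

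Your iterative peeling substitute is both unnecessary and problematic: each ``round'' would be a fresh application of Proposition~\ref{PropMakeUnreliable}, which needs its own conflicting natural trail (the blocking witness of the previous round does not conflict, so it cannot serve as the starting trail), you give no argument that the set of universal decisions a witness ``requires'' strictly shrinks, and with up to $m\in\Theta(n)$ rounds of size $\mathcal{O}(n^3)$ each the total would be $\mathcal{O}(n^4)$, breaking the claimed bound (in Lemma~\ref{PropSimResolution} only a constant number of proofs are concatenated). A further small inaccuracy: under $\aso$ existential decisions may be made at any time, so the $\bar u_j$-decisions of the \emph{given} witness $\mathcal{U}$ need not cluster at its tail; this claim is not load-bearing for your dichotomy, but the dichotomy itself is also unnecessary once the universals-last construction and the order argument above are in place.
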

	
	\begin{proof} Because $D$ is unreliable, we can find a literal $\ell\in D$ and a trail $\mathcal{U}$ for $\Psi$ with decisions $\alpha\subseteq \bar{D}\backslash\{ \bar{\ell}  \}$ and $\ell\in \mathcal{U}$. The literal $\ell$ has to be existential, hence $\ell\in C$.
		
		Now we choose the set $\{ \bar{\ell} \}\cup \alpha\subseteq \bar{D}$ as level-ordered decisions for the natural trail $\mathcal{T}$, i.e., we decide the literals in $(\{ \bar{\ell} \}\cup \alpha)\cap \{  \bar{u}_1,\ldots,\bar{u}_m \}$ at the end.  The decisions before    $(\{ \bar{\ell} \}\cup \alpha)\cap \{  \bar{u}_1,\ldots,\bar{u}_m \}$  cannot block each other since $C$ is reliable. After this we would only decide universal literals, which can not be propagated. Therefore, in this order, the decisions $\{ \bar{\ell} \}\cup \alpha$ are not blocking each other. If $\mathcal{T}$ does not run into a conflict, we can at least do the same propagations as in $\mathcal{U}$ by Lemma \ref{LemmaFindEveryPropagation}. Then we would get the contradiction $\ell,\bar{\ell}\in \mathcal{T}$.
		
		Now we can apply Proposition \ref{PropMakeUnreliable} and construct a proof $\iota$ with  $\theta(\iota)=\mathcal{T}_1,\ldots,\mathcal{T}_{f_n}$ from $\Phi$ of a clause $E$ such that $|\iota|\in \mathcal{O}(n^3)$. If $E\neq (\bot)$, then $D$ is unreliable with respect to $\mathcal{Q}\cdot (\phi\cup\{  \xi(\mathcal{T}_1),\ldots,\xi(\mathcal{T}_{f_n}) \})$. In this case there is a trail $\mathcal{W}$ for $\mathcal{Q}\cdot (\phi\cup\{  \xi(\mathcal{T}_1),\ldots,\xi(\mathcal{T}_{f_n}) \})$, an existential literal $y\in D$ with $y\in \mathcal{W}$ and decisions $\beta\subseteq \bar{D}\backslash\{  \bar{y} \}$ that follows the same order as the decisions of $\mathcal{T}$ (level-order). Since $y$ is existential, we have $y\in C$. The blocking situation in $\mathcal{W}$ had to occur before we could decide $\bar{u}_1,\ldots,\bar{u}_m$. We conclude $\beta\subseteq \bar{C}\backslash \{  \bar{y} \}$ and thus $C$ is unreliable with respect to $\mathcal{Q}\cdot (\phi\cup\{  \xi(\mathcal{T}_1),\ldots,\xi(\mathcal{T}_{f_n}) \})$.
	\end{proof}
	
	Now we can combine these three auxiliary results into the main theorem of this section.
	
	\begin{thm}\label{TheoremSimComplete}
		$\asonr$ p-simulates \qres. I.e., each \qres refutation $\pi$ of a QCNF in $n$ variables can be transformed into a $\asonr$-refutation of size $\mathcal{O}(n^3\cdot |\pi|)$ that uses an arbitrary asserting learning scheme $\xi$.
		
		In particular, \qres, $\aonr$ and $\asonr$ are p-equivalent proof systems.
		
	\end{thm}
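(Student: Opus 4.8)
The plan is to establish the only missing direction, that $\asonr$ p-simulates \qres; everything else then follows formally. Indeed, $\asonr$ is the special case of $\aonr$ in which the decision policy $\ao$ is restricted to $\aso$, so every $\asonr$-refutation is also an $\aonr$-refutation, giving $\asonr\leq_p\aonr$; and $\aonr\leq_p\qres$ by Theorem~\ref{TheoremSystemsSimulatedByQRes}. Hence once we have $\qres\leq_p\asonr$ we get the cycle $\qres\leq_p\asonr\leq_p\aonr\leq_p\qres$, so the three systems are p-equivalent.

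For the simulation itself I would argue by induction along a given \qres refutation $\pi=(C_1,\ldots,C_m)$ of $\Phi=\mathcal{Q}\cdot\phi$. After a harmless normalisation I may assume $(\bot)\notin\phi$ and that $\pi$ contains no non-empty purely universal clause (such a clause reduces to $(\bot)$ in a single step, so one could truncate $\pi$ there), which in particular forces $C_m=(\bot)$ to be the resolvent of two unit clauses $(x)$, $(\bar x)$. The invariant to maintain is: after processing $C_1,\ldots,C_i$ there is a $\asonr$-proof $\iota^{(i)}$ from $\Phi$, using the fixed asserting learning scheme $\xi$, of size $\mathcal{O}(i\cdot n^3)$, such that \emph{either} $\iota^{(i)}$ already derives $(\bot)$ (and we stop), \emph{or} every clause among $C_1,\ldots,C_i$ is unreliable with respect to $\mathcal{Q}\cdot(\phi\cup\Lambda_i)$, where $\Lambda_i$ is the set of clauses learned in $\iota^{(i)}$. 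A one-line monotonicity observation underpins this: unreliability is preserved when clauses are added to the matrix, since a witnessing $\asonr$ trail (which need not be natural) remains a valid witnessing trail; hence $C_1,\ldots,C_{i-1}$ stay unreliable as $\Lambda_{i-1}$ grows to $\Lambda_i$. In the inductive step I treat $C_i$ according to its derivation rule: an axiom calls Lemma~\ref{LemmaSimAxiom}; a resolvent $C_i=C_j\resop{x}C_k$ calls Lemma~\ref{PropSimResolution}, taking the ambient QCNF of the lemma to be $\mathcal{Q}\cdot(\phi\cup\Lambda_{i-1})$ and using that $C_j,C_k$ are already unreliable with respect to it; a reduction $C_i=\red(C_j)$ calls Lemma~\ref{PropSimReduction}. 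Each lemma returns a $\asonr$-proof of size $\mathcal{O}(n^3)$ that either derives $(\bot)$ or makes $C_i$ unreliable with respect to the further-augmented formula; I append it to $\iota^{(i-1)}$, restarting between the two blocks (permitted by Definition~\ref{DefinitionQCDCLrefutation}), to obtain $\iota^{(i)}$. When we reach $C_m=(\bot)$, Lemma~\ref{PropSimResolution} must output a proof of $(\bot)$ itself, because $(\bot)$ contains no existential literal and therefore can never be unreliable. Summing the $m$ blocks yields total size $\mathcal{O}(n^3\cdot|\pi|)$, and since Proposition~\ref{PropMakeUnreliable} and the three lemmas are constructive (each involving only $\mathcal{O}(n^2)$ backtracks), the map $\pi\mapsto\iota$ is polynomial-time computable, which is exactly p-simulation.

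One point that needs care, and which I would address head-on, is that Lemmas~\ref{LemmaSimAxiom}, \ref{PropSimResolution} and \ref{PropSimReduction} all assume the target clause is reliable at the outset, whereas in the induction we cannot verify this for free. I would handle it exactly as signalled in the paragraph preceding Lemma~\ref{LemmaSimAxiom}: attempt the prescribed trail construction anyway; if at some point the prescribed decisions block each other, that trail is itself a witness of unreliability of $C_i$ and the step is complete, and otherwise the lemma's construction goes through verbatim. The genuinely hard work — in particular the nested case analysis on the pivot in the resolution step — is already encapsulated in Lemma~\ref{PropSimResolution}, which I am entitled to assume here; within the proof of the theorem the only delicate bookkeeping is keeping track, for each clause of $\pi$, of the precise QCNF (the original matrix together with all clauses learned so far) relative to which it has been made unreliable, and checking that this matches the hypotheses of the three lemmas when the blocks are chained together.
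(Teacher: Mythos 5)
Your proposal is correct and follows essentially the same route as the paper's proof: process the clauses of $\pi$ in order, use Lemmas~\ref{LemmaSimAxiom}, \ref{PropSimResolution} and \ref{PropSimReduction} (each built on Proposition~\ref{PropMakeUnreliable}) to make each clause unreliable with respect to the matrix augmented by the learned clauses, handle the unverifiable reliability hypothesis by attempting the construction and taking any blocking situation as a witness, conclude at $(\bot)$ because the empty clause can never be unreliable, and obtain the p-equivalence from Theorem~\ref{TheoremSystemsSimulatedByQRes} together with the trivial simulation $\asonr\leq_p\aonr$. The only blemish is your preliminary normalisation: truncating $\pi$ at a non-empty purely universal clause does not remove that clause, so you cannot assume $(\bot)$ is obtained by resolving two unit clauses; however, this assumption is unnecessary, since a reduction-derived $(\bot)$ is handled by Lemma~\ref{PropSimReduction} by exactly the same ``cannot become unreliable'' argument, as in the paper.
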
	
	\begin{proof} 
		First we show that $\asonr$ simulates \qres. 
		Let $\Phi:=\mathcal{Q}\cdot \phi$ be a QCNF in $n$ variables with a \qres refutation $\pi=C_1,\ldots,C_k$. We will go through this proof from left to right and check whether or not the clauses are reliable with respect to the corresponding current QCNF.  Suppose that $C_1,\ldots,C_{i-1}$ are already unreliable with respect to QCNFs $\mathcal{Q}\cdot \phi_1,\ldots,\mathcal{Q}\cdot \phi_{i-1}$ with $\phi \subseteq \phi_1 \subseteq\ldots\subseteq\phi_{i-1}$. If $C_i$ is unreliable with respect to $\mathcal{Q}\cdot \phi_{i-1}$, then we can set $\phi_i:=\phi_{i-1}$ (let $\phi_0:=\phi$), $\iota_i:=(\emptyset,\emptyset,\emptyset)$ (empty proof) and continue with $C_{i+1}$.
		
		However, if $C_i$ is reliable with respect to $\mathcal{Q}\cdot \phi_{i-1}$, then we can apply either Lemma \ref{LemmaSimAxiom} or Lemma~\ref{PropSimResolution} or Lemma~\ref{PropSimReduction}, depending whether $C_i$ is an axiom, or was derived via resolution or reduction. We construct a $\asonr$-proof $\iota_i$ with  $\theta(\iota_i)=\mathcal{T}_1^{(i)},\ldots,\mathcal{T}_{f^{(i)}_n}^{(i)}$ of a clause $E_i$ from $\mathcal{Q}\cdot ( \phi_{i-1}\cup\{  \xi(\mathcal{T}_1^{(i)}),\ldots,\xi(\mathcal{T}_{f^{(i)}_n}^{(i)}) \}  )$ with $|\iota_i|\in \mathcal{O}(n^3)$ that uses the learning scheme $\xi$. If $E_i=(\bot)$, we are done and connect the components of the proofs $\iota_1,\ldots,\iota_i$, in particular $\theta(\iota)=\theta(\iota_1),\ldots,\theta(\iota_i)$. Otherwise $C_i$ became unreliable with respect to $\mathcal{Q}\cdot ( \phi_{i-1}\cup\{  \xi(\mathcal{T}_1^{(i)}),\ldots,\xi(\mathcal{T}_{f^{(i)}_n}^{(i)}) \}  )$. In this case we set $\phi_i:=\phi_{i-1}\cup\{  \xi(\mathcal{T}_1^{(i)}),\ldots,\xi(\mathcal{T}_{f^{(i)}_n}^{(i)}) \}$ and continue with the next clause $C_{i+1}$. 
		
		At the latest when we reach $C_k=(\bot)$ in $\pi$, we will create a refutation $\iota$ since $(\bot)$ can never become unreliable by definition. The $\asonr$-refutation $\iota$ consists of some $\asonr$-proofs $\iota_1,\ldots,\iota_j$, $1\leq j\leq k$. Between two proofs $\iota_a$ and $\iota_{a+1}$ we will just restart, i.e., between the trails $\mathcal{T}_{f_n^{(a)}}^{(a)}$ and $\mathcal{T}_1^{(a+1)}$ we will backtrack to the point $(0,0)$. At the end we have $|\iota|\in \mathcal{O}(n^3\cdot |\pi|)$.
		
		We have now shown \qres$\leq \asonr$. In order to actually prove that this simulation is polynomial-time computable, one should actually argue that the described construction steps are in fact polynomial-time computable. However, we needed to decide whether or not a clause in the given proof is reliable. This might not be computable in polynomial time. Alternatively, we can pretend that a clause $C$ is reliable unless we have found a witness that proves the opposite. We can do the exact same steps as described in the above results. In detail, we can still try to create the trail $\mathcal{T}$ from Lemma~\ref{LemmaSimAxiom}, Lemma~\ref{PropSimResolution} or Lemma~\ref{PropSimReduction}. If the decisions do not block each other, we proceed as if the clause $C$ was reliable. Otherwise, we immediately receive a witness for our unreliability, even if we were not able to take the steps as described.
		Therefore  $\asonr$ p-simulates \qres.
		
		By Theorem \ref{TheoremSystemsSimulatedByQRes} the systems $\aonr$ and $\asonr$ are p-simulated by \qres. Obviously, $\asonr$ is p-simulated by $\aonr$. As a consequence, all of these three systems are p-equivalent.	
	\end{proof}

	\section{Comparison to the correspondence between\texorpdfstring{\\}{ }propositional re\-solution and CDCL}
	\label{sec:prop-cdcl}

	There are a few similarities between our definitions of  \emph{unreliable}/\emph{reliable} and the notions \emph{1-empowering}/\emph{absorbed} used in \cite{DBLP:journals/ai/PipatsrisawatD11}. Pipatsrisawat and Darwiche  \cite{DBLP:journals/ai/PipatsrisawatD11} focused on propositional logic and CNFs, so let us for this section restrict our attention to the propositional case (e.g.\ by considering only QCNFs with existential quantifiers) and recall the definition that was used in the work of Pipatsrisawat and Darwiche.
	
	\begin{defiC}[\cite{DBLP:journals/ai/PipatsrisawatD11}] \label{def:1-empowering}
		Let $\alpha \Rightarrow \ell$ be a clause where $\ell$ is some literal and $\alpha$ is a conjunction of literals. The clause is \emph{1-empowering} with respect to CNF $\Delta$ iff
		\begin{enumerate}
			\item $\Delta \vDash (\alpha \Rightarrow \ell)$: the clause is implied by $\Delta$.
			\item $\Delta \wedge \alpha$ is 1-consistent: asserting $\alpha$ does not result in a conflict that is detectable by unit resolution.
			\item $\Delta \wedge \alpha \nvdash_1\ell$: the literal $\ell$ cannot be derived from $\Delta\wedge \alpha$ using unit resolution.
		\end{enumerate}
		In this case, $\ell$ is called an \emph{empowering literal} of a clause. On the other hand, a clause implied by $\Delta$ that contains no empowering literals is said to be \emph{absorbed} by $\Delta$.
		
	\end{defiC}
	
	For further details concerning the notations see \cite{DBLP:journals/ai/PipatsrisawatD11}. 
	
	Translating the third point into our framework could lead to the following interpretation: $\ell$ cannot be propagated in a trail for $\Delta$ with decisions $\alpha$. This is related to our definition of reliability. Simply put, for reliability we require that no literal in the clause is derivable by unit propagation. In Definition~\ref{def:1-empowering}, however, it suffices to find at least one literal $\ell$ with this property.  
	
	Consequently the differences between `unreliable' and `absorbing' could be formulated as follows: for a clause $C$ to be unreliable, we need at least one literal $\ell\in C$ such that $\ell$ is `accidentally' propagated in a trail with decisions contained in $\bar{C}\setminus\{\bar{\ell}\}$. In an absorbed clause, on the other hand, each literal has to be propagated accidentally in this way. 
	
	In \cite{DBLP:journals/ai/PipatsrisawatD11} this difference in definition caused an additional factor $n$ in the complexity of the \cdcl simulation of resolution. Roughly speaking, their idea consists of searching for 1-empowering (and 1-provable) clauses in a given \textsf{resolution} refutation $\pi$ of a CNF $\Delta$. They described how these clauses get absorbed after $\mathcal{O}(n^4)$ \cdcl steps, where the last $n$-factor is incurred by the fact that they have to handle all empowering literals, not just one as in our results. The remaining factor $n^3$ can be explained in a similar way as in Proposition \ref{PropMakeUnreliable} (cf.\ \cite[Prop.~3]{DBLP:journals/ai/PipatsrisawatD11}).
	
	Consequently, we obtain a slight quantitative improvement of the simulation of resolution by CDCL \cite{DBLP:journals/ai/PipatsrisawatD11} from $\mathcal{O}(n^4|\pi|)$ to $\mathcal{O}(n^3|\pi|)$.
	\begin{thm} \label{thm:cdcl-res}
		Let $\phi$ be a CNF in $n$ variables and let $\pi$ be a resolution refutation of $\phi$. Then $\phi$ has a CDCL refutation of size $\mathcal{O}(n^3 |\pi|)$.
	\end{thm}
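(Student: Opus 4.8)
The plan is to obtain Theorem~\ref{thm:cdcl-res} as the propositional specialisation of Theorem~\ref{TheoremSimComplete}. First I would regard a CNF $\phi$ in $n$ variables as the purely existential QCNF $\Phi=\exists x_1\cdots x_n\cdot\phi$ consisting of a single quantifier block. Under such a prefix there are no universal variables, so universal reduction is vacuous ($\red(C)=C$ for every clause $C$); consequently \qres collapses exactly to propositional \resolution, the unit-propagation policies $\ar$ and $\nr$ coincide, and the decision policies $\lo$, $\aso$, $\asro$, $\ao$ all become equivalent (there is nothing to constrain and only one level). In particular $\asonr$ restricted to such $\Phi$ is precisely the standard \cdcl calculus in the sense of Definition~\ref{DefinitionQCDCLrefutation}: trails are built from arbitrary decisions followed by unit propagation, conflicts trigger resolution-based learning of asserting clauses, and backtracking/restarting are as defined there.

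Second, I would invoke Theorem~\ref{TheoremSimComplete}: from a \qres refutation $\pi$ of a QCNF in $n$ variables it constructs a $\asonr$ refutation $\iota$ with $|\iota|\in\mathcal{O}(n^3\cdot|\pi|)$ using any fixed asserting learning scheme $\xi$. Applying this to $\Phi=\exists x_1\cdots x_n\cdot\phi$ and a \resolution refutation $\pi$ of $\phi$ (which is a \qres refutation of $\Phi$), the output $\iota$ is, by the first paragraph, a \cdcl refutation of $\phi$, and $|\iota|\in\mathcal{O}(n^3|\pi|)$. This is the claimed bound, and it moreover holds for an arbitrary asserting learning scheme, matching practical \cdcl.

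The only point requiring genuine care is checking that the constructions underlying Theorem~\ref{TheoremSimComplete} stay inside the \cdcl fragment when every variable is existential. I would verify this lemma by lemma: Lemma~\ref{PropSimReduction} becomes trivial (the reduction case never occurs); Lemmas~\ref{LemmaSimAxiom} and~\ref{PropSimResolution} only ever use existential decisions and $\nr$-propagation, hence produce legitimate \cdcl trails; the notions \emph{reliable}/\emph{unreliable} specialise to the propositional ones; and the counting argument of Proposition~\ref{PropMakeUnreliable} bounding the number of backtracks by $\mathcal{O}(n^2)$ is unchanged, giving the $n^3$ factor. I would then add, as in the surrounding discussion, that this recovers the simulation of \cite{DBLP:journals/ai/PipatsrisawatD11} with the factor improved from $n^4$ to $n^3$, the saving stemming from the fact that unreliability only demands a single accidentally propagated literal, whereas absorption requires all empowering literals to be handled. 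I do not expect any genuinely new obstacle here — the heavy lifting is already done in Theorem~\ref{TheoremSimComplete}, and what remains is bookkeeping to confirm that the propositional shadow of $\asonr$ is exactly \cdcl.
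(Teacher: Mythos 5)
Your proposal is correct and follows essentially the same route as the paper: Theorem~\ref{thm:cdcl-res} is obtained there precisely by restricting to purely existential QCNFs, so that \qres collapses to \resolution and the reliability-based simulation of Theorem~\ref{TheoremSimComplete} (with the $\mathcal{O}(n^2)$ backtracking bound of Proposition~\ref{PropMakeUnreliable} giving the $n^3$ factor) yields the CDCL refutation, improving the $\mathcal{O}(n^4|\pi|)$ bound of Pipatsrisawat and Darwiche for exactly the reason you state. Your lemma-by-lemma check that the construction stays in the propositional fragment is the same bookkeeping the paper implicitly relies on, so no gap remains.
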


	An advantage of the `1-empowering/absorbed' notion is the simplification when it comes to cover the resolution steps in the given proof $\pi$. In the proof of Lemma~\ref{PropSimResolution} we had to distinguish three cases depending on the literal that witnessed the unreliability. However, this is not necessary when using the definition of `absorption'. Since in this case all literals from a clause shall be propagated accidentally, we can pick an arbitrary literal that simplifies the following reasoning steps. In fact,  it then suffices to consider Case 1 in Lemma \ref{PropSimResolution}. 
	
	Furthermore, in \cite{DBLP:journals/ai/PipatsrisawatD11} the authors refrained from using the concept of trails or introducing algorithm-based proof systems. Instead, they relied on the notions of unit resolution, 1-consistency, and 1-provability. However, these notions cannot be fully translated  into our framework that enables the construction of trails as it is done in practical CDCL. For example, consider the following:
	
	\smallskip
	Let $\Delta:=(\bar{x}_1\vee\bar{x}_2\vee \bar{x}_3)$ be a CNF consisting only of one clause. Then clearly this CNF together with decisions $x_1,x_2,x_3$ is 1-inconsistent (one could think of 1-inconsistent \cdcl states, that are in the form of $\Delta\wedge \ell_1\wedge \ldots\wedge \ell_m$ with a CNF $\Delta$ and decision literals $\ell_1,\ldots,\ell_m$, as formulas which are refutable via unit propagation). However, this conflict cannot occur in practice when creating trails as described in our work (i.e., in a natural sense). We would obtain a situation where the decisions block each other:
	\begin{align*}
	\mathcal{T}= (\mathbf{x_1};\mathbf{x_2},\bar{x}_3 )\enspace.
	\end{align*}
	This blocking can only be resolved by skipping the propagation $\bar{x}_3$. Of course, we can observe this artificial conflict by our notion of trails as well:
	\begin{align*}
	\mathcal{T}'= (\mathbf{x_1};\mathbf{x_2};\mathbf{{x}_3},\bot )\enspace.
	\end{align*}
	But since we want to avoid this inherent kind of non-determinism (cf.\ also \cite{DBLP:journals/jair/BeameKS04}), we defined our proof systems in such a way as  to circumvent these situations. As a consequence, it is impossible for the trail $\mathcal{T}'$ to appear in a proof under one of our QCDCL systems.  This distinction between non-deterministic, more liberal systems and algorithm-based, stricter versions of these systems seems in some way harder to clarify with the notions used by Pipatsrisawat and Darwiche.

	\section{The simulation order of QCDCL proof systems}
	\label{sec:sim-order}
	
	Now that we characterised the complexity of classical systems, such as \qcdcl and \qres, we want to examine the connections between the remaining QCDCL systems that we have not fully considered yet. We refer again to Figure~\ref{fig:qcdcl-sim-order} on page~\pageref{fig:qcdcl-sim-order}, depicting the resulting simulation order.
	
	First we define the formulas $\Lon_n$ that were introduced by Lonsing in \cite{LonsingDissertation}. Originally, these QCNFs were constructed to separate QBF solvers that differ in the implemented dependency schemes (we will not consider these concepts here, though). 
	
	\begin{defiC}[\cite{LonsingDissertation}]
		Let $\Lon_n$ be the QCNF
		\begin{align*}
		\exists a,b,b_1,\ldots,b_{s_n}\forall x,y\exists c,d\cdot &(a\vee x\vee c)\wedge(a\vee b\vee b_1\vee \ldots\vee b_{s_n})\wedge (b\vee y\vee d)\wedge (x\vee c)\\
		& \wedge (x\vee \bar{c})\wedge\mathtt{PHP}_n^{n+1}(b_{1},\ldots,b_{s_n})\enspace.
		\end{align*}
	\end{defiC}
	
	It was shown in \cite{LonsingDissertation} that this formula becomes easy to refute by choosing the standard dependency scheme. However, $\Lon_n$ serves as a witness for separating our systems as well.
	
	
	\begin{prop}\label{PropLonsingsFormula}
		The QCNFs $\Lon_n$ require exponential-size proofs in the proof systems $\loar$ and $\lonr$, but have constant-size proofs in  $\asroar$ and \qres.
	\end{prop}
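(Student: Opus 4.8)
The plan is to split the four systems into the ``easy'' pair (\qres, $\asroar$), for which I would exhibit explicit constant-size refutations, and the ``hard'' pair ($\loar$, $\lonr$), for which I would run a projection-to-$\mathtt{PHP}$ argument in the spirit of Proposition~\ref{PropTrapdoorHardForQCDCL}. Both short refutations only touch the sub-formula $(x\vee c)\wedge(x\vee\bar c)$. In \qres, resolving these two axioms over $c$ yields $(x)$, which has no existential literal, so one reduction step gives $(\bot)$ — a four-line refutation. In $\asroar$, the policy $\asro$ lets us decide the universal variable $x$ first (it constrains only decisions of \emph{existential} variables, and at the empty assignment no clause is unit under $\ar$); deciding $\bar x$ makes both $x\vee c$ and $x\vee\bar c$ unit, producing the natural trail $\mathcal{T}=(\mathbf{\bar x},c,\bot)$, and the learnable-clause sequence of $\mathcal{T}$ returns $\red\big((x\vee\bar c)\resop{c}(x\vee c)\big)=\red((x))=(\bot)$. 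Hence the single trail $\mathcal{T}$ already constitutes an $\asroar$ refutation of constant size.

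For the lower bound in $\loar$ and $\lonr$ I would argue structurally about trails. Since the decision policy is $\lo$, no level-$2$ variable ($x$ or $y$) can be decided until every level-$1$ variable ($a,b,b_1,\dots,b_{s_n}$) is assigned, and the level-$3$ variables ($c,d$) can only be propagated once $x$, respectively $y$, has been assigned. While $x,y,c,d$ remain untouched the current assignment lives entirely on the level-$1$ variables, and I would verify by inspection that, in either system, before some $\mathtt{PHP}_n^{n+1}$-clause is falsified none of the other axioms can become unit or get falsified: after restriction, each of $a\vee x\vee c$, $b\vee y\vee d$, $x\vee c$, $x\vee\bar c$ still contains the universal literal $x$ (resp.\ $y$) together with the existential $c$ (resp.\ $d$) that blocks its reduction, so it is not unit; and the long clause $a\vee b\vee b_1\vee\dots\vee b_{s_n}$ can turn unit or empty only after so many $b_i$ are set to $0$ that some pigeon has all its variables falsified, i.e.\ only after a pigeonhole conflict. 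As $\mathtt{PHP}_n^{n+1}$ is unsatisfiable, a level-$1$ assignment can never be completed conflict-free, and since trails must report a conflict as soon as one appears, every trail of every $\loar$ (or $\lonr$) refutation of $\Lon_n$ runs into a conflict with $x,y,c,d$ still unassigned, and every antecedent clause and every conflict clause used is a $\mathtt{PHP}_n^{n+1}$-clause. It follows that every learned clause is a clause over $b_1,\dots,b_{s_n}$ derived by resolution from $\mathtt{PHP}_n^{n+1}$, and the long clause and the $x,y,c,d$-clauses never occur in any of the derivations $\pi_i$.

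To finish, let $\iota$ be such a refutation. By Theorem~\ref{TheoremSystemsSimulatedByQRes}, $\iota$ yields a \ldqres (for $\loar$), resp.\ \qres (for $\lonr$), refutation $\pi$ of $\Lon_n$ with $|\pi|\in\mathcal{O}(|\iota|)$, obtained by pasting the $\pi_i$; by the previous paragraph $\pi$ uses only $\mathtt{PHP}_n^{n+1}$-clauses as axioms from the matrix and contains no universal literal, so every reduction step in $\pi$ is vacuous. Deleting these leaves a \resolution refutation of $\mathtt{PHP}_n^{n+1}$ of size $\mathcal{O}(|\iota|)$, which is $2^{\Omega(n)}$ by Haken (Proposition~\ref{PropHakenPHP}); hence $|\iota|\in 2^{\Omega(n)}$. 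I expect the one delicate point to be the case analysis in the second paragraph — especially the claim about $a\vee b\vee b_1\vee\dots\vee b_{s_n}$, which rests on the observation that making it unit forces all but one of the $b_i$ to $0$ and thus strands at least one pigeon. The rest is routine and closely parallels the $\mathtt{Trapdoor}_n$ argument, but is in fact cleaner here since no restriction/renaming step is needed: the offending axioms simply never enter the extracted proof.
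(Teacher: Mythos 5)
Your proposal is correct and follows essentially the same route as the paper's proof: the same short \qres refutation, the identical $\asroar$ trail $(\mathbf{\bar{x}},c,\bot)$ learning $(\bot)$, and for $\loar$/$\lonr$ the same observation that level-ordered decisions confine every trail to the $\mathtt{PHP}_n^{n+1}$ part (the long clause and the $x,y,c,d$-clauses never become unit or falsified first), so the pasted derivations form a \resolution refutation of $\mathtt{PHP}_n^{n+1}$ and Haken's lower bound applies. Your more explicit case analysis of the long clause simply spells out what the paper states in one line.
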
	
	\begin{proof} With the policy $\lo$ we are forced to start assigning the variables $a,b,b_1,\ldots,b_{s_n}$. As long as we do this, we can only use the clauses from $(a\vee b\vee b_1\vee \ldots\vee b_{s_n})\wedge \mathtt{PHP}_n^{n+1}(b_{1},\ldots,b_{s_n})$  as antecedent clauses. Since $\mathtt{PHP}_n^{n+1}$ contains subclauses of $b_1\vee \ldots\vee b_{s_n}$, we do not need the clause $a\vee b\vee b_1\vee \ldots\vee b_{s_n}$ either. The propositional formula $\mathtt{PHP}_n^{n+1}$ is unsatisfiable, therefore we will falsify this formula before reaching the decisions $x$ and $y$. We will always learn clauses $C$ whose \ldqres proofs consist only of axioms from $\mathtt{PHP}_n^{n+1}$. These proofs are in fact \resolution proofs because the contained clauses do not include any universal variables. At the end we obtain a \qcdcl-proof $\iota$ (in the system $\loar$ or $\lonr$) that contains a \resolution refutation of $\mathtt{PHP}_n^{n+1}$, which is exponential by  \cite{Hak85}. Hence also $|\iota|$ is exponential.

		We obtain short \qres refutations of $\Lon_n$ by $\red\left( x\vee c\resop{c} x\vee \bar{c}   \right)=(\bot)$. 
		The following trail yields a short $\asroar$ refutation of $\Lon_n$:
		\begin{align*}
		\mathcal{T}=(\mathbf{\bar{x}},c,\bot)\text{.}
		\end{align*}
		From this we can learn $(\bot)$ in the same way as in the \qres refutation.
	\end{proof}

	Next we want to compare the policies $\ar$ and $\nr$ when fixing the decision policy $\lo$. As we indicated before, these two policies seem to operate orthogonally to each other. We will prove this intuition now, again using the $\mathtt{Trapdoor}_n$ formulas.
	
	\begin{prop}\label{PropTrapdoorEasyForLONR}
		The QCNFs $\mathtt{Trapdoor}_n$ have polynomial-size $\lonr$ refutations.
	\end{prop}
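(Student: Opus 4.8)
The plan is to write down an explicit $\lonr$ refutation of $\mathtt{Trapdoor}_n$ that uses only two trails and, exactly as anticipated in the discussion preceding the proposition, never activates the embedded $\mathtt{PHP}_n^{n+1}$. The decisive point is that under the propagation policy $\nr$ no universal reduction is allowed during propagation, so as long as only $y$-variables (and the two unit clauses we will learn) are assigned, none of the axioms $\bar{y}_i\vee x_i\vee u$, $y_i\vee\bar{x}_i\vee u$ and none of the $\mathtt{PHP}_n^{n+1}$ clauses (for $n\ge 2$ these have width $\ge 2$) ever become unit. Meanwhile the decision policy $\lo$ forces all level-$1$ variables $y_1,\ldots,y_{s_n}$ to be decided — in some order, but with polarities of our choosing — before the level-$2$ variable $w$ can be touched, and after assigning $y$-variables no unit propagation is available at all. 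Hence the run is steered into an immediate conflict among the clauses $y_i\vee w\vee t$, $y_i\vee w\vee\bar{t}$, $\bar{y}_i\vee w\vee t$, $\bar{y}_i\vee w\vee\bar{t}$, and the hard part of the formula is bypassed.

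Concretely, take $\mathcal{T}_1=(\mathbf{y_1};\ldots;\mathbf{y_{s_n}};\mathbf{\bar{w}},t,\bot)$: decide all $y_i$ positively and then $\bar{w}$; at this point $\bar{y}_1\vee w\vee t$ is the unit clause $(t)$, so $t$ propagates with $\ante(t)=\bar{y}_1\vee w\vee t$, and then $\bar{y}_1\vee w\vee\bar{t}$ is falsified, so $\ante(\bot)=\bar{y}_1\vee w\vee\bar{t}$. The learning chain takes $\red(\bar{y}_1\vee w\vee\bar{t})=\bar{y}_1\vee w\vee\bar{t}$ (the literal $w$ cannot be reduced, as $\lv(w)=2<3=\lv(\bar t)$), resolves it over $t$ with $\red(\ante(t))=\bar{y}_1\vee w\vee t$ to obtain $\bar{y}_1\vee w$, and then reduces $w$ — legal since $\lv(w)=2>1=\lv(y_1)$ — to the unit clause $(\bar{y}_1)$; all remaining steps of the chain are skips since only decisions are left, so $(\bar{y}_1)$ is the last clause of $\mathcal{L}_{\mathcal{T}_1}$. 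We learn $C_1:=(\bar{y}_1)$ and restart. For the second trail take $\mathcal{T}_2=(\bar{y}_1;\mathbf{y_2};\ldots;\mathbf{y_{s_n}};\mathbf{\bar{w}},t,\bot)$: now $(\bar{y}_1)$ propagates $\bar{y}_1$ at level $0$, after which we decide $y_2,\ldots,y_{s_n}$ positively and then $\bar{w}$; now $y_1\vee w\vee t$ is the unit clause $(t)$, so $t$ propagates with $\ante(t)=y_1\vee w\vee t$, and $y_1\vee w\vee\bar{t}$ is falsified, giving $\ante(\bot)=y_1\vee w\vee\bar{t}$. The learning chain resolves $y_1\vee w\vee\bar{t}$ over $t$ with $y_1\vee w\vee t$ to get $y_1\vee w$, reduces $w$ to $(y_1)$, passes the decision levels $s_n,\ldots,1$ unchanged, and finally — at the level-$0$ propagation $\bar{y}_1$ — resolves $(y_1)$ with $\red(\ante(\bar{y}_1))=(\bar{y}_1)$ over $y_1$ to obtain $(\bot)$. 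Thus $(\bot)\in\mathcal{L}_{\mathcal{T}_2}$ and we learn $C_2:=(\bot)$, so $\iota=((\mathcal{T}_1,\mathcal{T}_2),((\bar{y}_1),(\bot)),(\pi_1,\pi_2))$ is a $\lonr$ refutation, with $\pi_1,\pi_2$ legitimate (indeed $\qres$) derivations by Lemma~\ref{LemmaNoTautologies}. Since $|\mathcal{T}_1|,|\mathcal{T}_2|\in\mathcal{O}(n^2)$ we get $|\iota|\in\mathcal{O}(n^2)$.

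There is no real difficulty here, only bookkeeping to confirm. One checks that both $\mathcal{T}_1$ and $\mathcal{T}_2$ are natural in the sense of Definition~\ref{DefNaturalTrails} (no available unit propagation is skipped — in particular nothing propagates while only $y$-variables are assigned and $w$ is still unset, and once $t$ is propagated the conflict on $\bar{t}$ has priority), that the backtracking requirement of Definition~\ref{DefinitionQCDCLrefutation} is satisfied by restarting to $(0,0)$ between the two trails, and that the learning sequence of $\mathcal{T}_2$ genuinely reaches $(\bot)$ — which rests on the level-$0$ propagation of $\bar{y}_1$ being resolved against at the very end of the chain. The one conceptual point worth emphasizing is the trapdoor mechanism itself: under $\nr$ the $\mathtt{PHP}$-part and the clauses $\bar{y}_i\vee x_i\vee u$ are inert along these trails, whereas under $\loar$ the literal $u$ can be reduced, which forces propagations of $x$-variables and thus drives the run into a conflict inside $\mathtt{PHP}_n^{n+1}$ (this is precisely Proposition~\ref{PropTrapdoorHardForQCDCL}). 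Together with that proposition, the present result establishes the claimed orthogonality of the propagation policies $\ar$ and $\nr$ under the decision policy $\lo$.
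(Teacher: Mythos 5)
Your proposal is correct and follows essentially the same route as the paper's proof: two trails, the first deciding the $y$-variables and $\bar w$ to learn the unit clause $(\bar y_1)$ via a resolution over $t$ followed by a reduction of $w$, then a restart and a second trail propagating $\bar y_1$ at level $0$ and repeating the pattern to learn $(\bot)$. The only differences are cosmetic (polarities of the irrelevant decisions $y_2,\ldots,y_{s_n}$ in the second trail) plus your more detailed bookkeeping of naturalness and the learning chain, which matches the paper's intent.
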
	
	\begin{proof}
		
		The refutation consists of the trails $\mathcal{T}_1,\mathcal{T}_2$:
		\begin{align*}
		\mathcal{T}_1:=( \mathbf{y_1};
		\mathbf{y_2};
		\ldots;
		\mathbf{y_{s_n}};
		\mathbf{\bar{w}},t,\bot)
		\end{align*}
		with $\ante_{\mathcal{T}_1}(t)=\bar{y}_1\vee w\vee {t}$ and $\ante_{\mathcal{T}_1}(\bot)=\bar{y}_1\vee w\vee \bar{t}$. We learn the clause $(\bar{y}_1)$ and backtrack to $(0,0)$.
		\begin{align*}
		\mathcal{T}_2:=( \bar{y}_1;
		\mathbf{\bar{y}_2};
		\ldots;
		\mathbf{\bar{y}_{s_n}};
		\mathbf{\bar{w}}, {t},\bot )
		\end{align*}
		with $\ante_{\mathcal{T}_2}({t})={y}_1\vee w\vee {t}$ and $\ante_{\mathcal{T}_2}(\bot)={y_1}\vee w\vee \bar{t}$. We finally learn the clause $(\bot)$.
	\end{proof}
	
	Combined with previous results, we can conclude the following:
	
	\begin{thm}\label{CorLONRandLOARincomparable}\label{TheoremLONRandLOARincomparable}
		The systems $\lonr$ and $\loar$ are incomparable.
	\end{thm}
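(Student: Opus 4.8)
The goal is to establish that $\lonr$ and $\loar$ are incomparable, which means exhibiting exponential separations in both directions. The plan is simply to combine the already-available building blocks. For the direction showing $\loar$ does not p-simulate $\lonr$, the plan is to use the $\mathtt{Trapdoor}_n$ formulas: by Proposition~\ref{PropTrapdoorEasyForLONR} they have polynomial-size $\lonr$ refutations, while by Proposition~\ref{PropTrapdoorHardForQCDCL} they require exponential-size $\loar$ refutations. For the opposite direction, showing $\lonr$ does not p-simulate $\loar$, the plan is to use the $\mathtt{QParity}_n$ formulas: by Proposition~\ref{PropQParityHardForQCDCL} they have polynomial-size $\loar$ refutations, and one needs that they require exponential-size $\lonr$ refutations.

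**Filling the remaining gap.** The only piece not already stated as a standalone proposition is the $\lonr$-hardness of $\mathtt{QParity}_n$. Here the plan is to invoke Theorem~\ref{TheoremSystemsSimulatedByQRes}: every $\nr$-system, in particular $\lonr$, is p-simulated by \qres, with only a constant-factor blow-up. Since $\mathtt{QParity}_n$ requires exponential-size \qres refutations (the cited result from \cite{BeyersdorffCJ19}), it follows immediately that $\mathtt{QParity}_n$ requires exponential-size $\lonr$ refutations. Thus $\lonr$ cannot p-simulate $\loar$. Combining the two directions gives incomparability.

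**Main obstacle.** There is no genuinely hard technical step left in this theorem: all the real work lives in Propositions~\ref{PropQParityHardForQCDCL}, \ref{PropTrapdoorHardForQCDCL}, \ref{PropTrapdoorEasyForLONR}, and Theorem~\ref{TheoremSystemsSimulatedByQRes}. The only thing to be slightly careful about is making sure the two separating families are used in the correct directions (i.e.\ not accidentally trying to refute $\mathtt{QParity}_n$ hard for $\loar$, which would be false) and that the appeal to the \qres-simulation of $\nr$-systems is legitimate for $\lonr$ specifically—which it is, since $\lonr$ is one of the systems covered by Theorem~\ref{TheoremSystemsSimulatedByQRes}. So the write-up is essentially a two-line citation argument.

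\begin{proof}
We prove exponential separations in both directions.

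First, by Proposition~\ref{PropTrapdoorEasyForLONR} the QCNFs $\mathtt{Trapdoor}_n$ have polynomial-size $\lonr$ refutations, while by Proposition~\ref{PropTrapdoorHardForQCDCL} they require exponential-size $\loar$ refutations. Hence $\loar$ does not p-simulate $\lonr$.

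Conversely, by Proposition~\ref{PropQParityHardForQCDCL} the QCNFs $\mathtt{QParity}_n$ have polynomial-size $\loar$ refutations. On the other hand, $\mathtt{QParity}_n$ requires \qres refutations of exponential size \cite{BeyersdorffCJ19}. By Theorem~\ref{TheoremSystemsSimulatedByQRes} the system $\lonr$ is p-simulated by \qres, so a polynomial-size $\lonr$ refutation of $\mathtt{QParity}_n$ would yield a polynomial-size \qres refutation, a contradiction. Therefore $\mathtt{QParity}_n$ requires exponential-size $\lonr$ refutations, and $\lonr$ does not p-simulate $\loar$.

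Combining both directions, $\lonr$ and $\loar$ are incomparable.
\end{proof}
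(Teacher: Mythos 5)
Your proof is correct and follows exactly the paper's own argument: $\mathtt{Trapdoor}_n$ (easy for $\lonr$ by Proposition~\ref{PropTrapdoorEasyForLONR}, hard for $\loar$ by Proposition~\ref{PropTrapdoorHardForQCDCL}) in one direction, and $\mathtt{QParity}_n$ (easy for $\loar$ by Proposition~\ref{PropQParityHardForQCDCL}, hard for $\lonr$ via the \qres p-simulation of Theorem~\ref{TheoremSystemsSimulatedByQRes} and the known \qres lower bound) in the other. The only difference is the cosmetic choice of citation for the \qres hardness of $\mathtt{QParity}_n$, which does not affect correctness.
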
	
	\begin{proof} The QCNFs $\mathtt{QParity}_n$ are hard for $\lonr$ by \cite{BBH19} since this system is p-simulated by \qres, but easy for $\loar$ as proven in Proposition \ref{PropQParityHardForQCDCL}. The formulas $\mathtt{Trapdoor}_n$ are hard for $\loar$ (Proposition \ref{PropTrapdoorHardForQCDCL}), but easy for $\lonr$ (Proposition \ref{PropTrapdoorEasyForLONR}).
	\end{proof}
	
	In Section \ref{sec:hardness-qcdcl} we already introduced a whole class of QCNFs that require large \qcdcl (=$\loar$)  refutations. We can exponentially improve this classical \qcdcl system by exchanging the decision policy $\lo$ for a more liberal one. Although we have already shown this in Proposition~\ref{PropLonsingsFormula}, we will give another, more interesting separation by QBFs whose hardness does not rely on propositional  resolution complexity.

	\begin{prop} \label{prop:equality-easy}
		The formulas $\mathtt{Equality}_n$ have polynomial-size $\asroar$ refutations.
	\end{prop}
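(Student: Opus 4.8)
The plan is to give the refutation explicitly: a sequence of $O(n)$ natural trails together with the clauses learned from them, chosen so as to mimic the short \ldqres refutation of $\mathtt{Equality}_n$. Recall that in \ldqres one first resolves, for each $i$, the two clauses $\bar x_i\vee\bar u_i\vee t_i$ and $x_i\vee u_i\vee t_i$ over the existential pivot $x_i$; since $\lv(x_i)<\lv(u_i)$ this is a legal long-distance step and produces the tautological clause $T_i=u_i\vee\bar u_i\vee t_i$. Resolving the $T_i$ successively against the long clause $\bar t_1\vee\ldots\vee\bar t_n$ gives $\bigvee_{i}(u_i\vee\bar u_i)$, which reduces to $(\bot)$. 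This \ldqres refutation has size $O(n)$, and we want each of its clauses to arise as a learned clause of an $\asroar$ trail.

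First I would build the trails producing the $T_i$. Here is where the two liberalised policies pay off. Under $\asro$ we may decide universal variables before any existential one; under $\ar$ universal literals may be reduced inside unit propagation. So fix $i$ and build the natural trail that decides $u_j$ and $x_j$ (say both to $1$) for every $j\neq i$; each clause $\bar x_j\vee\bar u_j\vee t_j$ then propagates $t_j$, and once all $t_j$ ($j\neq i$) are set the long clause propagates $\bar t_i$. Now $t_i=0$ while $u_i,x_i$ are still unassigned, so both $\bar x_i\vee\bar u_i\vee t_i$ and $x_i\vee u_i\vee t_i$ become unit after reducing $\bar u_i$ resp.\ $u_i$: we propagate $\bar x_i$ via the first and immediately run into a conflict via the second. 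Conflict analysis therefore resolves precisely $\bar x_i\vee\bar u_i\vee t_i$ against $x_i\vee u_i\vee t_i$ over $x_i$, so $T_i=u_i\vee\bar u_i\vee t_i$ lies in $\mathcal{L}_{\mathcal{T}}$; moreover continuing the analysis (resolving away the remaining $t_j$ against the long clause) exhibits the partially reduced long clauses as well. In a second phase I would construct further trails whose conflict analyses successively resolve the now-available clauses against the long clause and reduce all universal literals, ending at $(\bot)$. Between all trails we simply restart, so we never need the asserting machinery; all that must be checked is that there are $O(n)$ trails, each of size $O(n)$.

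The main obstacle is the bookkeeping. For each trail one must verify that it is natural and obeys $\asro$ and $\ar$ — in particular that conflicts are taken with priority exactly at the intended point — that it does reach a conflict, and that the learning recursion of Definition~\ref{DefLearnableClauses}, which performs a reduction after every resolution step, really yields the intended clause of $\mathcal{L}_{\mathcal{T}}$ rather than collapsing a tautology prematurely or dragging in unwanted existential literals (the $t_j$ are propagated via $\bar x_j\vee\bar u_j\vee t_j$, so each such resolution step temporarily reintroduces an $\bar x_j$ that must be cleared later). A secondary point is to fix the precise order in which the partially reduced long clauses are learned so that $O(n)$ rounds suffice and the last learned clause is indeed the empty clause.
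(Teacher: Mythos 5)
Your first phase is fine: the trails you describe are legal $\asroar$ trails, and conflict analysis on them does yield $T_i=u_i\vee\bar u_i\vee t_i$ as a learnable clause. The gap is in phase 2, and it is structural, not bookkeeping. Under $\ar$, a learned clause $T_i$ can \emph{never} act as an antecedent for unit propagation: its only existential literal is $t_i$, the pair $u_i\vee\bar u_i$ cannot be removed by a restriction (no assignment falsifies both), and it cannot be reduced while $t_i$ is present since $\lv(u_i)=2<3=\lv(t_i)$. So $T_i$ is either satisfied (once $u_i$ or $t_i$ is suitably assigned), untouched, or — when $t_i$ is false and $u_i$ unassigned — falsified via reduction, i.e.\ it can only appear in a trail as the \emph{conflict} clause. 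Consequently any single conflict analysis can involve at most one of the $T_i$, and the clauses of your target \ldqres refutation that contain several tautological pairs $u_i\vee\bar u_i$ (and in particular the final step producing $(\bot)$ from $\bigvee_i(u_i\vee\bar u_i)$) cannot arise in any learning sequence of Definition~\ref{DefLearnableClauses}: the literals $t_j$, $j\neq i$, in such a trail are necessarily propagated by the \emph{original} clauses $\bar x_j\vee\bar u_j\vee t_j$ or $x_j\vee u_j\vee t_j$, so resolving them away drags in $x_j$- and $u_j$-literals, and the best you can learn from a conflict on $T_i$ is an $X$-clause of the form $\bigvee_{j\neq i}x_j^{1-\epsilon_j}$ determined by your decisions. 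It is then far from clear (and certainly not argued) that polynomially many further trails reach $(\bot)$ from the $T_i$'s and such width-$(n-1)$ $X$-clauses; naive iterations stall, essentially enumerating decision patterns.

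The paper's proof sidesteps exactly this obstacle by learning different clauses: $L_i=\bar x_i\vee\bar u_i\vee\bigvee_{j>i}(u_j\vee\bar u_j)\vee\bigvee_{k<i}\bar t_k$ and its mirror $R_i$, in which the high-level literal $t_i$ has been resolved away and a \emph{low-level} existential literal ($\bar x_i$ resp.\ $x_i$) is retained below the tautological pairs. In the next trail only $x_1,\ldots,x_{i-1},u_1,\ldots,u_{i-1}$ are decided (legal under $\asro$, impossible under $\lo$), the literals $\bar t_k$ are falsified, the pairs $u_j\vee\bar u_j$ become reducible, and $L_i$ propagates $\bar x_i$ while $R_i$ supplies the conflict; this is what lets the learned clauses be chained and drives the $\mathcal{O}(n)$-trail recursion down to $(\bar x_1)$ and $(\bot)$. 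If you want to keep your plan, you must redesign the learned clauses so that each retains an existential literal of level $1$ blocking nothing, rather than the level-$3$ literal $t_i$ which freezes the tautology; as written, the second phase cannot be carried out.
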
	
	\begin{proof} First we define the clauses 
		\begin{align*}
		L_i&:=\bar{x}_i\vee \bar{u}_i\vee \bigvee_{j=i+1}^n(u_j\vee \bar{u}_j)\vee \bigvee_{k=1}^{i-1}\bar{t}_k\text{,}\\
		R_i&:=x_i\vee u_i\vee \bigvee_{j=i+1}^n(u_j\vee \bar{u}_j)\vee \bigvee_{k=1}^{i-1}\bar{t}_k
		\end{align*}
		for $i=2,\ldots,n$.
		
		We will construct $\asroar$ trails $\mathcal{T}_n,\mathcal{U}_n,\ldots,\mathcal{T}_2,\mathcal{U}_2$ from which we learn the clauses $L_n,R_n,\ldots,L_2,R_2$. We will restart after each trail.
		
		The initial trail is
		\begin{align*}
		\mathcal{T}_n=(  \mathbf{x_1};\mathbf{x_2};\ldots;\mathbf{x_n};\mathbf{u_1},t_1;\mathbf{u_2},t_2;\ldots;\mathbf{u_{n-1}},t_{n-1},\bar{t}_n,\bot )
		\end{align*}
		coupled with the antecedent clauses
		\begin{align*}
		\ante_{\mathcal{T}_n}(t_j)&=\bar{x}_j\vee \bar{u}_j\vee t_j \quad\text{ for $j=1,\ldots,n-1$,}\\
		\ante_{\mathcal{T}_n}(\bar{t}_n)&=\bar{t}_1\vee\ldots\vee\bar{t}_n\text{,}\\
		\ante_{\mathcal{T}_n}(\bot)&=\bar{x}_n\vee \bar{u}_n\vee t_n\text{.}
		\end{align*}
		After resolving over $\bar{t}_n$ we learn $L_n$.
		
		We restart and can create $\mathcal{U}_n$, symmetrically to $\mathcal{T}_n$:
		\begin{align*}
		\mathcal{U}_n=(  \mathbf{\bar{x}_1};\mathbf{\bar{x}_2};\ldots;\mathbf{\bar{x}_n};\mathbf{\bar{u}_1},t_1;\mathbf{\bar{u}_2},t_2;\ldots;\mathbf{\bar{u}_{n-1}},t_{n-1},\bar{t}_n,\bot )
		\end{align*}
		with 
		\begin{align*}
		\ante_{\mathcal{U}_n}(t_j)&={x}_j\vee {u}_j\vee t_j \quad\text{ for $j=1,\ldots,n-1$,}\\
		\ante_{\mathcal{U}_n}(\bar{t}_n)&=\bar{t}_1\vee\ldots\vee\bar{t}_n\text{,}\\
		\ante_{\mathcal{U}_n}(\bot)&={x}_n\vee {u}_n\vee t_n\text{.}
		\end{align*}
		Analogously, we learn $R_n$.
		
		Now suppose that we already learned the clauses $L_n,\ldots,L_i$ and $R_n, \ldots,R_i$ for $3\leq i\leq n$. Next, let us learn the clause $L_{i-1}$ using the trail $\mathcal{T}_{i-1}$:
		\begin{align*}
		\mathcal{T}_{i-1}=(  \mathbf{x_1};\mathbf{x_2};\ldots;\mathbf{x_{i-1}};\mathbf{u_1},t_1;\mathbf{u_2},t_2;\ldots;\mathbf{u_{i-1}},t_{i-1},\bar{x}_i,\bot )
		\end{align*}
		with
		\begin{align*}
		\ante_{\mathcal{T}_{i-1}}(t_j)&=\bar{x}_j\vee \bar{u}_j\vee t_j \quad\text{ for $j=1,\ldots,i-1$,}\\
		\ante_{\mathcal{T}_{i-1}}(\bar{x}_i)&=L_i\text{,}\\
		\ante_{\mathcal{T}_{i-1}}(\bot)&=R_i\text{.}
		\end{align*}
		We resolve over $\bar{x}_i$ and $t_{i-1}$, obtaining $(R_i\resop{x_i}L_i)\resop{t_{i-1}}(\bar{x}_{i-1}\vee \bar{u}_{i-1}\vee t_{i-1}) =L_{i-1}$.
		
		Again, in a symmetrical way we derive $R_{i-1}$:
		\begin{align*}
		\mathcal{U}_{i-1}=(  \mathbf{\bar{x}_1};\mathbf{\bar{x}_2};\ldots;\mathbf{\bar{x}_{i-1}};\mathbf{\bar{u}_1},t_1;\mathbf{\bar{u}_2},t_2;\ldots;\mathbf{\bar{u}_{i-1}},t_{i-1},\bar{x}_i,\bot )
		\end{align*}
		with
		\begin{align*}
		\ante_{\mathcal{U}_{i-1}}(t_j)&={x}_j\vee {u}_j\vee t_j \quad\text{ for $j=1,\ldots,i-1$,}\\
		\ante_{\mathcal{U}_{i-1}}(\bar{x}_i)&=L_i\text{,}\\
		\ante_{\mathcal{U}_{i-1}}(\bot)&=R_i\text{.}
		\end{align*}

		We end the proof with two trails:
		\begin{align*}
		\mathcal{T}_1=( \mathbf{x_1};\mathbf{u_1},t_1,\bar{x}_1,\bot )\text{,}
		\end{align*}
		with similar antecedent clauses as before, from which we learn the unit clause $(\bar{x}_1)$, and 
		\begin{align*}
		\mathcal{U}_1=( \bar{x}_1;\mathbf{\bar{u}_1},t_1,\bar{x}_1,\bot )\text{,}
		\end{align*}
		from which we finally derive $(\bot)$. This whole proof has size $\mathcal{O}(n^2)$.
	\end{proof}

	Using $\asro$ instead of $\lo$ allowed us to skip existential decisions. As a result we were able to restrict ourselves to the decisions $x_1,\ldots, x_{i-1}$ in the trails $\mathcal{T}_{i-1}$ and $\mathcal{U}_{i-1}$ since the other variables $x_i,\ldots,x_n$ are either resolved away or useless for the current resolution step. 
	
	This leads to the following separation:
	
	\begin{thm}\label{CorASROARdstrongerThanLOAR}\label{TheoremASROARdstrongerThanLOAR}
		$\asroar$ is exponentially stronger than $\loar$.
	\end{thm}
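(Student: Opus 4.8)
The plan is to combine two facts that have essentially already been established in the excerpt. First, by Proposition~\ref{prop:equality-easy} the formulas $\mathtt{Equality}_n$ admit $\asroar$ refutations of size $\mathcal{O}(n^2)$, hence polynomial size. Second, by Corollary~\ref{cor:equality-hardness}, the same formulas $\mathtt{Equality}_n$ require $\loar$ refutations of size $2^n$. Together these give an exponential separation between the two systems witnessed by $\mathtt{Equality}_n$. So the only remaining content of the theorem is the \emph{simulation} direction: that $\loar$ is p-simulated by $\asroar$, so that $\asroar$ is genuinely \emph{stronger} (not merely incomparable).

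The simulation $\loar \leq_p \asroar$ should follow directly from the definitions, since $\asro$ is a relaxation of $\lo$. Concretely, every $\loar$ trail is already a legal $\asroar$ trail: the decision policy $\asro$ only requires that before deciding an existential variable $x$ all universal variables of strictly smaller level have been decided, and a $\lo$ trail — which decides variables strictly in prefix order — trivially satisfies this. The unit-propagation policy $\ar$ is the same in both systems, and the clause-learning process (Definition~\ref{DefLearnableClauses}) is policy-independent. Hence any $\loar$ proof $\iota = (\theta(\iota),\lambda(\iota),\rho(\iota))$ is verbatim a valid $\asroar$ proof of the same clause, with the same trails, learned clauses, and $\pi_i$; in particular $|\iota|$ is unchanged and the transformation is computable in linear time. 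This is exactly the "simulations between the QCDCL calculi follow by definition" remark made after Theorem~\ref{TheoremSystemsSimulatedByQRes} and depicted in Figures~\ref{fig:simulation-qcdcl} and~\ref{fig:qcdcl-sim-order}.

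Putting the pieces together: the identity map on proofs witnesses $\loar \leq_p \asroar$, and the $\mathtt{Equality}_n$ family (polynomial in $\asroar$ by Proposition~\ref{prop:equality-easy}, but of size $2^{\Omega(n)}$ in $\loar$ by Corollary~\ref{cor:equality-hardness}) shows that $\asroar$ does \emph{not} p-simulate $\loar$ in the reverse direction. Therefore $\asroar$ is exponentially stronger than $\loar$, proving the theorem. I do not expect a genuine obstacle here — the theorem is a corollary assembling earlier results; the only point requiring a sentence of care is verifying that a $\lo$ trail always satisfies the $\asro$ constraint, and that the backtracking/naturalness conditions of Definition~\ref{DefinitionQCDCLrefutation} are preserved when re-reading a $\loar$ proof as an $\asroar$ proof, both of which are immediate from the definitions.
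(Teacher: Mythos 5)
Your proposal is correct and is essentially the paper's own argument: the separation is witnessed by $\mathtt{Equality}_n$, combining Proposition~\ref{prop:equality-easy} (polynomial-size $\asroar$ refutations) with Corollary~\ref{cor:equality-hardness} (size-$2^n$ lower bound for $\loar$), while the p-simulation $\loar\leq_p\asroar$ holds by definition since every $\lo$ trail satisfies $\asro$ (universal variables can only be decided, so under $\lo$ all universals of smaller level are decided before an existential decision), exactly as indicated in Figure~\ref{fig:simulation-qcdcl}. No gap.
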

	
	Note that the decision policy $\asro$ in $\asroar$ guarantees the possibility to learn asserting clauses. Having shown that $\asroar$ is actually stronger than classical \qcdcl, the system $\asroar$ seems to be a promising candidate for practical implementation.

	\section{Conclusion}
	\label{sec:conclusion}
	
	In this paper we performed a formal, proof-theoretic analysis of QCDCL. In particular, we focused on the relation of QCDCL and \qres, showing both the incomparability of practically-used QCDCL to \qres as well as the equivalence of a new QCDCL version to \qres.
	
	In addition to the theoretical contributions of this paper, we believe that our findings will also be interesting for practitioners. Firstly, because we have shown the first rigorous dedicated hardness results for QCDCL, not only in terms of formula families with  at most one instance per input size (as is typical in proof complexity), but also in terms of a large family of random QBFs. 
	
	Secondly, we believe that it would be interesting to test the potential of our new QCDCL variants for practical solving. Though we have formulated these as proof systems, it should be fairly straightforward to incorporate our new policies  into actual QCDCL implementations. In particular, the insight that  decisions do not need to follow the  order of quantification in the prefix should be a welcome discovery. Of course, when just using the policy $\ao$, it is not clear that asserting clauses can always be learnt. Therefore, we suggest that for practical implementations, the most interesting new systems should be $\asonr$ and $\asroar$. Both facilitate liberal decision policies, not necessarily following the prefix order, while still allowing to learn asserting clauses. Since both systems are incomparable, it is a priori not clear which one to prefer in practice. However, we would suggest that $\asroar$ should be the more interesting system, since it uses the same unit propagation as QCDCL, but provides an exponential strengthening of QCDCL (as shown in Theorem~\ref{TheoremASROARdstrongerThanLOAR}) via the decision policy $\asro$.
	
	We close with some open questions that are triggered by the results presented here:
	\begin{itemize}
		\item Can we find an alternative formula instead of $\mathtt{Trapdoor}_n$ for the separation between \qres and \qcdcl (easy for \qres, hard for \qcdcl)? I.e., we are primarily interested in formulas whose hardness does not depend on propositional resolution.
		\item Can we find a separation between $\asroar$ and \ldqres?
		\item Can we even find a separation between $\aoar$ and \ldqres, or are the systems possibly even equivalent?
	\end{itemize}

	\bibliography{compl,QCDCLsternsternBIBTEX}
	\bibliographystyle{alphaurl}

\end{document}